

%
\documentclass[draftclsnofoot,journal,onecolumn]{IEEEtran}


\usepackage{graphicx}
\usepackage{amsmath}
\usepackage{amssymb}
\usepackage{amsfonts}
\usepackage[lined, boxed]{algorithm2e}
\usepackage[dvips,bookmarks=false]{hyperref}
\usepackage{algorithmic}

\begin{document}

\newtheorem{theorem}{Theorem}
\newtheorem{lemma}{Lemma}
\newtheorem{claim}{Claim}
\newtheorem{corollary}{Corollary}
\newtheorem{definition}{Definition}
\newtheorem{proposition}{Proposition}
\title{Universal Denoising of Discrete-time Continuous-Amplitude Signals}

%

\thispagestyle{empty}
\def\x{{\mathbf x}}
\def\L{{\cal L}}

\author{Kamakshi~Sivaramakrishnan \dag ~and~Tsachy~Weissman \dag,\S,~\IEEEmembership{Senior Member,~IEEE}
\thanks{Work supported in part by National Science Foundation through grants CCR-0311633 and the NSF CAREER.}
\thanks{\dag \; The research was partially supported by NASA New Horizons program through grant No. 399840Q.} \thanks{The material in this correspondence was presented in part at the IEEE International Symposium on Information Theory, Seattle, WA, July 2006} \thanks{\dag \; Department of Electrical Engineering, Stanford University, Stanford, CA 94305 USA (e-mails:\{ksivaram,tsachy\}@stanford.edu)} \thanks{\S  \; Department of Electrical Engineering, Technion -- Israel Institute of Technology, Haifa, Israel. } 
} \maketitle
\begin{abstract}
We consider the problem of reconstructing a discrete-time signal (sequence)
with continuous-valued components corrupted by a known memoryless
channel. When performance is measured using a per-symbol
loss function satisfying mild regularity conditions, we develop a
sequence of denoisers that, although independent of the
distribution of the underlying `clean' sequence, is universally optimal in the limit of large sequence length. This sequence of denoisers is universal in the sense of
performing as well as any sliding window denoising scheme which
may be optimized for the underlying clean signal. Our results are
initially developed in a ``semi-stochastic'' setting, where the
noiseless signal is an unknown individual sequence, and the only
source of randomness is due to the channel noise. It is
subsequently shown that in the fully stochastic setting, where the
noiseless sequence is a stationary stochastic process, our schemes
universally  attain optimum performance. The proposed schemes draw
from nonparametric density estimation techniques and are practically implementable. We demonstrate efficacy of the
proposed schemes in denoising gray-scale images in the
conventional additive white Gaussian noise setting, with additional
promising results for less conventional noise distributions.
\end{abstract}

\begin{keywords}
Universal Denoising, kernel density estimation, Quantization, Sliding Window Denoiser, Denoisability, Memoryless Channels, semi-stochastic setting, discrete denoising.
\end{keywords}

\section{Introduction}
Consider the problem of estimating a clean discrete-time signal (sequence) $\{X_t\}_{t \in
\mathbb{T}}$, $X_t \in [a,b] \subset \mathbb{R}$, based on its noisy
observations $\{Z_t\}_{t \in \mathbb{T}}$, $Z_t \in \mathbb{R}$,
where $\{Z_t\}$ is  the output of a corruption mechanism, a memoryless channel. This problem finds applications in areas ranging
from engineering, cryptography and statistics, to bioinformatics and beyond.
There is significant literature on particular instantiations of
this problem, most notably for the case where signal and noise
components are real-valued and the noise is additive, most
commonly Gaussian (cf.\ \cite{Donoho_1} and references therein).
Solutions to this problem in \cite{Donoho_1} are based on
wavelet-based soft thresholding and have various asymptotic
optimality properties under a minimax criterion. The scope of
wavelet-based thresholding in \cite{Donoho_1} has been extended
beyond the additive white Gaussian case in \cite{Gao},
\cite{Averkamp} where optimality is again established in an
asymptotic minimax sense. The soft-thresholding scheme proposed in
\cite{Averkamp} is among the few denoisers found in the literature
\cite{Gao, Kolaczyk} that are designed for the case of a
non-Gaussian corruption mechanism. Even in this case, restrictions to additive noise and symmetry assumptions on
the noise distribution are made in order to provide asymptotic
performance guarantees. For the case of a random vector $Y = X + Z$, where $X$ is independent of $Z$
(with known distribution). The Minimum Mean Squared Estimate (MMSE) of $X$ is well-known to be given by $\hat{X} = \psi(Y) = E\{X|Y\}$. It was shown in \cite{Raphan} that, for $Z \sim \mathcal{N} \left(\mu, \Sigma \right)$, $\psi(\cdot)$ satisfies $\psi(Y) = \frac{(Y- \mu)  - \triangledown_y \ln f_Y(Y)}{f_Y} $, where $f_Y(y)$
is the marginal density of $Y$, which can be learned from the noisy samples $Y^n  =\{Y_1,\cdots,Y_n\}$ of $Y$. Using techniques for nonparametric density estimation in \cite{Luc}, an estimate of $f_Y(y)$, $\hat{f}_Y(y)$, can be computed, the (appropriate) gradient of which leads to the following estimate:
\begin{equation}\hat{\psi}(Y) =  \frac{(Y- \mu)  - \triangledown_y \ln \hat{f}_Y(Y)}{\hat{f}_Y} \label{eqn:psi}\end{equation}
The authors in \cite{Raphan} also discuss expressions for $\hat{\psi}(Y)$ for a certain class of non-Gaussian noise distributions with the corruption mechanism continuing to be additive. This leaves room for universal denoising schemes for continuous valued data for a general class of  noise distributions where the corruption mechanism is also arbitrary. Compression based approaches pioneered in
(cf., e. g., \cite{Natarajan} and \cite{Donoho_2}), as discussed in
\cite{Tsachy}, are provably sub-optimal and suffer
from non-practicality of implementation of optimal lossy
compression schemes. The wavelet-based Bayesian estimation
approach in \cite{Portilla}, has demonstrated significant
improvement in image denoising. However, despite much recent
progress, the problem of universal denoising for discrete-time
continuous-amplitude data is still a largely open problem of both
theoretical and practical value. The problem is particularly
relevant in new emerging areas as microarray imaging \cite{Wang},
array-based comparative genomic hybridization (array-CGH)
\cite{Hsu} and medical imaging \cite{Wagner,Gudbjartsson,
McVeigh}, where parametric noise models that are currently used
often fail to capture the true nature of the noise.

Recently, universal  denoising for discrete signals and channels was considered
  in \cite{Tsachy}. The results of \cite{Tsachy}, and the denoising scheme DUDE proposed therein, although attractive theoretically,
  are restricted in their
  practicality to problems with small alphabets. This is a result of 
\begin{itemize}
\item computational issues involved
with collecting higher-order joint distributions from the noisy data.

\item mapping an estimated channel output distribution to an estimated
channel input distribution.

\item count
statistics being too sparse to be reliable for even moderately
large alphabet sizes. 
\end{itemize}
This leaves open challenges in the
application of DUDE to problems like gray-scale image denoising.
More recently, a modified DUDE, using ideas from lossless
compression, was presented in \cite{Motta}. As discussed in that
work,  in spite of circumventing some of the computational issues
mentioned above, the approach leaves room for improvement in the
denoising performance. The problem was further extended to the
discrete-valued input and general output alphabet setting in
\cite{Dembo}. This approach proposes quantization of the output
alphabet space and proceeds on an a similar line to that in
\cite{Tsachy}, showing that there is no essential loss of
optimality in quantizing the channel output before denoising (insofar as learning the statistics of the underlying data is concerned). In
spite of its theoretical elegance, this approach faces similar
issues as the scheme of \cite{Tsachy}, limiting its scope of
applications to small channel input alphabets. The authors of
\cite{Dembo}, while conjecturing the need for mild restrictions on
the channel, suggest an extension of the proposed scheme to the
case where both the input and output alphabet space is
continuous-valued and general. The present work proposes an
extension of the two-stage DUDE-like approach in \cite{Tsachy,
Dembo} to the case of denoising for general alphabets. A natural extension would have been to quantize both
the input and the output space and apply a similar count-statistic
based two-pass approach. The vast literature on nonparametric
density estimation (cf. \cite{Luc} and references therein),
however, points to the opportunity of extracting more reliable
statistics from the observed data, that would lead to better denoising (as measured under a specified loss function). We do, however, maintain the sliding window approach of \cite{Dembo, Tsachy} and show asymptotic universal optimality of our schemes with increasing context lengths in the limit of large sequence lengths.

Recent developments in universal denoising in the particular
context of images have also been reported in \cite{Frenchie}.
Their approach is based on local smoothing methods that make
assumptions on the underlying structure of the data which are more
relevant in image denoising due to the inherent redundancy of natural
images. The consistency results showed the convergence of the
denoising rule to the conditional expected value of the clean
symbol given the noisy neighborhood sans the particular noisy symbol in question. There is potential to improve this result by
incorporating the information from the noisy pixel that is being
denoised too, an approach  at the heart of the denoisers we
present below. We establish the universal optimality of the
suggested denoisers in a generality that applies to arbitrarily
distributed noiseless signals, arbitrary memoryless channels, and
arbitrary loss functions (with some benign regularity conditions).

The remainder of the paper is organized as follows. In section II,
we discuss the problem setup and notations. This is followed by a
description of the technical results that are key to the
construction of the denoisers in section III. In section IV, we establish
universality of a family of denoisers that we develop for the
semi-stochastic setting, in which the clean data is an individual sequence and provide bounds
on the difference between the performance of this proposed family of denoisers and that of the best `symbol-by-symbol' denoiser chosen by a genie with full knowledge of the distribution (or probability  law) of the clean data. Section V details an extension of this proposed family of denoisers to a genie that can select the best sliding window scheme, of any order, with knowledge of the underlying clean data. Section VI discusses the implication of the performance guarantees in the semi-stochastic setting to the fully stochastic setting where the clean data is generated by a stationary stochastic process, rather than an individual sequence. A slightly modified version of the proposed denoiser is shown 
to reduce to the scheme of \cite{Dembo} when the underlying clean data have finite alphabet size. The proposed family of denoisers can, hence, be seen
as a natural extension of those in \cite{Dembo}
to the current setting of denoising continuous valued symbols
corrupted by a continuous memoryless channel where the clean data components may take values in a continuum. In section \ref{sec:results}, we present some preliminary experimental results of applying the proposed schemes to denoising of gray-scale images. We conclude in section \ref{sec:conclusion} with a summary of some propositions for future research directions. Throughout this paper, we maintain the flow by stating the Theorems and Lemmas corresponding to the optimality results in the main body of the paper relegating most of the proofs to the appendices.

\section{Problem Setting and Notations}
\label{sec:problem_setting} Let $\mathbf{x}=(x_1, x_2, \cdots)$ be
an individual (deterministic) noise-free source signal \footnote{throughout the paper we will be using the terms `signal' and `sequence' interchangeably} with components taking
values in $[a,b] \subset \mathbb{R}$ and $\mathbf{Y} = (Y_1,Y_2,
\cdots)$, $Y_i \in \mathbb{R}$ be the corresponding noisy
observations, also referred to as the `output of the channel'
(corruption source). This setting, where both the underlying clean
sequence and the noisy sequence are continuous valued, is the
continuous-amplitude analog of the semi-stochastic setting
discussed in \cite{Dembo}. The channel is specified by a family of
distribution functions $\mathcal{C} = \{F_{Y|x}\}_{x \in [a,b]}$,
where $F_{Y|x}$ denotes the distribution of the channel output
symbol when the input symbol is $x$. Also, we denote the probability measure on $\mathbb{R}$ corresponding to $F_{Y|x}$ by $\mu_x$. We make the following assumptions about the channel,
\begin{itemize}
\item[C1.] A memoryless channel, which is to say that the components of
$\mathbf{Y}$ are independent with $Y_i \sim F_{Y|x_i}$.
\item[C2.] The family of measures, $\{ \mu_x \}_{x \in [a,b]} $, associated
with the channel, $\mathcal{C}$, is uniformly tight in the sense
\begin{equation*}
\sup_{x \in [a,b]} \mu_x([-T,T]^c) \rightarrow 0 \quad \text{as}
\quad  T \rightarrow \infty.
\end{equation*}
This condition will be needed to guarantee that one can consistently track the
evolution of the marginal density of the noisy symbols at the
output of the memoryless channel, regardless of the underlying $\mathbf{x}$, using nonparametric Kernel
density estimation techniques.
\item[C3.] The distribution functions $F_{Y|x}$ are absolutely continuous for all $x \in [a,b]$ w.r.t the Lebesgue
measure and $\{ f_{Y|x} \}$ denotes the corresponding densities. This assumption is not crucial for the validity of our approach but is made for concreteness in the construction of our schemes and the development of their performance guarantees. 
\item[C4.] The conditional densities of the channel form a set of
linearly independent functions. This is equivalent to the ``invertibility'' condition of \cite{Tsachy} which ensures that, to any distribution on the input to the channel there corresponds a unique channel output.
\item[C5.] The mapping, w.r.t a metric that will be detailed in section~\ref{sec:denoiser_construct}, from the space of channel input distributions to the corresponding channel output distributions is continuous. The precise analytical expression describing this condition is discussed in Appendix \ref{app:channel conds}. 
\item[C6.] The expected loss, for reasonably well-behaved loss functions (conditions L1-L2 listed subsequently in this section), induced by two output distributions that are close (under the metric discussed in section~\ref{sec:denoiser_construct}) is continuous. Again, the analytical expression describing this condition is in the Appendix \ref{app:channel conds}. 
\end{itemize}
The above, are rather benign conditions obeyed by most
channels arising in practice, an example of this being the most commonly
addressed channel, viz., the Additive White Gaussian Noise Channel
(AWGN). It is easy to verify that
even the multiplicative (non-additive) Gaussian channel with a
finite variance and mean satisfies these requirements. In this
case, the channel input (underlying clean signal) affects the
variance of the channel. The fact that the underlying clean signal
takes only bounded values implies that the tightness condition, C2, is satisfied. In fact, any additive noise channel with distribution functions that are absolutely continuous and the corresponding densities (of finite mean and variance) satisfying conditions C4-7 (C7 discussed in Appendix \ref{app:channel conds}) will satisfy the above requirements.

An \textit{n-block} denoiser is a measurable mapping taking
$\mathbb{R}^n$ into $[a,b]^n$. We assume a loss function $\Lambda
: [a,b]^2 \rightarrow [0,\infty)$ and denote the normalized
cumulative loss of an $n$-block denoiser $\hat{X}^n$, when the underlying sequence is $x^n$ and the observed sequence is $y^n$, by

\begin{equation}
L_{\hat{X}^n}(x^n,y^n) = \frac{1}{n} \sum_{i=1}^n \Lambda(x_i,
\hat{X}^n(y^n)[i])
\end{equation}
where $\hat{X}^n(y^n)[i]$ denotes the $i$-th component of
$\hat{X}^n(y^n)$. In addition to the constraints on the channel, we impose some
conditions on the permissible loss functions, $\Lambda$. We assume the loss function, $\Lambda$,
\begin{enumerate}
\item[L1.] to be bounded,i.e.,  $\Lambda_{\max} < \infty$ where $\Lambda_{\max} = \sup_{x, \hat{x} \in [a,b]} \Lambda(x,\hat{x})$
\item[L2.] to be a bounded Lipschitz function. More formally, we require the Lipschitz norm, $\|\Lambda \|_L < \infty$. The Lipschitz norm of the loss function, is defined as
\begin{equation}
\parallel \Lambda \parallel_L = \sup_{0< \Delta < (b-a)} \frac{\lambda \left( \Delta
\right)}{\Delta}
\end{equation}
where,
\begin{equation}
\lambda(\Delta,x) = \sup_{y \in [a,b]} \sup_{x':|x-x'| < \Delta} \left|
\Lambda(x,y) - \Lambda(x',y) \right| \label{loss_modcont}
\end{equation}
and
\begin{equation}
\lambda \left( \Delta \right) = \sup_{x \in [a,b]} \lambda \left( \Delta, x
\right)
\label{loss_modcont1}
\end{equation}
In words, this condition necessitates continuity of the mapping that
takes the estimates of the underlying symbol to the corresponding loss incurred. We require that estimates of the underlying clean symbol that are close together have corresponding loss values that are also close to each other. 
\end{enumerate}
It can be easily verified that the commonly used loss functions of
$L_2$, $L_1$ norms satisfy the aforementioned condition.\\ 

	Let $\mathcal{F}^{[a,b]}$ denote the set of all probability distribution functions with support contained in the interval $[a,b]$. For $F \in \mathcal{F}^{[a,b]}$, we let
\begin{equation}
\mathcal{U} (F) = \min_{\hat{x} \in [a,b]} \int_{x \in [a,b]}
\Lambda(x,\hat{x})dF(x)
\end{equation}
denote its `Bayes envelope' (our assumptions on the loss function
will imply existence of the minimum). In other words,
$\mathcal{U}(F)$ denotes the minimum achievable expected loss when guessing the value of $X \sim F$. Define the symbol-by-symbol minimum loss of $x^n$ by
\begin{equation}
D_0(x^n) = \min_{g} E \left[ \frac{1}{n} \sum_{i=1}^n \Lambda(x_i,
g(Y_i) ) \right] \label{1symb_denoisability}
\end{equation}
where the minimum is over all measurable maps $g: \mathbb{R}
\rightarrow [a,b]$. $D_0\left(x^n \right)$ denotes the minimum
expected loss in denoising the sequence $x^n$, using a time-invariant symbol-by-symbol rule.  This can be attained by a ``genie'' with access to the clean sequence $x^n$. $D_0(x^n)$, which is the expected per-symbol loss of the optimal symbol-by-symbol rule for the individual sequence $x^n$, will be our benchmark for assessing the performance of the universal symbol-by-symbol denoiser that we construct in the next section. The same benchmark was used also in \cite{Dembo}. This is slightly different than the benchmark used in \cite{Tsachy}, which corresponded to a genie that can choose the best symbol-by-symbol rule with knowledge not only of the individual sequence $x^n$, but also of the noisy sequence realization $Y^n$. The latter is irrelevant for our current setting where each of the components of $Y^n$ will take on a different value, with probability one.
For $x^n \in [a,b]^n$, define
\begin{equation}
F_{x^n}(x) = \frac{|\{1 \le i \le n: x_i \le x \} |}{n},
\label{dist_defn}
\end{equation}
i.e., the CDF associated with  the empirical distribution of
$x^n$. Note that $D_0(x^n)$ can be expressed as
\begin{equation}
D_0(x^n) = \min_{g} \int_{[a,b]} E_{x} \Lambda(x,g(Y)) dF_{X^n}(x)
\label{eqn_1}
\end{equation}
where $E_x$ denotes expectation when the underlying clean symbol
is $x$, the expectation being over the channel noise
\begin{equation}
E_x \Lambda(x,g(Y)) = \int \Lambda(x,g(y)) f_{Y|x}(y)dy
\label{eqn_2}
\end{equation}
For $F \in \mathcal{F}^{[a,b]} $, let $F \otimes \mathcal{C}$ and
$E_{F \otimes \mathcal{C}}$ denote, respectively, probability and
expectation when the channel input $X \sim F$ and $Y$ is the
channel output. So that,
\begin{eqnarray}
E_{F \otimes \mathcal{C}} \Lambda(X, g(Y)) &=& \int_{[a,b]}
E_x \Lambda(x,g(Y))dF(x) \nonumber \\
 &=& \int_{[a,b]} \left[ \int_{\mathbb{R}} \Lambda(x,g(y))
 f_{Y|x}(y)dy\right] dF(x)
\label{eqn_3}
\end{eqnarray}
Letting $[F \otimes \mathcal{C}]_{X|y}$ denote the conditional
distribution of $X$ given $Y=y$ under $F \otimes \mathcal{C}$,
we have
\begin{equation}
\min_{g} E_{F \otimes \mathcal{C}} \Lambda(X,g(Y)) = E_{F \otimes
\mathcal{C}} \mathcal{U} \left( [F \otimes \mathcal{C}]_{X|Y}
\right)
\label{eqn:bayes_denoiser}
\end{equation}
with $\mathcal{U}$ denoting the Bayes envelope as defined above.
Letting $g_{\text{opt}} \left[ F\right]$ denote the achiever of the minimum in (\ref{eqn:bayes_denoiser}), we note that is given by the Bayes
response to $[F \otimes \mathcal{C}]_{X|y}$, namely,
\begin{eqnarray}
g_{\text{opt}}[F](y) &=& \arg \min_{\hat{x} \in [a,b]}
\int_{[a,b]} \Lambda(x, \hat{x} ) d [F \otimes
\mathcal{C}]_{X|y}(x) \nonumber \\
 &=& \arg \min_{\hat{x} \in [a,b]}
\int_{[a,b]} \Lambda(x, \hat{x} ) f_{Y|x}(y) dF(x)
\label{g_opt_defn}
\end{eqnarray}
In Lemma \ref{lem:bayes_sequence}, we will establish the concavity of  $\mathcal{U}(F)$, and minimizing this bounded (by
our assumption of bounded $\Lambda$) concave function over a
closed compact interval, $[a,b]$, guarantees the existence of the
minimizer, $g_{\text{opt}}$. Note that from (\ref{eqn_1}),
(\ref{eqn_2}) and (\ref{eqn_3}) we have
\begin{equation}
D_0(x^n) = \min_{g} E_{F_{x^n} \otimes \mathcal{C}} \Lambda(X,g(Y))
\label{eqn:sym_sym_denoisability}
\end{equation}
where $F_{x^n}$ was defined in (\ref{dist_defn}) and the
minimum is attained by $g_{\text{opt}} \left[ F_{x^n}\right]$.
Thus, only a ``genie'' with access to the empirical distribution
of the noiseless sequence could employ $g_{\text{opt}}[F_{x^n}]$.

\section{Construction of Universal `Symbol-by-symbol' Denoiser and Preliminaries}
\label{sec:denoiser_construct} $F_{x^n}$ and, hence,
$g_{\text{opt}}[F_{x^n}]$ are not known to an observer of the
noisy sequence. The first step towards constructing an estimate of
$g_{\text{opt}}[F_{x^n}]$ is to estimate the input empirical
distribution from the observable noisy sequence, $Y^n$, and
knowledge of the channel, $\mathcal{C}$. We approach this problem
by first estimating a function that tracks the evolution of the
`average' density function according to which the noisy symbols
are distributed. For an input sequence $x^n$, given the memoryless
nature of the channel, the output symbols will be independent with
respective distributions, $\{F_{Y|x_1}, \cdots, F_{Y|x_n} \}$ and
have the corresponding density functions, $\{f_{Y|x_1}, \cdots,
f_{Y|x_n} \}$. The function we are interested in estimating is
\begin{equation} \label{eq: creature to be estimated}
 f_Y^n(y) = \frac{1}{n}\sum_{i=1}^n f_{Y|x_i}(y) 
\end{equation}
which can be thought of as the marginal density, $ f_Y^n$, of the noisy symbols in
the semi-stochastic setting where $x^n$ is the unknown
deterministic sequence. The estimation of this function is done by
exploiting the vast literature on density estimation techniques
\cite{Luc}, \cite{Luc_2}, the details of which are discussed in
Subsection \ref{sec:density_est} below. Once we have an estimate
$f_Y^n = f_Y^n[Y^n]$ for this function, we use it to estimate the
input empirical distribution by
\begin{equation}
\hat{F}_{x^n}[Y^n]=\arg\min_{F \in \mathcal{F}^{[a,b]}_n} d
\left(f_Y^n, \underbrace{\int f_{Y|x}dF(x)}_{[F \otimes
\mathcal{C}]_Y} \right) \label{chan_inv_defn}
\end{equation}
where $\mathcal{F}_n^{[a,b]} \subseteq \mathcal{F}^{[a,b]}$
denotes the set of empirical distributions induced by $n$-tuples
with $[a,b]$-valued components and $\left[ F \otimes \mathcal{C} \right]_Y$ denotes the marginal density induced at the output of the channel by an input distribution $F$. That is, every member, $F(x)$, of
$\mathcal{F}_n^{[a,b]}$ is of the form
\begin{equation}
F(x) = \frac{1}{n} \sum_{i=1}^n \mathbf{1}_{(x \le x_i)}
\end{equation}
for some $n$-tuple, $x^n = \left( x_1, x_2, \cdots, x_n \right)$, with
$[a,b]$-valued components. The norm, $d$, in (\ref{chan_inv_defn})
is defined as
\begin{equation}
d \left(f, g \right) = \int \left| f(y)-g(y) \right|dy
\label{norm_defn}
\end{equation}
The channel, $\mathcal{C}$, induces a set of `feasible' densities of
the output noisy symbol corresponding to the family of empirical
distributions of the underlying clean sequence at the input of the
channel. The density estimate, $f_Y^n$, which is constructed only
from the noisy sequence, $Y^n$, is oblivious to the set of
achievable marginal densities and hence could lie outside this
set. It is thus natural to estimate the unobserved $F_{x^n}$ by the member of $\mathcal{F}_n^{[a,b]}$ leading to a channel output distribution closest to the estimated one, $f_{Y}^n$. This is exactly the estimate in (\ref{chan_inv_defn}). The uniqueness of the minimizer in (\ref{chan_inv_defn}) follows from the fact that the objective function being minimized is a norm-function and hence convex, coupled with the linear independence assumption of the channel, C4. The assumption, C4, implies a one-to-one correspondence between channel input and channel output distributions (i.e., ``invertibility'' of the channel).  Additionally, the search for the minimizer is conducted on
a convex set of distribution functions, $\mathcal{F}^{[a,b]}_n$,
resulting in uniquely achieving the minimizer or in other words,
the candidate input empirical distribution estimate. 

A two-stage quantization of both, the support of the underlying
clean symbol, $[a,b]$, and the levels of the estimate of its empirical
distribution function, $\hat{F}_{x^n}$, is carried out to give the
corresponding quantized probability mass function that has mass points only at the quantized symbols. 
\begin{itemize}
\item[Q1.] The quantization of the  interval $[a,b]$ is depicted in Fig.~\ref{fig:dist_quant} below.
\begin{figure}[h]
\begin{center}
\includegraphics[width=4.5in,height=2in]{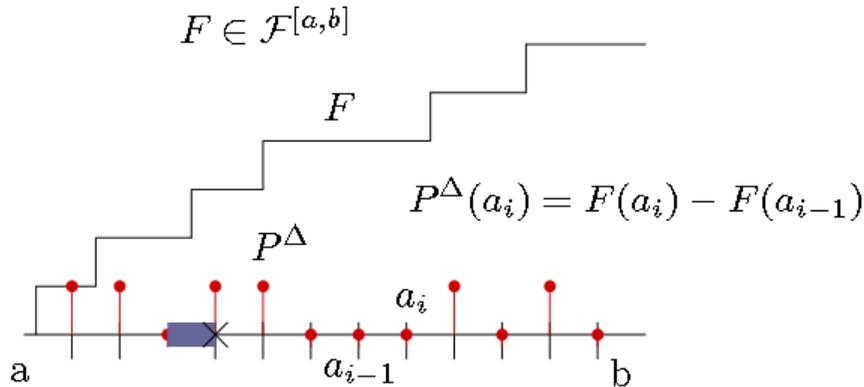}
\caption{Quantization of the support of a distribution function, $F \in \mathcal{F}^{[a,b]}$ }
\label{fig:dist_quant}
\end{center}
\end{figure}
For a given quantization step size, $\Delta$, the quantized symbols, $a_i$ in the interval $[a,b]$ are constructed in the following manner.\\
For $\Delta >0$, $N(\Delta) = \frac{(b-a)}{\Delta}$, if $m =
\lfloor \frac{b-a}{\Delta} \rfloor$, consider a family of vectors,
\begin{equation*}
\begin{split}
\mathcal{F}^{\Delta} &= \{ P^{\Delta}: P^{\Delta} =
\left(P(a_0), P(a_1), \cdots, P(a_{N \left( \Delta \right)})
\right) \} \\ \mathcal{A}^{\Delta} &= \{a_i = a+i\Delta, i = 0,
\cdots, N(\Delta) \} \\ &\text{s.t. } \quad \sum_{i=1}^{N(\Delta)} P(a_i)=1
\end{split}
\end{equation*}
else,
define the family of vectors as $\mathcal{F}^{\Delta} =
\{P^{\Delta}$: $P^{\Delta} = \left(P(a_0), P(a_1), \cdots, P(a_{N
\left( \Delta \right)-1}),\right.$  $\left. P(a_{N \left( \Delta
\right)}) \right) \}$, $\mathcal{A}^{\Delta}= \{a_i = a+i\Delta, i =
0, \cdots, N(\Delta)-1 \}, a_{N(\Delta)} =
b,\sum_{i=1}^{N(\Delta)} P(a_i)=1$.\\
As indicated in Fig.~\ref{fig:dist_quant}, the probability mass function, $P^{\Delta}$, that we propose is constructed by allocating the mass of the distribution function, $F$, in any quantization interval (of length $\Delta$) to the higher end point in that interval. More precisely,
\begin{equation}
P^{\Delta}(a_i) = F(a_i)-F(a_{i-1}) \label{P_delta_0}
\end{equation}
where $a_i$'s as defined above and note that
\begin{eqnarray*}
P^{\Delta} \left(B\right) &=& \sum_{a_i \in B} P(a_i)
\end{eqnarray*}
with any $B \in \mathcal{B}^{[a,b]}$, $\mathcal{B}^{[a,b]}$ is the
Borel sigma-algebra generated by open sets in $[a,b]$. \\

Applying this quantization of the support of the underlying clean
symbol to the estimate, $\hat{F}_{x^n}$, we construct now, the
corresponding probability mass function,
$\hat{P}_{x^n}^{\Delta}$ 
\begin{equation}
\hat{P}_{x^n}^{\Delta}(a_i) = \hat{F}_{x^n}(a_i) -
\hat{F}_{x^n}(a_{i-1})
\label{eqn:F_P_map}
\end{equation}
where, $a_i \in \mathcal{A}^{\Delta}$.
\item[Q2.] The quantization of the values $\hat{P}_{x^n}$ is carried out using a uniform quantizer, $Q_{\delta}$
\begin{equation}
\hat{P}^{\delta, \Delta}_{x^n}  =
Q_{\delta}(\hat{P}^{\Delta}_{x^n})
\label{eqn:F_P_delta_map}
\end{equation}
where, $\delta$ denotes the quantization step-size on the interval $[0,1]$.
\end{itemize}
This is primarily motivated by tractability of the proof of the asymptotic optimality results. But, it can also be argued that any practical implementation of this proposed denoiser only has a finite precision representation of the underlying clean symbol and the distribution function values itself. Analysis of the asymptotic optimality results also lends itself nicely to viewing the distribution of the underlying clean symbol, $\hat{F}_{x^n}$, as the asymptotic limit attained by its quantized, finite precision representation, $\hat{P}^{\delta, \Delta}_{x^n}$. This is formalized in section \ref{sec:dist_approx} where we discuss the precise convergence notion of $\hat{P}^{\Delta}_{x^n}$ to the un-quantized probability measure. 

The minimizer of the Bayes envelope in (\ref{g_opt_defn}) is then constructed from the quantized probability mass function, $P_{x^n}^{\delta, \Delta}$,as $g_{\text{opt}}\left[P_{x^n}^{\delta,\Delta} \right]$, where
$g_{\text{opt}}$ for the quantized clean symbol is,
\begin{eqnarray}
g_{\text{opt}}[P](y) &=& \arg \min_{\hat{x} \in \mathcal{A}^{\Delta}} \sum_{a \in \mathcal{A}^{\Delta}} \Lambda \left( a, \hat{x} \right) \cdot f_{Y|x=a}(y) \cdot P \left( X = a \right)
 \label{eqn:discrete_denoiser}
\end{eqnarray}
$\mathcal{A}^{\Delta}$ is finite alphabet approximation of
$[a,b]$ corresponding to the quantization step size of $\Delta$.
Note that we have extended the definition of  $g_{\text{opt}}$ to accommodate the case when $P$ is not a valid probability, i.e., $\hat{P}_{x^n}^{\delta,
\Delta}$ (it does not sum up to 1). Equipped with $\hat{P}_{x^n}^{\delta, \Delta}$, the candidate for the $n$-block symbol-by-symbol denoiser is now given by
\begin{equation}
\tilde{X}^{n, \delta, \Delta}[y^n](i) = g_{\text{opt}}\left[\hat{P}^{\delta,
\Delta}_{x^n}[y^n]\right](y_i), \qquad 1 \le i \le n
\label{eqn:sym_sym_denoiser}
\end{equation}
where, $g_{\text{opt}}$ is given in (\ref{eqn:discrete_denoiser}). We now proceed to discuss in detail the construction and consistency results of the estimate, $f_Y^n$, $\hat{F}_{x^n}$ and its quantized version, $\hat{P}^{\delta,
\Delta}_{x^n}$.
\subsection{Density Estimation for  independent and non identically
distributed random variables} \label{sec:density_est} We now obtain an estimator $f_Y^n$, for the function in (\ref{eq: creature to be estimated}) which depends on $x^n$ and therefore unknown to the denoiser. Given the memoryless nature
of the channel, the sequence of output symbols, $Y_1, Y_2, \cdots,
Y_n$ are independent random variables taking values in
$\mathbb{R}$, having conditional densities, $f_{Y|x_1}, f_{Y|x_2},
\cdots, f_{Y|x_n}$ respectively. A density estimate is a sequence
$f^1, f^2, \cdots, f^n$, where for each $n$, $f^n_Y(y) = f^n(y;Y_1, \cdots, Y_n)$ is a real-valued Borel
measurable function of its arguments, and for fixed $n$, $f^n$ is
a density estimate on $\mathbb{R}$. The \textit{kernel density
estimate} is given by
\begin{equation}
f^n_Y(y) = \frac{1}{nh^d}\sum_{i=1}^n K \left( \frac{y-Y_i}{h}
\right) \label{density_est}
\end{equation}
where $h = h_n$ is a sequence of positive numbers and $K$ is a
Borel measurable function satisfying $K \ge 0$, $\int K = 1$. The
$L_1$ distance, $J_n$, is defined as
\begin{equation}
J_n = \int \left| f^n_Y(y) -\frac{1}{n} \sum_{i=0}^n f_{Y|x_i}(y)
\right| dy
\label{eqn:J_n}
\end{equation}
The choice of $L_1$ distance as elaborated by the authors in
\cite{Luc} is motivated by its invariance under monotone
transformations of the coordinate axes and the fact that it is
always well-defined. Before proceeding to discuss convergence
results for $J_n$, we present definitions of certain types of
kernel functions, $K$, that are the backbone to
kernel density estimation techniques, \cite{Luc_2}.\\

\begin{definition}
The class of kernels, $\mathcal{K}$ s.t. $ \forall K \in
\mathcal{K}$, we have
\begin{eqnarray*}
\int K = 1
\end{eqnarray*}
and $K$ is symmetric about $0$ are called \textit{class 0
kernels}. \\
\end{definition}

\begin{definition}
A \textit{class s} kernel is a class 0 kernel for which
\begin{eqnarray*}
\int |x|^s|K(x)|dx < \infty
\end{eqnarray*}
and
\begin{eqnarray*}
\int x^i K(x)dx = 0
\end{eqnarray*}
for all $i=1,\cdots, s-1$. Most class 0 kernels are in fact class
2 kernels, the only additional condition being that $\int |x|^2
K(x) < \infty$. However, nonnegative class 0 kernels cannot
possibly of class $s \ge 3$.\\
\end{definition}

\begin{theorem}
 Let $K$ be a nonnegative Borel measurable function on
$\mathbb{R}$ with $\int K =1$ of class $s=2$. Let $f^n_Y$ be the
kernel estimate in (\ref{density_est}) and $J_n$, the
corresponding error as defined in (\ref{eqn:J_n}). Consider
\begin{enumerate}
\item  $J_n  \rightarrow 0$ in probability  as $n \rightarrow
\infty$, for some sequence $\mathbf{x}=(x_1, x_2, \cdots)$
\item  $J_n  \rightarrow 0$ in probability  as $n \rightarrow
\infty$, for all sequences $\mathbf{x}=(x_1, x_2, \cdots)$
\item $J_n  \rightarrow 0$ almost surely  as $n \rightarrow
\infty$, for all sequences $\mathbf{x}=(x_1, x_2, \cdots)$
\item For all $\epsilon > 0$, there exist $r$, $n_0 > 0$
such that $P(J_n \ge \epsilon) \le e^{-rn} $, $n \ge n_0$, for all
sequences $\mathbf{x}$. \item  $\lim_{n \rightarrow \infty} h =
0$, $\lim_{n
\rightarrow \infty} nh = \infty$ \\
\end{enumerate}
Then, 5 $\Rightarrow$ 4 $\Rightarrow$ 3
$\Rightarrow$ 2 $\Rightarrow$ 1.\\
\label{thm:density_consist}
\end{theorem}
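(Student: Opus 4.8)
The plan is to prove the chain by going from the strongest bandwidth hypothesis down to the weakest consistency conclusion, observing that the implications $4\Rightarrow 3$, $3\Rightarrow 2$, and $2\Rightarrow 1$ are standard probabilistic facts and that essentially all of the analytic work lives in $5\Rightarrow 4$. Throughout I abbreviate the target mixture density by $\phi_n(y)=\frac{1}{n}\sum_{i=1}^n f_{Y|x_i}(y)$, so that $J_n=\int |f^n_Y(y)-\phi_n(y)|\,dy$ with $f^n_Y$ the kernel estimate of (\ref{density_est}) and $J_n$ the error of (\ref{eqn:J_n}). The one case of genuine interest is scalar ($d=1$), consistent with hypothesis 5 being stated as $nh\to\infty$.

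For $5\Rightarrow 4$, I would split $J_n\le |J_n-\mathbb{E}J_n| + \mathbb{E}J_n$ and treat the two pieces with entirely different tools. The concentration piece does not use the bandwidth hypothesis at all: viewed as a function of the independent coordinates $(Y_1,\dots,Y_n)$, $J_n$ has the bounded-differences property, since replacing a single $Y_i$ perturbs $f^n_Y$ by at most $\frac{2}{n}$ in $L_1$ (each kernel term has $L_1$ mass $h$, the normalization $1/(nh)$ cancels it, and $\int K=1$), while $J_n$ is $1$-Lipschitz in $f^n_Y$ under the $L_1$ norm by the reverse triangle inequality. McDiarmid's inequality with $c_i=2/n$ then gives $P(J_n-\mathbb{E}J_n\ge t)\le \exp(-nt^2/2)$ for every $t>0$, uniformly in $\mathbf{x}$. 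Granting $\mathbb{E}J_n\to 0$, pick $n_0$ with $\mathbb{E}J_n\le \epsilon/2$ for $n\ge n_0$; then $P(J_n\ge\epsilon)\le P(J_n-\mathbb{E}J_n\ge \epsilon/2)\le e^{-rn}$ with $r=\epsilon^2/8$, which is exactly statement 4.

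The hard part will be establishing $\mathbb{E}J_n\to 0$ \emph{uniformly over all input sequences} $\mathbf{x}$, which is where $h\to 0$, $nh\to\infty$, the class-$2$ kernel, and the uniform tightness C2 all enter. I would use the bias/variance split $\mathbb{E}J_n\le \int |\mathbb{E}f^n_Y(y)-\phi_n(y)|\,dy + \int \mathbb{E}|f^n_Y(y)-\mathbb{E}f^n_Y(y)|\,dy$. For the bias integral, $\mathbb{E}f^n_Y=K_h*\phi_n$ with $K_h(u)=h^{-1}K(u/h)$, so it equals $\|K_h*\phi_n-\phi_n\|_1$; using $\|K_h*g-g\|_1\le\int K(u)\,\|g(\cdot-hu)-g\|_1\,du$ together with $L_1$-continuity of translation, it vanishes as $h\to 0$, the class-$2$ property supplying quantitative $O(h^2)$ control. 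For the variation integral I would apply Jensen to bound it by $\int \sqrt{\mathrm{Var}(f^n_Y(y))}\,dy$, restrict to a large compact set $A$ via Cauchy-Schwarz, $\int_A\sqrt{\mathrm{Var}}\le \sqrt{|A|}\,(\int \mathrm{Var})^{1/2}$, and use $\int \mathrm{Var}(f^n_Y(y))\,dy \le \|K\|_2^2/(nh)\to 0$ under $nh\to\infty$. The genuine obstacle, and the point of departure from the classical i.i.d.\ $L_1$ theory of \cite{Luc}, is making both bounds hold uniformly in $\mathbf{x}$: the target $\phi_n$ itself changes with $n$ and the integration region is unbounded. This is precisely the role of C2 — uniform tightness of $\{\mu_x\}$ renders the family $\{\phi_n\}$ uniformly integrable with uniformly small tails, which yields the equicontinuity of translations needed for a sequence-independent bias bound and confines the mass controlling the tail $y\notin A$ of the variation integral.

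The three remaining implications are routine. For $4\Rightarrow 3$, the bound $P(J_n\ge\epsilon)\le e^{-rn}$ is summable in $n$, so the Borel-Cantelli lemma gives $P(J_n\ge\epsilon\text{ i.o.})=0$; intersecting over a countable sequence $\epsilon\downarrow 0$ yields $J_n\to 0$ almost surely, for every $\mathbf{x}$. The implication $3\Rightarrow 2$ is the elementary fact that almost sure convergence implies convergence in probability, applied sequence-by-sequence. And $2\Rightarrow 1$ is immediate, since a statement holding for all sequences $\mathbf{x}$ holds in particular for some sequence.
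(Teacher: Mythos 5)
Your overall architecture matches the paper's at the top level: the paper likewise reduces everything to $5\Rightarrow 4$, proved as Lemma \ref{lem:lem_4_pf_dens_const}, and dismisses $4\Rightarrow 3\Rightarrow 2\Rightarrow 1$ as immediate (your Borel--Cantelli treatment of $4\Rightarrow 3$ is exactly the standard argument). For $5\Rightarrow 4$ itself, however, you take a genuinely different route. The paper never centers $J_n$ at its mean: it splits off the bias $\int|E f^n_Y-\phi_n|$ deterministically via the approximate-identity results (Theorems \ref{thm:mix_density_approx_1} and \ref{thm:mix_density_approx_2}, resting on C7a or C7b), approximates $K$ in $L_1$ by a simple function $K^{\ast}=\sum_i a_i I_{A_i}$, reduces $\int|f^n-Ef^n|$ to deviations of multinomial counts over a partition of $\mathbb{R}^d$ into cubes of side $h/N$, cuts off the tails using the uniform tightness C2, and concludes with a Scheff\'e--Bonferroni--Hoeffding multinomial inequality. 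That route produces the explicit threshold $nh^d>n_0(\mathcal{C},\rho,\delta,K)$ of (\ref{eqn:n_0}) and explicit exponents, both of which are consumed downstream in Theorem \ref{thm:loss_denoisability_dev}. Your McDiarmid step, by contrast, is correct and cleaner: the bounded-difference constant $c_i=2/n$ is computed correctly, the resulting sub-Gaussian deviation bound is uniform in $\mathbf{x}$, and it collapses all the combinatorics of the paper's partition argument into one inequality, at the price of having to prove $\sup_{\mathbf{x}} E J_n\to 0$ separately.

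Two concrete gaps remain in that expectation bound. First, your variance estimate $\int \mathrm{Var}(f^n_Y(y))\,dy\le \|K\|_2^2/(nh)$ presumes $K\in L_2$, which the theorem does not grant: nonnegativity, $\int K=1$, and class $2$ only give $\int x^2K(x)\,dx<\infty$, and a kernel such as $K(x)\propto |x|^{-1/2}\mathbf{1}_{(|x|\le 1)}$ satisfies all hypotheses while $\int K^2=\infty$. The standard repair is precisely the paper's opening move: replace $K$ by a bounded, compactly supported simple function $K^{\ast}$ with $\int|K-K^{\ast}|<\epsilon$, incurring an $L_1$ cost of $2\epsilon$, and run the variance argument on $K^{\ast}$. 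Second, you attribute the uniformity in $\mathbf{x}$ of the bias bound to C2. Tightness yields uniformly small tails but not a uniform $L_1$-modulus of continuity for the mixtures $\phi_n$, which change with both $n$ and $\mathbf{x}$; per-function $L_1$-continuity of translation is not enough here. The paper obtains this uniformity from C7a (uniformly bounded universal second derivatives, via the associated kernel $L$, which also supplies the quantitative $O(h^s)$ rate you allude to) or from C7b (uniform $L_1$-modulus $\Omega_{\mathcal{C}}(t)\to 0$), and indeed Lemma \ref{lem:lem_4_pf_dens_const} is stated under C1--C7, not C1--C4 alone. With these two repairs --- simple-function truncation of $K$, and invoking C7a/C7b rather than C2 for the bias --- your argument goes through and delivers statement 4, and hence the full chain.
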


The following lemma is key to the proof of Theorem
\ref{thm:density_consist}. \\

\begin{lemma}
\textit{For any family of channel probability density  functions,\\
$\{ f_{Y|x}\}_{x \in [a,b]} $ on $\mathbb{R}$, satisfying assumptions C1-C7, and any
non-negative, integrable function $K$, with $\int K(x)dx=1$,}
condition 4) in Theorem \ref{thm:density_consist} \textit{ holds
whenever}
\begin{equation}
\lim_{n \rightarrow \infty} h_n= 0 \mbox{   and   }  \lim_{n
\rightarrow \infty} nh^d= \infty
\end{equation}
\label{lem:lem_4_pf_dens_const}
\end{lemma}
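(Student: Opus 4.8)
The plan is to split the $L_1$ error $J_n$ into a stochastic (variance) part and a deterministic (bias) part, dispatch the stochastic part by a bounded-differences concentration inequality, and dispatch the bias part by the approximate-identity property of the kernel, with the uniform tightness (C2) and the regularity conditions C5--C7 supplying the crucial uniformity over the unknown deterministic sequence $\mathbf{x}$. Writing $K_h(u) = h^{-d}K(u/h)$ and $\bar f^n_Y(y) = E\bigl[f^n_Y(y)\bigr] = \frac{1}{n}\sum_{i=1}^n (K_h * f_{Y|x_i})(y)$, the triangle inequality gives
\begin{equation}
J_n \le \underbrace{\int \left| f^n_Y(y) - \bar f^n_Y(y) \right| dy}_{V_n} + \underbrace{\int \left| \bar f^n_Y(y) - \frac{1}{n}\sum_{i=1}^n f_{Y|x_i}(y) \right| dy}_{B_n},
\end{equation}
where $B_n$ is nonrandom once $\mathbf{x}$ is fixed, while $V_n = V_n(Y_1,\dots,Y_n)$ carries all the randomness.

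For the variance term I would use the bounded-differences property of $V_n$: since $\bar f^n_Y$ does not depend on the $Y_i$, replacing a single coordinate $Y_i$ by $Y_i'$ changes $f^n_Y(y)$ only by $\frac{1}{nh^d}\bigl[K((y-Y_i')/h) - K((y-Y_i)/h)\bigr]$, whose $L_1$ norm is at most $\frac{2}{nh^d}\cdot h^d = \frac{2}{n}$ because $\int K = 1$; by the reverse triangle inequality the corresponding change in $V_n$ is also at most $2/n$. The $Y_i$ are independent by C1, so McDiarmid's inequality yields, for every $t>0$,
\begin{equation}
P\left( V_n \ge E V_n + t \right) \le \exp\left( -\frac{n t^2}{2} \right),
\end{equation}
with a rate that does not depend on $\mathbf{x}$. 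This is the routine half of the argument.

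The main obstacle is establishing that both $B_n$ and $E V_n$ tend to $0$ \emph{uniformly over all sequences} $\mathbf{x}$. For the bias, $B_n \le \sup_{x\in[a,b]} \int |K_h * f_{Y|x} - f_{Y|x}|\,dy$, and each convolution converges to its target in $L_1$ as $h\to 0$ by continuity of translation in $L_1$; the delicate point is to make this convergence uniform in $x$. Uniform tightness (C2) confines the output mass to a compact set uniformly in $x$, and the equicontinuity/regularity conditions C5--C7 upgrade pointwise-in-$x$ convergence to uniform-in-$x$ convergence, so that $B_n \to 0$ uniformly over all $\mathbf{x}$. For $E V_n$ I would invoke the $L_1$ consistency in mean of the kernel estimate --- $E V_n \to 0$ whenever $h\to 0$ and $nh^d\to\infty$ --- bounding $E V_n$ by a variance term of order $(nh^d)^{-1/2}$ together with a tightness remainder; the standard i.i.d.\ argument carries over to the independent, non-identically distributed case, with C2 again ensuring the bound is independent of $\mathbf{x}$.

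Finally I would combine the three estimates. Fixing $\epsilon>0$, the uniform convergence of $B_n$ and $E V_n$ lets me choose $n_0$ depending only on $\epsilon$ (the channel and the kernel), not on $\mathbf{x}$, such that $B_n \le \epsilon/2$ and $E V_n \le \epsilon/4$ for all $n\ge n_0$. Then for $n\ge n_0$,
\begin{equation}
P\left( J_n \ge \epsilon \right) \le P\left( V_n \ge \epsilon/2 \right) \le P\left( V_n \ge E V_n + \epsilon/4 \right) \le \exp\left( -\frac{n\epsilon^2}{32} \right),
\end{equation}
which is exactly condition 4) of Theorem~\ref{thm:density_consist} with $r = \epsilon^2/32$. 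Everything beyond the uniform-in-$\mathbf{x}$ control of $B_n$ and $E V_n$ reduces to the bounded-differences inequality, so that uniformity is where I expect the real work to lie.
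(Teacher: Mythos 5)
Your proposal is correct, and while it handles the bias exactly as the paper does, it takes a genuinely different route for the stochastic part. Both arguments begin with the same split $J_n \le V_n + B_n$, and both dispose of the bias $B_n$ through the approximate-identity property of the kernel made uniform in $x$ by C7 (this is precisely the content of Theorems \ref{thm:mix_density_approx_1} and \ref{thm:mix_density_approx_2} in the paper, which give the $\mathbf{x}$-free bounds $h_n^s\,\mathcal{B}_{\mathcal{C}}\int|L|$ under C7a, or the dominated-convergence bound via $\Omega_{\mathcal{C}}$ under C7b; note that under C7b the uniform tightness C2 is not actually needed for the bias, only for the variance). Where you diverge is the concentration step: the paper never isolates $E V_n$, but instead approximates $K$ in $L_1$ by a simple function $K^{\ast}$ supported on finitely many rectangles, partitions $\mathbb{R}^d$ into cubes of side $h/N$, truncates the tails using C2, and applies a Scheff\'e--Bonferroni--Hoeffding multinomial inequality over the resulting finite partition of cardinality $O\left(\left(rN/h\right)^d\right)$; you instead apply McDiarmid's inequality directly to the functional $V_n$, whose bounded-differences constant $2/n$ you compute correctly, and then reduce to the mean consistency $E V_n \to 0$. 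Your route is shorter and more conceptual, but it hides two things the paper's discretization makes explicit. First, the $(nh^d)^{-1/2}$ bound on $E V_n$ requires $K$ bounded (via $\mathrm{Var}\,f^n_Y(y) \le \|K\|_{\infty}\bar f^n_Y(y)/(nh^d)$ and Cauchy--Schwarz on a compact set supplied by C2), so for a general nonnegative integrable $K$ you still need a preliminary $L_1$ approximation by a bounded compactly supported kernel --- which is exactly the paper's $K^{\ast}$ step, so this should be stated rather than absorbed into ``the standard i.i.d.\ argument carries over.'' Second, the paper's bookkeeping yields the explicit threshold $n_0\left(\mathcal{C},\rho,\delta,K\right)$ of (\ref{eqn:n_0}) and the explicit exponent $e^{-(1-\rho)n\delta^2/2}$, both of which are quoted verbatim downstream in Theorems \ref{thm:loss_denoisability_dev} and \ref{thm:large_deviation_whole_seq}; your constants ($r=\epsilon^2/32$, $n_0$ implicit in when $B_n \le \epsilon/2$ and $EV_n \le \epsilon/4$) certainly suffice for condition 4) of Theorem \ref{thm:density_consist} as stated, but would require re-tracking if one wanted to reproduce the paper's quantitative bounds $\alpha_n$.
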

\begin{proof}[Proof of Theorem \ref{thm:density_consist}]

The implication that 5 $\Rightarrow$ 4 is proved in Lemma
\ref{lem:lem_4_pf_dens_const}. Since clearly, 4 $\Rightarrow$ 3
$\Rightarrow$ 2 $\Rightarrow$ 1, the proof of Theorem
\ref{thm:density_consist} is complete.
\end{proof}

\subsection{Channel Inversion}

The mapping in (\ref{chan_inv_defn}) projects the kernel density estimate
of $\frac{1}{n}\sum_{i=1}^n f_{Y|x_i}(y)$ to an estimate of the
empirical distribution, $F_{x^n}$. This projection is such that it
best approximates (in the $L_1$ sense), the kernel density
estimate with a member in the set of achievable channel output
distributions. From the construction of $f_Y^n$ in
(\ref{density_est}), it is clear that $f_Y^n$ is a bona fide
density on $\mathbb{R}$. Additionally, from the construction of
$\hat{F}_{x^n}$ in (\ref{chan_inv_defn}), we see that for every $F
\in \mathcal{F}_n^{[a,b]}$, $\left[F \otimes \mathcal{C}
\right]_Y$ is also a valid density in $\mathbb{R}$. Finally, from
the definition of the norm, $d$, in (\ref{norm_defn}), it
is true that for $f_Y^n$ and $[F \otimes \mathcal{C}]_Y$ being
bona fide densities on $\mathbb{R}$, $0 \le d \left(f_Y^n, [F
\otimes \mathcal{C}]_Y \right) \le 2$, $\forall$, $n$. These
facts, together with the convexity of $\mathcal{F}_n^{[a,b]}$ show
that the estimator in (\ref{chan_inv_defn}) is well defined. With the Levy metric defined as:
 
\begin{definition}[Levy metric]
The Levy distance $\lambda \left(F,G \right)$ between any two
distributions $F$ and $G$ is defined as
\end{definition}

\begin{equation*}
\lambda \left(F,G \right) = \inf \{ \varepsilon
>0 :F(x-\varepsilon) -\varepsilon \le G(x) \le F(x+\varepsilon)+\varepsilon \text{ for all } x \}
\end{equation*}

we have:\\

\begin{theorem}
For the estimator, $\hat{F}_{x^n}$ defined in equation
(\ref{chan_inv_defn}) we have $\lambda \left(
F_{x^n},\hat{F}_{x^n} \right) \rightarrow 0$ a.s. for all $\mathbf{x} \in [a,b]^{\infty}$
\label{cont_map}
\end{theorem}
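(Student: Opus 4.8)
The plan is to reduce the claim to two ingredients already in hand: the $L_1$-consistency of the kernel estimate (Theorem~\ref{thm:density_consist}) and the topological properties of the channel map encoded in C4 and C5. First I would exploit the fact that the true empirical distribution $F_{x^n}$ is itself a member of $\mathcal{F}_n^{[a,b]}$, so it is a \emph{feasible} point in the minimization (\ref{chan_inv_defn}) defining $\hat{F}_{x^n}$. Writing $\phi(F)=[F\otimes\mathcal{C}]_Y=\int f_{Y|x}\,dF(x)$ for the forward channel map and noting $\phi(F_{x^n})=\frac1n\sum_{i=1}^n f_{Y|x_i}$, the defining optimality of $\hat{F}_{x^n}$ gives $d(f_Y^n,\phi(\hat{F}_{x^n}))\le d(f_Y^n,\phi(F_{x^n}))=J_n$. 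A single triangle inequality then yields $d(\phi(\hat{F}_{x^n}),\phi(F_{x^n}))\le 2J_n$, and since Theorem~\ref{thm:density_consist} guarantees $J_n\to0$ almost surely (condition 3, once the bandwidth is chosen to satisfy condition 5), the two \emph{output} densities merge in $L_1$ with probability one.

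The remaining and main task is to transfer this $L_1$-closeness of the channel \emph{outputs} back to Levy-closeness of the channel \emph{inputs}, i.e.\ to use continuity of the inverse channel map. I would not construct the inverse explicitly; instead I would argue by compactness along subsequences. Fix a realization in the probability-one event on which $d(\phi(\hat{F}_{x^n}),\phi(F_{x^n}))\to0$, and suppose for contradiction that $\lambda(F_{x^n},\hat{F}_{x^n})\not\to0$. Then along some subsequence $\lambda(F_{x^n},\hat{F}_{x^n})\ge\varepsilon$ for a fixed $\varepsilon>0$. Because $[a,b]$ is compact, $\mathcal{F}^{[a,b]}$ is weakly compact, and the Levy metric metrizes weak convergence; hence along a further subsequence $F_{x^n}\to F^\star$ and $\hat{F}_{x^n}\to\hat{F}^\star$ in the Levy metric, with $\lambda(F^\star,\hat{F}^\star)\ge\varepsilon$, so that $F^\star\ne\hat{F}^\star$.

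I would then close the argument by invoking C5 and C4 in turn. Continuity of the forward map (C5) gives $\phi(F_{x^n})\to\phi(F^\star)$ and $\phi(\hat{F}_{x^n})\to\phi(\hat{F}^\star)$ in $L_1$; combined with $d(\phi(F_{x^n}),\phi(\hat{F}_{x^n}))\to0$, this forces $\phi(F^\star)=\phi(\hat{F}^\star)$. But the linear-independence/invertibility assumption C4 makes $\phi$ injective on $\mathcal{F}^{[a,b]}$, so $F^\star=\hat{F}^\star$, contradicting $\lambda(F^\star,\hat{F}^\star)\ge\varepsilon$. Hence $\lambda(F_{x^n},\hat{F}_{x^n})\to0$ on the probability-one event, which is the claim. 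The hard part is precisely this inversion step: $L_1$-proximity of outputs does not by itself control the inputs, and it is only the interplay of injectivity (C4) with forward continuity (C5) over the \emph{compact} space $\mathcal{F}^{[a,b]}$ — the abstract fact that a continuous injection from a compact metric space has continuous inverse on its image — that upgrades output-closeness to input-closeness. I would be careful to use C4 in the strong form ``$\int f_{Y|x}\,dF(x)=\int f_{Y|x}\,dG(x)$ for (a.e.)\ $y$ implies $F=G$'' on \emph{all} of $\mathcal{F}^{[a,b]}$, not merely on the empirical subfamilies $\mathcal{F}_n^{[a,b]}$, since the subsequential limits $F^\star,\hat{F}^\star$ need not be empirical distributions.
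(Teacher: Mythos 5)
Your proposal is correct and matches the paper's own proof in all essentials: the paper likewise uses feasibility of $F_{x^n}$ in (\ref{chan_inv_defn}) plus the triangle inequality and Theorem \ref{thm:density_consist} to obtain $d\left(\left[F_{x^n}\otimes\mathcal{C}\right]_Y,\left[\hat{F}_{x^n}\otimes\mathcal{C}\right]_Y\right)\rightarrow 0$ a.s.\ (its Lemmas \ref{lem:density_consist_cont_map_1} and \ref{lem:density_consist_cont_map_2}), and then inverts by exactly the abstract fact you name --- a continuous injection (forward continuity from C5, derived in the paper from the uniform bounded-Lipschitz condition and dominated convergence; injectivity from C4, used on all of $\mathcal{F}^{[a,b]}$ as you rightly insist) from the compact metric space $\left(\mathcal{F}^{[a,b]},\beta\right)$ has a continuous inverse on its image --- before passing to the Levy metric via $\lambda \le \rho$. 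Your subsequence-and-contradiction argument is simply the standard proof of that compactness fact, which the paper instead cites directly (via Dudley's compactness results and the continuous mapping theorem), so the two proofs differ only in presentation.
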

The proof of Theorem \ref{cont_map} is discussed in detail in the Appendix \ref{app:proofofthmcont_map}.

\subsection{Distribution-independent Approximation of the Estimate
of the Input empirical distribution} \label{sec:dist_approx}
In this section, we discuss the convergence notion of $\hat{P}_{x^n}^{\Delta}$ to the law corresponding to the un-quantized distribution function $\hat{F}_{x^n}$.
\begin{definition}[$\beta$ metric]
For any two laws $P$ and $Q$ on $S$, $f:S \rightarrow \mathbb{R}$
let $ \int f d \left(P-Q \right) := \int fdP - \int f dQ$, for
bounded $\int f dP$ and $\int f dQ$, the Prohorov metric is
defined as
\begin{equation*}
\beta \left(P,Q\right) = \sup \left\{ \left| \int f d \left(P-Q
\right)\right| : \parallel f \parallel_{BL} \le 1 \right\}
\end{equation*}
where
\begin{equation}
\parallel f \parallel_{BL} = \parallel f \parallel_{L} + \parallel f
\parallel_{\infty}
\end{equation}
and
\begin{equation}
\parallel f \parallel_{L} = := \sup_{x \neq y} \frac{\left| f(x)-
f(y)\right|}{|x-y|} ,\qquad \parallel f \parallel_{\infty} =
\sup_x \left| f(x) \right|
\end{equation}
\end{definition}

Equipped with this definition, we now state the following theorem,
\begin{theorem}
\begin{equation}
\lim_{\Delta \rightarrow \infty} \beta \left( \hat{P}_{x^n}, \hat{P}_{x^n}^{\Delta}\right) = 0
\end{equation}
where, $\hat{P}_{x^n}$ denotes the law associated with the distribution function $\hat{F}_{x^n}$.
\label{thm:dist_approx}
\end{theorem}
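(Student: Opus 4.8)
The plan is to read the $\beta$ metric through its dual (bounded-Lipschitz) characterization supplied in the definition above, and to recognize $\hat{P}^{\Delta}_{x^n}$ as the pushforward of $\hat{P}_{x^n}$ under the map that collapses each quantization cell onto its representative point. Here I interpret the displayed limit ``$\Delta \to \infty$'' as the assertion that the quantization is \emph{refined without bound}, tracked through the number of levels $N(\Delta)$, so that the maximal cell width $\rho(\Delta) := \max_{1 \le i \le N(\Delta)} (a_i - a_{i-1})$ tends to $0$. This is the only reading under which the approximation $\hat{P}^{\Delta}_{x^n} \to \hat{P}_{x^n}$ is meaningful, since coarsening the grid cannot sharpen the approximation; making this reading explicit is precisely the conceptual crux noted at the end.

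First I would define the measurable map $q_{\Delta}: [a,b] \to \mathcal{A}^{\Delta}$ by $q_{\Delta}(x) = a_i$ whenever $x \in (a_{i-1}, a_i]$, with the left endpoint $a$ assigned to $a_1$ so that the first cell is closed on the left. Reading off (\ref{P_delta_0}), the mass that $\hat{P}^{\Delta}_{x^n}$ places at $a_i$ equals $\hat{F}_{x^n}(a_i) - \hat{F}_{x^n}(a_{i-1}) = \hat{P}_{x^n}\big((a_{i-1}, a_i]\big) = \hat{P}_{x^n}\big(q_{\Delta}^{-1}(\{a_i\})\big)$, so that $\hat{P}^{\Delta}_{x^n}$ is exactly the law of $q_{\Delta}(X)$ when $X \sim \hat{P}_{x^n}$. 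Consequently, for every bounded measurable $f$ one has the change-of-variables identity $\int f \, d\hat{P}^{\Delta}_{x^n} = \int f(q_{\Delta}(x)) \, d\hat{P}_{x^n}(x)$.

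Next I would bound a single test-function discrepancy uniformly. For any $f$ with $\|f\|_{BL} \le 1$ we have in particular $\|f\|_L \le \|f\|_{BL} \le 1$, hence $|f(x) - f(q_{\Delta}(x))| \le |x - q_{\Delta}(x)| \le \rho(\Delta)$ for every $x$, because $q_{\Delta}(x)$ lies in the same cell as $x$. Combining this with the identity above and the triangle inequality,
\begin{equation*}
\left| \int f \, d\hat{P}_{x^n} - \int f \, d\hat{P}^{\Delta}_{x^n} \right| = \left| \int \big( f(x) - f(q_{\Delta}(x)) \big) \, d\hat{P}_{x^n}(x) \right| \le \rho(\Delta),
\end{equation*}
and taking the supremum over all admissible $f$ yields $\beta\big(\hat{P}_{x^n}, \hat{P}^{\Delta}_{x^n}\big) \le \rho(\Delta)$. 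Since $\rho(\Delta) \to 0$ as the grid refines, the claimed limit follows.

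The main obstacle is conceptual rather than computational: one must reconcile the direction of the limit with the construction, since reading $\Delta$ literally as the step size makes $\rho(\Delta) = \Delta$ \emph{grow}, so the transport bound would run the wrong way; the repair is to let $\Delta$ index a refining family of grids (equivalently, to track $N(\Delta) \to \infty$), under which the inequality $\beta \le \rho(\Delta)$ is unchanged but now drives $\beta$ to $0$. A secondary, purely bookkeeping, difficulty is the boundary cells: the leftmost cell must be taken half-closed so as to absorb any atom of $\hat{P}_{x^n}$ at $a$ (ensuring the telescoping sum in (\ref{P_delta_0}) accounts for the full unit mass and that $\hat{P}^{\Delta}_{x^n}$ is a genuine law), and in the ``else'' branch the last representative is relocated to $b$; in both cases the displacement of the transported mass remains at most $\rho(\Delta)$, so the bound and the conclusion are unaffected.
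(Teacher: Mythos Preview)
Your proposal is correct, and your reading of the limit as a refinement (i.e., $\Delta \to 0$) is confirmed by the paper itself: the theorem is proved by direct appeal to Lemma~\ref{lem:dist_approx}, whose statement has $\lim_{\Delta \to 0}\beta(P,P^{\Delta})=0$, so the ``$\Delta \to \infty$'' in the theorem is a typo.

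The core computation is the same in both arguments---bound $\bigl|\int f\,dP - \int f\,dP^{\Delta}\bigr|$ by the oscillation of $f$ over a single cell---but the packaging differs. The paper works with arbitrary $f \in \mathcal{C}_b([a,b])$, obtains $|Pf - P^{\Delta}f| \le \omega_f(\Delta)$, concludes weak convergence $P^{\Delta}\Rightarrow P$, and then invokes that $\beta$ metrizes weak convergence. Your route is more direct: by recognizing $P^{\Delta}$ as the pushforward $q_{\Delta\#}P$ you get the quantitative bound $\beta(P,P^{\Delta}) \le \rho(\Delta)$ immediately from $\|f\|_L \le 1$, with no detour through weak convergence and no appeal to a metrization theorem. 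This buys you an explicit rate (linear in the mesh), which the paper's argument does not state; conversely the paper's version applies to all bounded continuous test functions, though that extra generality is not needed for the $\beta$ conclusion.
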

\begin{proof}
Follows directly from Lemma \ref{lem:dist_approx}.\\
\end{proof}

\begin{lemma} For any $F \in \mathcal{F}^{[a,b]}$,
\begin{equation}
\lim_{\Delta \rightarrow 0} \beta \left( P, P^{\Delta} \right)= 0
\label{thm_4}
\end{equation}
where $P$ is the law associated with distribution functions in
the family $\mathcal{F}^{[a,b]}$. Particularly, the $F$ and $P^{\Delta}$
that satisfies (\ref{thm_4}) is defined by,
\begin{equation}
P^{\Delta}(a_i) = F(a_i)-F(a_{i-1}) \label{P_delta}
\end{equation}
where $a_i \in \mathcal{A}^{\Delta}$ and $\mathcal{A}^{\Delta}$ is the finite alphabet approximation of $[a,b]$ discussed earlier.
\label{lem:dist_approx}
\end{lemma}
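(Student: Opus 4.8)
The plan is to realize $P^{\Delta}$ as the image (pushforward) of $P$ under a map that displaces every point of $[a,b]$ by at most $\Delta$, and then to exploit the fact that the test functions $f$ defining the $\beta$ metric are $1$-Lipschitz. Concretely, I would introduce the ``round-up-to-the-grid'' map $q_{\Delta}:[a,b] \to \mathcal{A}^{\Delta}$ defined by $q_{\Delta}(x) = a_i$ whenever $x \in (a_{i-1},a_i]$, with $q_{\Delta}(a)=a_0=a$. By the very definition (\ref{P_delta}), $P^{\Delta}(a_i) = F(a_i)-F(a_{i-1}) = P\big((a_{i-1},a_i]\big) = P\big(q_{\Delta}^{-1}(a_i)\big)$, so that $P^{\Delta}$ is precisely the law of $q_{\Delta}(X)$ when $X \sim P$. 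This identifies $(X, q_{\Delta}(X))$ as a coupling of $P$ and $P^{\Delta}$.

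The second ingredient is a uniform displacement bound. Every quantization cell $(a_{i-1},a_i]$ has length at most $\Delta$ -- including the possibly-shortened final cell terminating at $b$ in the case $\Delta \nmid (b-a)$ -- so that $|x - q_{\Delta}(x)| \le \Delta$ for every $x \in [a,b]$.

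Combining these, for any admissible test function $f$ with $\parallel f \parallel_{BL} \le 1$ (which in particular forces $\parallel f \parallel_{L} \le 1$, i.e.\ renders $f$ a $1$-Lipschitz function), I would write, with $X \sim P$,
\begin{equation*}
\left| \int f \, d\left(P - P^{\Delta}\right) \right| = \left| E\big[f(X) - f(q_{\Delta}(X))\big] \right| \le E\,\big| f(X) - f(q_{\Delta}(X)) \big| \le E\,| X - q_{\Delta}(X) | \le \Delta .
\end{equation*}
Taking the supremum over all such $f$ yields $\beta(P, P^{\Delta}) \le \Delta$, and letting $\Delta \to 0$ delivers the claim.

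I do not anticipate a genuine obstacle: once the pushforward identity and the displacement bound are in place, the estimate is immediate from the Lipschitz constraint built into $\parallel \cdot \parallel_{BL}$. The only points demanding care are the boundary bookkeeping -- confirming that $q_{\Delta}$ is well defined at $x=a$ and correctly absorbs any atom of $F$ at $a$, and that the truncated terminal cell at $b$ still respects the length-$\le\Delta$ bound so that the displacement estimate holds uniformly over $[a,b]$. These are routine and do not affect the $O(\Delta)$ conclusion.
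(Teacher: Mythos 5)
Your proof is correct, and it takes a genuinely different route from the paper's. The paper argues qualitatively: for each fixed $f \in \mathcal{C}_b([a,b])$ it bounds $\left| \int f\, dF - \int f\, dP^{\Delta} \right|$ by the modulus of continuity $\omega_f(\Delta)$, concludes weak convergence $P^{\Delta} \Rightarrow P$, and then invokes the fact that the $\beta$ metric metrizes weak convergence. You instead exhibit the coupling $(X, q_{\Delta}(X))$ via the pushforward identity $P^{\Delta} = P \circ q_{\Delta}^{-1}$ (which is exactly right: $P\left((a_{i-1},a_i]\right) = F(a_i) - F(a_{i-1})$ by right-continuity of $F$, and your convention $q_{\Delta}(a) = a_0$ correctly absorbs any atom at $a$), and then use the $1$-Lipschitz constraint built into $\parallel \cdot \parallel_{BL}$ to bound the defining supremum of $\beta$ directly. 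What your approach buys is an explicit, quantitative rate $\beta\left(P, P^{\Delta}\right) \le \Delta$ that holds \emph{uniformly over all} $F \in \mathcal{F}^{[a,b]}$, with no appeal to the metrization theorem; this uniformity is actually a strengthening relevant to how the lemma is used in Theorem \ref{thm:dist_approx}, where it is applied to the data-dependent distribution $\hat{F}_{x^n}$ that varies with $n$ (the paper's per-$f$ modulus-of-continuity bound is not uniform in $F$ or in $f$). What the paper's route buys is the nominally broader statement of convergence of $\int f\, d P^{\Delta}$ for every bounded continuous $f$, not just bounded Lipschitz ones — though since $\beta$ metrizes weak convergence on $[a,b]$, your bound recovers that as well. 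Your boundary bookkeeping (the shortened terminal cell at $b$ still has length at most $\Delta$) is the only delicate point, and you have handled it.
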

In words, any empirical distribution of the underlying clean sequence is approximated arbitrarily well with a PMF on the quantized set of points when the quantization is fine enough. 

Next we discuss the mechanics of the construction of the denoiser, which has the density estimation and the channel inversion steps as its core.
\subsection{Implementation of the symbol-by-symbol denoiser}
\label{sec:sym-sym denoiser}
The implementation of the denoiser in the previous section involves a discretization of the density estimation and the channel inversion steps. 
The discretized version of the kernel density estimate, $f^n_Y(y)$, in (\ref{density_est}) is evaluated at a set of discrete points, $\{y_1, \cdots, y_N\}$ . This gives an $N$-dimensional vector of the distribution function, $p^n_Y(y)$. The ``channel inversion'' in (\ref{chan_inv_defn}) is also discretized using the estimate, $p^n_Y(y)$. 
\subsubsection{Fast kernel density estimation}
\label{subsec:Fast Kernel Density Est}
The Kernel density estimation in (\ref{density_est}) for a given kernel function, $K$, although simple in construction, is faced with a significant computational burden for a brute-force computation of $O(Nn)$ corresponding to $n$ data points and $N$ points $\{y_1, \cdots, y_N\}$ at which $p_Y^n(y)$ is evaluated. The computational complexity can be greatly reduced by using FFT based methods \cite{Silverman}. Recently, there has been extensive work on the use of fast gauss transform-based techniques \cite{greengard} for reduction of computational complexity. These techniques reduce the complexity from $O(Nn)$ to $O(N+n)$. The cardinal factor in nonparametric density estimation procedures is the choice of the \textit{optimal} bandwidth, $h$, in (\ref{density_est}). There has been some recent work in \cite{GrayandMoore} on using dual-tree methods to derive fast methods for optimal bandwidth choice that continues to maintain the complexity of this step at $O(N+n)$. For $N=O(n)$, this reduces to $O(n)$.

\subsubsection{Channel inversion using linear programming techniques}
\label{subsec:Channel Inversion}
In solving the channel inversion problem in (\ref{chan_inv_defn}), we are looking for a vector in the probability simplex, $\mathcal{F}^{\Delta}=\{P:\sum_{i=1}^{N \left( \Delta \right)} P\left(a_i \right), a_i \in \mathcal{A}^{\Delta}\}$, for our candidate distribution function, $\hat{P}_{x^n}^{\delta,\Delta}$. The discretized version of (\ref{chan_inv_defn}) is given by,
\begin{equation}
\hat{P}_{x^n}^{\delta,\Delta} = \arg \min_{p \in \mathcal{F}^{\Delta}} \sum_{i=1}^N \left| p^n_Y(y_i) - \sum_{j=1}^{N(\Delta)} f_{Y|x=x_
j}(y_i)Q_{\delta} \left(p\left(x_j \right)\right) \right|
\label{eqn:chan_inv_lp}
\end{equation}
The objective function, being an $L_1$-norm, is clearly a convex function (of the input distribution, $p(\cdot)$) and the candidate minimizer also resides in the convex subspace, viz., the probability simplex $\mathcal{F}^{\Delta}$. This can be easily solved using well-studied linear programming algorithms in the broader area of convex optimization techniques. 
The particular reformulation of the problem solved is of the form
\begin{eqnarray}
\hat{P}_{x^n}^{\delta,\Delta}  &=& \arg \min_{p \in \mathcal{F}^{\Delta}} \sum_{i=1}^N \varepsilon_i \nonumber \\
\text{s.t. } & &p^n_Y(y_i) - \sum_{j=1}^{N(\Delta)} f_{Y|x=x_
j}(y_i)Q_{\delta} \left(p(x_j)\right) \le \varepsilon_i \nonumber \\
 & & \sum_{j=1}^{N(\Delta) } f_{Y|x=x_
j}(y_i)Q_{\delta} \left(p(x_j)\right) - p^n_Y(y_i) \le \varepsilon_i \qquad \forall i \in \{1, \cdots, N \}
\label{eqn:LP_symsym}
\end{eqnarray}
The computational complexity of solving this problem using the popular interior point methods \cite{Tsitsiklis} is $O((N+N(\Delta))^3)=O\left((N+\frac{1}{\Delta})^3\right)=O((N+\log n)^3)$. This again, for $N=O(n)$, reduces to $O \left( \left(n + \log n \right)^3 \right)= O (n^3)$.

The two-pronged quantization discussed in the previous section can be naturally built into the optimization problem in (\ref{eqn:chan_inv_lp}) by searching in \begin{equation}\mathcal{F}^{\delta, \Delta} = \left\{Q_{\delta}(P): P \in \mathcal{F}^{\Delta} \right\} \label{eqn:twoprongquant} \end{equation} the set of $N \left( \Delta \right)$-tuples with
components in [0,1] that are integer multiples of $\frac{1}{\delta}$ with
point masses on the set $\mathcal{A}^{\Delta}$. The formulation would then be
\begin{eqnarray*}
\hat{P}_{x^n}^{\delta,\Delta}  &=& \arg \min_{p \in \mathcal{F}^{\delta,\Delta}} \sum_{i=1}^N \varepsilon_i\\
\text{s.t. } & &p^n_Y(y_i) - \sum_{j=1}^{N(\Delta)} f_{Y|x=x_
j}(y_i)p(x_j)\le \varepsilon_i \\
 & & \sum_{j=1}^{N(\Delta) } f_{Y|x=x_
j}(y_i)p(x_j) - p^n_Y(y_i) \le \varepsilon_i \qquad \forall i \in \{1, \cdots, N \} \end{eqnarray*}
This channel inversion is at the heart of the denoiser in (\ref{eqn:discrete_denoiser}) and its simple formulation makes the scheme particularly elegant and practically implementable. The estimate of the empirical distribution in (\ref{eqn:chan_inv_lp}) is then plugged into (\ref{eqn:discrete_denoiser}) to finally give an estimate of the underlying clean symbol according to (\ref{eqn:sym_sym_denoiser}). The denoiser is described as Algorithm  \ref{alg:CUDE} below.
\incmargin{1em} 
\restylealgo{boxed}\linesnumbered 
\begin{algorithm}[h]
\label{alg:CUDE}
\SetKwData{Left}{left} 
\SetKwData{This}{this} 
\SetKwData{Up}{up} 
\SetKwFunction{Union}{Union} 
\SetKwFunction{FindCompress}{FindCompress} 
\SetKwInOut{Input}{input} 
\SetKwInOut{Output}{output} 
\caption{Symbol-by-symbol denoiser in Section \ref{sec:denoiser_construct}} 
\Input{Noisy sequence $y^n$,  channel $\mathcal{C}$} 
\Output{Denoised sequence, $\hat{x}^n$} 
\BlankLine 
\textbf{FIRST PASS} \\
\textbf{Density estimation step}\\
\Input{Noisy sequence, $y^n$}
\Output{Density estimate, $f^n_Y$}
 Determine the optimal bandwidth from any one of the techniques discussed in \cite{Silverman}, e.g., cross-validation\\
 Use techniques discussed in \cite{GrayandMoore} for \textit{fast} evaluation of (\ref{density_est}) \\
\textbf{Channel inversion step}\\
\Input{$f_Y^n$, Quantization resolutions, $\delta$,$\Delta$}
\Output{$\hat{P}_{x^n}^{\delta,\Delta}$}
Construct  an LP (Linear Program) as in (\ref{eqn:LP_symsym}) and use \texttt{linprog} (in MATLAB) or any complex program solver to solve it. Alternatively, use log-barrier methods discussed in \cite{SB}  to solve for the estimate, $\hat{F}_{x^n}$ \\
Use the quantization mapping in (\ref{eqn:F_P_map}) to map $\hat{F}_{x^n}$ to $\hat{P}_{x^n}^{ \Delta}$\\
Then use a uniform quantizer with resolution $\delta$ to get $\hat{P}_{x^n}^{\delta, \Delta} \leftarrow Q_{\delta} \left( \hat{P}_{x^n}^{ \Delta} \right)$\\
\textbf{SECOND PASS} \\
\Input{Noisy sequence, $y^n$, channel $\mathcal{C}$, estimate of input distribution $\hat{P}_{x^n}^{\delta,\Delta}$}
\Output{Denoised Sequence, $\hat{x}^n$}
Use equation (\ref{eqn:discrete_denoiser}), (\ref{eqn:sym_sym_denoiser}) to denoise at every location, $i$\\
\For{$i\leftarrow 1$ \KwTo $n$}{ 
$\hat{x}_i  \leftarrow g_{\text{opt}}[\hat{P}^{\delta,
\Delta}_{x^n}](y_i)$
} 
\end{algorithm}
\decmargin{1em}

\section{Performance guarantees for the Symbol by Symbol denoiser}
\label{sec:Analysis} The main result of this section is Theorem \ref{thm:mainresultsymsym} below, which establishes the universal asymptotic optimality of our proposed symbol-by-symbol denoiser in (\ref{eqn:sym_sym_denoiser}) with respect to the class of symbol-by-symbol schemes. The predominant technical result leading to Theorem \ref{thm:mainresultsymsym} is Theorem \ref{thm:loss_denoisability_dev}. We continue to restrict ourselves to the semi-stochastic setting where the
underlying clean sequence is an unknown, but deterministic,
sequence $\mathbf{x}$. The benchmark performance for the clean
sequence is the minimum possible symbol-by-symbol loss, $D_0\left(
x^n \right)$, defined in Section \ref{sec:problem_setting}. Theorem \ref{thm:mainresultsymsym} shows that our proposed denoiser, $g_{\text{opt}}\left[
\hat{P}_{x^n}^{\delta, \Delta}\right] $, asymptotically (as the number of observations increases) achieves
that benchmark performance.
This is achieved by bounding the deviation of the cumulative loss
incurred by $g_{\text{opt}}\left[ \hat{P}_{x^n}^{\delta,
\Delta}\right] $ from the minimum possible symbol-by-symbol loss
in Theorem \ref{thm:loss_denoisability_dev} for any block length,
$n$. Hence we show that,
$g_{\text{opt}}\left[ \hat{P}_{x^n}^{\delta, \Delta}\right] $
performs essentially as well as the best possible symbol-by-symbol
denoiser, $D_0 \left( x^n \right)$.

In preparation for Theorem \ref{thm:loss_denoisability_dev} let $\mathcal{F}^{\delta, \Delta}$, defined in (\ref{eqn:twoprongquant}), denote the set of probabilities with components in [0,1] that are integer multiples of $\delta$ (defined under Q2. in section \ref{sec:denoiser_construct}). Note that $\hat{P}^{\delta,
\Delta}_{x^n} \in \mathcal{F}^{\delta, \Delta}$, where $\hat{P}^{\delta, \Delta}_{x^n}$ was defined in (\ref{eqn:F_P_delta_map}). Also, let $\mathcal{G}_{\delta, \Delta} = \{g_{\text{opt}}[P]\}_{P
\in \mathcal{F}^{\delta, \Delta}}$ denote the set of all possible denoisers that can be constructed from the members of the set $\mathcal{F}^{\delta, \Delta}$ using (\ref{eqn:discrete_denoiser}). Define $G(\epsilon, B) = \frac{2\epsilon^2}{B^2}$, 
\begin{equation}
\alpha_n \left( \varepsilon, \delta, \Delta, \rho, \gamma \right) = \left[ \frac{1}{\delta}+1\right]^{\Delta} \left[2e^{-G(\epsilon+\delta \Lambda_{\max}, \Lambda_{\max})n} +
e^{-(1-\rho)\frac{n\gamma^2}{2}} \right]+e^{-(1-\rho)\frac{n\gamma^2}{2}}
\end{equation}
\begin{equation}
\nu \left(  \varepsilon, \delta, \Delta, \Lambda, \mathcal{C} \right) = 3\epsilon + 5\delta \Lambda_{\max} + 4\xi_{\Delta}\Lambda_{\max} + 4\lambda(\Delta)(1+\xi_{\Delta})
\label{eqn:nu}
\end{equation}
\begin{equation}
1-\frac{\rho(\epsilon, \delta)}{2} =  \left(1- \frac{6\epsilon}{\delta} \right)^2 
\label{eqn:rho}
\end{equation}
where
\begin{equation}
\xi_{\Delta} = \sup_{x \in [a,b]} \sup_{\hat{x} \in [a,b] \atop
\left| x- \hat{x} \right| \le \Delta } \int \left| f_{Y|x}(y) -
f_{Y| \hat{x}}(y) \right|dy \label{delta_Delta}
\end{equation}
and $\lambda(\Delta)$ is the moduli of continuity defined in (\ref{loss_modcont1}). The Lipschitz norm, $\parallel \Xi \parallel_L$ of $\xi_{\Delta}$ is given by
\begin{equation}
\parallel \Xi \parallel_L = \sup_{0 < \Delta < (b-a)}
\frac{\xi_{\Delta}}{\Delta} 
\label{eqn:delta_L}
\end{equation}
$D_0 \left( x^n \right)$ is the symbol-by-symbol minimum loss of $x^n$ defined in (\ref{1symb_denoisability}). \\
\begin{theorem}
For all $\epsilon >0$,  $\delta
>0$, $\rho=\rho (\epsilon, \delta)$, $\Delta > 0$ and $x^n \in [a,b]^n$ let,
\begin{equation*}
\gamma = \frac{\epsilon}{\left( \parallel \Lambda \parallel_L +
\Lambda_{\max}
\parallel \Xi \parallel_L+ (b-a)\parallel \Lambda \parallel_L \parallel \Xi
 \parallel_L+ \Lambda_{\max} \right)}
\end{equation*}
then, we have
\begin{eqnarray}
\mathrm{Pr} \left( \left| L_{\tilde{X}^{n,\delta, \Delta}}(x^n,Y^n) -
D_0(x^n) \right| >  \nu \left(  \varepsilon, \delta, \Delta, \Lambda, \mathcal{C} \right)  \right) 
\le \alpha_n \left( \varepsilon, \delta, \Delta, \rho, \gamma \right)
\qquad \forall \;n \mbox{ s.t. } nh_n > n_0 \left(\mathcal{C}, \rho, \delta,  K
\right) 
\end{eqnarray}
\label{thm:loss_denoisability_dev}
\end{theorem}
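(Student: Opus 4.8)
The plan is to split the deviation into a stochastic fluctuation of the empirical loss about its channel-average and a (mostly deterministic) estimation bias, and to control the two pieces separately. Write $\ell_g(x)=E_x\Lambda(x,g(Y))=\int\Lambda(x,g(y))f_{Y|x}(y)\,dy$ for the channel-expected loss incurred at a clean symbol $x$ by a fixed denoiser $g$, and set $\phi(F,g)=\int_{[a,b]}\ell_g(x)\,dF(x)=E_{F\otimes\mathcal C}\Lambda(X,g(Y))$ as in (\ref{eqn_3}). With $\hat g=g_{\text{opt}}[\hat P^{\delta,\Delta}_{x^n}[Y^n]]$ the realized denoiser and $g^\star=g_{\text{opt}}[F_{x^n}]$ the genie's optimum, (\ref{eqn:sym_sym_denoisability}) gives $D_0(x^n)=\phi(F_{x^n},g^\star)=\min_g\phi(F_{x^n},g)$, while $E[L_{\hat g}(x^n,Y^n)\mid \hat g]=\phi(F_{x^n},\hat g)$. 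I would therefore decompose
\begin{equation*}
L_{\tilde X^{n,\delta,\Delta}}(x^n,Y^n)-D_0(x^n)=\underbrace{\bigl[L_{\hat g}(x^n,Y^n)-\phi(F_{x^n},\hat g)\bigr]}_{\text{(A): fluctuation}}+\underbrace{\bigl[\phi(F_{x^n},\hat g)-D_0(x^n)\bigr]}_{\text{(B): excess risk}}.
\end{equation*}

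For (A), the difficulty is that $\hat g$ is itself a function of $Y^n$, so it cannot be treated as fixed. The remedy is that $\hat g$ always lies in the finite class $\mathcal G_{\delta,\Delta}=\{g_{\text{opt}}[P]\}_{P\in\mathcal F^{\delta,\Delta}}$, whose cardinality $|\mathcal F^{\delta,\Delta}|$ is the combinatorial prefactor of $\alpha_n$ (each of the $\mathcal A^\Delta$-coordinates of $P$ ranges over the $1/\delta+1$ multiples of $\delta$ in $[0,1]$). For each fixed $g\in\mathcal G_{\delta,\Delta}$, $L_g(x^n,Y^n)-\phi(F_{x^n},g)$ is an average of $n$ independent terms (condition C1) bounded in $[0,\Lambda_{\max}]$ (condition L1), so Hoeffding's inequality yields $\mathrm{Pr}(|L_g-\phi(F_{x^n},g)|>t)\le 2e^{-G(t,\Lambda_{\max})n}$ with $G$ as defined. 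Taking a union bound over $\mathcal G_{\delta,\Delta}$ and absorbing the $O(\delta\Lambda_{\max})$ slack created by the second quantizer $Q_\delta$ into the tolerance (hence the argument $\epsilon+\delta\Lambda_{\max}$ inside $G$) controls (A) uniformly and produces the leading $2e^{-G(\epsilon+\delta\Lambda_{\max},\Lambda_{\max})n}$ term inside the bracket of $\alpha_n$, multiplied by the class cardinality.

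For (B), which is nonnegative, I would use the optimality sandwich, with $\hat P:=\hat P^{\delta,\Delta}_{x^n}$,
\begin{equation*}
\phi(F_{x^n},\hat g)-\phi(F_{x^n},g^\star)=\bigl[\phi(F_{x^n},\hat g)-\phi(\hat P,\hat g)\bigr]+\underbrace{\bigl[\phi(\hat P,\hat g)-\phi(\hat P,g^\star)\bigr]}_{\le 0}+\bigl[\phi(\hat P,g^\star)-\phi(F_{x^n},g^\star)\bigr],
\end{equation*}
where the middle bracket is $\le 0$ because $\hat g$ minimizes $\phi(\hat P,\cdot)$ by construction (\ref{eqn:discrete_denoiser}). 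Each outer bracket equals $\int\ell_g\,d(F_{x^n}-\hat P)$ for $g\in\{\hat g,g^\star\}$, so it suffices to show $\ell_g$ is bounded Lipschitz and invoke the $\beta$ metric: splitting $\ell_g(x)-\ell_g(x')$ into a loss-difference piece (bounded by $\lambda(|x-x'|)$ via L2) and a channel-density-difference piece (bounded by $\Lambda_{\max}\xi_{|x-x'|}$ via (\ref{delta_Delta})--(\ref{eqn:delta_L}) and C6) gives $\|\ell_g\|_{BL}\le\|\Lambda\|_L+\Lambda_{\max}\|\Xi\|_L+(b-a)\|\Lambda\|_L\|\Xi\|_L+\Lambda_{\max}$, which is exactly the denominator chosen for $\gamma$. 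Hence on the event $\{\beta(F_{x^n},\hat P)\le\gamma\}$ each outer bracket is at most $\epsilon$, so (B) $\le 2\epsilon$. It then remains to bound $\mathrm{Pr}(\beta(F_{x^n},\hat P)>\gamma)$: Theorem~\ref{cont_map} together with Theorem~\ref{thm:dist_approx} and Lemma~\ref{lem:dist_approx} already give $\hat P\to F_{x^n}$, and I would upgrade these to an exponential rate by pulling the density-estimate tail $\mathrm{Pr}(J_n\ge\cdot)\le e^{-rn}$ (Lemma~\ref{lem:lem_4_pf_dens_const}, valid once $nh_n>n_0(\mathcal C,\rho,\delta,K)$) through the continuous channel-inversion map (C5), the quantization bias being absorbed via Lemma~\ref{lem:dist_approx}. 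This yields the $e^{-(1-\rho)n\gamma^2/2}$ tails and, together with the deterministic quantization biases $\delta\Lambda_{\max}$, $\xi_\Delta\Lambda_{\max}$, and $\lambda(\Delta)(1+\xi_\Delta)$ incurred when replacing $g_{\text{opt}}[\cdot]$ over $[a,b]$ by its counterpart over $\mathcal A^\Delta$ for both $\hat g$ and $g^\star$, collects into the claimed $\nu$ and $\alpha_n$.

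The step I expect to be the main obstacle is this last one: converting the qualitative consistency of $\hat F_{x^n}$ (and its quantization) into a genuinely exponential-in-$n$ deviation bound for $\beta(F_{x^n},\hat P)$. Condition C5 only guarantees that the channel-inversion map is continuous, not Lipschitz, so transferring the exponential tail of $J_n$ through it requires a quantitative modulus of continuity for the inverse map, uniform over $\mathbf x$, and careful bookkeeping of how the inversion tolerance must scale relative to the quantizer step $\delta$. This coupling is precisely what forces the relation $1-\rho/2=(1-6\epsilon/\delta)^2$ and the factor $(1-\rho)$ in the exponents, and getting these constants to close up cleanly while keeping $n_0$ independent of the unknown $\mathbf x$ is the delicate part; by comparison the Hoeffding/union-bound step (A) and the bounded-Lipschitz estimate in (B) are routine.
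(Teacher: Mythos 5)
Your proposal is, modulo reorganization, the paper's own proof: your step (A) is Lemma~\ref{loss_dev_true_estdist} (Hoeffding for the independent, $\Lambda_{\max}$-bounded summands) followed by the union bound over $\mathcal{G}_{\delta,\Delta}$ with $|\mathcal{G}_{\delta,\Delta}|\le\left[\frac{1}{\delta}+1\right]^{\Delta}$, and your step (B) is carried out in the paper through Proposition~\ref{lem:A_cont_fn} and Lemmas~\ref{lem:loss_continuity} and~\ref{lem:los_cont_w_quant}, with exactly the bounded-Lipschitz constant you compute appearing as the denominator of $\gamma$. One small repair, which your final sentence about quantization penalties already anticipates: the middle bracket of your sandwich is not literally $\le 0$, since $\hat g=g_{\text{opt}}[\hat{P}^{\delta,\Delta}_{x^n}]$ in (\ref{eqn:discrete_denoiser}) minimizes only over rules with range in the quantized alphabet $\mathcal{A}^{\Delta}$, whereas $g^{\star}=g_{\text{opt}}[F_{x^n}]$ takes values in all of $[a,b]$; one must first pass to a quantized comparator, incurring the additional $\xi_{\Delta}\Lambda_{\max}+\lambda(\Delta)(1+\xi_{\Delta})$ terms---this is precisely the deterministic chain (\ref{2})--(\ref{6}) in the paper, and it is what produces the coefficients $5$ and $4$ in $\nu$.

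The obstacle you single out is indeed the crux, and you should know that the paper does not resolve it any more quantitatively than you do. At the step culminating in (\ref{thm_9_3}), the event $\{\beta(P_{x^n},\hat{P}_{x^n})>\epsilon/c\}$ (with $c$ the BL constant) is replaced by the event $\{d([F_{x^n}\otimes\mathcal{C}]_Y,[\hat{F}_{x^n}\otimes\mathcal{C}]_Y)>\epsilon/c\}$ at the \emph{same} threshold, citing only Theorem~\ref{cont_map}---which asserts qualitative continuity of the inverse channel map, not a modulus. Compactness of $(\mathcal{F}^{[a,b]},\beta)$ does yield a uniform (channel-dependent) modulus of continuity for the inverse, so an exponential tail survives in principle, but with a threshold deformed by that modulus rather than the identity modulus the stated constants presuppose; your worry that C5 gives continuity but not Lipschitz continuity is exactly on target. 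The remaining ingredient is as you describe: since $\hat{F}_{x^n}$ is the $d$-minimizer over a class containing $F_{x^n}$, the triangle inequality through $f^n_Y$ gives $d(F_{x^n}\otimes\mathcal{C},\hat{F}_{x^n}\otimes\mathcal{C})\le 2J_n$ (this is Lemmas~\ref{lem:density_consist_cont_map_1} and~\ref{lem:density_consist_cont_map_2}), and the exponential tail $e^{-(1-\rho)n\gamma^2/2}$ for $nh_n>n_0(\mathcal{C},\rho,\delta,K)$ comes from Lemma~\ref{lem:lem_4_pf_dens_const}, which is where the relation defining $\rho(\epsilon,\delta)$ originates. So your plan is correct and faithful to the paper, and the one step you flagged as delicate is the one step the paper itself leaves without a fully quantitative justification.
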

where, $\parallel \Xi \parallel_L$ is defined in (\ref{eqn:delta_L}) and the form of $n_0$ in (\ref{eqn:n_0}).
Note that the tightness condition on the probability measures
associated with the family of the conditional densities of the
channel, $\mathcal{C}$, guarantees that $n_0 \left(\mathcal{C}, \rho, \delta,  K
\right) <\infty$, $\forall \rho \in(0,1)$. 
Theorem \ref{thm:loss_denoisability_dev} formalizes the fact that the probability of deviation of the cumulative symbol-by-symbol loss, $L_{\tilde{X}^{n,\delta, \Delta}}(x^n,Y^n)$ from the minimum possible loss, $D_0(x^n)$ is exponentially small with the block length $n$.

\subsection*{Intuition behind the proof of Theorem \ref{thm:loss_denoisability_dev}}
The benchmark for assessing the performance of the proposed
denoiser is the minimum possible symbol-by-symbol cumulative loss,
$D_0\left(x^n\right)$. It has been shown in (\ref{eqn:sym_sym_denoisability}), that this is the minimum over all measurable mappings, $g: \mathbb{R} \rightarrow 
[a,b]$, of the expected loss under the marginal density induced by
the true distribution of the underlying clean sequence. This has
been further shown in (\ref{eqn:bayes_denoiser}) to be equal to
the expected value of the Bayes envelope under the true
conditional empirical distribution of the underlying clean signal
given the noisy observation. This true conditional empirical
distribution of the underlying clean signal is the quantity that
is unknown to us. However, if we have an estimate of this
conditional empirical distribution that is in some sense ``close''
to the true conditional empirical distribution and asymptotically
is essentially ``it'', we are on the right track. Since this is derived as a function of the marginal empirical distribution of the
underlying clean signal, all that is needed is, ``closeness'' of
the estimate of the marginal distribution of the underlying clean
signal to the true marginal empirical  distribution. The almost sure
convergence of the marginal density at the output of the
memoryless channel gives us, through the mapping in
(\ref{chan_inv_defn}), an estimate of the input empirical
distribution that weakly converges, as shown in Theorem
\ref{cont_map}, to the true empirical distribution of the
underlying clean signal. This then subsequently lends itself to
the convergence of the expected loss under the corresponding induced
densities at the output of the memoryless channel. From
(\ref{eqn:bayes_denoiser}) and (\ref{eqn:sym_sym_denoisability}), the fact that we have well-behaved
(satisfying conditions C1-C7) channel conditional densities,$\{
f_{Y|x} \}_{x \in [a,b]}$, and loss function, $\Lambda$ (satisfying conditions L1-L2), we can
bound the deviation of the expected value of $\mathcal{U} \left(
\left[ F \otimes \mathcal{C} \right]_{X|Y}\right)$ under the two
corresponding induced densities.

The goal, eventually, is to bound the deviation of the cumulative
loss, $L_{\tilde{X}^{n,\delta, \Delta}}$, incurred by the proposed
denoiser in (\ref{eqn:sym_sym_denoiser}) from $D_0 \left( x^n \right)$ as a function of the block
length, $n$. This is done by using Lemmas \ref{lem:loss_continuity}, \ref{lem:los_cont_w_quant} which formalize the deviation bounds of the expected loss under densities induced by weakly converging distributions.
Finally, Lemma \ref{loss_dev_true_estdist} is used to bound the
deviation of the empirical expected loss from the true expected loss. These Lemmas are analogous (in spirit) to the corresponding ones, i.e., Lemmas 1, 2, 3  (for context length, $k=0$) in the discrete-input, general valued output setting in \cite{Dembo}. There are, however, subtle differences in the bounds and the requirements on the channel, loss functions (C1-7, L1-2) that make it possible in this continuous valued setting. The combination of these results is used to bound
the deviation of $L_{\tilde{X}^{n,\delta, \Delta}}$ from $D_0 \left( x^n
\right)$ in the proof of Theorem \ref{thm:loss_denoisability_dev}. 
Take now, $\delta = \delta_n, \Delta = \Delta_n$ such that
$\delta_n \downarrow 0, \Delta_n \downarrow 0$ for all $\epsilon
>0 $ and  
\begin{equation}
\sum_{n=1}^{\infty} \alpha_n \left(\varepsilon, \delta_n, \Delta_n, \rho, \gamma \right) < \infty 
\label{eqn:sum_alpha}
\end{equation}
For example, $\delta_n, \Delta_n = \frac{1}{\log n}$ would satisfy the above requirements of summability and growth for any
$\varepsilon > 0$. With the growth rates that satisfy the summability condition in (\ref{eqn:sum_alpha}) for $\alpha_n \left(\varepsilon, \delta_n, \Delta_n, \rho, \gamma \right)$ let,
\begin{equation}
\hat{X}_{\text{ssuniv}}^n = \tilde{X}^{n,\delta_n, \Delta_n}
\label{eqn:ssuniv}
\end{equation}
where the subscript `ssuniv' stands for symbol-by-symbol universal denoiser.
A direct consequence of Theorem
\ref{thm:loss_denoisability_dev} and the Borel-Cantelli lemma gives us the following main theorem that establishes universal asymptotic optimality of our proposed symbol-by-symbol denoiser for any unknown individual underlying clean sequence, $\mathbf{x}$ .\\
\begin{theorem}
For all $\mathbf{x} \in \mathbb{R}^{\infty}$,
\begin{equation}
\lim_{n \rightarrow \infty} \left[
L_{\hat{X}^n_{\text{ssuniv}}}(x^n,Y^n) - D_0(x^n)\right] = 0 \qquad
a.s.
\end{equation}
\label{thm:mainresultsymsym}
\end{theorem}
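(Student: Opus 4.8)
The plan is to obtain Theorem~\ref{thm:mainresultsymsym} directly from the finite-$n$ deviation bound of Theorem~\ref{thm:loss_denoisability_dev} via the Borel--Cantelli lemma, exploiting the shrinking schedule $\delta_n,\Delta_n\downarrow 0$ fixed in~(\ref{eqn:ssuniv}). Concretely, I would instantiate Theorem~\ref{thm:loss_denoisability_dev} at each $n$ with $\delta=\delta_n$, $\Delta=\Delta_n$, and the induced $\rho=\rho(\epsilon,\delta_n)$, $\gamma=\gamma(\epsilon)$, so that for every $n$ large enough that the threshold $nh_n>n_0(\mathcal{C},\rho,\delta_n,K)$ holds,
\[
\mathrm{Pr}(A_n)\le \alpha_n(\epsilon,\delta_n,\Delta_n,\rho,\gamma),\qquad A_n:=\{\,|L_{\tilde{X}^{n,\delta_n,\Delta_n}}(x^n,Y^n)-D_0(x^n)|>\nu(\epsilon,\delta_n,\Delta_n,\Lambda,\mathcal{C})\,\}.
\]
The summability condition~(\ref{eqn:sum_alpha}) then makes $\sum_n \mathrm{Pr}(A_n)<\infty$, whence Borel--Cantelli gives $\mathrm{Pr}(A_n\text{ i.o.})=0$; that is, almost surely $|L_{\hat{X}^n_{\text{ssuniv}}}(x^n,Y^n)-D_0(x^n)|\le \nu(\epsilon,\delta_n,\Delta_n,\Lambda,\mathcal{C})$ for all sufficiently large $n$.

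The second ingredient is that the threshold vanishes with the schedule. From~(\ref{eqn:nu}) I would argue $\nu(\epsilon,\delta_n,\Delta_n,\Lambda,\mathcal{C})\to 0$ term by term: $5\delta_n\Lambda_{\max}\to 0$; the channel term $4\xi_{\Delta_n}\Lambda_{\max}\to 0$ because finiteness of the Lipschitz norm~(\ref{eqn:delta_L}) gives $\xi_{\Delta_n}\le \|\Xi\|_L\,\Delta_n\to 0$; and $4\lambda(\Delta_n)(1+\xi_{\Delta_n})\to 0$ since the bounded-Lipschitz condition L2 gives $\lambda(\Delta_n)\le \|\Lambda\|_L\,\Delta_n\to 0$. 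Combining the two ingredients, on a full-measure event $\limsup_n|L_{\hat{X}^n_{\text{ssuniv}}}(x^n,Y^n)-D_0(x^n)|\le \lim_n \nu(\epsilon,\delta_n,\Delta_n,\Lambda,\mathcal{C})$, and driving the residual (the $3\epsilon$ contribution) to zero completes the proof. If one keeps $\epsilon$ fixed, this bounds the $\limsup$ by $3\epsilon$; intersecting the full-measure events over a countable sequence $\epsilon_k=1/k$ then forces the $\limsup$ to be $0$ a.s., i.e.\ $\lim_n[L_{\hat{X}^n_{\text{ssuniv}}}(x^n,Y^n)-D_0(x^n)]=0$ almost surely. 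This limiting step is valid because $\hat{X}^n_{\text{ssuniv}}$ is a single fixed sequence of denoisers, so Theorem~\ref{thm:loss_denoisability_dev} may be reapplied to it with each $\epsilon_k$.

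The crux is \emph{not} the Borel--Cantelli mechanics but the joint scheduling of the free parameters $(\epsilon,\delta_n,\Delta_n,h_n)$, which are coupled through several competing constraints. The relation~(\ref{eqn:rho}), $1-\tfrac{\rho}{2}=(1-\tfrac{6\epsilon}{\delta})^2$, ties $\rho$ to the ratio $\epsilon/\delta_n$, and since the remark after Theorem~\ref{thm:loss_denoisability_dev} only guarantees $n_0(\mathcal{C},\rho,\delta,K)<\infty$ for $\rho\in(0,1)$, the bound is only usable while $\rho$ stays in $(0,1)$; for a \emph{fixed} $\epsilon$ and $\delta_n\downarrow 0$ one has $6\epsilon/\delta_n\to\infty$, which pushes $\rho$ out of $(0,1)$. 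I would therefore couple the accuracy to the quantization, e.g.\ take $\epsilon=\epsilon_n\asymp \delta_n$ (so that $6\epsilon_n/\delta_n$ is a fixed small constant and $\rho$ is pinned to a constant in $(0,1)$), with $\delta_n=\Delta_n=1/\log n$. Under this coupling the prefactor $[\tfrac1{\delta_n}+1]^{\Delta_n}=(\log n+1)^{1/\log n}\to 1$ is subexponential, while the exponents $G(\epsilon_n+\delta_n\Lambda_{\max},\Lambda_{\max})\,n\asymp n\delta_n^2$ and $(1-\rho)\tfrac{n\gamma_n^2}{2}\asymp n\delta_n^2$ still diverge (here $\gamma_n\asymp\epsilon_n\to 0$ but $n\gamma_n^2\asymp n/(\log n)^2\to\infty$), so $\alpha_n$ decays like $e^{-cn/(\log n)^2}$ and is summable; simultaneously $\nu_n\to 0$. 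With $\epsilon_n$ thus part of the schedule, the separate $\epsilon\downarrow 0$ step is unnecessary: Borel--Cantelli already yields $|L_{\hat{X}^n_{\text{ssuniv}}}-D_0|\le\nu_n\to 0$ eventually, a.s.

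The one remaining item to verify carefully is that $nh_n>n_0(\mathcal{C},\rho,\delta_n,K)$ does hold for all large $n$ under this schedule. This requires controlling the growth of $n_0$ as $\delta_n\to 0$ against the divergence $nh_n\to\infty$ mandated by condition~5 of Theorem~\ref{thm:density_consist}; since $\rho$ is held constant by the coupling above, $n_0$ depends on $n$ only through $\delta_n=1/\log n$, and choosing $h_n$ so that $nh_n$ grows faster than $n_0(\mathcal{C},\rho,1/\log n,K)$ (which the uniform-tightness condition C2 keeps finite for each $n$) secures the hypothesis. I expect this compatibility check — reconciling the density-estimation bandwidth schedule with the quantization-dependent threshold $n_0$ — to be the only genuinely delicate point, the rest of the argument being the routine Borel--Cantelli-plus-vanishing-threshold template sketched above.
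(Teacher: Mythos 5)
Your proposal is correct and follows essentially the paper's own route: the paper obtains Theorem~\ref{thm:mainresultsymsym} exactly as a direct consequence of Theorem~\ref{thm:loss_denoisability_dev} and the Borel--Cantelli lemma under a schedule $\delta_n,\Delta_n\downarrow 0$ (e.g.\ $1/\log n$) making $\sum_n\alpha_n<\infty$, with $\nu(\epsilon,\delta_n,\Delta_n,\Lambda,\mathcal{C})\to 3\epsilon$ term by term via L2 and C6 and a final countable intersection over $\epsilon_k=1/k$. Your extra care about the coupling $1-\rho/2=(1-6\epsilon/\delta)^2$ flags a genuine notational sloppiness in the paper rather than a gap in your argument --- in Appendix~B the quantity $\rho$ is effectively a free parameter in $(0,1)$, with the \emph{internal} kernel-approximation accuracy (not the theorem's $\epsilon$, and with the appendix's $\delta$ playing the role of the main text's $\gamma$) set to $\frac{\gamma}{6}\left(1-\sqrt{1-\rho/2}\right)$ --- so either your coupled schedule $\epsilon_n\asymp\delta_n$ or the fixed-$\epsilon$-then-intersect route closes the proof.
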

\section{Construction of the Universal Denoiser and its performance guarantees}
\label{sec:2kplus1} 

 In this section, we propose an extension of
the symbol-by-symbol denoiser discussed in previous sections to a $2k+1$-length sliding window
denoising scheme, one that competes with sliding window schemes. The performance guarantees made in the
symbol-by-symbol case also hold in the proposed extension. The first result of this section is presented in Theorem 6, which assess the performance of our proposed scheme by showing that it does well relative to that of the best sliding window scheme of order $2k+1$, as would be chosen by a ``genie'' that knows the underlying clean sequence $x^n$. The main result of this section is Theorem \ref{thm:mainresult2kplus1}, which establishes the strong universality of our proposed sliding window denoiser, showing that it does essentially as well as any sliding window scheme, of any order, as the length of the data increases, regardless of what the underlying clean sequence may be. Theorem \ref{thm:mainresult2kplus1} will be shown to be a direct consequence of Theorem  \ref{thm:large_deviation_whole_seq}, analogously as Theorem \ref{thm:mainresultsymsym} of the previous section followed from Theorem  \ref{thm:loss_denoisability_dev}. 
\subsection{Extension to competition with $2k+1$-order sliding window denoisers}
The scheme we propose is pictorially depicted in Fig.~\ref{fig:2Kplus1}
below.
\begin{figure}[h]
\begin{center}
\includegraphics[width=6.5in]{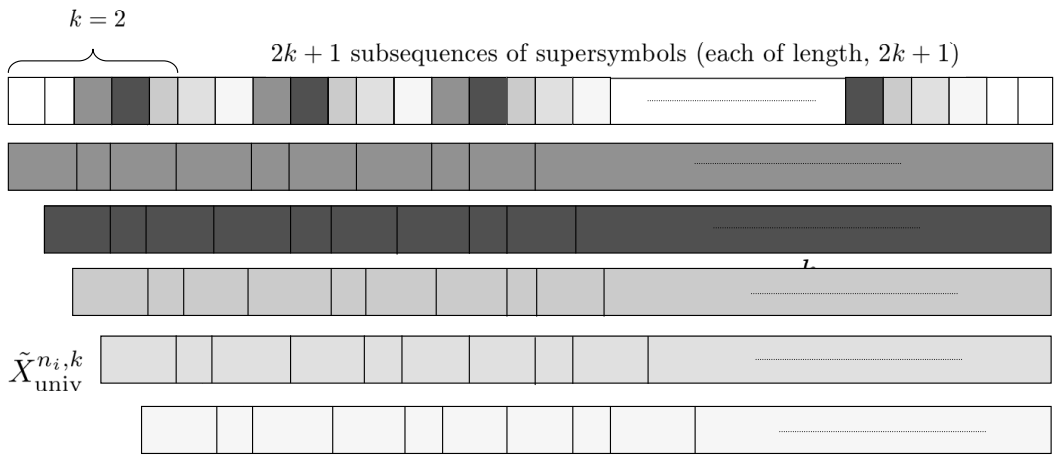}
\caption{Schematic representation of the $2k+1$-length sliding window denoiser}
\label{fig:2Kplus1}
\end{center}
\end{figure}
The necessity for independence of the symbols in the density
estimation procedure discussed in section \ref{sec:density_est}
coupled with the memoryless nature of the channel is the
motivation for partitioning the problem into subsequences that are
processed similarly, but separately. A $2k+1$-tuple super-symbol
is formed by jumping a length of $2k+1$ to achieve the
independence condition between the successive super-symbols. Note that there are $2k+1$ such subsequences and each
subsequence, $i$ (counting in the order of symbols in the
sequence), consists of $\lceil \frac{n-2k-i-1}{2k+1} \rceil$, $2k+1$-tuple
super symbols. We label the subsequences as $x^{n_i}$, for $1 \le
i \le 2k+1$. For a fixed $n$, each subsequence $x^{n_i}$ has the
following super symbols,
\begin{equation*}
x^{n_i} = \left\{ x_i^{2k+i}, x_{2k+1+i}^{4k+1+i}, \cdots, x_{\left(
\lceil \frac{n-2k-1-i}{2k+1}\rceil -1 \right)\left(2k+1 \right)
+i}^{\left( \lceil \frac{n-2k-1-i}{2k+1}\rceil -1 \right)
\left(2k+1\right)+i+2k} \right\}
\end{equation*}
This facilitates the extension of the ideas from the symbols of
the symbol-by-symbol denoiser to the super-symbol of the $2k+1$
sliding window denoiser. Some definitions are in order before we
set to investigate the optimality results of the scheme. As in the
symbol-by-symbol scheme, let $f_Y^{n,k}$ denote the $k^{\text{th}}$ order
density estimate of the noisy sequence of symbols and is computed
exactly as in (\ref{density_est}) except $y, Y_i \in
\mathbb{R}^{2k+1}$. Denote $\mathcal{F}^{[a,b],k}$ to be the set
of all probability distribution functions with support contained
in the hypercube $[a,b]^{2k+1}$. Let $D_k(x^n)$ denote the
$k^{\text{th}}$-order sliding window minimum loss and is defined
as
\begin{equation}
D_k(x^n) = \min_g E \left[ \frac{1}{n-2k} \sum_{i=k+1}^{n-k}
\Lambda(x_i, g(Y_{i-k}^{i+k})) \right]
\end{equation}
Note the similar definition of symbol-by-symbol denoisability in
(\ref{1symb_denoisability}). As before, $D_k(x^n)$ can be
expressed as
\begin{equation}
D_k(x^n) = \min_{g} E_{F^k_{x^n} \otimes \mathcal{C}}
\Lambda(X,g(Y_{-k}^k))
\label{2k_1_denoisability}
\end{equation}
where $F^k_{x^n}$ is the $k^{\text{th}}$ order empirical distribution of
the source. Define further the sliding window denoisability of
the individual sequence $\mathbf{x}=\left(x_1,x_2,x_3, \cdots \right)$ by
\begin{equation}
D \left(\mathbf{x} \right) = \lim_{k \rightarrow \infty}
\limsup_{n \rightarrow \infty} D_k(x^n)
\end{equation}
where the limit exists by monotonicity.  In words, $D(\mathbf{x})$ is the loss of a genie who knows the underlying clean sequence and can choose to denoise with the best sliding window scheme, of arbitrary order.  Extending the definition of $k^{\text{th}}$-order minimum loss to
a subsequence, $x^{n_i}$ as
\begin{equation}
D_k(x^{n_i}) = \min_{g} E_{F^k_{x^{n_i}} \otimes \mathcal{C}}
\Lambda(X,g(Y_{-k}^k))
\end{equation}
The mapping to the corresponding $k^{\text{th}}$ order input empirical
distribution is given by
\begin{equation}
\hat{F}_{x^n}^k[Y^n]=\arg\min_{F \in \mathcal{F}^{[a,b]^k}_n} d
\left(f_Y^{n,k}, \underbrace{\int \prod_{i=-k}^k
f_{Y|x_i}dF(x_{-k}^k)}_{[F \otimes \mathcal{C}]_Y^k} \right)
\label{eqn:chanel_inv_2kplus1}
\end{equation}
where $\mathcal{F}_n^{[a,b],k} \subseteq \mathcal{F}^{[a,b],k}$
denotes the set of $k^{\text{th}}$ order ($1 \le k \le \lfloor \frac{n}{2}
\rfloor$) empirical distributions induced by $n$-tuples with
$[a,b]^{2k+1}$-valued components. $\hat{P}_{x^n}^{\delta,
\Delta, k}$ denotes the $k$-th order estimate of the input
empirical distribution of the source analogously defined as in the
symbol-by-symbol case. The $2k+1$-length sliding window denoiser for each of the subsequences,
$i$, is given by
\begin{equation}
\tilde{X}^{n_i, \delta, \Delta, k}[y^n](j) =
g_{\text{opt}}\left[\hat{P}^{\delta, \Delta,
k}_{x^{n_i}}[y^{n_i}]\right]\left(y_{j-k}^{j+k}\right),  \quad j
\in \left\{k+i, 3k+1+i, \cdots \lceil \frac{n-2k-i-1}{2k+1} \rceil \right\}
\end{equation}
where the $k^{\text{th}}$ order equivalent of the denoiser in  (\ref{eqn:discrete_denoiser}) is given by
\begin{multline}
g_{\text{opt}}[P]\left(y_{-k}^k \right) = \arg \min_{\hat{x} \in
\mathcal{A}}
\Lambda(\cdot,\hat{x})^T[P \otimes \mathcal{C}]_{U|y_{-k}^k} \\
 = \arg \min_{\hat{x} \in \mathcal{A}} \sum_{\hat{a} \in
 \mathcal{A}} \Lambda \left( a, \hat{x} \right) \cdot \left\{ \sum_{u_{-k}^k \in \mathcal{A}^{2k+1}:u_0 =
 a}\left[ \prod_{i=-k}^k f_{Y|x=u_i}(y_i) P \left( U_{-k}^k = u_{-k}^k
 \right)\right]\right\}
 \label{eqn:discrete_denoiser_2kplus1}
\end{multline}
Let, $\mathcal{F}^k_{\delta, \Delta}$ denote the set of $2k+1$-
dimensional vectors with components in [0,1] that are integers
multiples of $\delta$. Note that, $\hat{P}_{x^{n_i}}^{\delta,
\Delta}[z^{n_i}] \in \mathcal{F}_{\delta, \Delta}^k$ for all
$z^n$. Finally, let $\mathcal{G}_{\delta, \Delta}^k =
\{g_{opt}[P]\}_{P \in \mathcal{F}_{\delta, \Delta}^k}$ and
\begin{equation}
\tilde{X}^{n,\delta, \Delta,k} =
\{\tilde{X}^{n_i,\delta, \Delta,k}\}_{1\le i \le 2k+1}
\end{equation}
be our candidate for the $n$-block $2k+1$-length sliding window denoiser. It is the sequence of $2k+1$ denoisers that operate individually on each of the
subsequences. The cumulative loss incurred by this sequence of
denoisers is defined as
\begin{equation}
L_{\tilde{X}^{n,\delta,\Delta,k}} =
\frac{1}{2k+1}\sum_{i=1}^{2k+1}
L_{\tilde{X}^{n_i,\delta,\Delta,k}} \label{eqn:cum_loss_2kplus1}
\end{equation}
where, $L_{\tilde{X}^{n_i,\delta,\Delta,k}}$ is the cumulative
loss incurred by the proposed denoiser for the $i^{\text{th}}$-
subsequence. The following Lemma illustrates a rather intuitive
fact, the average minimum $k^{\text{th}}$ order sliding window
loss incurred by operating on each of the subsequences is at most
the minimum $k^{\text{th}}$ order sliding window loss for the
entire sequence.\\
\begin{lemma}
For all $n \ge 1$, $k \le \lfloor \frac{n}{2} \rfloor$,
\begin{equation}
\frac{1}{2k+2} D_k(x^{n_i}) \le D_k(x^n)
\end{equation}
\label{lem:D_k_seq_subseq}
\end{lemma}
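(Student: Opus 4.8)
The plan is to argue by the suboptimality of a single rule. The quantity $D_k(x^n)$ in (\ref{2k_1_denoisability}) is the loss of a genie forced to use \emph{one} $k$-th order sliding window rule $g$ across \emph{all} length-$(2k+1)$ windows of $x^n$, whereas each $D_k(x^{n_i})$ lets the genie pick a rule tailored to the $i$-th subsequence. Since the subsequence genies enjoy strictly more freedom, their (weighted) average loss cannot exceed that of the single-rule genie, and the lemma makes this quantitative.

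First I would record the combinatorial fact underlying the construction of Section~\ref{sec:2kplus1}: the $n-2k$ windows $x_j^{j+2k}$, $1\le j\le n-2k$, are partitioned among the $2k+1$ subsequences by the residue of their starting index $j$ modulo $2k+1$, so window $j$ is a super-symbol of $x^{n_i}$ with $i\equiv j \pmod{2k+1}$. Writing $|n_i|$ for the number of super-symbols of $x^{n_i}$, this gives $\sum_{i=1}^{2k+1}|n_i|=n-2k$ and, since $F^k_{x^n}$ places equal mass $\tfrac{1}{n-2k}$ on each window, the exact mixture decomposition
\begin{equation*}
F^k_{x^n}=\sum_{i=1}^{2k+1}\frac{|n_i|}{n-2k}\,F^k_{x^{n_i}}.
\end{equation*}

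Next I would exploit that, for any fixed $g$, the map $F\mapsto E_{F\otimes\mathcal{C}}\Lambda(X,g(Y_{-k}^k))$ is affine in $F$, so $\Phi(F):=\min_g E_{F\otimes\mathcal{C}}\Lambda(X,g(Y_{-k}^k))$ is a pointwise infimum of affine functionals and hence \emph{concave}. Jensen's inequality applied to the mixture then yields
\begin{equation*}
D_k(x^n)=\Phi(F^k_{x^n})\ge\sum_{i=1}^{2k+1}\frac{|n_i|}{n-2k}\,\Phi(F^k_{x^{n_i}})=\sum_{i=1}^{2k+1}\frac{|n_i|}{n-2k}\,D_k(x^{n_i}).
\end{equation*}
Equivalently, and without invoking concavity abstractly, one feeds the optimal whole-sequence rule $g^\star$ into each subsequence: its per-subsequence expected loss is at least $D_k(x^{n_i})$ by minimality, and the weighted sum of these losses (weights $|n_i|/(n-2k)$) reassembles exactly $D_k(x^n)$, giving the same inequality.

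Finally, the stated bound follows from this weighted inequality. Because all losses are nonnegative, each single term is dominated by the full sum, $\tfrac{|n_i|}{n-2k}D_k(x^{n_i})\le\sum_j\tfrac{|n_j|}{n-2k}D_k(x^{n_j})\le D_k(x^n)$. The jump-by-$(2k+1)$ construction keeps the counts nearly balanced, $|n_i|\in\{\lfloor(n-2k)/(2k+1)\rfloor,\lceil(n-2k)/(2k+1)\rceil\}$, so for $n$ large relative to $k$ one has $|n_i|/(n-2k)\ge 1/(2k+2)$; substituting delivers $\tfrac{1}{2k+2}D_k(x^{n_i})\le D_k(x^n)$. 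The only real obstacle is the bookkeeping in the first step—pinning down the window-to-subsequence partition and the boundary super-symbols precisely enough to justify the exact mixture weights; the concavity (equivalently, single-rule suboptimality) core is then immediate, which is why the result is as intuitive as its verbal statement suggests.
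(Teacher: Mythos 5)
Your proposal is correct and takes essentially the same route as the paper: the paper likewise expands $D_k(x^n)$ from (\ref{2k_1_denoisability}) as a single-rule minimum of the loss averaged over the $n-2k$ windows, splits those windows into the $2k+1$ subsequences by starting index modulo $2k+1$, and interchanges the minimum with the sum (your Jensen/concavity argument and your ``feed the optimal whole-sequence rule $g^\star$ to each subsequence'' argument are two phrasings of that same interchange), obtaining $D_k(x^n)\ge \frac{1}{2k+1}\sum_{i=1}^{2k+1} D_k(x^{n_i})$ before dropping all but one nonnegative term. The boundary-count slack you flag at the end (your weight bound $|n_i|/(n-2k)\ge 1/(2k+2)$ holding only once $n$ is large relative to $k$) is precisely the slack the paper absorbs through its ceiling normalizations $\lceil (n-2k-i-1)/(2k+1)\rceil$, and is the reason the lemma's constant is $\frac{1}{2k+2}$ rather than the $\frac{1}{2k+1}$ that the exact mixture decomposition would suggest.
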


\subsection{Performance guarantees}
In this section we present Theorem \ref{thm:mainresult2kplus1}, wherein we demonstrate that, provided certain growth constraints on the context length $k$, quantization step sizes $\delta$, $\Delta$ and width of the kernel density estimate $h$ are satisfied, the cumulative loss, $L_{\tilde{X}^{n,\delta,\Delta,k}}$, incurred by the proposed
denoiser asymptotically approaches the sliding window denoisability. The growth constraints are specified at the end of this section. They are dictated by an exponential bound on the deviation between 
the cumulative loss, $L_{\tilde{X}^{n,\delta,k,\Delta}}$ and $D_k$ which we now develop.\\
Let
\begin{multline}
\alpha_n \left(\epsilon,k,\delta, \Delta,\rho, \gamma \right) = \nonumber\\
\left[ \frac{1}{\delta}+1 \right]^{{\Delta}^{2k+1}} \cdot \left[
A\left(k,\epsilon+\delta\Lambda_{\max},\Lambda_{\max}\right)\exp\left(-(n+1)G
\left(k, \epsilon+\delta\Lambda_{\max},\Lambda_{\max} \right)
\right)+ \right. \nonumber\\
\left. A\left(k,\sqrt{1-\rho},\frac{2}{\gamma}
\right)\exp\left(-(n+1)G \left(k, \sqrt{1-\rho},\frac{2}{\gamma}
\right) \right) \right] +
e^{-(1-\rho)\frac{(n-2k)\gamma^2}{2(2k+1)}}
\end{multline}
where,
\begin{eqnarray}
A \left( k, \epsilon, B\right) = (2k+1)\exp \left(
\frac{2\epsilon^2}{B^2}\right)\\
G \left( k, \epsilon, B\right) = \frac{2\epsilon^2}{(2k+1)B^2}
\end{eqnarray}
and
\begin{equation}
\nu \left( \varepsilon, \delta, \Delta, \Lambda,\mathcal{C},k \right) = 3\epsilon + 5\delta \Lambda_{\max} + 4\xi_{\Delta}^{2k+1}\Lambda_{\max} + 4\lambda(\Delta)\left(1+\xi_{\Delta}^{2k+1}\right)
\end{equation}
We now state the analogue of Theorem
\ref{thm:loss_denoisability_dev} in the present setting, which bounds the deviation of the cumulative loss incurred by the proposed $2k+1$-length sliding window denoiser from the minimum possible $D_k \left(x^n
\right)$. Note that here, $x \in [a,b]^{2k+1}$ and $Y \in [a,b]^{2k+1}$
( $2k+1$-tuple super-symbols) is the continuous valued output of the memoryless channel.
\begin{theorem}
For all $n \ge 1$, $\epsilon >0$, $\delta
>0$, $\rho=\rho(\epsilon, \delta)$ defined in (\ref{eqn:rho}), $\Delta > 0$, $1 \le k \le \lfloor \frac{n}{2} \rfloor$ and $x^n \in [a,b]^n$
\begin{multline}
\Pr \left(  L_{\tilde{X}^{n,\delta,\Delta,k}}(x^n,Y^n) - D_k(x^n)
 > \nu \left( \varepsilon, \delta, \Delta, \Lambda,\mathcal{C},k \right)  \right) \le \alpha_n \left(\epsilon,k,\delta, \Delta,\rho, \gamma_k \right) \; \forall \; n \mbox{ s.t } nh_n^k > n_k
\left(\mathcal{C}, \rho, \delta,K \right) 
\end{multline}
where,
\begin{equation}
\gamma_k = \frac{\epsilon}{\left( \parallel \Lambda \parallel_L +
\Lambda_{\max}
\parallel \Xi
 \parallel_L^k+ (b-a)\parallel \Lambda \parallel_L \parallel \Xi
 \parallel_L^k+ \Lambda_{\max} \right)}
\end{equation}
$\|\Xi \|_L^k$ (the $k^{\text{th}}$ order equivalent of $\|\Xi \|_L$ in (\ref{eqn:delta_L})) and $n_k
\left(\mathcal{C}, \rho, \delta,K \right)$ are defined in (\ref{eqn:delta_L^k}) and (\ref{eqn:n_0d}) respectively.
\label{thm:large_deviation_whole_seq}
\end{theorem}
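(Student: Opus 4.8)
The plan is to reduce the whole-sequence deviation to a per-subsequence deviation and then reproduce, almost verbatim on the super-symbol alphabet $[a,b]^{2k+1}$, the argument behind the symbol-by-symbol bound of Theorem~\ref{thm:loss_denoisability_dev}. Starting from the loss decomposition (\ref{eqn:cum_loss_2kplus1}), $L_{\tilde X^{n,\delta,\Delta,k}}=\frac{1}{2k+1}\sum_{i=1}^{2k+1}L_{\tilde X^{n_i,\delta,\Delta,k}}$, I would invoke Lemma~\ref{lem:D_k_seq_subseq}, which bounds the averaged subsequence denoisability $\frac{1}{2k+1}\sum_i D_k(x^{n_i})$ by $D_k(x^n)$, to obtain
\[
L_{\tilde X^{n,\delta,\Delta,k}}-D_k(x^n)\;\le\;\frac{1}{2k+1}\sum_{i=1}^{2k+1}\bigl(L_{\tilde X^{n_i,\delta,\Delta,k}}-D_k(x^{n_i})\bigr).
\]
Hence if every subsequence obeys $L_{\tilde X^{n_i,\delta,\Delta,k}}-D_k(x^{n_i})\le\nu$, so does the whole sequence, and a union bound over the $2k+1$ subsequences controls the failure probability. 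This reduction is what produces the outer factor $(2k+1)$ in $A(k,\cdot,\cdot)$ and, since subsequence $i$ carries only $\approx (n+1)/(2k+1)$ independent super-symbols, the $1/(2k+1)$ attenuation of the exponent in $G(k,\cdot,\cdot)$.

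For a single subsequence I would then run the three-stage pipeline of Section~\ref{sec:denoiser_construct} lifted to dimension $2k+1$. Conditioning on the clean super-symbols, memorylessness (C1) makes the observed super-symbols within subsequence $i$ independent with densities $\prod_{j=-k}^{k} f_{Y|x_j}$, so the $k$-th order kernel estimate $f_Y^{n,k}$ concentrates on $\frac{1}{m_i}\sum\prod f_{Y|x_j}$ by the $(2k+1)$-dimensional instance of Lemma~\ref{lem:lem_4_pf_dens_const}/Theorem~\ref{thm:density_consist}, valid once $nh_n^k>n_k(\mathcal{C},\rho,\delta,K)$; this supplies the $e^{-(1-\rho)(n-2k)\gamma^2/(2(2k+1))}$ term. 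Channel inversion (\ref{eqn:chanel_inv_2kplus1}) transports this into closeness (in the Levy/$\beta$ senses of Theorems~\ref{cont_map} and~\ref{thm:dist_approx}) of $\hat P^{\delta,\Delta,k}_{x^{n_i}}$ to the true $k$-th order empirical law $F^k_{x^{n_i}}$. The $k$-th order analogues of the loss-continuity lemmas (Lemmas~\ref{lem:loss_continuity} and~\ref{lem:los_cont_w_quant}) convert this distributional closeness into closeness of the induced Bayes losses, with the quantization biases entering as $4\xi_\Delta^{2k+1}\Lambda_{\max}$ (support quantization of the product channel), $5\delta\Lambda_{\max}$ (the probability-value quantizer $Q_\delta$), and $4\lambda(\Delta)(1+\xi_\Delta^{2k+1})$ (modulus of continuity of $\Lambda$); summed, these give exactly $\nu$.

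The remaining probabilistic ingredient is the gap between realized empirical loss and its expectation. Following the $k$-th order form of Lemma~\ref{loss_dev_true_estdist}, for each fixed quantized denoiser $g\in\mathcal{G}^k_{\delta,\Delta}$ the realized loss on subsequence $i$ is an average of $\approx(n+1)/(2k+1)$ independent bounded ($\le\Lambda_{\max}$) terms, so Hoeffding's inequality yields the exponent $G(k,\cdot,\cdot)$; a union bound over the finite class $\mathcal{F}^k_{\delta,\Delta}$, whose cardinality is at most $(\tfrac1\delta+1)^{\Delta^{2k+1}}$, produces the combinatorial prefactor $[\tfrac1\delta+1]^{\Delta^{2k+1}}$. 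Carrying this out for both the true-distribution and estimated-distribution denoisers (the two argument triples $(k,\epsilon+\delta\Lambda_{\max},\Lambda_{\max})$ and $(k,\sqrt{1-\rho},2/\gamma)$ of $A$ and $G$) and collecting the two exponential terms, the density term, and the subsequence union bound assembles $\alpha_n(\epsilon,k,\delta,\Delta,\rho,\gamma_k)$. The particular choice $\gamma=\gamma_k$, with the stated denominator, is what balances $\|\Lambda\|_L$, $\Lambda_{\max}\|\Xi\|_L^k$, the cross term $(b-a)\|\Lambda\|_L\|\Xi\|_L^k$, and $\Lambda_{\max}$ so that a density-estimation event of width $\gamma$ maps to a loss deviation of order $\epsilon$.

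The main obstacle I anticipate is controlling the $(2k+1)$-dimensional density estimation and the attendant constants. The feasible-distribution set now lives on $\mathcal{A}^{\Delta}\subset[a,b]^{2k+1}$, so both the union-bound cardinality $(\tfrac1\delta+1)^{\Delta^{2k+1}}$ and the threshold $n_k(\mathcal{C},\rho,\delta,K)$ grow rapidly with $k$; the delicate points are verifying $n_k<\infty$ for each fixed $k$ (which rests on the uniform tightness C2 applied to the product channel) and showing that the product-channel perturbation is captured by $\xi_\Delta^{2k+1}$ via a hybrid/telescoping estimate of $\|\prod f_{Y|x_j}-\prod f_{Y|\hat x_j}\|_1$ rather than by a naive bound. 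A second subtlety is that, unlike the genuinely i.i.d.\ symbol-by-symbol case, independence here holds only \emph{within} a subsequence; making the reduction rigorous requires that the estimator for subsequence $i$ depend on $y^{n_i}$ alone, so that the Hoeffding and density-consistency steps may be applied conditionally and then recombined by the union bound without cross-subsequence dependence spoiling the exponents. Once these are in place, the finite-$n$ bound follows, and Theorem~\ref{thm:mainresult2kplus1} is obtained from it by Borel--Cantelli exactly as Theorem~\ref{thm:mainresultsymsym} followed from Theorem~\ref{thm:loss_denoisability_dev}.
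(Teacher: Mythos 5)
Your proposal is correct and follows essentially the same route as the paper: decompose the loss via (\ref{eqn:cum_loss_2kplus1}), use Lemma \ref{lem:D_k_seq_subseq} to replace $D_k(x^n)$ by the averaged subsequence denoisabilities, union-bound over the $2k+1$ subsequences, and apply to each the $k$-th order lift of Theorem \ref{thm:loss_denoisability_dev} (the paper's Theorem \ref{thm:loss_denoisability_dev_2kplus1}, built on the $k$-th order analogues of Proposition \ref{lem:A_cont_fn} and Lemmas \ref{lem:loss_continuity}, \ref{lem:los_cont_w_quant}, \ref{loss_dev_true_estdist}), with the same cardinality bound $|\mathcal{G}^k_{\delta,\Delta}|\le\left[\frac{1}{\delta}+1\right]^{\Delta^{2k+1}}$ and roughly $\frac{n-2k}{2k+1}$ independent super-symbols per subsequence. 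You even correctly locate the origin of the $(2k+1)$ prefactor in $A$, the $1/(2k+1)$ attenuation in $G$, and the within-subsequence independence issue that motivates the partition in the first place.
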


Take now, $k = k_n$, $\delta=\delta_n$ and $\Delta= \Delta_n$ such
that $k_n \rightarrow \infty$, $\delta_n \downarrow 0$, $\Delta_n
\downarrow 0$, \[\sum_{n=1}^{\infty} \alpha_n \left(  \epsilon, k_n,
\delta_n, \Delta_n, \rho, \gamma_{k_n} \right) < \infty \] and $n_k
\left(\mathcal{C}, \rho, \delta,K \right) <\infty$. With growth rates that satisfy these conditions let,
\begin{equation}
\hat{X}_{\text{univ}}^n = \tilde{X}^{n,\delta_n,\Delta_n,k_n}
\label{eqn:kthoderuniv denoiser}
\end{equation}
For example, it can be verified that unbounded increasing $k_n=\log
\left( \log(n) \right)$, $h_n = \frac{1}{\log(n)}$, $\delta_n k_n
\rightarrow 0$, $\left(\delta_n, \Delta_n= \frac{1}{\log(n)}
\right)$ satisfies the requirements for a family, $\mathcal{C}$,
that has $\delta_{\Delta_n}^{2k_n+1} \rightarrow 0$ and loss
functions that have $\lambda \left( \Delta_n \right)
\delta_{\Delta_n}^{2k_n+1} \rightarrow 0$. Particularly for additive Gaussian noise channels of finite variance, squared and absolute loss functions
with the aforementioned growth rates of $k_n$, $\Delta_n$, $\delta_n$ satisfy the conditions of $\lambda \left( \Delta_n \right) \delta_{\Delta_n}^{2k_n+1} \rightarrow 0$ and $\delta_{\Delta_n}^{2k_n+1} \rightarrow 0$. \\
We now have the following result as a direct consequence of Theorem \ref{thm:large_deviation_whole_seq} and the Borel-Cantelli Lemma.
\begin{theorem}
For all $\mathbf{x} \in [a,b]^{\infty}$
\begin{equation}
\lim_{n \rightarrow \infty} \left[
L_{\hat{X}^n_{\text{univ}}}(x^n,Y^n) - D_{k_n}(x^n)\right] = 0
\qquad a.s.
\end{equation} \\
\label{thm:mainresult2kplus1}
\end{theorem}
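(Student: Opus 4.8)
The plan is to obtain Theorem~\ref{thm:mainresult2kplus1} from Theorem~\ref{thm:large_deviation_whole_seq} by the first Borel--Cantelli lemma, exactly as Theorem~\ref{thm:mainresultsymsym} followed from Theorem~\ref{thm:loss_denoisability_dev}. Fix $\epsilon>0$ and write $\nu_n:=\nu(\epsilon,\delta_n,\Delta_n,\Lambda,\mathcal{C},k_n)$ and $a_n:=\alpha_n(\epsilon,k_n,\delta_n,\Delta_n,\rho,\gamma_{k_n})$ for the deviation level and the exceptional-probability bound produced by Theorem~\ref{thm:large_deviation_whole_seq} under the prescribed rates $k_n\uparrow\infty$, $\delta_n\downarrow 0$, $\Delta_n\downarrow 0$ (for which $\sum_n a_n<\infty$ and $n_{k_n}(\mathcal{C},\rho,\delta,K)<\infty$, so the bound is in force for all large $n$). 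The corollary then rests on two elementary facts: that $\nu_n\to 3\epsilon$, and that the events in which the loss exceeds its benchmark by more than $\nu_n$ are summable.

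First I would show $\nu_n\to 3\epsilon$ by treating the four summands of $\nu(\epsilon,\delta,\Delta,\Lambda,\mathcal{C},k)=3\epsilon+5\delta\Lambda_{\max}+4\xi_\Delta^{2k+1}\Lambda_{\max}+4\lambda(\Delta)(1+\xi_\Delta^{2k+1})$ separately. The term $5\delta_n\Lambda_{\max}$ vanishes because $\delta_n\to 0$ and $\Lambda_{\max}<\infty$ (L1). The Lipschitz assumption L2 gives $\lambda(\Delta_n)\le\|\Lambda\|_L\,\Delta_n\to 0$. The remaining two terms are controlled by the channel/loss growth conditions singled out in the excerpt, namely $\xi_{\Delta_n}^{2k_n+1}\to 0$ and $\lambda(\Delta_n)\,\xi_{\Delta_n}^{2k_n+1}\to 0$; these are precisely what the rates $k_n=\log\log n$, $\Delta_n=\delta_n=h_n=1/\log n$ (with $\delta_n k_n\to 0$) secure, e.g.\ for additive Gaussian channels under $L_1$ or $L_2$ loss. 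Hence $\limsup_n\nu_n=3\epsilon$.

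Next, the first Borel--Cantelli lemma applied to $\sum_n a_n<\infty$ shows that, almost surely, only finitely many of the events $\{L_{\tilde{X}^{n,\delta_n,\Delta_n,k_n}}(x^n,Y^n)-D_{k_n}(x^n)>\nu_n\}$ occur; consequently $\limsup_n[L_{\hat{X}^n_{\mathrm{univ}}}(x^n,Y^n)-D_{k_n}(x^n)]\le 3\epsilon$ a.s., and intersecting over a sequence $\epsilon\downarrow 0$ yields $\limsup_n[L_{\hat{X}^n_{\mathrm{univ}}}-D_{k_n}]\le 0$ a.s. For the matching lower inequality I would invoke the two-sided (Hoeffding-type) nature of the concentration underlying $\alpha_n$: within each of the $2k_n+1$ phase-subsequences the super-symbols are independent by construction, so the realized per-subsequence loss concentrates about the expected loss of whatever member of the finite class $\mathcal{G}^{k_n}_{\delta_n,\Delta_n}$ is applied, and that expectation is at least the subsequence denoisability $D_{k_n}(x^{n_i})$; union-bounding over the finite class and summing gives, through a second Borel--Cantelli step, $\liminf_n[\tfrac1{2k_n+1}\sum_i L_{\tilde{X}^{n_i}}-\tfrac1{2k_n+1}\sum_i D_{k_n}(x^{n_i})]\ge 0$ a.s. Lemma~\ref{lem:D_k_seq_subseq}, the concavity (Jensen) comparison linking the averaged subsequence denoisabilities to $D_{k_n}(x^n)$, then closes the gap and delivers $\liminf_n[L_{\hat{X}^n_{\mathrm{univ}}}-D_{k_n}]\ge 0$, whence the stated limit.

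I expect the real difficulty to lie not in the Borel--Cantelli bookkeeping but in confirming that a single schedule of rates makes $\nu_n\to 3\epsilon$ and $\sum_n a_n<\infty$ hold simultaneously. The tension is internal to $\alpha_n$: its leading factor is the union-bound cardinality of the quantized $(2k+1)$-dimensional class, which blows up as $k_n\uparrow\infty$ and $\delta_n,\Delta_n\downarrow 0$, whereas its decaying exponentials carry rate only of order $(n-2k_n)\gamma_{k_n}^2/(2k_n+1)$; one must push $k_n$ to infinity fast enough that the benchmark $D_{k_n}$ tracks ever-longer windows and the Jensen gap of Lemma~\ref{lem:D_k_seq_subseq} closes, yet slowly enough that the cardinality does not defeat the concentration, all while keeping $\xi_{\Delta_n}^{2k_n+1}\to 0$. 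Verifying that the quoted rates thread this needle is the crux of the argument.
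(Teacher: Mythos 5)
Your first two paragraphs coincide with the paper's own (very terse) proof: Theorem \ref{thm:mainresult2kplus1} is obtained from Theorem \ref{thm:large_deviation_whole_seq} by choosing rates $k_n\uparrow\infty$, $\delta_n,\Delta_n\downarrow 0$ with $\sum_n \alpha_n\left(\epsilon,k_n,\delta_n,\Delta_n,\rho,\gamma_{k_n}\right)<\infty$, applying the first Borel--Cantelli lemma, noting $\nu\left(\epsilon,\delta_n,\Delta_n,\Lambda,\mathcal{C},k_n\right)\rightarrow 3\epsilon$ (via L1, L2, and the assumed growth conditions $\xi_{\Delta_n}^{2k_n+1}\rightarrow 0$ and $\lambda(\Delta_n)\xi_{\Delta_n}^{2k_n+1}\rightarrow 0$), and finally sending $\epsilon\downarrow 0$ --- exactly the route by which Theorem \ref{thm:mainresultsymsym} follows from Theorem \ref{thm:loss_denoisability_dev}. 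Up to and including the conclusion $\limsup_{n}\left[L_{\hat{X}^n_{\text{univ}}}(x^n,Y^n)-D_{k_n}(x^n)\right]\le 0$ a.s., your argument is correct and is the paper's argument; your closing remarks on the tension between the class cardinality $|\mathcal{G}^{k_n}_{\delta_n,\Delta_n}|$ and the decay rate $(n-2k_n)\gamma_{k_n}^2/(2k_n+1)$ correctly identify where the rate verification lives, and the paper asserts (without more detail than you give) that the quoted schedule works.

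The genuine gap is in your third paragraph, the $\liminf$ half. Your concentration step does yield, a.s., $\liminf_n\bigl[\frac{1}{2k_n+1}\sum_i L_{\tilde{X}^{n_i}} - \frac{1}{2k_n+1}\sum_i D_{k_n}(x^{n_i})\bigr]\ge 0$, but Lemma \ref{lem:D_k_seq_subseq} cannot ``close the gap'': it states that the averaged subsequence denoisabilities lie \emph{below} the whole-sequence benchmark, $\frac{1}{2k+1}\sum_i D_k(x^{n_i})\le D_k(x^n)$ (min of an average dominates the average of mins), which is the wrong direction for upgrading your bound to $\liminf_n\left[L-D_{k_n}(x^n)\right]\ge 0$. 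What you would need is the reverse inequality $\frac{1}{2k+1}\sum_i D_k(x^{n_i})\ge D_k(x^n)-o(1)$, and that is false for general individual sequences: if the $2k+1$ phase-subsequences have markedly different empirical distributions, the gap in Lemma \ref{lem:D_k_seq_subseq} is strictly positive and need not vanish, so a scheme tuned per subsequence can realize loss strictly below $D_k(x^n)$. Note the paper itself shares this asymmetry: Theorem \ref{thm:large_deviation_whole_seq} is stated one-sidedly (no absolute value, unlike its subsequence-level counterpart), so the paper's one-line derivation also delivers only the $\limsup$ direction --- which is all that is used downstream, in Corollary \ref{cor:2kplus1} and in Theorem \ref{thm:stoch_main_result}, where the matching lower bound is obtained at the level of $\mathbb{D}\left(F_{\mathbf{X}},\mathcal{C}\right)$ via Fatou's lemma rather than pathwise against $D_{k_n}(x^n)$. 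So your instinct that the two-sided statement requires an extra argument is right, but the argument you propose does not supply it.
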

In fact, we can go a step further and show that the $\limsup$ of
the cumulative loss incurred by the proposed denoiser is bounded
by the sliding window denoisability. Specifically,
\begin{corollary}
For all $\mathbf{x} \in [a,b]^{\infty}$
\begin{equation}
\limsup_{n \rightarrow \infty} \left[
L_{\hat{X}^n_{\text{univ}}}(x^n,Y^n) - D(\mathbf{x})\right] \le 0
\qquad a.s.
\end{equation}
\label{cor:2kplus1}
\end{corollary}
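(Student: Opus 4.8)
The plan is to route the deviation through the finite-order benchmark $D_{k_n}(x^n)$ and apply Theorem \ref{thm:mainresult2kplus1}. Writing
\[
L_{\hat{X}^n_{\text{univ}}}(x^n,Y^n) - D(\mathbf{x}) = \left[ L_{\hat{X}^n_{\text{univ}}}(x^n,Y^n) - D_{k_n}(x^n) \right] + \left[ D_{k_n}(x^n) - D(\mathbf{x}) \right],
\]
the first bracket tends to $0$ almost surely by Theorem \ref{thm:mainresult2kplus1}. Since $\limsup_n (a_n + b_n) \le \limsup_n a_n + \limsup_n b_n$, it suffices to establish the purely deterministic bound $\limsup_{n \to \infty} D_{k_n}(x^n) \le D(\mathbf{x})$; combining the two then yields the claim a.s.

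For the deterministic part I would exploit that a sliding window scheme of order $2m+1$ is a special case of one of order $2k_n+1$ whenever $k_n \ge m$: the longer window can ignore its outermost $2(k_n-m)$ coordinates and apply the optimal $m$-order rule $g^{(m)}$. Fixing $m$ and taking $n$ large enough that $k_n \ge m$ (possible since $k_n \to \infty$), optimality of $D_{k_n}(x^n)$ over all order-$k_n$ schemes gives
\[
D_{k_n}(x^n) \le \frac{1}{n-2k_n} \sum_{i=k_n+1}^{n-k_n} E\,\Lambda\!\left(x_i, g^{(m)}(Y_{i-m}^{i+m})\right) \le \frac{n-2m}{n-2k_n}\, D_m(x^n),
\]
where the final inequality enlarges the (nonnegative) summation range from $[k_n+1,n-k_n]$ to $[m+1,n-m]$ and rescales the normalization from $1/(n-2k_n)$ to $1/(n-2m)$. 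Because $\Lambda \le \Lambda_{\max}$ forces $D_m(x^n) \le \Lambda_{\max}$ uniformly, and the prescribed growth has $k_n = o(n)$ (e.g.\ $k_n = \log\log n$), the factor $(n-2m)/(n-2k_n) = 1 + 2(k_n-m)/(n-2k_n) \to 1$, so that $\limsup_n D_{k_n}(x^n) \le \limsup_n D_m(x^n)$ for every fixed $m$.

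Finally I would send $m \to \infty$. By definition $D(\mathbf{x}) = \lim_{m \to \infty} \limsup_{n \to \infty} D_m(x^n)$, the limit existing because $m \mapsto \limsup_n D_m(x^n)$ is nonincreasing; hence $\limsup_n D_{k_n}(x^n) \le \inf_m \limsup_n D_m(x^n) = D(\mathbf{x})$. Together with the almost-sure vanishing of the first bracket this gives $\limsup_n [L_{\hat{X}^n_{\text{univ}}}(x^n,Y^n) - D(\mathbf{x})] \le 0$ a.s.

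The main obstacle is the diagonal nature of the second bracket: $k_n$ itself grows with $n$, so one cannot merely invoke per-index monotonicity of $\limsup_n D_k(x^n)$ at a fixed $k$. The delicate point is to interpose a fixed intermediate order $m$, dominate the order-$k_n$ loss by the order-$m$ loss up to an edge-and-normalization correction, and verify that this correction is negligible \emph{precisely} because the growth schedule keeps $k_n/n \to 0$ while $D_m(x^n)$ remains bounded by $\Lambda_{\max}$; only then does the limit over $m$ close the argument.
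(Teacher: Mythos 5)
Your proposal is correct and takes essentially the same approach as the paper, which proves this corollary simply by deferring to Corollary 1 of \cite{Dembo}: decompose through the diagonal benchmark $D_{k_n}(x^n)$, dispose of the first bracket almost surely via Theorem \ref{thm:mainresult2kplus1}, and establish the deterministic bound $\limsup_{n\to\infty} D_{k_n}(x^n) \le D(\mathbf{x})$ by embedding a fixed order-$m$ window rule into the order-$k_n$ window and noting that the normalization factor $(n-2m)/(n-2k_n)\to 1$ because $k_n=o(n)$ under the prescribed growth while $D_m(x^n)\le\Lambda_{\max}$. In effect you have written out in full the details the paper leaves to the citation, including the same window-embedding trick that underlies the monotonicity making $D(\mathbf{x})$ well defined.
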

which is a corollary of Theorem \ref{thm:mainresult2kplus1}, proved similarly as corollary 1 in \cite{Dembo}.\\

\subsection{Computation complexity of the proposed denoiser}
\label{sec:contcomplexity}
Let us summarize the computational complexity of the proposed denoisers: the ``symbol-by-symbol''  and the $k^{\text{th}}$ order extensions. For the symbol by-symbol denoiser, we have already covered the analysis in Sections \ref{subsec:Fast Kernel Density Est}, \ref{subsec:Channel Inversion}. For $X_{\text{univ}}^n$ defined in (\ref{eqn:kthoderuniv denoiser}), we have:
\paragraph{Symbol-by-symbol scheme}
\begin{enumerate}
\item Fast Kernel Density Estimation, $O(n)$\\
Using the techniques of fast kernel density estimation in \cite{raykar_SDM_2006}, \cite{raykar_TR_4774}, \cite{Graykdtree}, \cite{GrayandMoore} it was shown that the complexity can be reduced from $O(n^2)$ to $O(n)$.
\item Channel Inversion, $O \left(n^3 \right)$\\
The polynomial complexity of the simplex approach in linear programming problems is discussed in detail in \cite{Tsitsiklis}.
\end{enumerate}               
\paragraph{$k^{\text{th}}$ order sliding window scheme}
\begin{enumerate}
\item Fast Kernel Density Estimation, $O \left(n \right)$\\
As before, the complexity of the denoiser continues to be linear in the length of the data, $n$ and the context length, $k$, i.e., $O \left( nk^{\gamma} \right) \; \gamma >0$ \cite{GrayandMoore}.
\item Channel Inversion, $O \left( n^{6k}\right)$ \\
From the fact that the dimensionality of the contexts is length $2k$, the channel inversion now increases in complexity exponentially and is given by $O \left( n^{6k}\right)$. Thus, our schemes are practical for small values of $k$, but become unrealistic to implement as $k$ grows. 

This lead to our follow up work in \cite{Kamakshi_5} that uses quantized contexts in conjunction with the (low complexity) symbol-by-symbol denoiser that asymptotically (with increasing levels of quantization of the contexts) achieves the performance of the sequence of denoisers proposed here.
\end{enumerate}
\section{Universality in the stochastic setting}
\label{sec:univcontstoch}
Our results also imply optimality for the stochastic setting when
the source (clean signal) is a stationary stochastic process
with distribution $F_{\mathbf{X}}$. For the pair $\left( F_{\mathbf{X}}, \mathcal{C} \right)$, define the denoisability, $\mathbb{D}(F_{\mathbf{X}},\mathcal{C})$, as 
\begin{equation}
\mathbb{D}(F_{\mathbf{X}},\mathcal{C}) = \lim_{n \rightarrow \infty}
\min_{\hat{X}^n} E L_{\hat{X}^n} \left(X^n,Y^n \right),
\label{denoisability}
\end{equation}
where the expectation is assuming $X^n$ are the first $n$ symbols
emitted by a source with distribution $F_{\mathbf{X}}$ and $Y^n$ is, as
before, the $n$-tuple of output noisy symbols from the channel
$\mathcal{C}$ that corrupts $X^n$. This is achieved by a ``genie'' that has access to the true distribution, $F_{\mathbf{X}}$, of the underlying clean signal, $\mathbf{X}$. It has been shown in \cite{Tsachy, Dembo} that the
limit in (\ref{denoisability}) exists and hence the
denoisability, $\mathbb{D}(F_{\mathbf{X}},\mathcal{C})$, is well-defined for every stationary $F_{\mathbf{X}}$.

We now state the main result for the stochastic setting wherein we establish that for any stationary underlying clean sequence $\mathbf{X} \sim F_{\mathbf{X}}$, the expected cumulative loss incurred by our proposed scheme asymptotically achieves the denoisability, $\mathbb{D} \left( F_{\mathbf{X}}, \mathcal{C}\right)$.
\begin{theorem}
For all stationary $\mathbf{X}$
\begin{equation}
\lim_{n \rightarrow \infty} E L_{\hat{X}_{\text{univ}}^n} \left(
X^n,Y^n\right) = \mathbb{D} \left(F_{\mathbf{X}},\mathcal{C}\right)
\label{eqn:stoch_mainresult}
\end{equation}
If $\mathbf{X}$ is also ergodic then
\begin{equation}
\limsup_{n \rightarrow \infty} L_{\hat{X}^n_{\text{univ}}}
\left( X^n, Y^n \right) = \mathbb{D} \left( F_{\mathbf{X}},
\mathcal{C}\right) \; a.s. \label{eqn:stoch_mainresult_a}
\end{equation}
\label{thm:stoch_main_result}
\end{theorem}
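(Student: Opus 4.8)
The plan is to transfer the individual-sequence (semi-stochastic) guarantees of Theorem~\ref{thm:mainresult2kplus1} and Corollary~\ref{cor:2kplus1} to the stochastic setting by conditioning on the realization of the clean process $\mathbf{X}=\mathbf{x}$ and then averaging over $F_{\mathbf{X}}$, exploiting throughout the boundedness of $\Lambda$ (condition L1) to justify every interchange of limit and expectation. I would write $d_n=\min_{\hat{X}^n}E L_{\hat{X}^n}(X^n,Y^n)$, so that $d_n\to\mathbb{D}(F_{\mathbf{X}},\mathcal{C})$ by (\ref{denoisability}), and introduce $\bar{D}_k(F_{\mathbf{X}})=\min_g E[\Lambda(X_0,g(Y_{-k}^k))]$, the distribution-dependent optimal $k$-th order sliding-window loss for the process.

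For the expectation statement (\ref{eqn:stoch_mainresult}) I would prove matching bounds. The lower bound is immediate: $\hat{X}_{\text{univ}}^n$ is a particular $n$-block denoiser, so $E L_{\hat{X}_{\text{univ}}^n}\ge d_n$, whence $\liminf_n E L_{\hat{X}_{\text{univ}}^n}\ge\lim_n d_n=\mathbb{D}$. For the upper bound, I would first integrate the almost-sure convergence of Theorem~\ref{thm:mainresult2kplus1} over the channel noise (bounded convergence, via L1) to get $E_{\text{noise}}L_{\hat{X}_{\text{univ}}^n}(\mathbf{x},Y^n)-D_{k_n}(x^n)\to 0$ for every fixed $\mathbf{x}$, then integrate over $F_{\mathbf{X}}$ to obtain $E L_{\hat{X}_{\text{univ}}^n}(X^n,Y^n)-E[D_{k_n}(X^n)]\to 0$. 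It then remains to control $E[D_{k_n}(X^n)]$, and here I would use stationarity: since $D_k(x^n)$ minimizes over $g$, substituting the fixed optimal process denoiser $g_k^\ast$ and averaging each (identically distributed) summand gives $E[D_k(X^n)]\le\bar{D}_k(F_{\mathbf{X}})$ for all $n$ and $k$. As $\bar{D}_k(F_{\mathbf{X}})$ is non-increasing in $k$ with $\lim_k\bar{D}_k(F_{\mathbf{X}})=\mathbb{D}(F_{\mathbf{X}},\mathcal{C})$, and $k_n\to\infty$, this yields $\limsup_n E[D_{k_n}(X^n)]\le\mathbb{D}$, hence $\limsup_n E L_{\hat{X}_{\text{univ}}^n}\le\mathbb{D}$, proving (\ref{eqn:stoch_mainresult}).

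For the ergodic almost-sure statement (\ref{eqn:stoch_mainresult_a}) I would start from Corollary~\ref{cor:2kplus1}, which gives $\limsup_n L_{\hat{X}_{\text{univ}}^n}(\mathbf{x},Y^n)\le D(\mathbf{x})$ a.s.\ for every individual $\mathbf{x}$. For an ergodic source the ergodic theorem forces the empirical $k$-th order distribution of $X^n$ to its true $k$-th order marginal, so that $\limsup_n D_k(X^n)=\bar{D}_k(F_{\mathbf{X}})$ a.s.\ for each $k$ (using continuity of the Bayes-optimal loss in the input distribution established for the semi-stochastic analysis via the loss-continuity lemmas); taking $k\to\infty$ over the countable family gives $D(\mathbf{X})=\mathbb{D}(F_{\mathbf{X}},\mathcal{C})$ a.s., so $\limsup_n L_{\hat{X}_{\text{univ}}^n}\le\mathbb{D}$ a.s. To obtain equality of the $\limsup$ I would invoke the reverse Fatou lemma, legitimate since $L\le\Lambda_{\max}$: $E[\limsup_n L_{\hat{X}_{\text{univ}}^n}]\ge\limsup_n E L_{\hat{X}_{\text{univ}}^n}=\mathbb{D}$ by (\ref{eqn:stoch_mainresult}). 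Thus the non-negative random variable $\mathbb{D}-\limsup_n L_{\hat{X}_{\text{univ}}^n}$ has zero mean, hence vanishes almost surely, giving $\limsup_n L_{\hat{X}_{\text{univ}}^n}=\mathbb{D}$ a.s.

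The main obstacle is not the measure-theoretic bookkeeping but the two process-level facts on which the argument rests: (i) that the distribution-dependent sliding-window losses exhaust the denoisability, $\lim_k\bar{D}_k(F_{\mathbf{X}})=\mathbb{D}(F_{\mathbf{X}},\mathcal{C})$, and (ii) the almost-sure identification $D(\mathbf{X})=\mathbb{D}(F_{\mathbf{X}},\mathcal{C})$ for ergodic sources. Both require showing that restricting to finite windows is asymptotically costless and that empirical $k$-th order statistics converge to their ensemble counterparts in a topology under which the Bayes envelope is continuous; these are the analogues of the corresponding steps in \cite{Tsachy,Dembo}, and I would either adapt those arguments or cite them, taking care that the continuity invoked is exactly the one supplied by conditions C4--C6 and L1--L2 in the present continuous-amplitude setting.
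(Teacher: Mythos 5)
Your proposal is correct and follows essentially the same route as the paper's proof: the lower bound directly from the definition of denoisability; the upper bound via $E D_k(X^n) \le \min_g E\Lambda\left(X_0, g(Y_{-k}^k)\right)$ (exchanging the minimum and the expectation under stationarity), combined with Theorem \ref{thm:mainresult2kplus1} and bounded convergence; and, for the ergodic case, the ergodic theorem together with the loss-continuity of Lemma \ref{lem:loss_continuity} to identify $D(\mathbf{X}) = \mathbb{D}\left(F_{\mathbf{X}},\mathcal{C}\right)$ a.s., closed off by the same reverse-Fatou argument the paper uses in (\ref{eqn:stoch_main_res_1}). Your deferred fact (i) is exactly the paper's Claim \ref{cl:denoisability_bayes}, which it proves via Lemma \ref{lem:bayes_sequence} (concavity of the Bayes envelope, a conditioning argument, and martingale convergence), adapting Claim 2 of \cite{Tsachy} just as you propose.
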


Given the results established for the semi-stochastic setting, the proof is analogous to that of  Theorem 3 in
\cite{Dembo} except for some subtle differences in our setting due
to the continuous input and output alphabets. We, however, do provide
the proof of the above statement for completeness and for 
accommodating these differences in Appendix \ref{ap:Proof of Stoch Setting}. 

We conclude this section by comparing 
the proposed sequence of denoisers to the DUDE-like
schemes in \cite{Dembo} for the case of finite input (or underlying clean data) and continuous
valued output (noisy data) . By a
minor modification, the proposed denoiser collapses to that in \cite{Dembo} when, as in the setting onf \cite{Dembo}, the channel input alphabet is finite. This is illustrated by
comparing the first pass of the DUDE-like denoiser with a modified
version of the proposed scheme through the schematic
representation in Fig. \ref{fig:dembo_scheme_equiv}. The theoretical details of the equivalence of the modification shown in Fig. \ref{fig:dembo_scheme_equiv} below to the denoiser in \cite{Dembo} are elaborated in Appendix \ref{Proof of Theorem in Dembo}. 

\begin{figure}[h]
\begin{center}
\includegraphics[width=6in,height=4.5in]{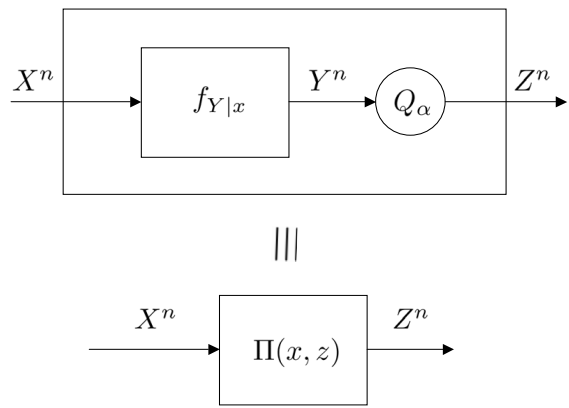}
\caption{Modification to our proposed scheme that is equivalent to that in \cite{Dembo}}
\label{fig:dembo_scheme_equiv}
\end{center}
\end{figure}
\section{Experimental Results}
\label{sec:results}
In this section, we discuss experimental results of applying the
proposed scheme to denoising 256-level gray scale images. We
demonstrate efficacy of the scheme with results of its application
to cases of additive and multiplicative Gaussian noise. In
addition, we consider a highly nonlinear, non-conventional
noise distribution: a locally varying Rayleigh noise
whose variance is a function of the gray level of the underlying
clean image. The first pass of the denoiser is
performed using a Fast Kernel Density Estimation approach proposed
in \cite{gray_moore} and a channel inversion procedure. This
channel inversion is performed using a convex optimization linear
programming technique that maps the output $k^{\text{th}}$-order
density estimate to the corresponding input $k^{\text{th}}$-order
input empirical distribution in accordance with
(\ref{eqn:chanel_inv_2kplus1}). The experimental results presented
in this section have been obtained by implementing the scheme of the previous sections, with no heuristic modifications that are likely to boost the performance. 
The practical implementation aspects are discussed
in greater detail and depth in \cite{Kamakshi_3},
\cite{Kamakshi_5}.

The first example we consider is, denoising of the boats image
that is corrupted by an additive white noise channel (AWGN) with,
$\sigma = 20$. The loss function, $\Lambda$, to be minimized in
this case is the squared error between the true clean image and
our denoised estimate. The denoiser in this case is a mapping from
$\mathbb{R} \rightarrow \mathcal{A}= \{0, \cdots 255 \}$ and
reduces to that in (\ref{eqn:discrete_denoiser_2kplus1}). Results
of the proposed denoising scheme are shown in the Fig.
~\ref{denoise_perform_1} below with context length, $k$, ranging
from 1 to 6. The context (for $k>1$) around any location, $i$, in
the block of noisy data are 2D neighborhoods. The 2D contexts for
various values of $k$ are shown in Fig. \ref{fig:context_len}
below. As is evident from both, the reported Root Mean Squared
Error (RMSE) figures and the perceptual quality, we are able to
achieve improved denoising performance with increasing context
lengths. Finally, we compare the results of the proposed scheme to
that achieved by wavelet-based thresholding scheme \cite{Donoho_1}
and Bayesian Least Squares Gaussian Scaled Mixture (BLS-GSM)
denoiser in \cite{Portilla}. Increasing context lengths, $k$,
translates to accruing increasing $k^{\text{th}}$-order statistics
from the finite block length data. This is the classic trade-off
between increasing context lengths and reliability of the
associated higher order statistics is seen in Fig.
~\ref{fig:denoise_perform_2} where we see only marginal gains in
the RMSE between, $k=4$ and $k=6$. The results for the AWGN case
are primarily aimed at demonstrating the practicality of the
proposed scheme fully acknowledging the performance lead of
schemes like the BLS-GSM that are particularly catered to the
problem of denoising in the case of AWGN channels. The benefits of
the proposed approach are in fact highlighted in unconventional
cases like nonlinear noise channels which will be discussed next.
\begin{figure}[h]
\begin{center}
\includegraphics[width=2.25in]{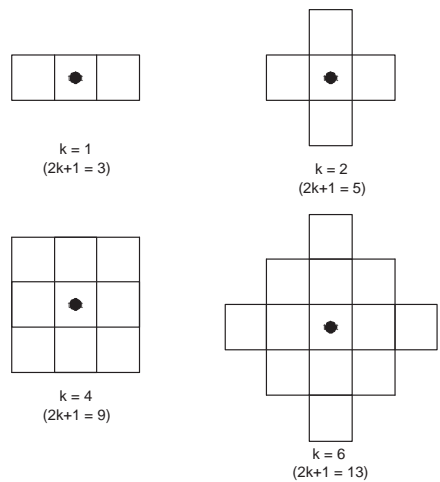}
\caption{2D Contexts for context length, $k$}
\label{fig:context_len}
\end{center}
\end{figure}

\begin{figure}[h]
\begin{center}
\begin{tabular*}{0.75\textwidth}{@{\hspace{0.5cm}}c@{\hspace{0.5cm}}c}
\includegraphics[width=2.25in,height=1.75in]{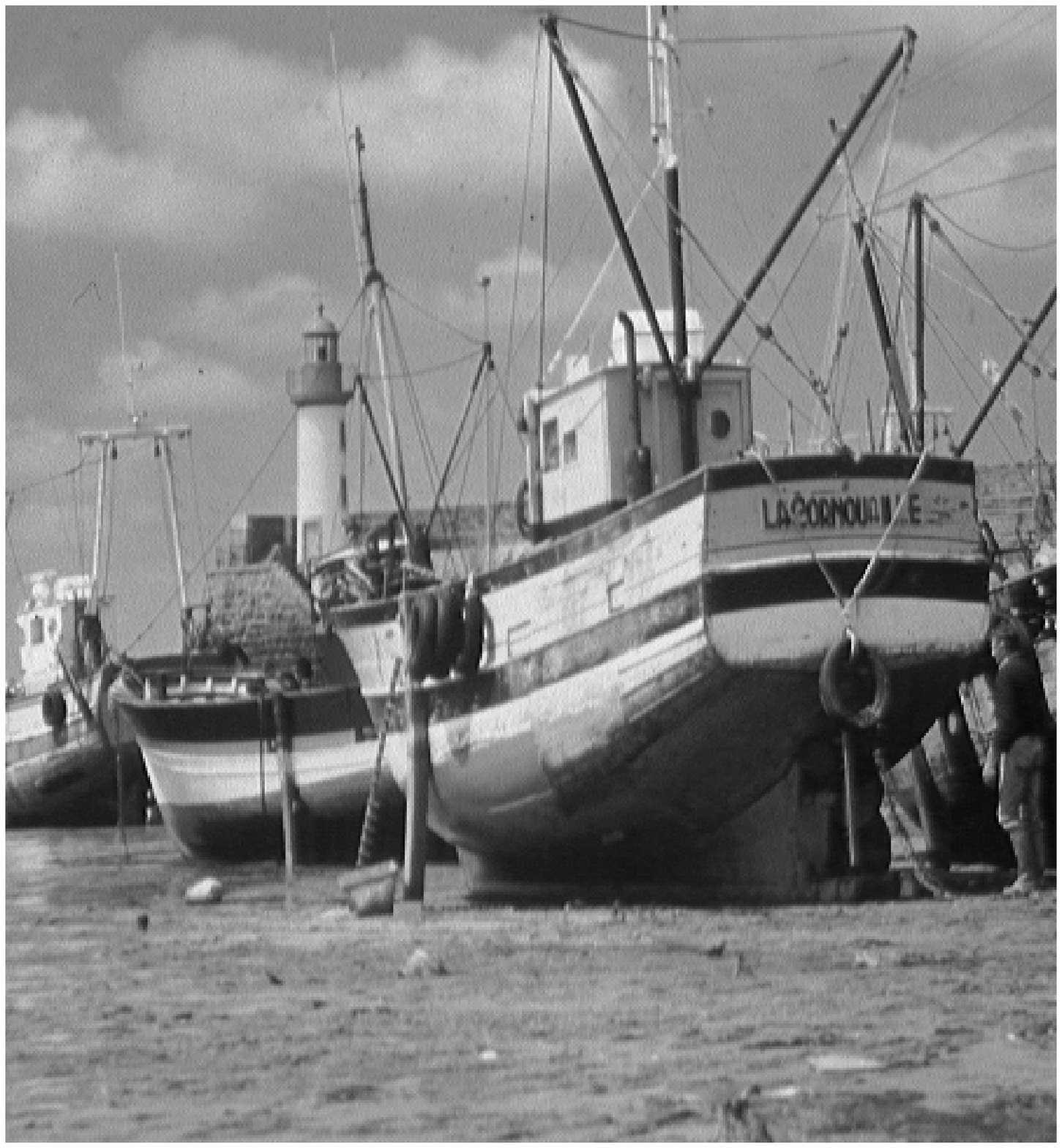} &
\includegraphics[width=2.25in]{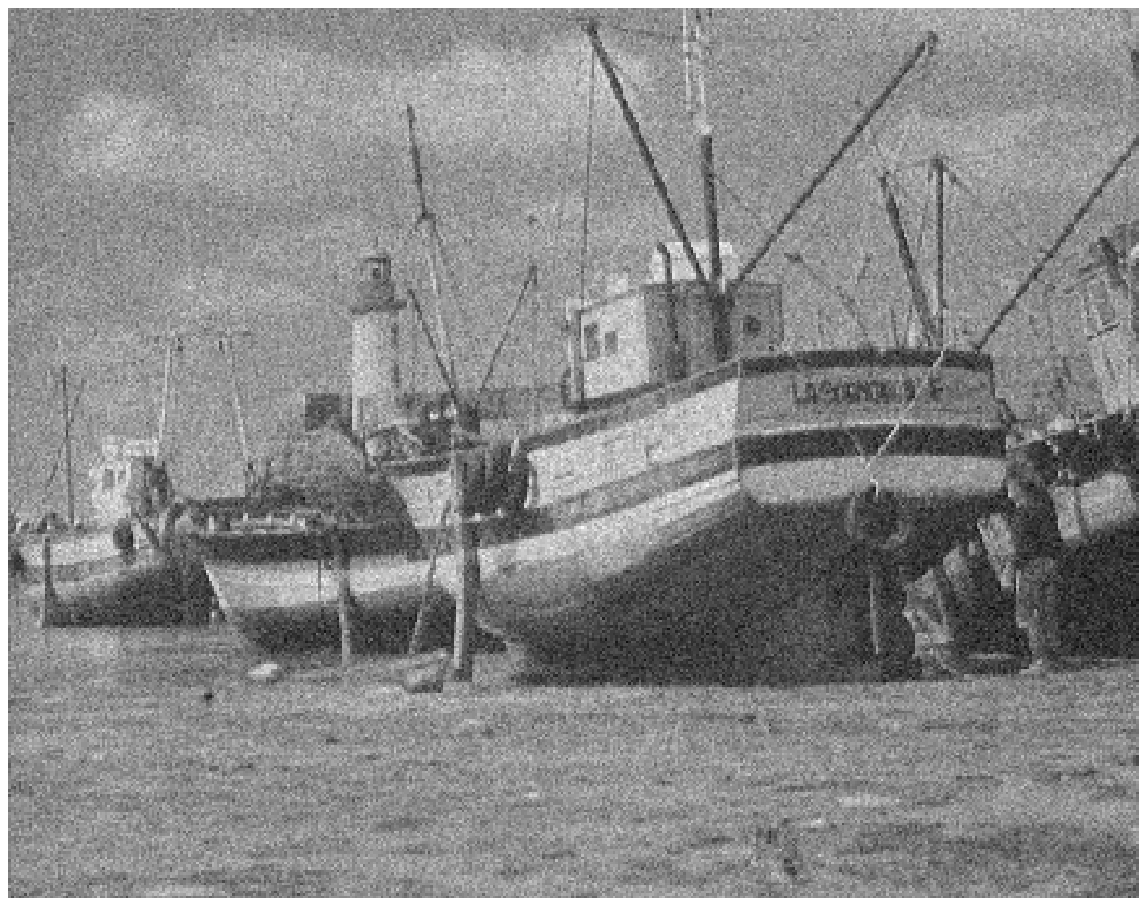} \vspace{0.5cm}\\
\includegraphics[width=2.25in]{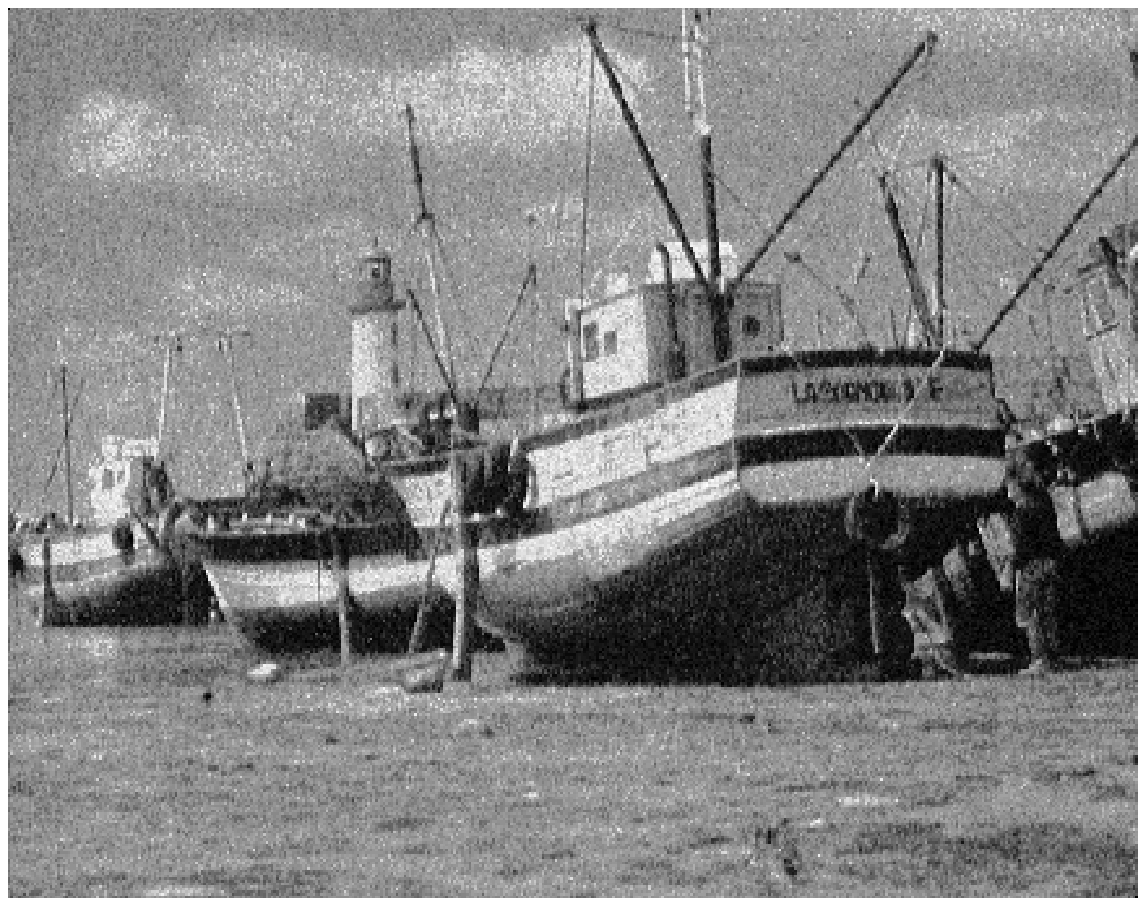} &
\includegraphics[width=2.25in]{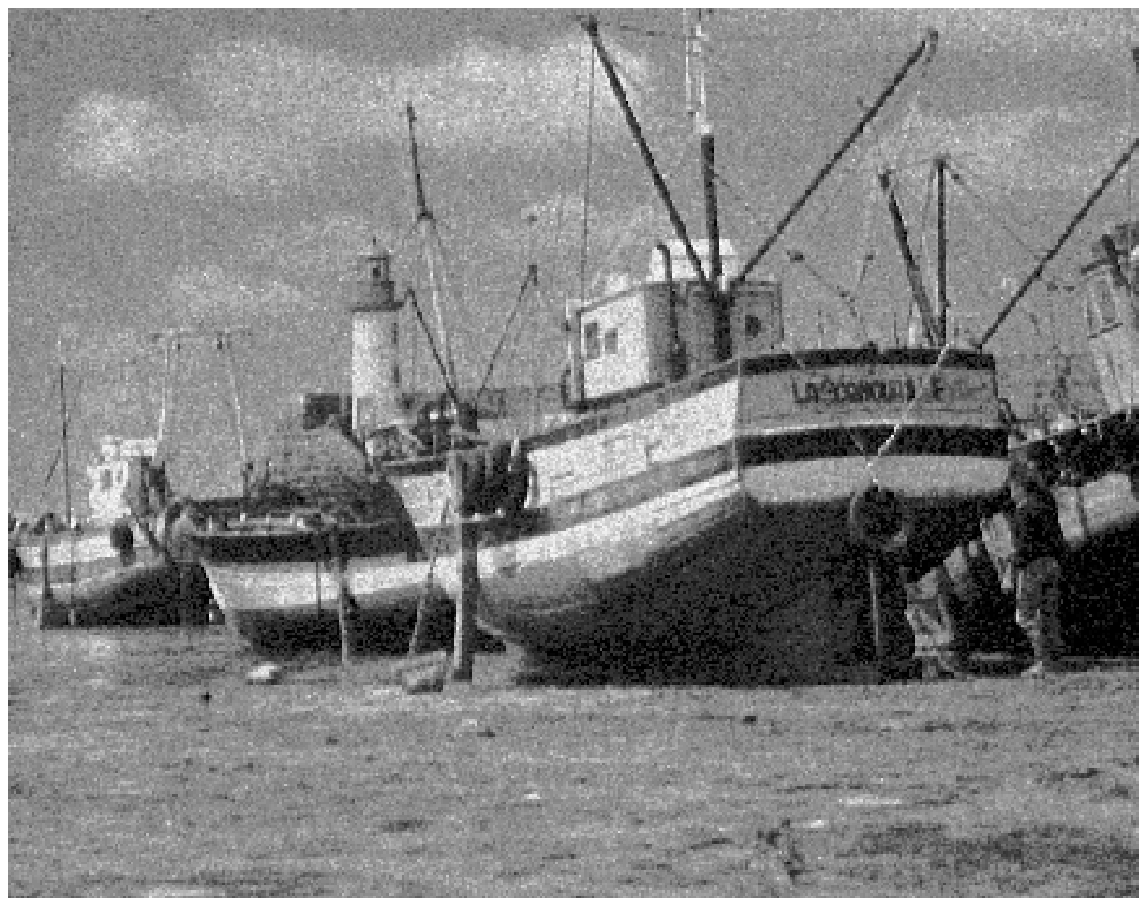} \\
RMSE = 14.7354 & RMSE = 13.0945 \vspace{0.5cm}\\
\includegraphics[width=2.25in]{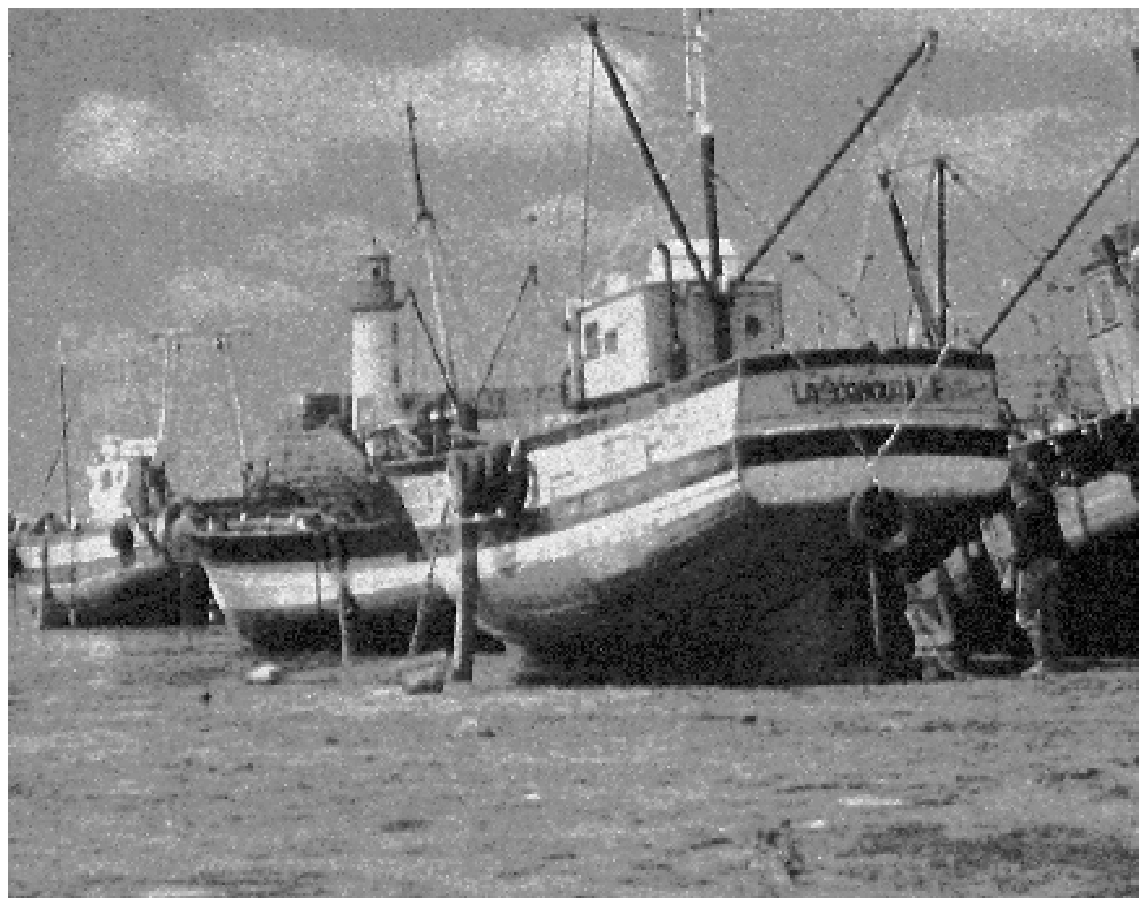} &
\includegraphics[width=2.25in]{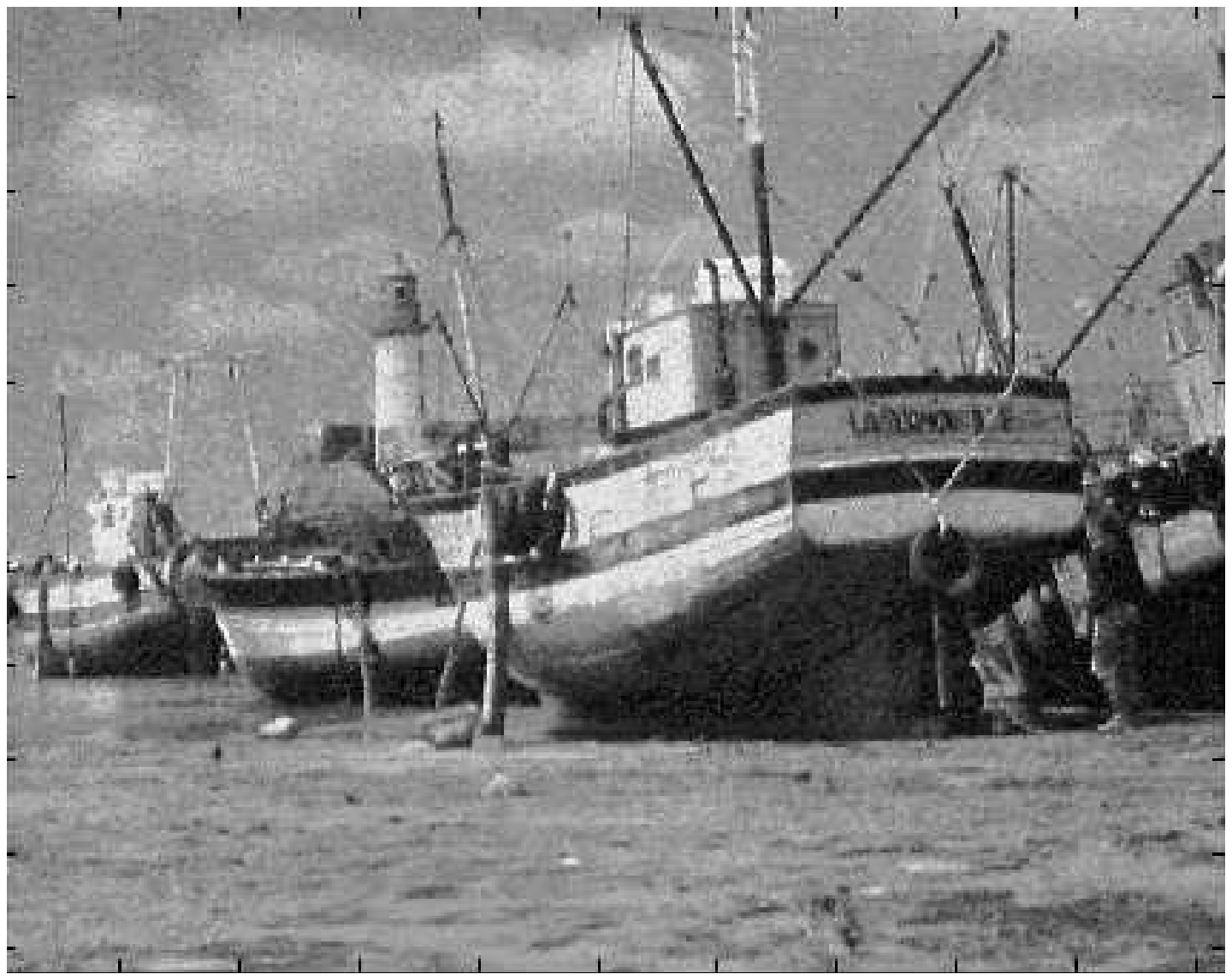} \\
RMSE = 11.2899 & RMSE = 11.2610 \vspace{0.5cm}\\
\includegraphics[width=2.25in,height=1.8in]{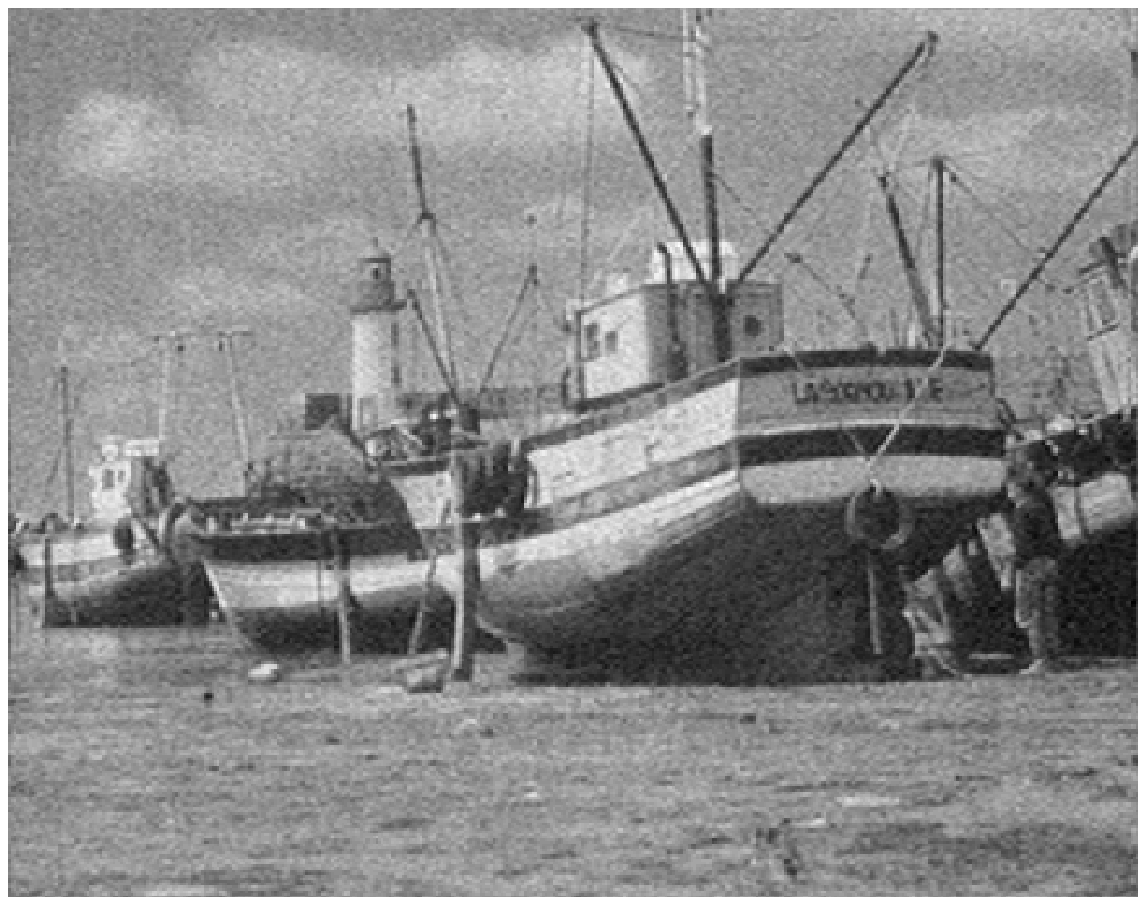} &
\includegraphics[width=2.25in,height=1.8in]{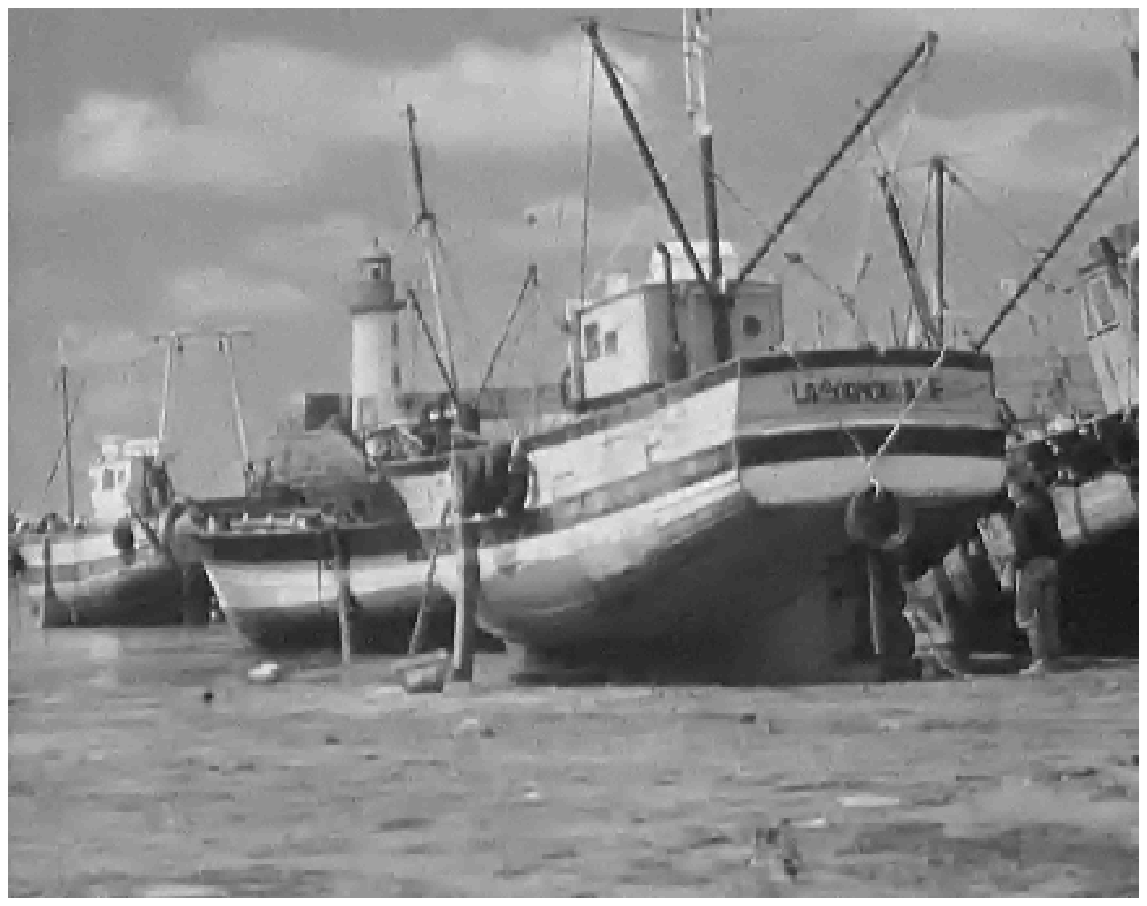}\\
RMSE = 11.1782 & RMSE = 7.842
\end{tabular*}
\caption{Row 1- left: Original image, right: Noisy image, $\sigma
= 20$; Denoised Images using, Row 2- left: $k=1$ right: $k=2$; Row
3- left: $k=4$, right: $k=6$; Row 4- left: the scheme in
\cite{Donoho_1}, right: the scheme in \cite{Portilla}}
\label{denoise_perform_1}
\end{center}
\end{figure}

\begin{figure}[h]
\begin{center}
\includegraphics[width=2.25in]{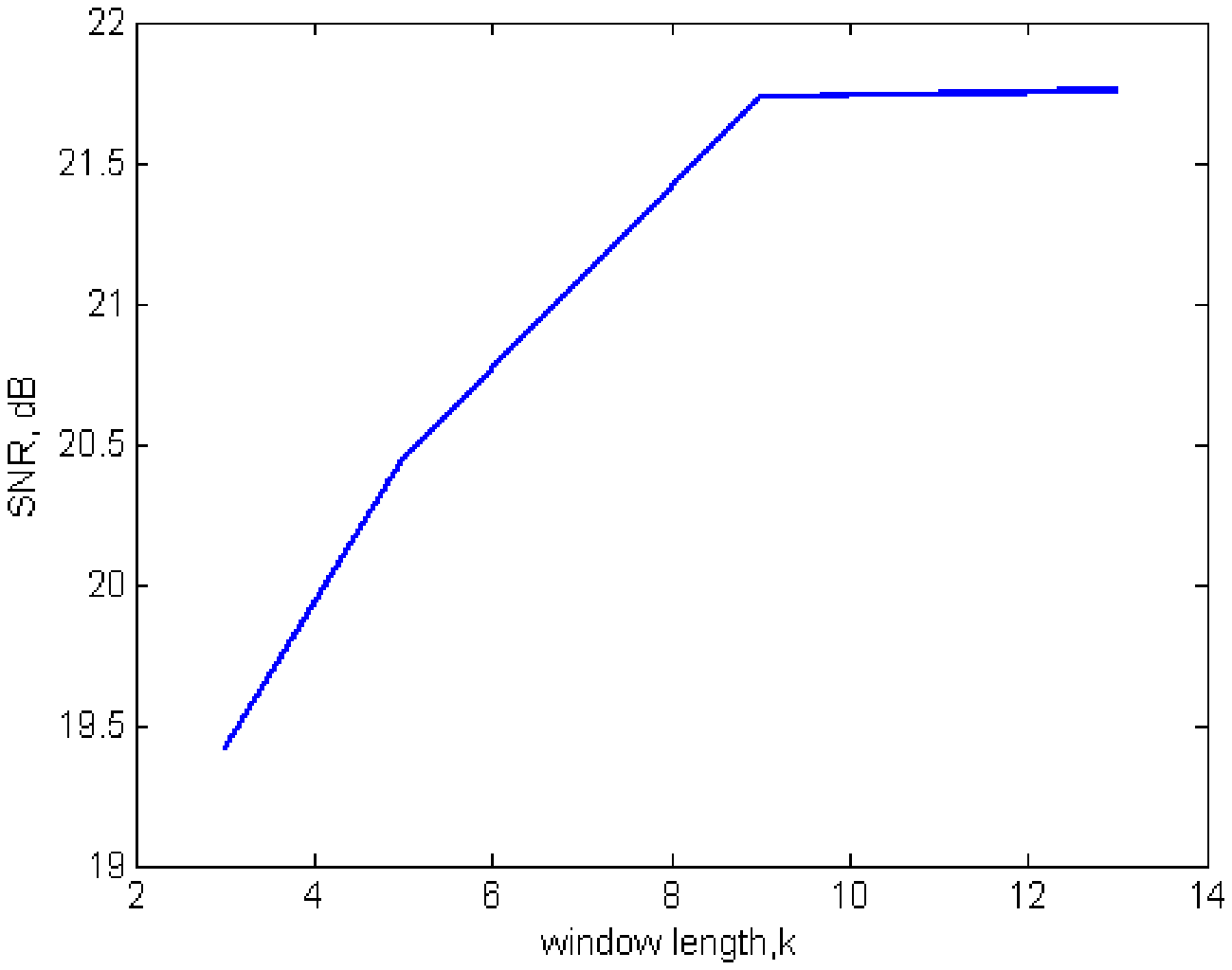}
\caption{Comparison of RMSE of the denoised image for various
context lengths, $k$} \label{fig:denoise_perform_2}
\end{center}
\end{figure}
Another example of the application of the proposed scheme is in
denoising an image corrupted with an unconventional distribution
as discussed earlier in this section. More specifically, we
simulate the noisy image by using a gray-level dependent Rayleigh
distribution (with probability density function, $f(x) =
\frac{x}{b^2} e^{\frac{-x^2}{2b^2}})$ whose variance parameter,
$B$, is chosen as a function of clean image's gray level at that
location. In this particular example, we generate a matrix of
256x256 Rayleigh distributed random variables whose parameters $B$
are chosen according to the following rule, $B(i,j) = I
(i,j)*35/256$, where $I(i,j)$ is the true value of the clean image
at location $(i,j)$. We will discuss the denoising performance
only in the symbol-by-symbol case in this setting in favor of
succinctness to convey the point of efficacy of the proposed
scheme. More detailed results and discussions on this problem
setting can be found in \cite{Kamakshi_3}. We compare, in
Fig.~\ref{fig:pdf_comp}, the empirical distribution estimate,
$\hat{F}_{x^n}$, of the underlying clean image with the histogram
generated from access to the ``true'' clean image. We also compare
these results to the smoothed histogram estimate of the true clean
image that was produced using the Kernel Density estimation
approach in \cite{gray_moore}. From a visual inspection of the
figure, it is evident that we are able to reasonably recover the
true marginal empirical distribution of the underlying clean image
and correspondingly the estimate of the true image.

\begin{figure}
\begin{center}
\begin{tabular*}{0.75\textwidth}{@{\hspace{0.25cm}}c@{\hspace{0.25cm}}c}
\includegraphics[width=2.25in]{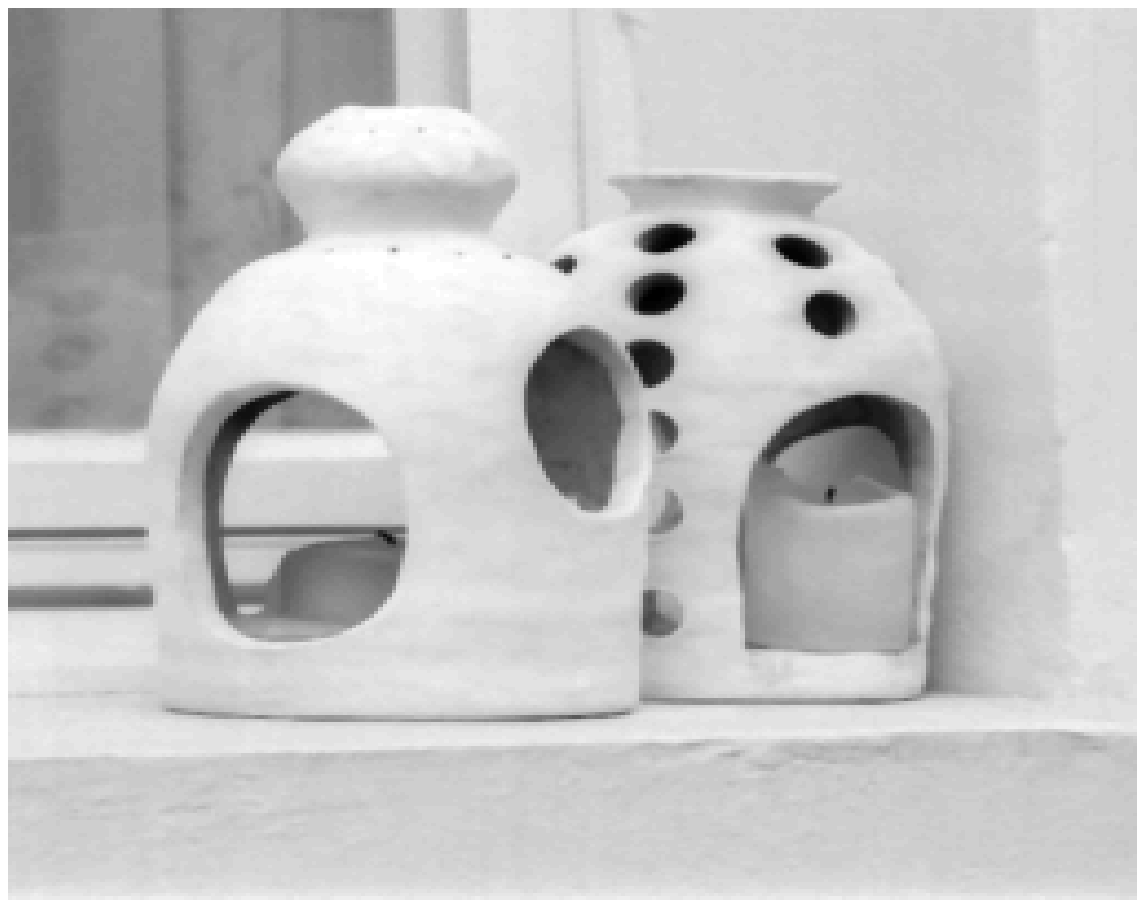} &
\includegraphics[width=2.25in]{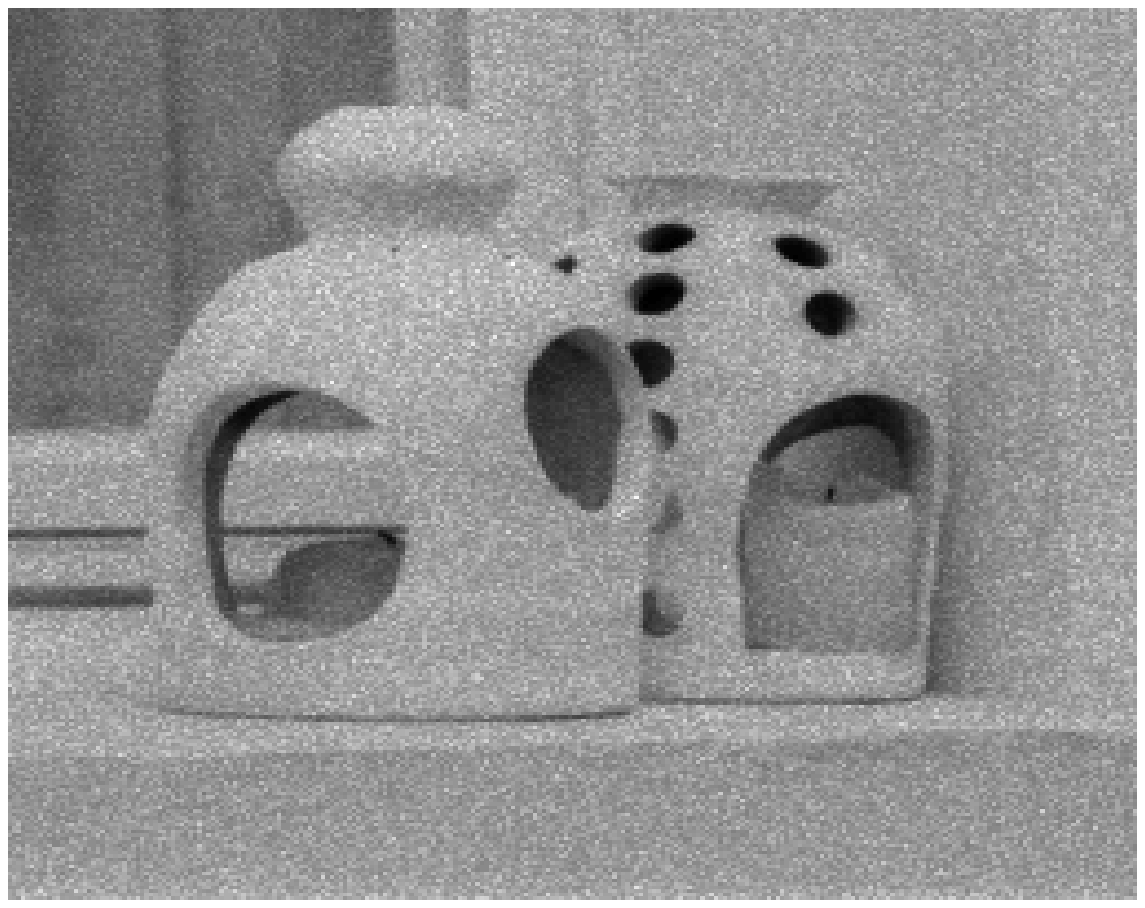} \\
& $\text{RMSE}_{\text{noise}}$ = 38.802\\
\includegraphics[width=2.25in]{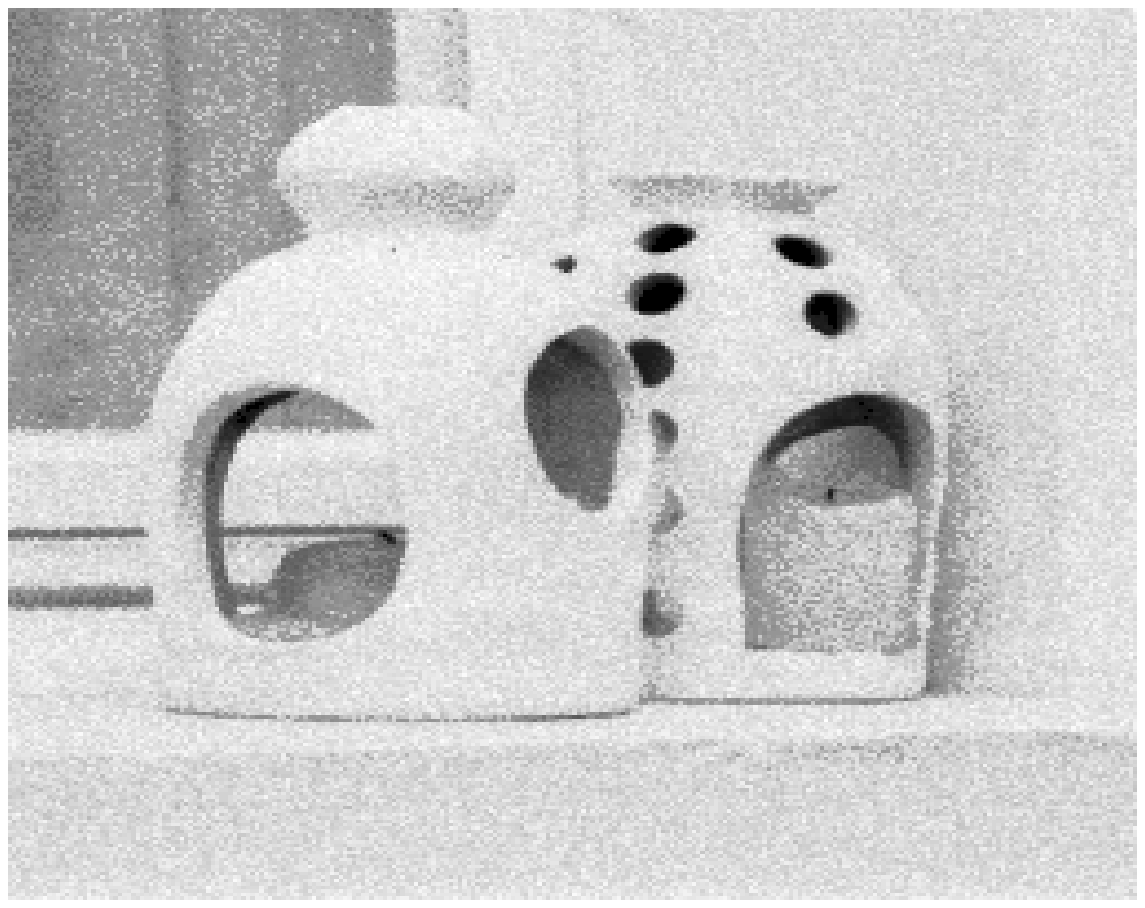} &
\includegraphics[width=2in]{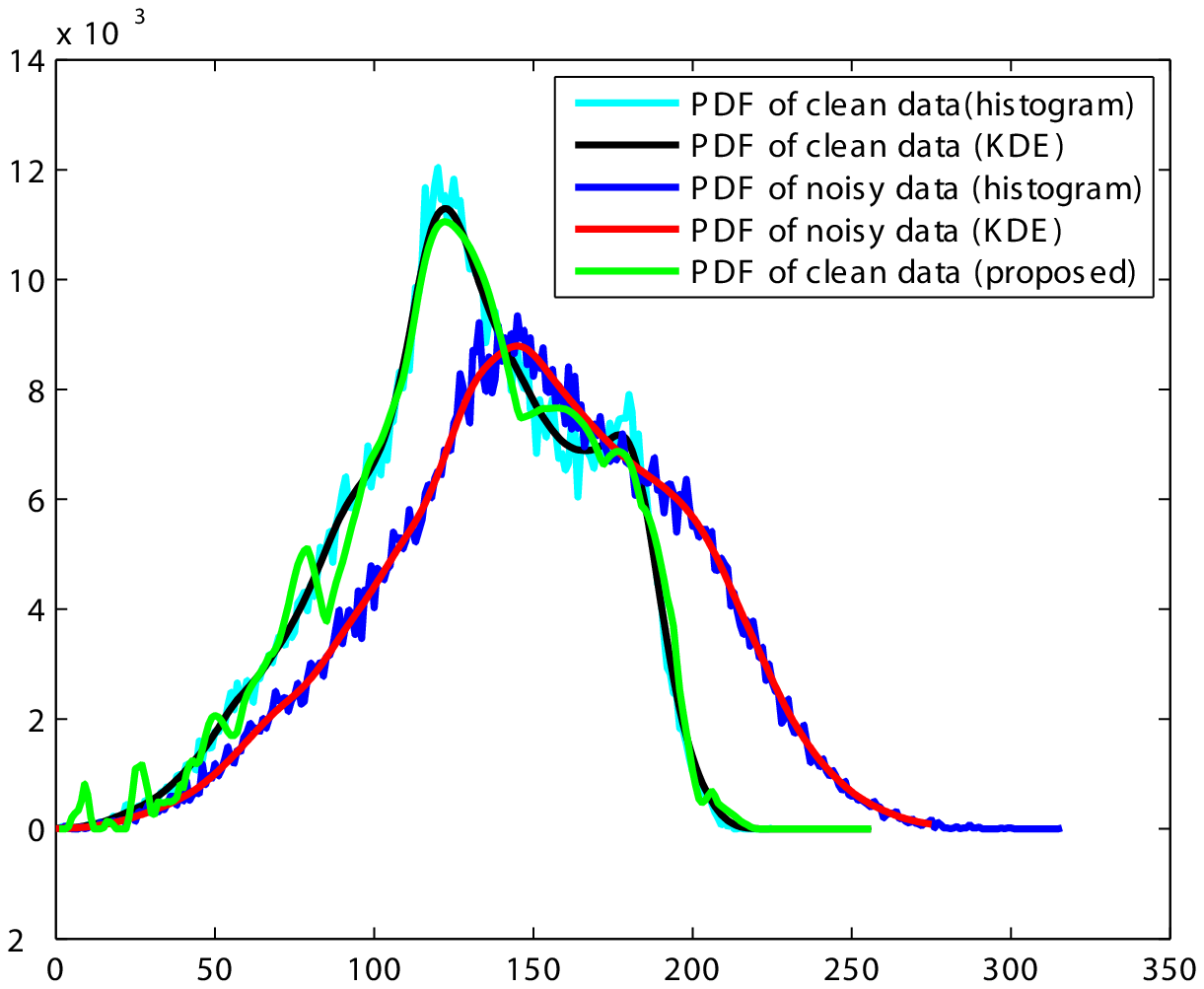} \\
RMSE = 11.2839 &
\end{tabular*}
\caption{Row 1- left: Original image, right: Noisy image; Denoised
images using Row 2- left: symbol-symbol scheme, right: Comparison
of Distribution estimates for the symbol-by-symbol denoiser}
\label{fig:pdf_comp}
\end{center}
\end{figure}

Finally, we present the results of denoising the boats image that
is corrupted by a multiplicative Gaussian noise with a
distribution, $\mathcal{N}(1,0.2)$ in Fig. \ref{fig:mult_noise}.
The noise in this case literally multiplies this case literally
multiplies the original clean image to corrupt it and as such, the
effects are relatively more catastrophic. We compare,
qualitatively, the results from the proposed denoiser with that of
\cite{Portilla} to validate its efficacy.

\begin{figure}
\begin{center}
\begin{tabular*}{0.75\textwidth}{@{\hspace{0.25cm}}c@{\hspace{0.25cm}}c}
\includegraphics[width=2.25in,height=1.75in]{boat} &
\includegraphics[width=2.25in]{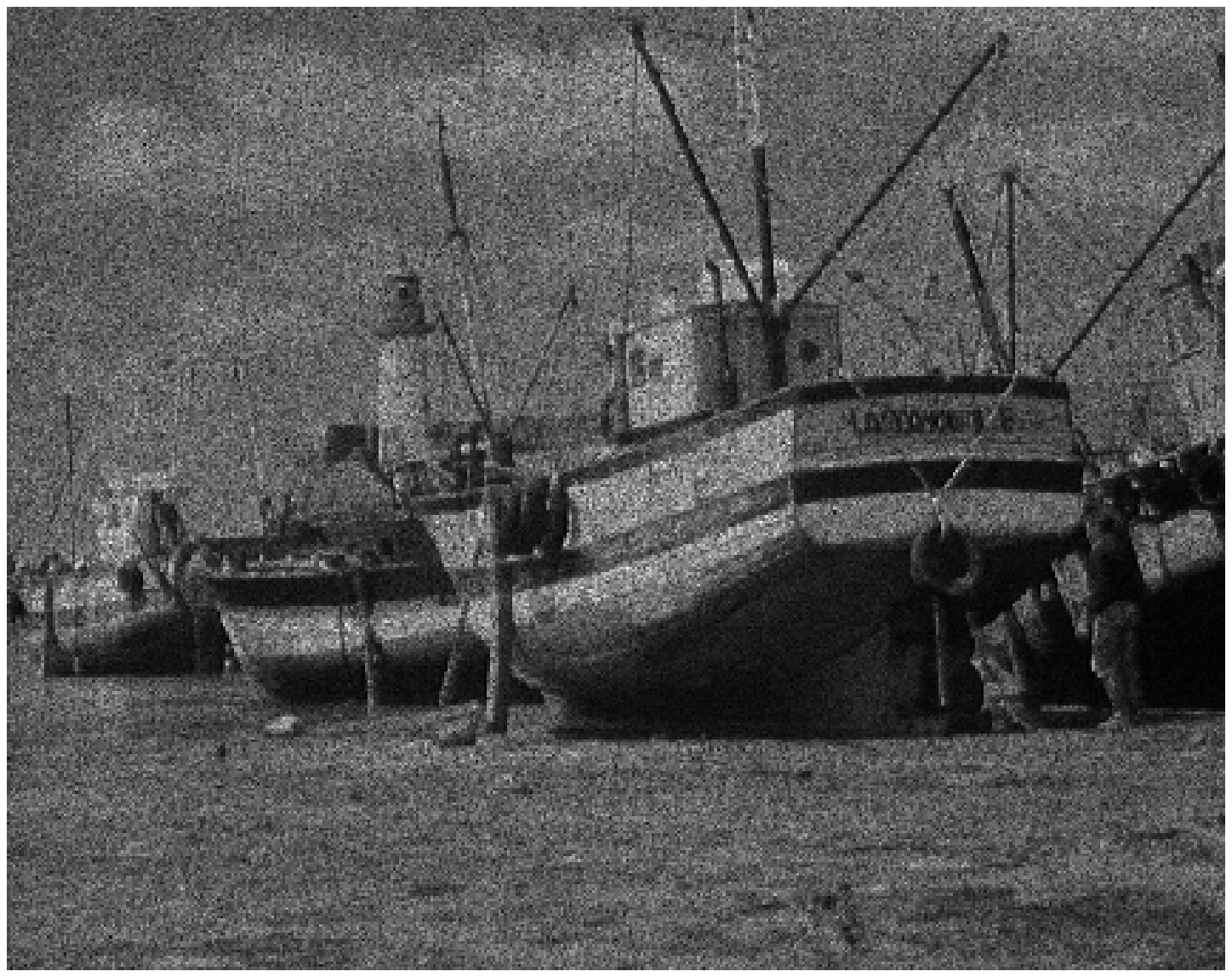} \\
\includegraphics[width=2.25in]{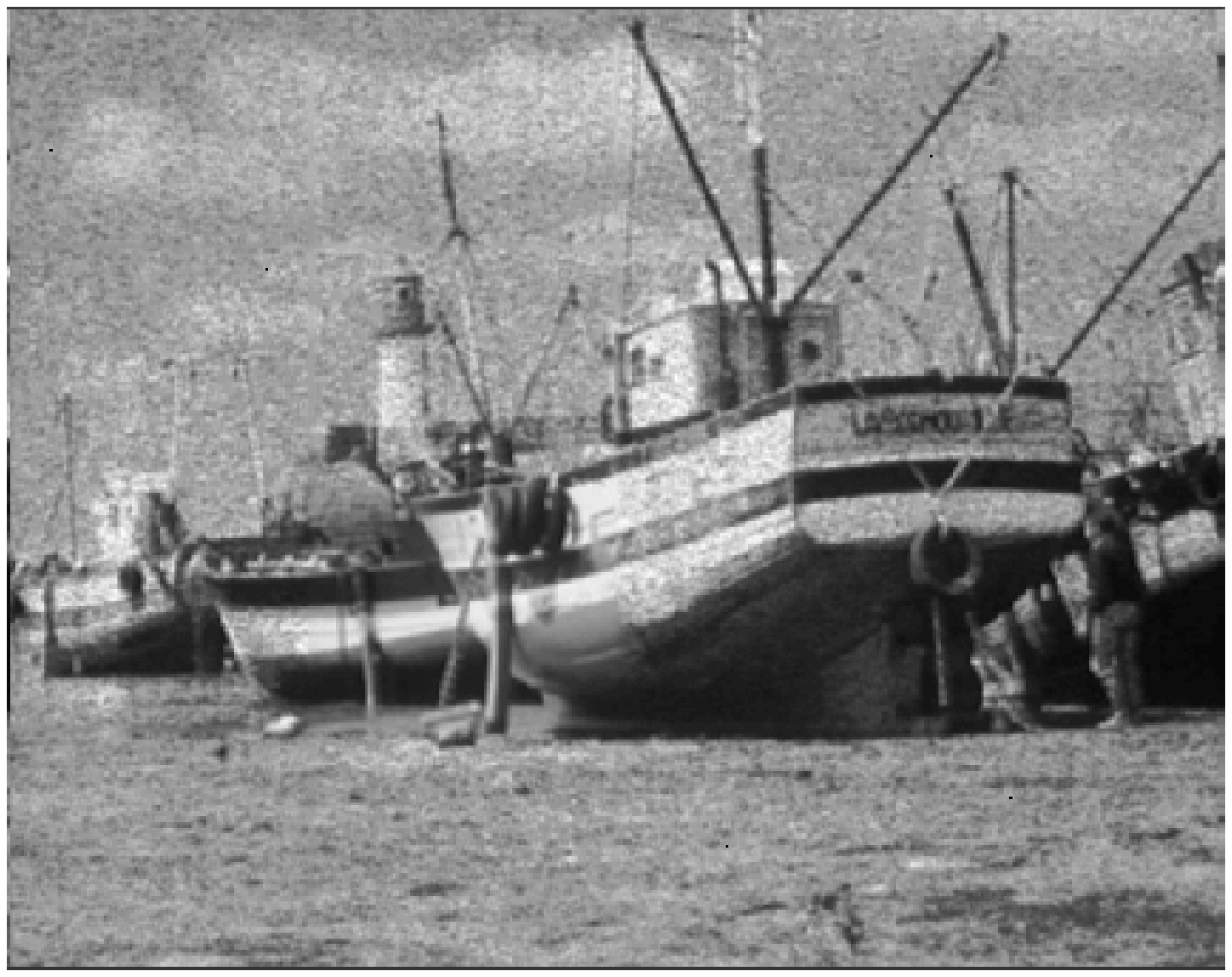} &
\includegraphics[width=2.25in]{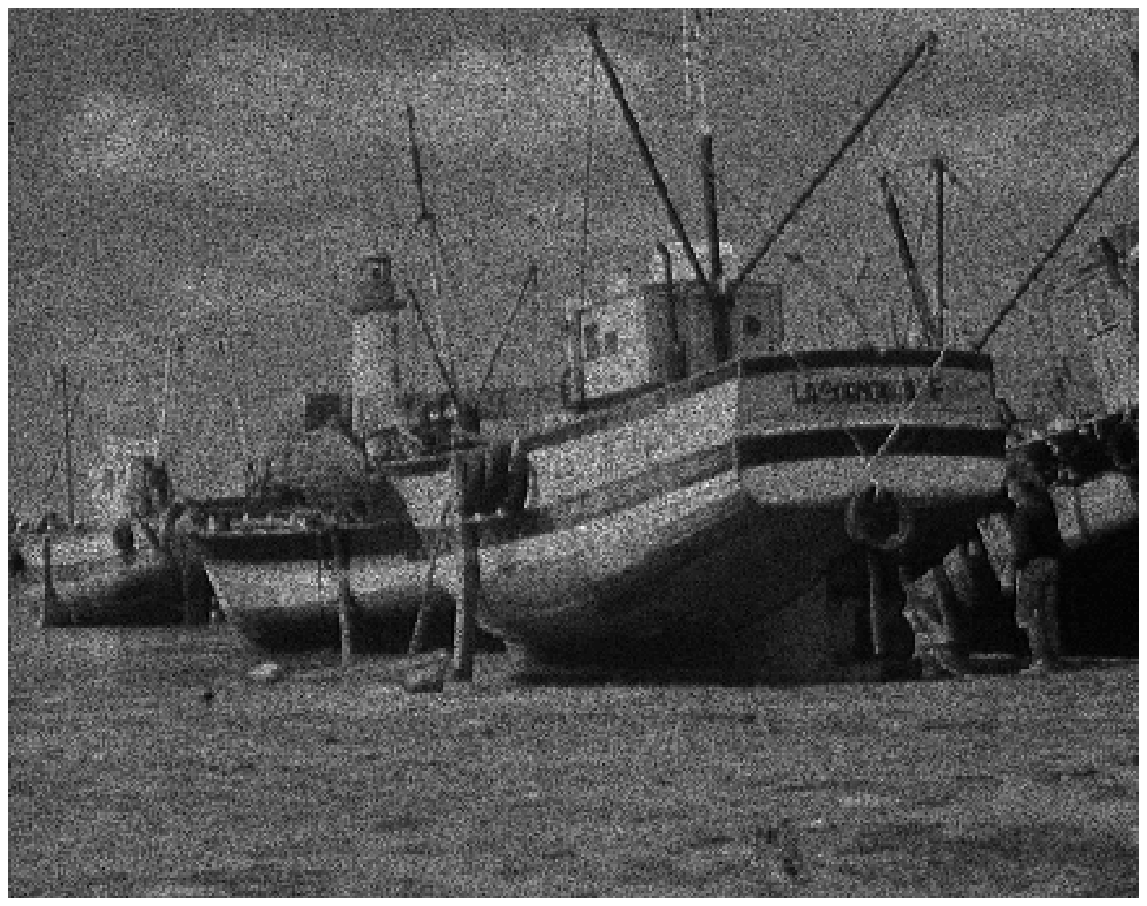} \\
\end{tabular*}
\caption{Row 1- left: Original image, right: Noisy image; Denoised
images using Row 2- left: proposed scheme, right: BLS-GSM
\cite{Portilla}} \label{fig:mult_noise}
\end{center}
\end{figure}

\section{Conclusion and Future Directions}
\label{sec:conclusion}
We have presented a family of schemes for denoising continuous amplitude signals that is universally optimal. A salient feature of our setting and results is the wide generality of channels and loss functions for which they apply. The techniques presented in this paper draw from the ``DUDE framework'' in \cite{Tsachy}. A weighted `context aggregation' was suggested in \cite{Tsachy} as an approach to enhance the performance of the DUDE in the first pass of the statistics collection. The proposed technique provides a natural context aggregation mechanism whereby neighboring contexts in addition to the observed are weighted by the kernel in the density estimation step. The denoiser proposed in \cite{Dembo} was shown to be asymptotically universal and extended the domain of applicability of DUDE-like schemes to cases where the noise is continuous valued. This approach, even though elegant theoretically, suffers from some of the same issues as the DUDE in terms of sparseness of statistics for large alphabet sizes. Our technique addresses this problem for the problem setting considered in \cite{Dembo} by natural context aggregation induced by the kernel density estimation. In the setting where the underlying clean signal is discrete-valued, taking values in a finite alphabet space, a slight modification of our scheme has been shown to reduce to the scheme in \cite{Dembo}. We also simultaneously provide a framework to address the case of continuous valued alphabets, where there is need to learn distribution functions instead of individual mass points as in the discrete-valued case. Finally, the proposed scheme is practical and tractable in its computational requirements as demonstrated by the experimental results.

The experimental results in this paper seem promising enough to motivate further exploration of practical aspects of the proposed scheme. This is an interesting future direction that is currently under investigation. Additional directions of research include studying the applicability of recursive density estimation techniques discussed in \cite{Laslo} in designing recursive denoisers as an alternative to the scheme presented in this paper. This would be particularly useful in multidimensional data applications like denoising noise corrupted video. It could also be of theoretical interest to understand the implications of a recursive structure to the denoiser and its associated optimality results.
\appendices
\section{Conditions on the channel}
\label{app:channel conds}
In addition to conditions C1-C4 in section \ref{sec:problem_setting}, the following conditions on the channel (noise distribution) round up the necessary assumptions for the performance guarantees made in this work.
\begin{itemize}
\item[C5.] The channel satisfies the uniform Lipschitz continuity
condition,
\begin{equation}
\sup_{y \in \mathbb{R}}
\| f_{Y|x}(y) \|_{BL} < \infty 
\end{equation}
where
\begin{eqnarray}{ }
\| f_{Y|x}(y) \|_{BL}  &=& \| f_{Y|x}(y) \|_L  + \| f_{Y|x}(y) \|_{\infty} \\
\| f_{Y|x}(y) \|_L &=& \sup_{x \neq z \atop x,z \in
[a,b]}\frac{\left| f_{Y|x}(y) -
f_{Y|z}(y)\right|}{|x-z|} < \infty , \; \forall y \in \mathbb{R}\\
\| f_{Y|x}(y) \|_{\infty} &=& \sup_{x \in [a,b]} f_{Y|x}(y)
\end{eqnarray}

\item[C6.] The conditional densities, additionally, satisfy the following Lipschitz continuity condition,
\begin{equation}
\parallel \Xi \parallel_L = \sup_{0 < \Delta < (b-a)}
\frac{\xi_{\Delta}}{\Delta} < \infty
\end{equation}
where,
$\xi_{\Delta}$ is defined in (\ref{delta_Delta}).

\item[C7a.] The family of conditional densities, $\mathcal{C}$, have
uniformly bounded second order universal derivatives, i.e.,
$\exists$ a $\mathcal{B}_{\mathcal{C}}$ s.t. $0 < \mathcal{B}_{\mathcal{C}}<\infty$ and $D_2^{\ast}
\left( f_{Y|x} \right) < \mathcal{B}_{\mathcal{C}}, \forall x \in
[a,b] $, where the second order universal derivative is defined as (refer \cite{Luc_2}  for further details)
\begin{equation}
D_2^{\ast} \left( f_{Y|x} \right) = \liminf_{h \downarrow 0 } \int
\left| \left(f_{Y|x} \ast \phi_{h} \right)^{(2)}\right|dy
\end{equation}
$\phi_{h}(x) = \frac{1}{h} \phi \left( \frac{x}{h}\right)$, $\phi \in C^{\infty}$, $C^{\infty}$ is a set of functions that have infinitely
many continuous derivatives with compact support and $f^{(s)}$ denotes the $s$-th derivative of $f$. This is a mild technical condition that enables the proof of the convergence of marginal density estimates  at the output of the memoryless channel to the true marginal density. Note that we are
not imposing the differentiability of the conditional densities of the channel themselves. We are, instead, proposing a milder constraint that the smoothed version of the channel conditional densities is ``differentiable enough''. This condition is trivially satisfied if we have a family of conditional densities
that have a uniformly absolutely continuous derivative.
\item[C7b.] An alternative to the previous condition on the family of conditional
densities of the channel is, $\lim_{|t| \rightarrow 0 }
\Omega_{\mathcal{C}}(t) = 0$, where
\begin{eqnarray}{ }
\Omega_{\mathcal{C}}(t) = \sup_{x \in [a,b]} \omega_x(t)
\end{eqnarray}
and
\begin{eqnarray}{ }
\omega_{x}(t) = \int \left| f_{Y|x}(y-t) - f_{Y|x}(y) \right|dy
\end{eqnarray}
From the fact \cite{Zygmund} that, for any $f \in L_1(\mathbb{R})$, the
corresponding, $L_1$-modulus of continuity, 
\begin{equation*}
\omega(t) = \int
\left| f(x-t) -f(x)\right|dx \rightarrow 0, \; \text{as} \; |t| \rightarrow
0
\end{equation*}
and
\begin{equation*}
\| \omega \|_{\infty}\le 2\|f\|_1< \infty
\end{equation*}
it follows that the global $L_1$-modulus of continuity, $\Omega_{\mathcal{C}}(t)$, is
well-defined for all $t$ and families of conditional
densities, $\mathcal{C}$. In other words, this condition demands
uniform convergence of the $L_1$-moduli of continuity of the
individual members comprising the family of conditional densities.
\end{itemize}

\section{Proof of Lemma \ref{lem:lem_4_pf_dens_const}}
A theorem necessary for the proof of Lemma
\ref{lem:lem_4_pf_dens_const} is as follows

\begin{theorem}
\textit{Every kernel $K$ with $\int K =1, K \ge 0$ is an
approximate identity, i.e for $\lim_{n \rightarrow \infty} h_n = 0
$ and every $f_i \in L_1$, s.t. $D_2^{\ast}\left(f_i \right)<
\infty$ are uniformly bounded we have}
\begin{equation*}
\lim_{n \rightarrow \infty} \int \left|\left(\frac{1}{n}\sum_{i=1}^n
f_i\right) \ast K_{h_n} -\left(\frac{1}{n} \sum_{i=1}^n f_i
\right)\right| =0
\end{equation*}
\label{thm:mix_density_approx_1}
\end{theorem}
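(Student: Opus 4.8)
The plan is to exploit the convexity of the $L_1$ norm to reduce the claim about the mixture $\frac{1}{n}\sum_{i=1}^{n} f_i$ to a \emph{uniform} bias bound for a single density, and then to control that bias by the second-order universal derivative $D_2^{*}$. Writing $g_n = \frac{1}{n}\sum_{i=1}^{n} f_i$ and $K_h(\cdot)=\frac{1}{h}K(\cdot/h)$, linearity of convolution together with the triangle inequality gives
\begin{equation*}
\left\| g_n * K_{h_n} - g_n \right\|_1 \;\le\; \frac{1}{n}\sum_{i=1}^{n}\left\| f_i * K_{h_n} - f_i \right\|_1 .
\end{equation*}
Hence it suffices to produce a bound on $\|f_i * K_h - f_i\|_1$ that is uniform in $i$ and vanishes as $h \downarrow 0$; the hypothesis that the $D_2^{*}(f_i)$ are uniformly bounded (say by a constant $B$; in the application $B=\mathcal{B}_{\mathcal{C}}$ of condition C7a) is exactly what supplies this uniformity.

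The heart of the argument is the bias estimate
\begin{equation*}
\left\| f * K_h - f \right\|_1 \;\le\; \frac{h^2}{2}\left( \int u^2 K(u)\,du \right) D_2^{*}(f),
\end{equation*}
valid for every $f \in L_1$ with $D_2^{*}(f)<\infty$ and every class-$2$ kernel $K$ (symmetry and finiteness of the second moment of $K$ are used here; recall $K\ge 0$, $\int K = 1$). I would establish it first for a smooth $g$ with integrable second derivative: using $K(u)=K(-u)$ to symmetrize the increment, $(g*K_h)(y)-g(y)=\tfrac12\int K(u)\,[\,g(y-hu)+g(y+hu)-2g(y)\,]\,du$, together with the integral remainder $g(y+a)+g(y-a)-2g(y)=\int_0^{a}(a-s)[\,g^{(2)}(y+s)+g^{(2)}(y-s)\,]\,ds$, so that taking $\|\cdot\|_1$ in $y$ and integrating against $K$ yields $\|g*K_h-g\|_1 \le \frac{h^2}{2}\|g^{(2)}\|_1 \int u^2 K$.

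To pass to a general $f$, I would mollify by $\phi_\eta$ and split
\begin{equation*}
\left\| f*K_h - f \right\|_1 \;\le\; 2\left\| f*\phi_\eta - f \right\|_1 + \frac{h^2}{2}\left( \int u^2 K \right)\left\| (f*\phi_\eta)^{(2)} \right\|_1 ,
\end{equation*}
where the first term uses Young's inequality with $\|K_h\|_1=1$ and the remaining term is the smooth bound applied to $g=f*\phi_\eta$. Fixing $h$ and letting $\eta\downarrow 0$ along a sequence realizing the $\liminf$ in the definition of $D_2^{*}(f)$, the first term vanishes (approximate-identity property of the smooth mollifier $\phi$ for a fixed $L_1$ function) while $\|(f*\phi_\eta)^{(2)}\|_1 \to D_2^{*}(f)$, giving the displayed estimate. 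Combining with $D_2^{*}(f_i)\le B$ then yields $\|g_n*K_{h_n}-g_n\|_1 \le \frac{h_n^2}{2}\bigl(\int u^2 K\bigr)B \to 0$ as $h_n\to 0$, which proves the theorem.

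I expect the main obstacle to be precisely this passage from smooth functions to general $f$: one must justify the interchange of the limit $\eta\downarrow 0$ with the fixed-$h$ convolution and argue along a subsequence that \emph{realizes} the $\liminf$ (not a genuine limit) in $D_2^{*}$, since $D_2^{*}$ is defined as a $\liminf$. A secondary technical point, should one wish to drop the finite-second-moment assumption on $K$ and retain only $\int K=1$, $K\ge 0$, is to replace the global second-moment bound by truncating $\int K(u)\,\omega_2^{f}(hu)\,du$ at $|u|\le M$, where $\omega_2^{f}(t)=\|f(\cdot-t)+f(\cdot+t)-2f\|_1$ is the central second $L_1$-difference; the estimate $\omega_2^{f_i}(t)\le \min\{4\|f_i\|_1,\; t^2 D_2^{*}(f_i)\}$ together with $\|f_i\|_1=1$ then makes the tail $\int_{|u|>M}K$ small uniformly in $i$, so that the conclusion survives for an arbitrary symmetric nonnegative kernel.
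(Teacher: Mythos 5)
Your proposal is correct, and its skeleton matches the paper's proof: both reduce the mixture to a single density by convexity of the $L_1$ norm, prove a vanishing bias bound for smooth densities, and pass to general $f_i$ via a mollifier and the universal derivative $D_2^{\ast}$. Where you differ is in how the key bias lemma is proved. The paper follows the Devroye--Gy\"{o}rfi associated-kernel route: a Taylor expansion with integral remainder yields the exact representation $f \ast K_h - f = h^s\, f^{(s)} \ast L_h$ for class-$s$ kernels, and Young's inequality then gives $\|f \ast K_h - f\|_1 \le h^s \|f^{(s)}\|_1 \int |L|$; since $\int|L| = \frac{1}{2}\int u^2 K$ for nonnegative class-$2$ kernels, your symmetrized second-difference estimate recovers exactly the same constant $\frac{h^2}{2}\|f''\|_1 \int u^2 K$ without ever introducing $L$ --- a more elementary, self-contained derivation, though specific to $s=2$, whereas the paper's machinery works for arbitrary $s$ and $D_s^{\ast}$ (relevant if one wants higher-order kernels). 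Note that both arguments quietly use more than the stated hypotheses ``$\int K = 1$, $K \ge 0$'': you need symmetry of $K$ in the symmetrization step and $\int u^2 K < \infty$, while the paper needs the vanishing moments and finite $s$-th absolute moment of a class-$s$ kernel; both are inherited from the class-$2$ assumption of Theorem \ref{thm:density_consist}, and your closing truncation remark, with $\omega_2^{f}(t) \le \min\{4\|f\|_1,\, t^2 D_2^{\ast}(f)\}$, is a genuine strengthening that removes the second-moment requirement, which the paper does not address. You also add value precisely at the step the paper dispatches with ``it can then be shown that'': your splitting $\|f \ast K_h - f\|_1 \le 2\|f \ast \phi_\eta - f\|_1 + \frac{h^2}{2}\bigl(\int u^2 K\bigr)\|(f \ast \phi_\eta)^{(2)}\|_1$, followed by sending $\eta \downarrow 0$ along a subsequence realizing the $\liminf$ defining $D_2^{\ast}(f)$, makes explicit the extension to non-smooth densities that the paper leaves to the citation of \cite{Luc_2}.
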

An alternate formulation of the approximation identity is the
following,
\begin{theorem}
\textit{Every kernel $K$ with $\int K =1, K \ge 0$ is an
approximate identity, i.e for $\lim_{n \rightarrow \infty} h_n = 0
$ and every $f_i \in L_1$, s.t. $\lim_{|t| \rightarrow 0}
\Omega_{\mathcal{C}}(t) = 0$}
\begin{equation*}
\lim_{n \rightarrow \infty} \int \left|\left(\frac{1}{n}\sum_{i=1}^n
f_i\right) \ast K_{h_n} -\left(\frac{1}{n} \sum_{i=1}^n f_i
\right)\right| =0
\end{equation*}
\label{thm:mix_density_approx_2}
\end{theorem}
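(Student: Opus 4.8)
The plan is to treat $K_{h_n}$ as an approximate identity and to funnel the entire argument through the \emph{uniform} $L_1$-modulus of continuity $\Omega_{\mathcal{C}}$, which is precisely the device that decouples the growing index $n$ from the shrinking bandwidth $h_n$. Writing $\bar f_n = \frac{1}{n}\sum_{i=1}^n f_i$ and using $\int K_{h_n} = \int K = 1$ together with the change of variable $t = h_n s$ (so that $K_{h_n}(t)\,dt = K(s)\,ds$), I would first record the pointwise identity
\begin{equation*}
\left(\bar f_n \ast K_{h_n}\right)(y) - \bar f_n(y) = \int \left[\bar f_n(y - h_n s) - \bar f_n(y)\right] K(s)\,ds .
\end{equation*}

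Next I would integrate the absolute value over $y$, apply the triangle inequality, and invoke Tonelli's theorem (valid because $K \ge 0$ and the integrand is nonnegative) to exchange the order of integration, which yields
\begin{equation*}
\left\| \bar f_n \ast K_{h_n} - \bar f_n \right\|_1 \le \int K(s)\, \omega_{\bar f_n}(h_n s)\, ds ,
\end{equation*}
where $\omega_g(t) = \int |g(y-t)-g(y)|\,dy$ denotes the $L_1$-modulus of continuity of $g$. This reduces the theorem to showing that the right-hand side vanishes as $n\to\infty$.

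The crucial step, and the one that simultaneously accommodates both limits, is to dominate $\omega_{\bar f_n}$ in a manner that is free of $n$. By the triangle inequality, $\omega_{\bar f_n}(t) \le \frac{1}{n}\sum_{i=1}^n \omega_{f_i}(t)$, and since $f_i = f_{Y|x_i}$ the definition of $\Omega_{\mathcal{C}}$ gives $\omega_{f_i}(t) = \omega_{x_i}(t) \le \Omega_{\mathcal{C}}(t)$ for every $i$. Hence $\omega_{\bar f_n}(t) \le \Omega_{\mathcal{C}}(t)$ and
\begin{equation*}
\left\| \bar f_n \ast K_{h_n} - \bar f_n \right\|_1 \le \int K(s)\, \Omega_{\mathcal{C}}(h_n s)\, ds ,
\end{equation*}
a bound that no longer refers to the particular mixture.

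To finish I would apply dominated convergence to the $s$-integral. Because each $f_{Y|x}$ is a probability density, $\Omega_{\mathcal{C}}(t) \le 2\sup_{x}\|f_{Y|x}\|_1 = 2$ for all $t$, so the integrands are dominated by the integrable majorant $2K(s)$; and for each fixed $s$ the hypothesis $\lim_{|t|\to 0}\Omega_{\mathcal{C}}(t) = 0$ combined with $h_n s \to 0$ forces $\Omega_{\mathcal{C}}(h_n s)\to 0$. Dominated convergence then gives $\int K(s)\,\Omega_{\mathcal{C}}(h_n s)\,ds \to 0$, which proves the claim. I expect the only real subtlety to lie in this uniform domination: a naive appeal to the single-function approximate-identity result holds for each fixed $\bar f_n$ but says nothing along the diagonal $n\to\infty$, $h_n\to 0$, and it is only the replacement of the mixture-dependent modulus $\omega_{\bar f_n}$ by the channel-wide envelope $\Omega_{\mathcal{C}}$ that makes the passage to the limit legitimate.
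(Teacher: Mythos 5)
Your proof is correct and follows essentially the same route as the paper's: both arguments use $\int K_{h_n}=1$, the triangle inequality, and Tonelli to reduce the claim to the $n$-free bound $\bigl\| \bar f_n \ast K_{h_n} - \bar f_n \bigr\|_1 \le \int K(s)\,\Omega_{\mathcal{C}}(h_n s)\,ds$, and then exploit $\Omega_{\mathcal{C}}(t)\to 0$ as $|t|\to 0$ together with the uniform bound $\Omega_{\mathcal{C}}\le 2$. The only cosmetic differences are in the closing step — you finish with dominated convergence, whereas the paper splits the integral at $|t|=\delta$ into a small-$t$ piece controlled by $\Omega_{\mathcal{C}}$ and a tail piece controlled by $\int_{|t|\ge\delta}|K_{h}|\to 0$ (and carries a Hölder step so the bound holds in $L_p$) — two interchangeable ways of drawing the same conclusion.
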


A definition regarding the notion of an \textit{associated
kernel}, $L$, with the kernel, $K$ that is necessary for the
subsequent proof is,

\begin{definition}
The function $L$ defined by
\begin{eqnarray*}
L(x)&=&(-1)^s \int_x^{\infty} \frac{(y-x)^{s-1}}{(s-1)!} K(y)dy
\qquad
(x >0) \\
L(-x)&=&(-1)^s L(x) \qquad (x<0)
\end{eqnarray*}
is the kernel associated with kernel $K$. The function $L$ is
sometimes said to have a parameter $s$ since it figures in the
definition of $L$. When $K$ is symmetric, $L$ is symmetric.
\end{definition} Furthermore,

\begin{equation}
\int |L| \le \frac{1}{s!} \int |x|^s |K(x)|dx
\end{equation}
for all nonnegative integers $s$. For $s=0$, we define $L=K$. For
$K \ge 0$, we have the equality

\begin{equation}
\int |L| = \frac{1}{s!} \int |x|^s |K(x)|dx
\end{equation}
Finally,

\begin{eqnarray}
\int L &=& \int\frac{x^s}{s!}  K(x)dx \nonumber \\&=&\left\{\begin{array}
{r@{\quad:\quad}l} 0& s \mbox{   odd}\\
0& s \mbox{   even, and the order of  } K \mbox{  is} > s
\end{array}\right.
\end{eqnarray}

\begin{proof}[Proof of Theorem \ref{thm:mix_density_approx_1}]

Let us start with the case that $f_i$ has $s-1$ absolutely
continuous derivatives. Then, by Taylor's series expansion with
remainder,

\begin{equation*}
f_i(x+y) -f_i(x) = \sum_{j=1}^{s-1} \frac{y^j}{j!} f_i^{(j)}(x)+
\int_{x}^{x+y}\frac{(x+y-u)^{s-1}}{(s-1)!}f_i^{(s)}(u)du
\end{equation*}
so that, for class $s$ kernels $K$,
\begin{eqnarray}
& & \left(\frac{1}{n}\sum_{i=1}^n f_i\right)\ast
K_{h_n}-\left(\frac{1}{n}\sum_{i=1}^n f_i\right) \nonumber \\
&=& \frac{1}{n} \int \left(\sum_{i=1}^n f_i(x+y)-\sum_{i=1}^n
f_i(x)\right)K_{h_n}(y)dy \qquad (\mbox{recall that}
\int K = 1 ) \nonumber \\
 &=& \frac{1}{n} \sum_{i=1}^n\left[\sum_{j=1}^{s-1}0 + \int
 \int_x^{x+y}\frac{(x+y-u)^{s-1}}{(s-1)!}f_i^{(s)}(u)du \;
 K_{h_n}(y)dy\right] \nonumber \\
 &=& \frac{1}{n} \sum_{i=1}^n \left[ \int_x^{\infty} f_i^{(s)}(u)\int_{u-x}^{\infty} \frac{(x+y-u)^{s-1}}{(s-1)!} K_{h_n}(y)dy \; du \right. \nonumber\\
 & & \left. -\int_{-\infty}^{x} f_i^{(s)}(u)\int_{-\infty}^{u-x} \frac{(x+y-u)^{s-1}}{(s-1)!}
 K_{h_n}(y)dy \; du \right] \nonumber \\
 &=& \frac{1}{n} \sum_{i=1}^n \left[ \int_x^{\infty} f_i^{(s)}(u) (-1)^s (L)_{h_n}(u-x)du
 \right.\nonumber \\
 & & \left. - \int_{-\infty}^{x} f_i^{(s)}(u) (-1)(-1)^s (-1)^s (L)_{h_n}(x-u)du
 \right] \nonumber \\
 &=& \frac{1}{n} \sum_{i=1}^n \left[\int_{-\infty}^{\infty} f_i^{(s)}(u) \;
 (L)_{h_n}(x-u)du\right] \nonumber  \\
 &=& \frac{1}{n}\sum_{i=1}^n  h^s f_i^{(s)}\ast L_{h_n}
 \label{f_i_Lhn_relation}
\end{eqnarray}
where $(L)_{h_n}$ is the kernel associated with $K_{h_n}$ and $L$
is the kernel associated with $K$. Therefore, by Young's
inequality \cite{Rudin},
\begin{eqnarray}
\int \left|\left(\frac{1}{n} \sum_{i=1}^n f_i \right) \ast K_{h_n} -
\left( \frac{1}{n}  \sum_{i=1}^n f_i \right) \right| = 
\int \left|\frac{1}{n}
\sum_{i=1}^n  {h_n}^s f_i^{(s)}\ast L_{h_n} \right| \nonumber \\
 \le \frac{h_n^s}{n} \int
\left|\sum_{i=1}^n f_i^{(s)}\right|  \int |L|  \nonumber \\
 \le \frac{h_n^s}{n}
\left(\sum_{i=1}^n \int \left|f_i^{(s)}\right| \right) \int |L|
\label{approx_ident}
\end{eqnarray}

Since $f_i$'s have $(s-1)$ absolutely continuous derivatives,
$\int|f_i^{(s)}| < \infty$, and further if $\int|f_i^{(s)}| < M <
\infty$, $\forall i$ (uniformly bounded) the inequality in
(\ref{approx_ident}) simplifies to
\begin{eqnarray}
\int \left|\left(\frac{1}{n} \sum_{i=1}^n f_i \right) \ast K_{h_n} -
\left( \frac{1}{n}  \sum_{i=1}^n f_i \right) \right|  \le h_n^s M \int
|L| \label{approx_ident_1}
\end{eqnarray}

Since,
\begin{equation}
\int |L| \le \frac{1}{s!} \int |x|^s |K(x)|dx = B_K< \infty
\end{equation}
for $K$ being an order s kernel, inequality in equation
(\ref{approx_ident_1}) becomes
\begin{eqnarray}
\int \left|\left(\frac{1}{n} \sum_{i=1}^n f_i \right) \ast K_{h_n} -
\left( \frac{1}{n}  \sum_{i=1}^n f_i \right) \right|  \le h_n^s M B_K
\label{approx_ident_2}
\end{eqnarray}
Taking limit $ n \rightarrow \infty$ on either sides, we get
\begin{eqnarray}
0 \le \lim_{n \rightarrow \infty}\int \left|\left(\frac{1}{n}
\sum_{i=1}^n f_i \right) \ast K_{h_n} - \left( \frac{1}{n}
\sum_{i=1}^n f_i \right) \right|  \le \lim_{n \rightarrow \infty}h_n^s M
B_K = 0 \label{approx_ident_3}
\end{eqnarray}

This can be extended to the general $f_i$'s using the universal
derivative defined earlier. As a reminder,
\begin{equation}
D_s^{\ast} \left( f_i \right) \triangleq \liminf_{h \downarrow 0}
\int \left| \left( f_i \ast \phi_h \right) ^{(s)} \right|
\end{equation}
where, $\phi$ is a mollifier.

Mollifiers are class 0 kernels, nonnegative and zero outside
$[-1,1]$. They also have infinitely many continuous derivatives
and is called a \textit{mollifier} because of its exceptional
smoothing properties. An example of a mollifier is
\begin{equation}
K(x)=Ce^{-\frac{1}{1-x^2}}, \; |x| \le 1
\end{equation}

For a class $s$ kernel, $K$, and a family of density functions $\{
f_i \}_{i \in \mathbb{N}}$ with associated universal derivatives
that are uniformly bounded, i.e., $D_2^{\ast} \left( f_i \right) <
\mathcal{B}_{\mathcal{C}} < \infty $, $\forall i \in \mathbb{N}$,
it can then be shown that,
\begin{eqnarray}
\int \left| \left(\frac{1}{n}\sum_{i=1}^n f_i\right)\ast
K_{h_n}-\left(\frac{1}{n}\sum_{i=1}^n f_i\right) \right| &\le&
\frac{1}{n}\sum_{i=1}^n \int \left| f_i \ast K_{h_n}-f_i \right| \nonumber\\
 &\le& \frac{1}{n}\sum_{i=1}^n h_n^s D_s^{\ast} \left( f_i \right) \int \left|L
 \right| \nonumber \\
&\le& \frac{1}{n}\sum_{i=1}^n h_n^s \mathcal{B}_{\mathcal{C}} \int \left|L \right| \nonumber \\
 &=& h_n^s \mathcal{B}_{\mathcal{C}} \int \left|L \right|
\end{eqnarray}
Taking limits on both sides we get,
\begin{eqnarray}
\lim_{n \rightarrow \infty} \int \left|
\left(\frac{1}{n}\sum_{i=1}^n f_i\right)\ast
K_{h_n}-\left(\frac{1}{n}\sum_{i=1}^n f_i\right) \right| = 0
\end{eqnarray}
\end{proof}

\begin{proof}[Proof of Theorem \ref{thm:mix_density_approx_2}]
\begin{eqnarray}
f_i(x) = f_i(x) \int K_h(t)dt = \int f_i(x) K_h(t)dt, \quad
\forall i
\end{eqnarray}
Therefore,
\begin{eqnarray}
\left| \left( \frac{1}{n} \sum_{i=1}^n f_i \ast K_h \right) (x) -
\frac{1}{n} \sum_{i=1}^n f_i(x) \right|  = \left| \int \left[
\frac{1}{n} \sum_{i=1}^n  f_i(x-t) - \frac{1}{n} \sum_{i=1}^n f_i(x) \right]K_h(t) dt\right| \nonumber \\
 \le \int \left| \frac{1}{n} \sum_{i=1}^n f_i(x-t) - \frac{1}{n} \sum_{i=1}^n  f_i (x)\right| \left| K_h(t) \right|^{\frac{1}{p}} \left| K_h(t) \right|^{\frac{1}{p'}} dt
\end{eqnarray}
where $\frac{1}{p}+ \frac{1}{p'} = 1, \left( \frac{1}{p'}=0 \text{
if } p =1 \right)$. Applying Holder's inequality with exponents
$p$ and $p'$, and then raising both sides to the $p^{\text{th}}$
power and integrating with respect to x, we obtain
\begin{eqnarray}
\int \left| \left(\frac{1}{n} \sum_{i=1}^n f_i \ast K_h \right)
(x) - \frac{1}{n} \sum_{i=1}^n f_i(x) \right|^p dx \nonumber \\
\le \int \left[ \int \left| \frac{1}{n} \sum_{i=1}^n f_i(x-t)
-\frac{1}{n} \sum_{i=1}^n  f_i(x)\right|^p \left| K_h(t) \right|
dt \right] \left[ \int \left| K_h(t) \right| dt
\right]^{\frac{p}{p'}}dx \nonumber \\
 = \| K \|_1^{\frac{p}{p'}} \int \left[ \int \left| \frac{1}{n} \sum_{i=1}^n f_i(x-t) - \frac{1}{n} \sum_{i=1}^n f_i(x) \right|^p\left| K_h (t) \right|dt
 \right]dx \nonumber\\
 \le \| K \|_1^{\frac{p}{p'}} \int \left[ \frac{1}{n} \sum_{i=1}^n \int \left| f_i(x-t) - f_i(x) \right|^p\left| K_h (t) \right|dt
 \right]dx
\end{eqnarray}
Changing the order of integration in the last expression (which is
justified since the integrand is nonnegative), we obtain
\begin{eqnarray}
\| \left(\frac{1}{n} \sum_{i=1}^n f_i \right) \ast K_h -
\frac{1}{n} \sum_{i=1}^n f_i\|_p^p &\le& \| K \|_1^{\frac{p}{p'}}
\int \left| K_h(t) \right| \frac{1}{n} \sum_{i=1}^n \omega_i(t) dt \nonumber \\
 &\le& \| K \|_1^{\frac{p}{p'}} \int \left| K_h(t) \right| \Omega(t)dt
\end{eqnarray}
For $\delta >0$,
\begin{eqnarray}
I_{h} = \int \left| K_h(t) \right| \Omega(t)dt = \int_{|t|<
\delta} + \int_{|t|\ge \delta} = A_{h,\delta} + B_{h,\delta}
\end{eqnarray}
Since, we have $\Omega(t) \rightarrow 0$ as $|t| \rightarrow 0$,
for $\eta>0$, we can choose $\delta$ so small that $\Omega(t) <
\eta$ if $|t|< \delta$. Then
\begin{eqnarray}
A_{h,\delta} \le \eta \int_{|t|<\delta}\left| K_h(t) \right| dt
\le \eta \| K \|_1, \quad \forall h>0
\end{eqnarray}
Also, $\Omega$ is a bounded function by Minkowski's inequality
[note that $\|\Omega \|_{\infty} \le \sup_{i \in \mathbb{N}} \|
\omega_i \|_{\infty} \le \sup_{i \in \mathbb{N}} \left( 2\| f_i
\|_p \right)^p$, which for $p=1$, becomes $\|\Omega \|_{\infty}
\le 2$], so that $B_{h,\delta}$ is less than a constant multiple
of $\int_{|t|\ge \delta} \left|K_h(t)\right|dt$, which tends to
zero with $h$. This proves that $I_h \rightarrow 0 $ as $h
\rightarrow 0 $ and the theorem follows.
\end{proof}

Another lemma necessary for the proof of Lemma
\ref{lem:lem_4_pf_dens_const} is the following.

\begin{lemma}{(A Multinomial distribution inequality)}

    Let $N_1, \cdots, N_k$ be a multinomial random vector with
parameters $n,p_1,\cdots, p_k$. Then
\begin{equation}
P\left(\sum_{i=1}^k \left|\frac{N_i}{n}-p_i\right| \ge \epsilon
\right) \le 2^{k+1}e^{\frac{-n\epsilon^2}{2}}
\end{equation}
\end{lemma}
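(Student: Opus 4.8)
The plan is to realize the multinomial vector as the vector of category counts of $n$ i.i.d.\ draws and then reduce the $\ell_1$ deviation to a maximum of linear statistics, each of which concentrates by Hoeffding's inequality.

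First I would write $N_i = \sum_{j=1}^n \mathbf{1}(X_j = i)$, where $X_1,\dots,X_n$ are i.i.d.\ with $P(X_j = i) = p_i$. The key observation is the variational (dual) representation of the $\ell_1$ norm: for any reals $a_1,\dots,a_k$,
\[
\sum_{i=1}^k |a_i| = \max_{s \in \{\pm 1\}^k} \sum_{i=1}^k s_i a_i ,
\]
the maximum being attained at $s_i = \mathrm{sign}(a_i)$. Applying this with $a_i = N_i/n - p_i$, the event of interest becomes $\{\max_{s}\sum_i s_i(N_i/n - p_i) \ge \epsilon\} = \bigcup_{s \in \{\pm 1\}^k}\{\sum_i s_i(N_i/n - p_i) \ge \epsilon\}$, a union over the $2^k$ sign patterns.

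Next, for each fixed $s$, I would rewrite the linear statistic as a sum of i.i.d.\ bounded terms. Since $\sum_i s_i N_i = \sum_{j=1}^n s_{X_j}$ and $\sum_i s_i p_i = E[s_{X_j}]$, we have $\sum_{i=1}^k s_i(N_i/n - p_i) = \tfrac{1}{n}\sum_{j=1}^n (s_{X_j} - E[s_{X_j}])$, an average of $n$ i.i.d.\ zero-mean random variables, each taking values in $\{-1,+1\}$ shifted by a constant and hence of range $2$. Hoeffding's inequality then gives, for every $s$,
\[
P\!\left(\sum_{i=1}^k s_i(N_i/n - p_i) \ge \epsilon\right) = P\!\left(\sum_{j=1}^n (s_{X_j} - E[s_{X_j}]) \ge n\epsilon\right) \le \exp\!\left(-\frac{2(n\epsilon)^2}{4n}\right) = e^{-n\epsilon^2/2}.
\]

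Finally I would union-bound over the $2^k$ sign vectors to conclude $P(\sum_i |N_i/n - p_i| \ge \epsilon) \le 2^k e^{-n\epsilon^2/2} \le 2^{k+1} e^{-n\epsilon^2/2}$, which is the stated claim (in fact with the slightly sharper constant $2^k$). There is no serious obstacle; the only points requiring care are the variational identity that linearizes the absolute values and the bookkeeping in Hoeffding that produces exactly the exponent $\epsilon^2/2$ (the range-$2$ increments give $\sum_j(b_j-a_j)^2 = 4n$). An alternative route via McDiarmid's bounded-differences inequality applied directly to $\sum_i |N_i/n - p_i|$ (which changes by at most $2/n$ when one draw is altered) reproduces the same exponent, but then forces one to control the nonzero mean $E\sum_i|N_i/n - p_i| = O(\sqrt{k/n})$, which does not collapse to the clean stated form; the sign-vector argument avoids this and is the one I would take.
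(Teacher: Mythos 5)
Your proof is correct and takes essentially the same route as the paper: the paper's Scheff\'e identity $\sum_{i=1}^k \left|N_i/n - p_i\right| = 2\sup_{A \subseteq \{1,\dots,k\}} \left|N(A)/n - P(A)\right|$ followed by a union bound over the $2^k$ subsets and Hoeffding is your sign-vector decomposition in disguise, since each $s \in \{\pm 1\}^k$ corresponds to the subset $A = \{i : s_i = +1\}$ and, the coordinates of $N_i/n - p_i$ summing to zero, $\sum_i s_i \left(N_i/n - p_i\right) = 2\left(N(A)/n - P(A)\right)$. Your one-sided Hoeffding with range-$2$ increments even yields the marginally sharper constant $2^k$ in place of the paper's $2^{k+1}$ (which pays a factor of $2$ for the two-sided bound), but the reduction-to-$2^k$-linear-events skeleton is identical.
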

\textbf{Proof}

    By Scheffe's theorem,
\begin{equation}
\sum_{i=1}^k\left|\frac{N_i}{n}-p_i\right| =
2\sup_{A}\left|\frac{N(\mathbb{A})}{n}-P(\mathbb{A})\right|
\end{equation}

where, $\mathbb{A}$ = \{all $2^k$ possible sets of integers from
$1, \cdots, k$\} and $N(\mathbb{A})$ is the cardinality of
$\mathbb{A}$. By Bonferroni's inequality and Hoeffding's
inequality,

\begin{equation}
P\left(\sup_{\mathbb{A}}\left|\frac{N(\mathbb{A})}{n}-P(\mathbb{A})\right|
\ge \frac{\epsilon}{2}\right) \le  2^k
2e^{-2n\left(\frac{\epsilon}{2}\right)^2}
\end{equation}

%

The expected value of $f^n(x)$ is denoted by,
\begin{equation}
g_h(x) =  E(f^n(x)) =  \frac{1}{nh^d} \sum_{i=1}^n \int K\left(
\frac{x-y}{h} \right)f_i(y) dy \label{g_h}
\end{equation}

\begin{proof}[Proof of Lemma \ref{lem:lem_4_pf_dens_const}]

Let $g_h$ be defined as in (\ref{g_h}). By Theorem 1, it
is enough to show that $\int |f^n(x)-g_h(x)|dx \rightarrow 0$
exponentially. Let $\mu_n$ be the empirical probability measure
for $X_1, X_2, \cdots, X_n$ and note that
\begin{eqnarray}
f^n(x)=\frac{1}{h^d}\int K\left(\frac{x-y}{h}\right)\mu_n(dy)\\
\end{eqnarray}

For given $\epsilon > 0$, find finite constants $M,L,N,a_1,
\cdots, a_N$ and disjoint finite rectangles $A_1, \cdots, A_N$ in
$\mathbb{R}^d$ such that the function
\begin{equation}
K^{\ast}(x) = \sum_{i=1}^N a_i I_{A_i}(x) \\
\end{equation}

satisfies: $|K^{\ast}| \le M, K^{\ast} = 0$ outside $[-L,L]^d,$
and $\int |K(x)-K^{\ast}(x)|dx < \epsilon$. Define $g_h^{\ast}$
and $f^{n\ast}$ as $g_h$ and $f^n$ with $K^{\ast}$ instead of $K$.
Then
\begin{eqnarray}
\begin{split}
\int |f^n(x)-g_h(x)|dx &\le  \int |f^n(x)- f^{n\ast}(x)|dx +\int
|f^{n\ast}(x)-g_h^{\ast}(x)|dx +\int |g_h^{\ast}(x)-g_h(x)|dx \nonumber\\
  &\le  \int \frac{1}{h^d}\int \left|K^{\ast}\left(\frac{x-y}{h}\right) - K\left(\frac{x-y}{h}\right)\right|\mu_n(dy)dx   \nonumber \\
  &+ \int \frac{1}{nh^d}\sum_{i=1}^n\int \left|K^{\ast}\left(\frac{x-y}{h}\right) - K\left(\frac{x-y}{h}\right)\right|f_i(y)dydx  \nonumber \\ 
  &+ \int \left|f^{n\ast}(x)-g_h^{\ast}(x) \right|dx \nonumber\\
  &\le  2\epsilon + \int \left|f^{n\ast}(x)-g_h^{\ast}(x)
 \right|dx
\end{split}
 \label{first_ineq}
\end{eqnarray}\\
by a double change of integral. But, if $\mu$ is the probability
measure for $f$,
\begin{eqnarray}
 \int \left|f^{n\ast}(x)-g_h^{\ast}(x)
 \right|dx  \le  \sum_{i=1}^N
|a_i|\int\left|\frac{1}{nh^d}\sum_{j=1}^n\int_{x-hA_i}f_j(y)dy-\frac{1}{h^d}\int_{x-hA_i}\mu_n(dy)\right|dx
 \nonumber \\
 \le  \frac{1}{h^d}\sum_{i=1}^N |a_i|\int \left| \frac{1}{n}\sum_{j=1}^n\mu_j(x-hA_i) -
 \mu_n(x-hA_i)\right|dx \\
 \nonumber
\end{eqnarray}

Lemma \ref{lem:lem_4_pf_dens_const} follows if we can show that
for all finite rectangles $\mathbb{A}$ of $\mathbb{R}^d$
\begin{equation*}
\frac{1}{h^d}\sum_{i=1}^N \int \left|
\frac{1}{n}\sum_{j=1}^n\mu_j(x-hA_i) -
 \mu_n(x-hA_i)\right|dx  \rightarrow 0 \mbox{ exponentially as } n
 \rightarrow \infty
\end{equation*}

Choose an $\mathbb{A}$, and let $\epsilon >0$ be arbitrary.
Consider the partition of $\mathbb{R}^d$ into sets $B$ that are
d-fold products of intervals of the form $\left[\frac{(i-1)h}{N},
\frac{ih}{N}\right)$, where $i$ is and integer, and $N$ is a new
constant to be chosen later. Call the partition $\Pi$. Let
\begin{equation*}
\mathbb{A} = \prod_{i=1}^d \left[x_i,x_i+a_i\right), \min_{i} a_i
\ge \frac{2}{N} \\
\end{equation*}
and
\begin{equation*}
\mathbb{A}^{\ast}=\prod_{i=1}^d\left[x_i+\frac{1}{N},
x_i+a_i-\frac{1}{N}\right)\\
\end{equation*}

Define
\begin{equation*}
C_x = \left(x-h\mathbb{A}-\bigcup_{\stackrel{B \in \Pi}{B
\subseteq x-h\mathbb{A}}} B\right) \subseteq
x+h(\mathbb{A}-\mathbb{A}^{\ast})= C_x^{\ast}
\\
\end{equation*}
Clearly, for any $n$

\begin{eqnarray}
\int \left|
\frac{1}{n}\sum_{j=1}^n\mu_j(x-h\mathbb{A})-\mu_n(x-h\mathbb{A})\right|dx
&\le& \int \sum_{\stackrel{B \in \Pi}{B \subseteq
x-h\mathbb{A}}}|\frac{1}{n}\sum_{j=1}^n\mu_j(B)-\mu_n(B)|dx
\nonumber \\ & & \qquad \qquad +\int
\left(\frac{1}{n}\sum_{j=1}^n\mu_j+\mu_n\right)(C_x^{\ast}) \nonumber \\
\label{diff_meas}
\end{eqnarray}

The last term in (\ref{diff_meas}) equals
\begin{eqnarray}
2\lambda(h(\mathbb{A}-\mathbb{A}^{\ast}))&=&2h^d\lambda(\mathbb{A}-\mathbb{A}^{\ast})
\\ &=& 2h^d\left(\prod_{i=1}^d a_i - \prod_{i=1}^d\left(a_i
-\frac{2}{N} \right)\right) \label{L_meas}
\end{eqnarray}
where $\lambda$ is the Lebesgue measure. Now, putting
(\ref{L_meas}), (\ref{diff_meas}) and (\ref{first_ineq}) together,
we get
\begin{eqnarray}
\int |f^n(x)-g_h(x)|dx  \le 2\epsilon + \int
\left|f^{n\ast}(x)-g_h^{\ast}(x)\right| \nonumber \\
 \le 2\epsilon + \sum_{i=1}^N|a_i|\frac{1}{h^d}\int
\sum_{\stackrel{B \in \Pi}{B \subseteq
x-hA_i}}|\frac{1}{n}\sum_{j=1}^n\mu_j(B)-\mu_n(B)|dx 
+\sum_{i=1}^N|a_i|\frac{2}{h^d}h^d\lambda({A_i}-{A_i}^{\ast})  \nonumber \\
 \le 2\epsilon + \frac{1}{h^d}\sum_{i=1}^N|a_i|\sum_{B \in
 \Pi}\left|\frac{1}{n}\sum_{j=1}^n\mu_j(B)-\mu_n(B)\right|\int_{B \subseteq x-hA_i}dx 
 +\sum_{i=1}^N|a_i|\frac{2}{h^d}h^d\lambda({A_i}-{A_i}^{\ast}) \nonumber \\
 \le 2\epsilon + \frac{1}{h^d}\sum_{i=1}^N|a_i|\sum_{B \in
 \Pi}|\frac{1}{n}\sum_{j=1}^n\mu_j(B)-\mu_n(B)|h^d\lambda(A_i) 
 +\sum_{i=1}^N|a_i|\frac{2}{h^d}h^d\lambda(A_i-A_i^{\ast}) \nonumber \\
 \le 2\epsilon + \left(\sum_{i=1}^N |a_i|\lambda(A_i)\right)\sum_{B
\in \Pi}|\frac{1}{n}\sum_{j=1}^n\mu_j(B)-\mu_n(B)|
+2\sum_{i=1}^N|a_i|\lambda(A_i-A_i^{\ast})) \nonumber \\
\end{eqnarray}
The third term on the right hand side can be made smaller than
$\epsilon$ by choosing $N$ large enough ($A_i^{\ast}\rightarrow A_i$, $\forall i$
as $N \rightarrow \infty)$. The coefficient of the first term on
the right hand side is equal to $\int \left|K^{\ast}\right|\le 1+
\epsilon$. Thus, we have shown that for every $\epsilon > 0$, we
can find $N$ large enough such that
\begin{eqnarray}
\int |f^n(x)-g_h(x)|dx &\le& 3\epsilon + (1+\epsilon)\sum_{B \in
\Pi}|\frac{1}{n}\sum_{j=1}^n\mu_j(B)-\mu_n(B)|\nonumber\\
 &\le& 5\epsilon + \sum_{B \in \Pi}|\frac{1}{n}\sum_{j=1}^n\mu_j(B)-\mu_n(B)|
\end{eqnarray}
We are almost in a position to use the multinomial inequality were
it not for the fact that the partition $\Pi$ is infinite. Thus, it
is necessary to "cut-off" the tails of the distribution. Consider
a finite partition, $\Pi_r$, consisting of sets of $\Pi$ that has
a non-empty intersection with $[-r,r]^d$ where $r>0$ is to be
picked later. Let $\Pi_r^{\ast}$ be $\Pi_r \bigcup [-r,r]^{d^c}$.
The cardinality of $\Pi_r$ is at most
\begin{equation*}
\left(\frac{2rN}{h}+2\right)^d = O(n)
\end{equation*}
To take care of the tails we argue as follows: let $T $ stand for
the tail set, i.e., the complement of $[-r,r]^{d}$. then
\begin{eqnarray}
\sum_{B \in
\Pi}\left|\frac{1}{n}\sum_{j=1}^n\mu_j(B)-\mu_n(B)\right|  \le
\sum_{B \in
\Pi_r}\left|\frac{1}{n}\sum_{j=1}^n\mu_j(B)-\mu_n(B)\right| +
\frac{1}{n}\sum_{j=1}^n\mu_j(T) + \mu_n(T) \nonumber \\
 \le  \sum_{B \in
\Pi_r}\left|\frac{1}{n}\sum_{j=1}^n\mu_j(B)-\mu_n(B)\right| +
2\frac{1}{n}\sum_{j=1}^n\mu_j(T) +
\left|\frac{1}{n}\sum_{j=1}^n\mu_j(T)-\mu_n(T)\right| \nonumber \\
\le  \sum_{B \in
\Pi_{r^{\ast}}}\left|\frac{1}{n}\sum_{j=1}^n\mu_j(B)-\mu_n(B)\right|
+ 2\frac{1}{n}\sum_{j=1}^n\mu_j(T) \nonumber \\
\le \sum_{B \in
\Pi_{r^{\ast}}}\left|\frac{1}{n}\sum_{j=1}^n\mu_j(B)-\mu_n(B)\right|+2\sup_{i
\in \mathcal{I}} \mu_i (T) \label{meas_ineq}
\end{eqnarray}
Now, $2\sup_{i \in \mathcal{I}}\mu_i(T)$ can be made smaller than
$\epsilon$ by choice of $r$. This gives,
\begin{eqnarray} \int |f^n(x)-g_h(x)|dx &\le& 6\epsilon + \sum_{B
\pi_{r^{\ast}}}\left|\frac{1}{n}\sum_{j=1}^n\mu_j(B)-\mu_n(B)\right|
\end{eqnarray}

where $r$ depends on $\epsilon, \Upsilon$, and $N$ depends on
$\epsilon, K$.


By Lemma 1, for $\delta > 6\epsilon$ and $\rho \in (0,1)$,
\begin{eqnarray}
P\left(\int \left| f^n -g_h\right| > \delta \right) &\le& 
P\left(\sum_{B
\pi_{r^{\ast}}}\left|\frac{1}{n}\sum_{j=1}^n\mu_j(B)-\mu_n(B)\right|
> \delta -6\epsilon \right) \nonumber \\
 &\le&
 2^{2+\left(2+\frac{2rN}{h}\right)^d}e^{-\frac{1}{2}n(\delta-6\epsilon)^2}\\
  &\le& e^{-(1-\rho)\frac{n\delta^2}{2}}, n \ge n_0(\rho,\delta, K, \Upsilon, {h})
\end{eqnarray}

This concludes that the proof 5 $\Rightarrow$ 4 for nonnegative
$K$. Note that the inequality can be forced for all $n,h$ with
\begin{eqnarray}
n > \frac{16+4^{d+1}}{\rho \delta^2}\\
nh^d > n_0^d \left(\mathcal{C}, \rho, \delta,  K,d \right) = \frac{4 2^d (2r(\mathcal{C},K)N)^d}{\rho \delta^2}
\label{eqn:n_0d}
\end{eqnarray}
if we pick
\begin{equation*}
\epsilon = \frac{\delta}{6}\left(1-\sqrt{1-\frac{\rho}{2}}\right)
\end{equation*}
For the symbol-by-symbol case, $d=1$ and (\ref{eqn:n_0d}) becomes
\begin{eqnarray}
n > \frac{16+4^{d+1}}{\rho \delta^2} \\
nh^d > n_0 \left(\mathcal{C}, \rho, \delta, K \right) = \frac{16r(\mathcal{C},K)N}{\rho \delta^2}
\label{eqn:n_0}
\end{eqnarray}
\end{proof}

\section{Proof of Theorem \ref{cont_map}}
\label{app:proofofthmcont_map}
\begin{definition}[Prohorov metric]
For any two laws $P$ and $Q$ on the set $[a,b] \subset
\mathbb{R}$, the Prohorov metric, $\rho$ is defined as
\begin{equation*}
\rho \left(P,Q\right) := \inf \{ \varepsilon>0: P^{\Delta}(B) \le
P(B^{\varepsilon})+\varepsilon, B \in \mathcal{B}^{[a,b]} \}
\end{equation*}
where $B^{\varepsilon} = \{\tilde{x}: \left| x-\tilde{x} \right| <
\varepsilon, x \in B \}$.\\
\end{definition}

\begin{proof}[Proof of Theorem \ref{cont_map}]
Let $P_n$ and $Q_n$ denote the laws associated with the
distribution functions, $F_{x^n}$ and $\hat{F}_{x^n}$. From
\cite[Theorem 11.7.1]{Dudley}, $\rho \left( P_n, Q_n \right)
\rightarrow 0 \Rightarrow \beta \left(P_n, Q_n \right)$ then by
definition of the $\beta$-metric, we have
\begin{equation}
\lim_{n \rightarrow \infty} \left| \int f d \left(P_n - Q_n
\right)\right| = 0 \qquad  \forall \|f\|_{BL} \le 1
\end{equation}
By a mere scaling, the above statement is also true for a
uniformly bounded Lipschitz class of functions,
$\mathcal{S}^{[a,b]}_M= \{f: \|f\|_{BL} < M, f :[a,b] \rightarrow
\mathbb{R} \}$ for some $M < \infty$. It is also true that
\begin{equation}
\lim_{n \rightarrow \infty} \left| \int f(x,y) d \left(P_n - Q_n
\right)\right| = 0 \qquad  \forall y \text{ and } f \in
\mathcal{S}^{[a,b]\times \mathbb{R}}
\end{equation}
where $\mathcal{S}^{[a,b]\times \mathbb{R}}_M:= \{ f :[a,b] \times
\mathbb{R} \rightarrow \mathbb{R},
\parallel f(y)
\parallel_{BL} < M \; \forall y\}$ for some $M < \infty$ and
\begin{eqnarray}
\parallel f(y) \parallel_L := \sup_{x \neq z} \frac{\left|
f(x,y)-f(z,y)\right|}{|x-z|}\\
\parallel f(y) \parallel_{\infty} := \sup_{x} f(y,x)\\
\parallel f(y)\parallel_{BL}:=\parallel f(y)
\parallel_L+\parallel f(y) \parallel_{\infty}
\end{eqnarray}
Hence, for a channel with conditional densities, $\{f_{Y|x}\}_{x
\in [a,b]} \in \mathcal{S}_M^{[a,b]\times \mathbb{R}}$, we have
\begin{equation}
\left| \int f_{Y|x}dF_{x^n} - \int f_{Y|x}d\hat{F}_{x^n} \right|
\rightarrow 0 \quad \forall y \in \mathbb{R}
\end{equation}
and by dominated convergence theorem,
\begin{eqnarray}
\int \left| \int f_{Y|x}dF_{x^n} - \int f_{Y|x}d\hat{F}_{x^n}
\right|dy \rightarrow 0
\end{eqnarray}
and hence, $d \left( \left[F_{x^n} \otimes \mathcal{C}\right]_Y ,
\left[\hat{F}_{x^n} \otimes \mathcal{C}\right]_Y \right)
\rightarrow 0$.

Hence, the mapping of input empirical distributions to output
densities induced by the channel,
\begin{eqnarray}
f_{Y^n}(y) = \left[F_{x^n} \otimes \mathcal{C} \right]_Y = \int
f_{Y|x}dF_{x^n}(x)
\end{eqnarray}
is  continuous with respect to the $\beta$ metric on the input
distributions and the total variation metric on the output
densities. We also have the fact that $\left( \mathcal{F}^{[a,b]},
\beta \right)$ is a compact \cite[Theorem 11.5.4 , Corollary
11.5.5 ]{Dudley} metric space. Since, we have a continuous 1-1
(bijection) mapping between the compact metric space of input
distributions with the $\beta$ metric, $\left(
\mathcal{F}^{[a,b]}, \beta \right)$, and the space of output
densities, with the total variation metric, $\left( \left[
\mathcal{F}^{[a,b]}\otimes \mathcal{C}\right], d \right)$,we can
apply the continuous mapping theorem \cite{Rudin} to get
continuity in the inverse mapping too. This gives the desired
result that as $d(\left[F_{x^n} \otimes \mathcal{C}
\right]_Y,\left[\hat{F}_{x^n} \otimes \mathcal{C} \right]_Y)
\rightarrow 0$, we have $\beta \left( P_n, Q_n\right) \rightarrow
0$ and $\rho \left( P_n, Q_n\right) \rightarrow 0$. Finally using
the fact \cite{Dudley}, $\lambda \le \rho$, $\lambda \left(
F_{x^n}, \hat{F}_{x^n} \right)\rightarrow 0$.
\end{proof}

\section{Proof of Lemma \ref{lem:dist_approx}}
\begin{proof}
Consider $f\in \mathcal{C}_b([a,b])$, where $\mathcal{C}_b$
denotes the set of all continuous bounded functions, $f:[a,b]
\rightarrow \mathbb{R}$. For any $F \in \mathcal{F}^{[a,b]}$ and $P^{\Delta}$ that is constructed using (\ref{P_delta})
\begin{eqnarray}
\left| \int f dF(x) \right. &-& \left. \int f P^{\Delta}(dx) \right| \nonumber \\ 
&=& \left|\int f \left(dF(x) - P^{\Delta}(dx) \right) \right| \nonumber \\
    &=& \left| \int f dF(x)- \sum_{i=1}^N f(a_i)P \left(a_i \right) \right| \nonumber \\
    &\le& \left| \sum_{i=0}^{N-1} \int_{a_i}^{a_{i+1}} \left(f(a_i)+ \omega_f(\Delta) \right)dF(x)    - \sum_{i=1}^N f(a_i)P \left(a_i \right) \right| \nonumber \\
    &=& \left| \sum_{i=0}^{N-1} \left(f(a_i)+ \omega_f(\Delta) \right)
    P \left( a_i \right) - \sum_{i=1}^N f(a_i)P \left(a_i \right) \right| \nonumber \\
    &=& \left| \omega_f(\Delta) \sum_{i=1}^{N} P \left( a_i \right) \right| \nonumber \\
    &=& \omega_f(\Delta)
\end{eqnarray}
where $\omega_f \left(\Delta \right) = \max_{y \in [a,b]} \left|
f(y+\Delta)- f(y) \right|$ and $N$ is the number of quantization
levels as defined previously. Hence, 
\begin{eqnarray}
\lim_{\Delta \rightarrow 0 }
\left| P^{\Delta}f -Pf \right| &=& \left| \lim_{\Delta \rightarrow 0
} \int f \left(dF(x) - P^{\Delta}(dx) \right) \right| \\ &=&
\lim_{\Delta \rightarrow 0} \omega_f(\Delta) \\ &=& 0, \qquad \forall f
\in \mathcal{C}_b([a,b]) 
\end{eqnarray}
This implies weak convergence of
$P^{\Delta} \Rightarrow P$. Hence, the statement of the theorem
follows from the Prohorov metric that metrizes weak convergence.
\end{proof}

\section{Proof of Theorem \ref{thm:loss_denoisability_dev}} Using the definition
of the Lipschitz norm of the loss function, $ \Lambda$, and the
channel continuity function, $\xi_{\Delta}$, we bound the
deviation of the expected value of the loss function under two
marginal densities induced at the output of the memoryless channel
by the corresponding empirical distributions of the underlying
clean signal at the input of the memoryless channel.
\begin{lemma}
For any $F, \hat{F} \in \mathcal{F}^{[a,b]}$, measurable $g:
\mathbb{R} \rightarrow [a,b]$ and a bounded Lipschitz loss
function with $E_{f_{Y|u}} \Lambda(u,g(Y)) <\infty $, $\forall u$,
\begin{multline}
\left|E_{F \otimes C} \Lambda(U_0,g(Y))-E_{\hat{F} \otimes C}
\Lambda(U_0,g(Y)) \right| \\ \le  \left( \parallel \Lambda
\parallel_L + \Lambda_{\max} \parallel \Xi
 \parallel_L+ (b-a)\parallel \Lambda \parallel_L \parallel \Xi
 \parallel_L+ \Lambda_{\max} \right) \beta \left( P, \hat{P}\right)
\end{multline}
where $P$ and $\hat{P}$ are the laws associated with $F$ and
$\hat{F}$, $\beta \left( P, \hat{P}\right)$ is the $\beta$
metric between the corresponding laws.\\
\label{lem:loss_continuity}
\end{lemma}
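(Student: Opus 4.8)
The plan is to recognize that both expectations are integrals of one and the same function against the two laws, so that the left-hand side is controlled directly by the dual (bounded-Lipschitz) characterization of the $\beta$ metric. Concretely, set
\[\Phi(x) = E_x \Lambda(x,g(Y)) = \int \Lambda(x,g(y)) f_{Y|x}(y)\,dy.\]
By (\ref{eqn_3}) (with $U_0$ the channel input $X\sim F$) we have $E_{F \otimes \mathcal{C}}\Lambda(U_0,g(Y)) = \int_{[a,b]} \Phi\, dP$ and $E_{\hat F \otimes \mathcal{C}}\Lambda(U_0,g(Y)) = \int_{[a,b]} \Phi\, d\hat P$, so the quantity to be bounded is exactly $\left|\int \Phi\, d(P-\hat P)\right|$. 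Since the $\beta$ metric is the supremum of $\left|\int f\,d(P-\hat P)\right|$ over $\|f\|_{BL}\le 1$, normalizing $\Phi$ by its bounded-Lipschitz norm yields
\[\left|\int \Phi\, d(P-\hat P)\right| \le \|\Phi\|_{BL}\,\beta(P,\hat P) = \left(\|\Phi\|_L + \|\Phi\|_\infty\right)\beta(P,\hat P).\]
Thus the lemma reduces to showing that $\|\Phi\|_L + \|\Phi\|_\infty$ is at most the claimed coefficient; along the way I must check $\Phi$ is a bona fide bounded-Lipschitz test function, which the hypothesis $E_{f_{Y|u}}\Lambda(u,g(Y))<\infty$ together with boundedness of $\Lambda$ (L1) guarantees.

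The sup-norm bound is immediate: since $0 \le \Lambda \le \Lambda_{\max}$ and $\int f_{Y|x}\,dy = 1$ (C3), we get $\|\Phi\|_\infty \le \Lambda_{\max}$, which supplies the lone $\Lambda_{\max}$ term. For the Lipschitz norm I would expand the difference $\Phi(x)-\Phi(z)$ by a full Leibniz-style add-and-subtract on the product $\Lambda(x,g(y)) f_{Y|x}(y)$, writing
\begin{multline*}
\Phi(x)-\Phi(z) = \int \Lambda(z,g(y))\bigl[f_{Y|x}(y)-f_{Y|z}(y)\bigr]dy + \int \bigl[\Lambda(x,g(y))-\Lambda(z,g(y))\bigr] f_{Y|z}(y)\,dy \\ + \int \bigl[\Lambda(x,g(y))-\Lambda(z,g(y))\bigr]\bigl[f_{Y|x}(y)-f_{Y|z}(y)\bigr]dy.
\end{multline*}
I would then bound the three integrals separately, using the loss Lipschitz bound $|\Lambda(x,g(y))-\Lambda(z,g(y))| \le \lambda(|x-z|) \le \|\Lambda\|_L|x-z|$ (from L2, uniformly over the second argument $g(y)\in[a,b]$) and the channel-continuity bound $\int|f_{Y|x}-f_{Y|z}|\,dy \le \xi_{|x-z|} \le \|\Xi\|_L|x-z|$ (from (\ref{delta_Delta})--(\ref{eqn:delta_L})).

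Carrying out these three bounds gives, for the first integral, $\Lambda_{\max}\|\Xi\|_L|x-z|$; for the second, $\|\Lambda\|_L|x-z|$ (using $\int f_{Y|z}=1$); and for the cross term, $\|\Lambda\|_L\|\Xi\|_L|x-z|^2$. The one slightly delicate point --- and essentially the \emph{only} obstacle --- is this cross term: it is quadratic in $|x-z|$, whereas a genuine Lipschitz bound must be linear. I would resolve this using compactness of the support, $|x-z| \le (b-a)$, so that $|x-z|^2 \le (b-a)|x-z|$, which converts the cross term into $(b-a)\|\Lambda\|_L\|\Xi\|_L|x-z|$ and is precisely the source of the cross coefficient in the statement. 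Collecting the three contributions yields $\|\Phi\|_L \le \|\Lambda\|_L + \Lambda_{\max}\|\Xi\|_L + (b-a)\|\Lambda\|_L\|\Xi\|_L$; adding $\|\Phi\|_\infty \le \Lambda_{\max}$ reproduces the claimed bracket, and substituting into the $\beta$-metric inequality of the first paragraph completes the argument.
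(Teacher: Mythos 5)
Your proposal is correct and matches the paper's own argument: your $\Phi$ is exactly the function $A(x)$ of Proposition \ref{lem:A_cont_fn}, whose bounded-Lipschitz norm the paper bounds by the same three constituents (including the quadratic cross term $\lambda(\Delta)\xi_{\Delta}$ absorbed linearly via $\Delta \le b-a$) before applying the dual characterization of the $\beta$ metric. The only cosmetic difference is that you organize the Lipschitz estimate as an explicit three-term add-and-subtract, whereas the paper derives the same bound via matching upper and lower estimates on $A(x)-A(x')$.
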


Similarly, we bound the deviation of the expected loss function
under the marginal density induced by any empirical distribution
at the input of the memoryless channel from that of the expected
loss under the marginal density induced by the corresponding
probability mass function (under the mapping discussed in section
\ref{sec:dist_approx}), in the following Lemma
\begin{lemma}
For any $\Delta > 0$, $F \in \mathcal{F}^{[a,b]}$ with the
associated law $P$, $P^{\Delta} \in \mathcal{F}^{\Delta}$,
measurable $g: \mathbb{R} \rightarrow [a,b]$ and a continuous
bounded loss function with $E_{f_{Y|u}} \Lambda(u,g(Y)) <\infty $,
$\forall$ $u$ ,

\begin{equation*} \left|
E_{P^{\Delta} \otimes C} \Lambda(U_0,g(Y))-E_{F \otimes C}
\Lambda(U_0,g(Y)) \right| \le \xi_{\Delta} \Lambda_{\max} +
\lambda(\Delta)\left(1 + \xi_{\Delta} \right)
\end{equation*}
where $\lambda(\Delta)$ is the global modulus of continuity of the
loss function $\Lambda$ as defined in equation
(\ref{loss_modcont}) and $\xi_{\Delta}$ is as defined in
(\ref{delta_Delta}). \\
\label{lem:los_cont_w_quant}
\end{lemma}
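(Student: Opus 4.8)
The plan is to reduce the claimed bound to a pointwise estimate on the conditional expected loss and then control that estimate by a single product-rule expansion. Write $h(x) := E_x \Lambda(x,g(Y)) = \int \Lambda(x,g(y)) f_{Y|x}(y)\,dy$ for the expected loss incurred when the true symbol is $x$ and the estimate is $g(Y)$. By (\ref{eqn_3}) we have $E_{F \otimes \mathcal{C}}\Lambda(U_0,g(Y)) = \int_{[a,b]} h(x)\,dF(x)$, whereas by the construction of $P^{\Delta}$ in (\ref{P_delta_0}) we have $E_{P^{\Delta} \otimes \mathcal{C}}\Lambda(U_0,g(Y)) = \sum_i h(a_i)\,P^{\Delta}(a_i) = \sum_i h(a_i)\big(F(a_i)-F(a_{i-1})\big)$. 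Since $F(a_i)-F(a_{i-1}) = \int_{(a_{i-1},a_i]} dF(x)$, I would rewrite the difference of the two expectations as $\sum_i \int_{(a_{i-1},a_i]} \big[h(x)-h(a_i)\big]\,dF(x)$, whose absolute value is at most $\sup_i \sup_{x \in (a_{i-1},a_i]} |h(x)-h(a_i)|$. Because each quantization cell has length $\Delta$, it then suffices to bound $|h(x)-h(x')|$ whenever $|x-x'|\le \Delta$.

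For the pointwise estimate I would \emph{not} add and subtract a single intermediate term (that only yields the sharper $\Lambda_{\max}\xi_{\Delta}+\lambda(\Delta)$), but instead use the symmetric discrete product rule
\begin{align*}
\Lambda(x,g(y)) f_{Y|x}(y) - \Lambda(x',g(y)) f_{Y|x'}(y)
&= \big[\Lambda(x,g(y)) - \Lambda(x',g(y))\big]\big[f_{Y|x}(y) - f_{Y|x'}(y)\big] \\
&\quad + \big[\Lambda(x,g(y)) - \Lambda(x',g(y))\big] f_{Y|x'}(y) \\
&\quad + \Lambda(x',g(y))\big[f_{Y|x}(y) - f_{Y|x'}(y)\big].
\end{align*}
Integrating the absolute value of each summand in $y$ and using $g(y)\in[a,b]$ together with $|x-x'|\le\Delta$: the loss increment obeys $|\Lambda(x,g(y)) - \Lambda(x',g(y))| \le \lambda(\Delta)$ by (\ref{loss_modcont})--(\ref{loss_modcont1}); the density increment obeys $\int |f_{Y|x}-f_{Y|x'}|\,dy \le \xi_{\Delta}$ by (\ref{delta_Delta}); and $\Lambda(x',g(y)) \le \Lambda_{\max}$ while $\int f_{Y|x'}\,dy = 1$. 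The three terms are then bounded by $\lambda(\Delta)\xi_{\Delta}$, $\lambda(\Delta)$, and $\Lambda_{\max}\xi_{\Delta}$ respectively, giving $|h(x)-h(x')| \le \xi_{\Delta}\Lambda_{\max} + \lambda(\Delta)(1+\xi_{\Delta})$, which is exactly the asserted bound.

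A few routine points remain to be dispatched. The hypothesis $E_{f_{Y|u}}\Lambda(u,g(Y)) < \infty$ guarantees that every integral is well defined and justifies the pointwise $y$-decomposition followed by integration, since $\Lambda$ is bounded and $f_{Y|x},f_{Y|x'}\in L_1$. The boundary cell containing $a_0=a$ (and any atom of $F$ there) is handled the same way, as $h(x)-h(a_i)$ there is simply evaluated at nearby points. Finally, passing from $\sum_i \int_{(a_{i-1},a_i]}|h(x)-h(a_i)|\,dF(x)$ to the supremum uses only that the cell masses are nonnegative and sum to at most $1$.

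I expect the only genuinely substantive choice to be the three-term expansion rather than the two-term one: the cross term $\big[\Lambda(x,g(y)) - \Lambda(x',g(y))\big]\big[f_{Y|x}(y) - f_{Y|x'}(y)\big]$ is precisely what produces the $\lambda(\Delta)\xi_{\Delta}$ summand in the statement. A naive add-and-subtract actually yields the slightly tighter $\Lambda_{\max}\xi_{\Delta}+\lambda(\Delta)$, so either route establishes the lemma, but the symmetric split is what reproduces the stated constant. Everything else is bookkeeping with the moduli of continuity $\lambda(\Delta)$ and $\xi_{\Delta}$, which were introduced for exactly this purpose.
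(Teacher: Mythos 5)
Your proof is correct and takes essentially the same route as the paper: the paper likewise decomposes the difference over the quantization cells via Fubini and, within each cell, substitutes $\Lambda(u',g(y)) \le \Lambda(a_i,g(y)) + \lambda(\Delta)$ and $f_{Y|u'}(y) \le f_{Y|a_i}(y) + \varepsilon(y)$ with $\int |\varepsilon| \le \xi_{\Delta}$, and the expansion of that product produces exactly your three terms $\Lambda_{\max}\xi_{\Delta}$, $\lambda(\Delta)$, and $\lambda(\Delta)\xi_{\Delta}$. Your pointwise bound on $h(x) = E_x\Lambda(x,g(Y))$ is precisely the paper's Proposition~\ref{lem:A_cont_fn} (its function $A(x)$), so your version is a cleaner, modular packaging of the same argument rather than a different one.
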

The proofs for Lemmas \ref{lem:loss_continuity} and
\ref{lem:los_cont_w_quant} are discussed in the following section, Appendix \ref{app:Proofs of Lemmas 5 and 6}

\begin{lemma}
For every $n \ge 1$, $x^n \in [a,b]^n$, measurable $g: \mathbb{R}
\rightarrow [a,b]$, and $\varepsilon
>0$,

\begin{equation}
Pr \left( \left| \frac{1}{n} \sum_{i=1}^n \Lambda(x_i, g(Y_i)) -
E_{F_{x^n} \otimes \mathcal{C}} \Lambda(U, g(Y)) \right| >
\epsilon \right) \le 2\exp(-G(\epsilon, \Lambda_{\max})n)
\label{hoeff_cum_loss}
\end{equation}
\label{loss_dev_true_estdist}
\end{lemma}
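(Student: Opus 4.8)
The plan is to recognize Lemma \ref{loss_dev_true_estdist} as a direct instance of Hoeffding's inequality for a sum of independent, bounded, but not identically distributed random variables. First I would set $W_i = \Lambda(x_i, g(Y_i))$ for $1 \le i \le n$. By the memoryless assumption C1 the observations $Y_1,\dots,Y_n$ are independent, and since $g$ is a fixed measurable map, the $W_i$ are independent. By the boundedness assumption L1 on the loss, each $W_i$ takes values in $[0,\Lambda_{\max}]$, i.e.\ has range at most $\Lambda_{\max}$.

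Next I would identify the expectation of the empirical average with the target quantity. Since $Y_i \sim F_{Y|x_i}$, we have $E[W_i] = E_{x_i}\Lambda(x_i, g(Y))$ in the notation of (\ref{eqn_2}). Using the definition (\ref{dist_defn}) of the empirical CDF $F_{x^n}$ together with (\ref{eqn_1})--(\ref{eqn_3}), it then follows that
\begin{equation*}
E\!\left[\frac{1}{n}\sum_{i=1}^n W_i\right] = \frac{1}{n}\sum_{i=1}^n E_{x_i}\Lambda(x_i, g(Y)) = \int_{[a,b]} E_x\Lambda(x, g(Y))\, dF_{x^n}(x) = E_{F_{x^n}\otimes\mathcal{C}}\Lambda(U, g(Y)).
\end{equation*}
Thus the quantity inside the probability in (\ref{hoeff_cum_loss}) is exactly the deviation of the sample mean $\frac{1}{n}\sum_i W_i$ from its own expectation.

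Then I would invoke Hoeffding's inequality: for independent random variables $W_i$ with $W_i \in [\ell_i, u_i]$,
\begin{equation*}
\Pr\!\left( \left| \frac{1}{n}\sum_{i=1}^n W_i - \frac{1}{n}\sum_{i=1}^n E[W_i] \right| > \epsilon \right) \le 2\exp\!\left( -\frac{2 n^2 \epsilon^2}{\sum_{i=1}^n (u_i - \ell_i)^2} \right).
\end{equation*}
Substituting $u_i - \ell_i = \Lambda_{\max}$ for every $i$ gives $\sum_{i=1}^n (u_i-\ell_i)^2 = n\Lambda_{\max}^2$, so the right-hand side becomes $2\exp(-2 n \epsilon^2/\Lambda_{\max}^2) = 2\exp(-G(\epsilon,\Lambda_{\max})\,n)$ by the definition $G(\epsilon, B) = 2\epsilon^2/B^2$, which is precisely the claimed bound.

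There is no substantive obstacle here; once the correct concentration tool is identified the lemma is essentially bookkeeping. The only point requiring a modicum of care is the second step---verifying that the empirical average has the right mean---which relies on the memoryless structure (for independence) and on matching the arithmetic mean over the fixed sequence $x^n$ with integration against its empirical measure $F_{x^n}$; both are immediate from C1 and the definitions in Section \ref{sec:problem_setting}.
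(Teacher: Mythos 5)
Your proposal is correct and follows exactly the paper's own argument: identify $\frac{1}{n}\sum_i \Lambda(x_i,g(Y_i))$ as an average of independent (by C1), bounded-by-$\Lambda_{\max}$ random variables whose mean equals $E_{F_{x^n}\otimes\mathcal{C}}\Lambda(U,g(Y))$ by linearity, then apply Hoeffding's inequality to obtain $2\exp(-2n\epsilon^2/\Lambda_{\max}^2)=2\exp(-G(\epsilon,\Lambda_{\max})n)$. Your write-up is in fact slightly more explicit than the paper's (which cites Hoeffding without spelling out the range substitution), but the route is identical.
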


\begin{proof}
By linearity of expectation, $\frac{1}{n} \sum_{i=1}^n E
\Lambda(x_i,g(Y_i)) = E_{F_{x^n} \otimes \mathcal{C}}
\Lambda(U,g(Y))$. Thus, the expression inside the absolute value
brackets in (\ref{hoeff_cum_loss}) is a sum of zero mean
random variables, bounded in magnitude by $\Lambda_{\max}$.
Furthermore, $\Lambda(x_i,g(Y_i))$ and $\Lambda(x_j,g(Y_j))$ are
independent whenever $i \neq j$. This allows the use of Hoeffding
inequality \cite{devroye:gyorfi:lugosi:1996} as in \cite{Dembo} leading to (\ref{hoeff_cum_loss}).
\end{proof}

In preparation of the proof of Theorem \ref{thm:loss_denoisability_dev}, we need also the following two Lemmas
\begin{lemma}
$d(f_Y^n,\left[\hat{F}_{x^n} \otimes \mathcal{C} \right]_Y)
\rightarrow 0 $ a.s. \label{lem:density_consist_cont_map_1}
\end{lemma}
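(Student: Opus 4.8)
The plan is to exploit the defining optimality of $\hat{F}_{x^n}$ as a projection, together with the almost-sure consistency of the kernel density estimate already furnished by Theorem \ref{thm:density_consist}; remarkably, no triangle inequality is needed, only a sandwich argument. First I would observe that the ``creature to be estimated'' in (\ref{eq: creature to be estimated}) is exactly the output density induced by the \emph{true} empirical distribution $F_{x^n}$ of the clean sequence: since $F_{x^n}$ places mass $1/n$ at each $x_i$,
\begin{equation*}
[F_{x^n} \otimes \mathcal{C}]_Y = \int f_{Y|x}\, dF_{x^n}(x) = \frac{1}{n}\sum_{i=1}^n f_{Y|x_i}(y).
\end{equation*}
Consequently the kernel-estimation error of (\ref{eqn:J_n}) is precisely $J_n = d\left(f_Y^n, [F_{x^n}\otimes\mathcal{C}]_Y\right)$, where here $f_Y^n$ denotes the kernel estimate of (\ref{density_est}).

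The crucial structural point is that $F_{x^n}$ itself lies in the feasible set $\mathcal{F}_n^{[a,b]}$ over which the channel-inversion minimization (\ref{chan_inv_defn}) is carried out, because $F_{x^n}$ is the empirical distribution of an $[a,b]$-valued $n$-tuple, namely $x^n$. Since $\hat{F}_{x^n}$ is by definition the minimizer of $F \mapsto d\left(f_Y^n, [F\otimes\mathcal{C}]_Y\right)$ over exactly this set, its minimal value can only be smaller than the value attained at the particular feasible point $F_{x^n}$:
\begin{equation*}
d\left(f_Y^n, [\hat{F}_{x^n}\otimes\mathcal{C}]_Y\right) \le d\left(f_Y^n, [F_{x^n}\otimes\mathcal{C}]_Y\right) = J_n.
\end{equation*}

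It then remains only to invoke the almost-sure convergence of $J_n$. With the bandwidth sequence chosen so that $h_n \to 0$ and $nh_n^d \to \infty$, condition 5 of Theorem \ref{thm:density_consist} holds, and the implication chain $5 \Rightarrow 4 \Rightarrow 3$ established there (through Lemma \ref{lem:lem_4_pf_dens_const}) yields $J_n \to 0$ almost surely, for every individual sequence $\mathbf{x}$. Combining this with the sandwich above gives $d\left(f_Y^n, [\hat{F}_{x^n}\otimes\mathcal{C}]_Y\right) \le J_n \to 0$ a.s., which is the assertion. I do not expect a genuine obstacle here: the entire analytic weight sits in the prior consistency result, and the only real step is recognizing that $\frac{1}{n}\sum_i f_{Y|x_i}$ is the channel output of $F_{x^n}$ and that $F_{x^n}\in\mathcal{F}_n^{[a,b]}$, so that the projection distance is dominated by the estimation error $J_n$.
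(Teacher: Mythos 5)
Your proof is correct and is essentially the paper's own argument: both rest on the observation that $F_{x^n}\in\mathcal{F}_n^{[a,b]}$ is feasible for the minimization in (\ref{chan_inv_defn}), so the projection distance is sandwiched between $0$ and $J_n = d\left(f_Y^n,[F_{x^n}\otimes\mathcal{C}]_Y\right)$, which vanishes almost surely by Theorem \ref{thm:density_consist}. Your write-up is in fact slightly more explicit than the paper's (which dispatches the sandwich with ``by definition''), since you spell out that $[F_{x^n}\otimes\mathcal{C}]_Y=\frac{1}{n}\sum_i f_{Y|x_i}$ and hence that the benchmark distance is exactly the estimation error $J_n$.
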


\begin{proof}
By definition,
\begin{eqnarray*}
0 \le d(f_Y^n,\left[\hat{F}_{\hat{x}^n} \otimes \mathcal{C}
\right]_Y) \le d(f_Y^n,\left[F_{x^n} \otimes \mathcal{C}
\right]_Y), \; \forall n \label{lemma_5_1}
\end{eqnarray*}
Taking limit $n \rightarrow \infty$ in the inequality of
(\ref{lemma_5_1}), we get
\begin{eqnarray*}
0 \le \lim_{n \rightarrow \infty} d(f_Y^n,\left[\hat{F}_{x^n}
\otimes \mathcal{C} \right]_Y) \le \lim_{n \rightarrow \infty}
d(f_Y^n,\left[F_{x^n} \otimes \mathcal{C} \right]_Y) &=&0 \quad
a.s. \label{lemma_5_2}
\end{eqnarray*}
where the second part of the inequality in (\ref{lemma_5_2})
follows from Theorem \ref{thm:density_consist}.\\
\end{proof}

\begin{lemma}
$d(\left[F_{x^n} \otimes \mathcal{C} \right]_Y,\left[\hat{F}_{x^n}
\otimes
\mathcal{C} \right]_Y) \rightarrow 0 $ \textit{a.s.}\\
\label{lem:density_consist_cont_map_2}
\end{lemma}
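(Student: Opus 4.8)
The plan is to exploit the fact that $d$, defined in (\ref{norm_defn}) as the $L_1$ distance between densities on $\mathbb{R}$, is a bona fide metric and therefore obeys the triangle inequality. I would insert the kernel density estimate $f_Y^n$ of (\ref{density_est}) as an intermediate point and write
\begin{equation*}
d\left(\left[F_{x^n} \otimes \mathcal{C}\right]_Y, \left[\hat{F}_{x^n} \otimes \mathcal{C}\right]_Y\right) \le d\left(\left[F_{x^n} \otimes \mathcal{C}\right]_Y, f_Y^n\right) + d\left(f_Y^n, \left[\hat{F}_{x^n} \otimes \mathcal{C}\right]_Y\right).
\end{equation*}
The proof then reduces to showing that each of the two terms on the right converges to zero almost surely.

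For the first term, I would observe that $\left[F_{x^n} \otimes \mathcal{C}\right]_Y = \int f_{Y|x}\,dF_{x^n}(x) = \frac{1}{n}\sum_{i=1}^n f_{Y|x_i}$ is exactly the ``creature to be estimated'' of (\ref{eq: creature to be estimated}), so that $d\left(\left[F_{x^n} \otimes \mathcal{C}\right]_Y, f_Y^n\right)$ is precisely $J_n$ as defined in (\ref{eqn:J_n}). Under the bandwidth conditions $h_n \to 0$ and $nh_n^d \to \infty$, Theorem \ref{thm:density_consist} (via Lemma \ref{lem:lem_4_pf_dens_const}) gives $J_n \to 0$ almost surely, which dispatches the first term. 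For the second term, I would simply invoke Lemma \ref{lem:density_consist_cont_map_1}, which already establishes $d\left(f_Y^n, \left[\hat{F}_{x^n} \otimes \mathcal{C}\right]_Y\right) \to 0$ almost surely; recall that this in turn follows from the defining optimality of $\hat{F}_{x^n}$ as the minimizer in (\ref{chan_inv_defn}), together with the feasibility of the true empirical law $F_{x^n} \in \mathcal{F}_n^{[a,b]}$. Adding the two almost-sure limits completes the argument.

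Honestly, there is no genuine analytical obstacle here: all of the heavy lifting---the $L_1$ consistency of the kernel estimate for independent, non-identically-distributed observations, and the fact that channel inversion returns something at least as close to $f_Y^n$ as the true law---has already been carried out in the preceding results. The only points requiring care are the bookkeeping identification of $\left[F_{x^n} \otimes \mathcal{C}\right]_Y$ with $\frac{1}{n}\sum_{i=1}^n f_{Y|x_i}$, so that the first term is literally $J_n$, and the observation that the single probability-one event on which the conclusion holds is the intersection of the almost-sure events supplied by Theorem \ref{thm:density_consist} and Lemma \ref{lem:density_consist_cont_map_1}, which still has probability one.
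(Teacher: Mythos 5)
Your proposal is correct and follows essentially the same route as the paper's own proof: a triangle inequality through the kernel estimate $f_Y^n$, with the first term being $J_n \rightarrow 0$ a.s.\ from Theorem \ref{thm:density_consist} and the second handled by Lemma \ref{lem:density_consist_cont_map_1}. Your added bookkeeping (identifying $\left[F_{x^n} \otimes \mathcal{C}\right]_Y$ with $\frac{1}{n}\sum_{i=1}^n f_{Y|x_i}$ and intersecting the two probability-one events) only makes explicit what the paper leaves implicit.
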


\begin{proof}
\begin{eqnarray*}
0 \le d(\left[F_{x^n} \otimes \mathcal{C}
\right]_Y,\left[\hat{F}_{x^n} \otimes \mathcal{C} \right]_Y)  \le
d(\left[F_{x^n} \otimes \mathcal{C} \right]_Y,f_Y^n) +
d(f_Y^n,\left[\hat{F}_{x^n} \otimes \mathcal{C} \right]_Y)
\end{eqnarray*}

We have already seen $d(\left[F_{x^n} \otimes \mathcal{C}
\right]_Y,f_Y^n) \rightarrow \; a.s$ and by Lemma
\ref{lem:density_consist_cont_map_1}, 
\begin{equation*}
d(f_Y^n,\left[\hat{F}_{x^n}
\otimes \mathcal{C} \right]_Y) \rightarrow 0 \quad a.s.
\end{equation*}
Whence,
\begin{equation*}
d(\left[F_{x^n} \otimes \mathcal{C} \right]_Y,\left[\hat{F}_{x^n}
\otimes \mathcal{C} \right]_Y) \rightarrow 0 \; a.s.
\end{equation*}
\end{proof}

We are now ready for the proof of Theorem \ref{thm:loss_denoisability_dev},
\begin{proof}[Proof of Theorem \ref{thm:loss_denoisability_dev}]
We fix $n \ge 1$, $x^n \in [a,b]^n$,
\begin{multline}
\left| E_{\hat{P}^{\delta, \Delta}_{x^n}[Y^n] \otimes
\mathcal{C}} \Lambda(U, g(Y)) - E_{F_{x^n} \otimes \mathcal{C}}
\Lambda(U, g(Y)) \right| \le \\\left|E_{\hat{P}^{\delta,
\Delta}_{x^n}[Y^n] \otimes \mathcal{C}} \Lambda(U, g(Y)) -
E_{\hat{F}_{x^n}[Y^n] \otimes \mathcal{C}} \Lambda(U, g(Y))
\right| + \\
\left|E_{\hat{F}_{x^n}[Y^n] \otimes \mathcal{C}} \Lambda(U, g(Y))
- E_{F_{x^n} \otimes \mathcal{C}} \Lambda(U, g(Y)) \right|
\label{thm_9_1}
\end{multline}
Hence,
\begin{align}
Pr \left( \sup_{g: \mathbb{R} \rightarrow [a,b]} \left|
E_{\hat{P}^{\delta, \Delta}_{x^n}[Y^n] \otimes \mathcal{C}}
\Lambda(U, g(Y)) - E_{F_{x^n} \otimes \mathcal{C}} \Lambda(U,
g(Y)) \right| >  \epsilon + \delta \Lambda_{\max} +
\xi_{\Delta}\Lambda_{\max} + \right. \nonumber \\ \left. \lambda(\Delta)
(1+\xi_{\Delta}) \right) 
\le Pr \left( \left| E_{\hat{F}_{x^n}[Y^n] \otimes \mathcal{C}}
\Lambda(U, g(Y)) - E_{F_{x^n} \otimes \mathcal{C}} \Lambda(U,
g(Y)) \right| > \epsilon \right) +  \\
Pr \left(\left| E_{\hat{F}{x^n}[Y^n] \otimes \mathcal{C}}
\Lambda(U, g(Y)) - E_{\hat{P}^{\delta, \Delta}_{x^n}[Y^n]
\otimes \mathcal{C}} \Lambda(U, g(Y)) \right| > \delta
\Lambda_{\max} + \xi_{\Delta}\Lambda_{\max} + \lambda(\Delta)
(1+\xi_{\Delta})\right) \label{thm_9_1a}
\end{align}
Now,
\begin{align}
Pr \left( \left| E_{\hat{F}{x^n}[Y^n] \otimes \mathcal{C}}
\Lambda(U, g(Y)) - E_{F_{x^n} \otimes \mathcal{C}} \Lambda(U,
g(Y)) \right| > \epsilon \right) \le \nonumber\\ 
Pr \left( \left( \parallel
\Lambda \parallel_L + \Lambda_{\max}
\parallel \Xi
 \parallel_L+ (b-a)\parallel \Lambda \parallel_L \parallel \Xi
 \parallel_L+ \Lambda_{\max} \right) \beta
\left(P_{x^n}, \hat{P}_{x^n} \right)
> \epsilon \right) \label{thm_9_2}\\
\le Pr\left( \left( \parallel \Lambda \parallel_L + \Lambda_{\max}
\parallel \Xi
 \parallel_L+ (b-a)\parallel \Lambda \parallel_L \parallel \Xi
 \parallel_L+ \Lambda_{\max} \right)d \left( F_{x^n} \otimes \mathcal{C},
\hat{F}_{x^n}
\otimes \mathcal{C} \right)> \epsilon \right) \nonumber \\
\le e^{-(1-\rho)\frac{n\gamma^2}{2}}, \nonumber \\ \mbox{for all } nh_n >
n_0(\mathcal{C}, \rho, \delta, K) \label{thm_9_3}
\end{align}
where $\mathcal{C}$ is the family of channel densities
$\{f_{Y|x}\}$. The inequality in (\ref{thm_9_2}) is due to Lemma
\ref{lem:loss_continuity}, while the first inequality in
(\ref{thm_9_3}) is by application of Theorem \ref{cont_map} and
the second inequality is due to Lemma
\ref{lem:density_consist_cont_map_2} and Theorem
\ref{thm:density_consist}. Finally, application of Lemma
\ref{lem:los_cont_w_quant} to (\ref{thm_9_1a}) yields
\begin{align}
Pr \left( \sup_{g: \mathbb{R} \rightarrow [a,b]} \left|
E_{\hat{P}^{\delta, \Delta}_{x^n}[Y^n] \otimes \mathcal{C}}
\Lambda(U, g(Y)) - E_{F_{x^n} \otimes \mathcal{C}} \Lambda(U,
g(Y)) \right| > \epsilon + \delta \Lambda_{\max} +
\xi_{\Delta}\Lambda_{\max} + \right. \nonumber \\ \left. \lambda(\Delta)
(1+\xi_{\Delta}) \right) 
\le e^{-(1-\rho)\frac{n\gamma^2}{2}}, \quad \mbox{for all } n >
n_0(\mathcal{C}, \rho, \delta, K) \label{thm_9_4}
\end{align}
Combining (\ref{thm_9_4}) with Lemma \ref{loss_dev_true_estdist}
gives
\begin{multline}
Pr \left( \left| \frac{1}{n} \sum_{i=1}^n \Lambda(x_i,g(Y_i)) -
E_{\hat{P}^{\delta, \Delta}_{x^n} \otimes \mathcal{C}
}\Lambda(U, g(Y) ) \right| > 2\epsilon + 2\delta \Lambda_{\max} +
\xi_{\Delta}\Lambda_{\max} + \lambda(\Delta)
(1+\xi_{\Delta}) \right) \\
\le 2e^{-G(\epsilon+\delta \Lambda_{\max},
\Lambda_{\max})n} + e^{-(1-\rho)\frac{n\gamma^2}{2}}, \quad 
\mbox{for all } nh_n > n_0(\mathcal{C}, \rho, \delta, K)
\label{loss_dev1}
\end{multline}

By the union bound, (\ref{loss_dev1}) guarantees that for any
class $\mathcal{G}$
\begin{multline}
Pr \left( \max_{g \in \mathcal{G}} \left| \frac{1}{n} \sum_{i=1}^n
\Lambda(x_i,g(Y_i)) - E_{\hat{P}^{\delta, \Delta}_{x^n} \otimes
\mathcal{C} \Lambda(U, g(Y) )} \right| > 2\epsilon + 2\delta
\Lambda_{\max}
+C_{\Delta}\Lambda_{\max} \right. \\ 
\left. + \lambda(\Delta)(1+\xi_{\Delta})\right)
\le |\mathcal{G}| \left[2e^{-G(\epsilon+\delta \Lambda_{\max},
\Lambda_{\max})n} + e^{-(1-\rho)\frac{n\gamma^2}{2}} \right]
\label{mx_loss_dev1}
\end{multline}

Consequently,
\begin{multline}
Pr \left( \left| L_{\tilde{X}^{n, \delta, \Delta}}(x^n, Y^n) -
\min_{g \in \mathcal{G}_{\delta, \Delta}} E_{\hat{P}^{\delta,
\Delta}_{x^n} \otimes \mathcal{C}} \Lambda(U, g(Y)) \right| > 
2\epsilon + 2\delta \Lambda_{\max}
+C_{\Delta}\Lambda_{\max} \right. \\
\left. +\lambda(\Delta)(1+\xi_{\Delta})\right) = Pr \left( \left| \frac{1}{n} \sum_{i=1}^n
\Lambda(x_i,g_{opt}[\hat{P}^{\delta, \Delta}_{x^n}[Y^n]](Y_i)) -
E_{\hat{P}^{\delta, \Delta}_{x^n} \otimes \mathcal{C}}
\Lambda(U, g_{opt}[\hat{P}^{\delta, \Delta}_{x^n}[Y^n]](Y) )
\right| \right.  \\ > \left. 2\epsilon + 2\delta \Lambda_{\max}
+C_{\Delta}\Lambda_{\max}+\lambda(\Delta)(1+\xi_{\Delta})\right)\\
\le Pr \left( \max_{g \in \mathcal{G}_{\delta, \Delta}} \left|
\frac{1}{n} \sum_{i=1}^n \Lambda(x_i,g(Y_i)) -
E_{\hat{P}^{\delta, \Delta}_{x^n} \otimes \mathcal{C}}
\Lambda(U, g(Y)) \right|
>  2\epsilon + 2\delta \Lambda_{\max}
+C_{\Delta}\Lambda_{\max} \right. \\ +\left. \lambda(\Delta)(1+\xi_{\Delta})\right)
\le |\mathcal{G}_{\delta, \Delta}| \left[2e^{-G(\epsilon+\delta
\Lambda_{\max}, \Lambda_{\max})n} +
e^{-(1-\rho)\frac{n\gamma^2}{2}} \right] \label{max_loss_dev2}
\end{multline}
where the first equality follows from the definition of
$\tilde{X}^{n, \delta, \Delta}$ and the fact that for any  $P \in
\mathcal{F}_{\delta, \Delta}$,
\begin{equation*}
\min_{g \in \mathcal{G}_{\delta,
\Delta}} E_{P\otimes \mathcal{C}} \Lambda(U, g(Y))  = E_{P \otimes
\mathcal{C}} \Lambda(U, g_{opt}[P](Y))
\end{equation*}
The first inequality follows by the fact that $\hat{P}^{\delta, \Delta}_{x^n}[Y^n]
\in \mathcal{F}_{\delta, \Delta}$ and therefore
$g_{opt}[\hat{P}^{\delta, \Delta}_{x^n}[Y^n]] \in
\mathcal{G_{\delta, \Delta}}$ and finally the last inequality
follows from (\ref{mx_loss_dev1}). It also follows, from
(\ref{thm_9_4}), that

\begin{multline}
Pr \left( \left| \min_{g \in \mathcal{G}_{\delta, \Delta}}
E_{\hat{P}^{\delta, \Delta}_{x^n} \otimes \mathcal{C}}
\Lambda(U, g(Y))  - \min_{g \in \mathcal{G}_{\delta, \Delta}}
E_{F_{x^n} \otimes \mathcal{C}} \Lambda(U,G(Y)) \right| > \right. \\ \left. \epsilon
+ \delta \Lambda_{\max} + \xi_{\Delta}\Lambda_{\max} +
\lambda(\Delta)
(1+\xi_{\Delta}) \right) \le  e^{-(1-\rho)\frac{n\gamma^2}{2}} \label{loss_dev2}
\end{multline}

Combining (\ref{max_loss_dev2}) and (\ref{loss_dev2}) gives
\begin{multline}
Pr \left( \left| L_{\tilde{X}^{n,\delta, \Delta}}(x^n,Y^n) -
\min_{g \in \mathcal{G}_{\delta, \Delta}} E_{F_{x^n} \otimes
\mathcal{C}} \Lambda(U,g(Y)) \right|  >  3\epsilon + 3\delta
\Lambda_{\max} + 2\xi_{\Delta}\Lambda_{\max} + \right. \\
\left. 2\lambda(\Delta)(1+\xi_{\Delta})\right) 
\le |\mathcal{G}_{\delta, \Delta}| \left[2e^{-G(\epsilon+\delta
\Lambda_{\max}, \Lambda_{\max})n} +
e^{-(1-\rho)\frac{n\gamma^2}{2}}
\right]+e^{-(1-\rho)\frac{n\gamma^2}{2}} \label{loss_dev3}
\end{multline}

On the other hand, letting $\hat{P}_{x^n}^{\delta,\Delta}$
denote the element in $\mathcal{F}_{\delta,\Delta}$ closest (under
the Prohorov metric of the corresponding measures) to $F_{x^n}$,
\begin{eqnarray}
& & \left| D_0(x^n)-\min_{g \in \mathcal{G}_{\delta, \Delta}}
E_{F_{x^n} \otimes
\mathcal{C}} \Lambda(U,g(Y)) \right|  \nonumber \\
&=&\left| \min_{F \in \mathcal{F}^{[a,b]}_n} E_{F_{x^n} \otimes
\mathcal{C}} \Lambda(U,g_{opt}[F](Y)) - \min_{g \in
\mathcal{G}_{\delta,\Delta}} E_{F_{x^n} \otimes \mathcal{C}}
\Lambda(U,g(Y)) \right| \label{2}\\
&\le &\left| \min_{F \in \mathcal{F}^{[a,b]}_n}
E_{\tilde{F}_{x^n}^{\delta, \Delta} \otimes \mathcal{C}}
\Lambda(U,g_{opt}[F](Y)) - \min_{g \in
\mathcal{G}_{\delta,\Delta}} E_{F_{x^n} \otimes \mathcal{C}}
\Lambda(U,g(Y)) \right| + \nonumber \\
& &\qquad \qquad \Lambda_{\max}\delta+\xi_{\Delta}\Lambda_{\max}+\lambda(\Delta)(1+\xi_{\Delta}) \label{3}\\
&=&\left| \min_{P \in \mathcal{F}^{\delta,\Delta}}
E_{\hat{P}_{x^n}^{\delta, \Delta} \otimes \mathcal{C}}
\Lambda(U,g_{opt}[P](Y)) - \min_{g \in
\mathcal{G}_{\delta,\Delta}} E_{P_{x^n} \otimes \mathcal{C}}
\Lambda(U,g(Y)) \right| + \nonumber  \\
& & \qquad \qquad \Lambda_{\max}\delta+\xi_{\Delta}\Lambda_{\max}+\lambda(\Delta)(1+\xi_{\Delta})\label{4}\\
&=&\left| \min_{g \in \mathcal{G}_{\delta,\Delta}}
E_{\tilde{F}_{x^n}^{\delta, \Delta} \otimes \mathcal{C}}
\Lambda(U,g(Y)) - \min_{g \in \mathcal{G}_{\delta,\Delta}}
E_{F_{x^n} \otimes \mathcal{C}} \Lambda(U,g(Y)) \right| + \nonumber \\
& & \qquad \qquad \Lambda_{\max}\delta+\xi_{\Delta}\Lambda_{\max}+\lambda(\Delta)(1+\xi_{\Delta})\label{5}\\
&\le& 2 \left(
\Lambda_{\max}\delta+\xi_{\Delta}\Lambda_{\max}+\lambda(\Delta)(1+\xi_{\Delta})
\right)\label{6}
\end{eqnarray}
where (\ref{3}) and (\ref{6}) follow from Lemma \ref{lem:los_cont_w_quant}, and (\ref{4})
follows from the fact that the achiever of the minimum in the
first term of (\ref{3}) is $F_{x^n}^{\delta, \Delta}$ which, by
definition, is a member of $\mathcal{F}_{\delta, \Delta}$.
Finally, combining (\ref{loss_dev2}) with (\ref{6}) gives

\begin{eqnarray}
Pr \left( \left| L_{\tilde{X}^{n,\delta, \Delta}}(x^n,Y^n) -
D_0(x^n) \right| >  3\epsilon + 5\delta \Lambda_{\max} +
4\xi_{\Delta}\Lambda_{\max} +
4\lambda(\Delta)(1+\xi_{\Delta}) \right) \nonumber \\
\le |\mathcal{G}_{\delta, \Delta}| \left[e^{-G(\epsilon+\delta
\Lambda_{\max}, \Lambda_{\max})n} +
e^{-(1-\rho)\frac{n\gamma^2}{2}}
\right]+e^{-(1-\rho)\frac{n\gamma^2}{2}} \\
\text{for all } nh_n > n_0 \left(\mathcal{C}, \rho, \delta, K
\right) \nonumber
\end{eqnarray}
From the definition of $\mathcal{G}_{\delta,
\Delta}$, it is clear that $\left| \mathcal{G}_{\delta, \Delta}
\right| \le \left[ \frac{1}{\delta}+1\right]^{\Delta}$. Hence,
\begin{eqnarray}
Pr \left( \left| L_{\tilde{X}^{n,\delta, \Delta}}(x^n,Y^n) -
D_0(x^n) \right| >  3\epsilon + 5\delta \Lambda_{\max} +
4\xi_{\Delta}\Lambda_{\max} +
4\lambda(\Delta)(1+\xi_{\Delta}) \right) \nonumber \\
\le \left[ 1 + \frac{1}{\delta}\right]^{\Delta} \left[e^{-G(\epsilon+\delta
\Lambda_{\max}, \Lambda_{\max})n} +
e^{-(1-\rho)\frac{n\gamma^2}{2}}
\right]+e^{-(1-\rho)\frac{n\gamma^2}{2}} \\
\text{for all } nh_n > n_0 \left(\mathcal{C}, \rho, \delta, K
\right) \nonumber
\end{eqnarray}
\end{proof}

\section{Proof of Lemmas \ref{lem:loss_continuity} and
\ref{lem:los_cont_w_quant}}
\label{app:Proofs of Lemmas 5 and 6}
We need the following proposition for the proof of Lemma \ref{lem:loss_continuity}
\begin{proposition}
$A(x) = \int \Lambda \left(x,g(y) \right)f_{Y|x}(y)dy$ is a
bounded Lipschitz function for any measurable $g:\mathbb{R} \rightarrow [a,b]$.\\
\label{lem:A_cont_fn}
\end{proposition}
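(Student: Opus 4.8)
The plan is to verify boundedness and the Lipschitz property separately, in each case with constants that do not depend on $g$. Boundedness is immediate: because $\Lambda \ge 0$ is bounded by $\Lambda_{\max}$ and $f_{Y|x}$ integrates to one, one has $|A(x)| \le \Lambda_{\max}\int f_{Y|x}(y)\,dy = \Lambda_{\max}$ for all $x \in [a,b]$, so $\|A\|_\infty \le \Lambda_{\max} < \infty$.

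For the Lipschitz estimate I would fix $x, x' \in [a,b]$ and split the difference by inserting a cross term,
\begin{align*}
A(x) - A(x') &= \int \big[\Lambda(x,g(y)) - \Lambda(x',g(y))\big]\, f_{Y|x}(y)\,dy \\
&\quad + \int \Lambda(x',g(y))\,\big[f_{Y|x}(y) - f_{Y|x'}(y)\big]\,dy,
\end{align*}
so that the two distinct sources of dependence on the conditioning symbol --- through the first argument of $\Lambda$ and through the density itself --- are isolated. The first integral is governed by the loss regularity L2: since $g(y) \in [a,b]$, its integrand is at most $\lambda(|x-x'|) \le \|\Lambda\|_L\,|x-x'|$ pointwise, and integrating against the unit-mass density $f_{Y|x}$ preserves this bound. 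The second integral is governed by the channel regularity: bounding $\Lambda(x',g(y)) \le \Lambda_{\max}$ gives $\Lambda_{\max}\int |f_{Y|x}(y) - f_{Y|x'}(y)|\,dy \le \Lambda_{\max}\,\xi_{|x-x'|}$, with $\xi_\Delta$ the $L_1$ modulus from (\ref{delta_Delta}), and condition C6 turns this into $\Lambda_{\max}\|\Xi\|_L\,|x-x'|$.

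Adding the two bounds gives $|A(x)-A(x')| \le (\|\Lambda\|_L + \Lambda_{\max}\|\Xi\|_L)\,|x-x'|$, a finite Lipschitz constant (finite by L2 and C6) that is uniform in $g$, which together with the sup-norm bound shows $A$ is bounded Lipschitz. I do not anticipate a genuine obstacle here; the one point that needs care is that $\lambda(\cdot)$ is defined with the strict inequality $|x-x'| < \Delta$, so I would apply it at $\Delta \downarrow |x-x'|$ (equivalently, invoke the supremum defining $\|\Lambda\|_L$), which is legitimate because $x,x' \in [a,b]$ forces $|x-x'| \le b-a$. The real content is simply the recognition that the total-variation continuity of the channel family captured by $\xi_\Delta$ and C6 is precisely what absorbs the term in which the output density varies --- exactly the role those conditions were introduced to play --- and it is the constant arising here that feeds into the larger bound of Lemma \ref{lem:loss_continuity}.
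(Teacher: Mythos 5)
Your proof is correct, and it takes a mildly but genuinely different route from the paper's. Both arguments rest on exactly the same two moduli --- the loss modulus $\lambda(\Delta)$ from L2 and the channel total-variation modulus $\xi_{\Delta}$ of (\ref{delta_Delta}) together with C6 --- but the decompositions differ. The paper perturbs both factors inside a single integral: it replaces $\Lambda(x,g(y))$ by $\Lambda(x',g(y)) \pm \lambda(\Delta,x)$ and then $f_{Y|x}$ by $f_{Y|x'} \pm \varepsilon_{\Delta}(y)$ simultaneously, which generates the product term $\lambda(\Delta)\xi_{\Delta}$ and hence the bound $\left|A(x)-A(x')\right| \le \lambda(\Delta) + \Lambda_{\max}\xi_{\Delta} + \lambda(\Delta)\xi_{\Delta}$, whose Lipschitz-norm version carries the extra cross term $(b-a)\|\Lambda\|_L \|\Xi\|_L$. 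Your add-and-subtract telescoping through $\int \Lambda(x',g(y)) f_{Y|x}(y)\,dy$ isolates the two sources of variation, so the cross term never appears and you obtain the strictly sharper constant $\|\Lambda\|_L + \Lambda_{\max}\|\Xi\|_L$, and correspondingly $\|A\|_{BL} \le \|\Lambda\|_L + \Lambda_{\max}\|\Xi\|_L + \Lambda_{\max}$. This is not purely cosmetic: the $\|A\|_{BL}$ bound feeds into Lemma \ref{lem:loss_continuity} and thence into the constant $\gamma$ of Theorem \ref{thm:loss_denoisability_dev}, so your constant would slightly enlarge $\gamma$ and hence improve the exponent, though nothing asymptotic changes. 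Your handling of the strict inequality in the definition of $\lambda(\Delta,x)$ --- passing $\Delta \downarrow |x-x'|$, which is harmless since $|x-x'| \le b-a$ --- is careful and legitimate; the paper glosses this point, and indeed its final displayed inequality misprints the cross term as a second $\lambda(\Delta)$, although the subsequent norm computation uses the correct $\lambda(\Delta)\xi_{\Delta}$.
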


\begin{proof}
Let $\Delta = \left| x-x'\right|$,
\begin{equation*}
\begin{split}
A(x) - A(x') &= \int \Lambda \left(x,g(y) \right) f_{Y|x}(y)dy - \int \Lambda \left(x',g(y) \right)f_{Y|x'}(y)dy \\
 &\le \int \left( \Lambda \left(x',g(y) \right) + \lambda \left(
\Delta, x \right) \right) f_{Y|x}(y)dy - \int \left( \Lambda \left(x',g(y) \right)  \right)f_{Y|x'}(y)dy  \\
 &\le \int \left( \Lambda \left(x',g(y) \right) + \lambda \left(
\Delta, x \right) \right) \left( f_{Y|x'}(y) + \varepsilon_{\Delta}(y) \right)dy - \int \left( \Lambda \left(x',g(y) \right)  \right)f_{Y|x'}(y)dy \\
 &\le \lambda \left(\Delta, x \right) + \Lambda_{\max} \xi_{\Delta} + \lambda \left(\Delta, x \right) \xi_{\Delta}
\end{split}
\end{equation*}
Also,
\begin{equation*}
\begin{split}
A(x) - A(x') &= \int \Lambda \left(x,g(y) \right) f_{Y|x}(y)dy - \int \Lambda \left(x',g(y) \right)f_{Y|x'}(y)dy \\
 &\ge \int \left( \Lambda \left(x',g(y) \right) - \lambda \left(
\Delta, x \right) \right) f_{Y|x}(y)dy - \int \left( \Lambda \left(x',g(y) \right)  \right)f_{Y|x'}(y)dy  \\
 &\ge \int \left( \Lambda \left(x',g(y) \right) - \lambda \left(
\Delta, x \right) \right) \left( f_{Y|x'}(y) - \varepsilon_{\Delta}(y) \right)dy - \int \left( \Lambda \left(x',g(y) \right)  \right)f_{Y|x'}(y)dy \\
 &\ge -\lambda \left(\Delta, x \right) - \Lambda_{\max} \xi_{\Delta} + \lambda \left(\Delta, x \right) \xi_{\Delta}\\
 &\ge -\lambda \left(\Delta, x \right) - \Lambda_{\max} \xi_{\Delta} - \lambda \left(\Delta, x \right) \xi_{\Delta}
\end{split}
\end{equation*}
Hence, $\left| A(x) - A(x') \right| \le \lambda \left( \Delta
\right) + \Lambda_{\max} \xi_{\Delta}+ \lambda \left(\Delta \right)$. \\
The assumption of Lipschitz continuity (condition, C6) of the channel guarantees
$\lim_{\Delta \rightarrow 0} \xi_{\Delta} = 0$. With this and the fact that $\lim_{\Delta \rightarrow 0} \lambda \left(
\Delta \right) = 0$, we have \[\lim_{\left| x- x'\right| < \Delta
\atop \Delta \rightarrow 0} \left|A(x) - A(x') \right| = 0\]
\end{proof}

Moreover,
\begin{eqnarray}
\parallel A \parallel_L &=& \sup_{0< \Delta < \left( b-a \right)} \sup_{x \neq x' \atop
\left| x- x'\right| = \Delta} \frac{\left|A(x) - A(x')
\right|}{\left|x-x' \right|} \nonumber \\ 
&\le& \sup_{0 < \Delta <
\left(b-a\right)}\frac{\lambda \left( \Delta \right) +
\Lambda_{\max} \xi_{\Delta} + \lambda \left( \Delta \right)
\xi_{\Delta}}{\Delta} \nonumber \\
&\le& \parallel \Lambda \parallel_L + \Lambda_{\max} \parallel
 \Xi
 \parallel_L+ (b-a)\parallel \Lambda \parallel_L \parallel \Xi \parallel_L
\end{eqnarray}
Hence,
\begin{eqnarray}
\parallel A \parallel_{BL} &=& \parallel A \parallel_{L} + \parallel A
\parallel_{\infty} \nonumber \\
 &\le& \parallel \Lambda \parallel_L + \Lambda_{\max} \parallel
 \Xi
 \parallel_L+ (b-a)\parallel \Lambda \parallel_L \parallel \Xi
 \parallel_L+ \Lambda_{\max}
\end{eqnarray}

\begin{proof}[Proof of Lemma \ref{lem:loss_continuity}]
\begin{eqnarray}
\left| E_{F \otimes C} \Lambda(U_0,g(Y)) \right. &-& \left. E_{ \hat{F} \otimes C}
\Lambda(U_0,g(Y)) \right| \nonumber\\
 &=& \left| \int dF(x) \left( \int \Lambda\left( x,g(y)\right)f_{Y|x}(y)dy\right) \right. \nonumber \\ & & \qquad - \left. \int d \hat{F}(x) \left( \int \Lambda\left( x,g(y)\right)f_{Y|x}(y)dy\right) \right| \nonumber\\
 &=& \left| \int dF(x) A(x) - \int d \hat{F}(x) A(x)\right| \nonumber \\
 &=& \left| \int A(x)  d \left( F - \hat{F} \right)(x) \right| \nonumber \\
 &\le& \parallel A \parallel_{BL} \beta \left( P, \hat{P}\right) \label{eqn:lem_loss_cont_0} \\
 &\le& \left( \parallel \Lambda \parallel_L + \Lambda_{\max} \parallel
 \Xi \parallel_L+ (b-a)\parallel \Lambda \parallel_L \parallel \Xi
 \parallel_L+ \Lambda_{\max} \right) \beta \left( P, \hat{P}\right) \nonumber
\end{eqnarray}
where, (\ref{eqn:lem_loss_cont_0}) follows from the fact that
$A(x)$ is a bounded Lipschitz function as shown in Proposition
\ref{lem:A_cont_fn}. Hence, \\ as $\beta \left(P, \hat{P}\right)
\rightarrow 0$ we have $\left| E_{F \otimes C}
\Lambda(U_0,g(Y))-E_{ \hat{F} \otimes C} \Lambda(U_0,g(Y)) \right|
\rightarrow 0$. \\
\end{proof}

\begin{proof}[Proof of Lemma \ref{lem:los_cont_w_quant}]
\begin{equation}
\begin{split}
&\left| E_{P^{\Delta} \otimes C} \Lambda(U_0,g(Y))-E_{F \otimes C}
\Lambda(U_0,g(Y)) \right| \\ 
&=\left|\sum_{i=1}^{N(\Delta)} \int_{a_{i-1}}^{a_i} dF(u') \left( \int
\Lambda\left( u',g(y)\right)f_{Y|X=u'}(y)dy\right) -  \sum_{i=1}^{N(\Delta)} P^{\Delta}(a_i) \left( \int \Lambda\left(
 a_i,g(y)\right)f_{Y|X=a_i}(y)dy\right)\right| \\
&=\left| \sum_{i=1}^{N(\Delta)} \int dy \left( \int_{a_{i-1}}^{a_i} f_{Y|X=u'}(y) dF(u') \Lambda\left( u',g(y)\right)\right) -  \sum_{i=1}^{N(\Delta)} P^{\Delta}(a_i) \left( \int \Lambda\left(
 a_i,g(y)\right)f_{Y|X=a_i}(y)dy\right)\right| \label{loss_dev0_2}
\end{split}
\end{equation}

Equality in (\ref{loss_dev0_2}) is due to application of Fubini's
theorem.
Hence,
\begin{equation}
\begin{split}
&\left| E_{P^{\Delta} \otimes C} \Lambda(U_0,g(Y))-E_{F \otimes C}
\Lambda(U_0,g(Y)) \right|\\
& <\left| \sum_{i=1}^{N(\Delta)} \int dy \left( \int_{a_{i-1}}^{a_i} f_{Y|X=u'}(y) dF(u') \left(\Lambda\left( a_i,g(y)\right)+ \lambda(\Delta) \right) \right)  
 - \sum_{i=1}^{N(\Delta)} P^{\Delta}(a_i) \left( \int \Lambda\left( a_i,g(y)\right)f_{Y|X=a_i}(y)dy\right)\right| \\
&=\left| \sum_{i=1}^{N(\Delta)} \int dy \left(\Lambda\left( a_i,g(y)\right)+\lambda(\Delta) \right) \left( \int_{a_{i-1}}^{a_i}
 f_{Y|X=u'}(y)dF(u')\right)  - \sum_{i=1}^{N(\Delta)} P^{\Delta}(a_i) \left( \int
\Lambda\left(a_i,g(y)\right)f_{Y|X=a_i}(y)dy\right)\right|\\
\end{split}
\end{equation}
\begin{multline}
<\left| \sum_{i=1}^{N(\Delta)} \int dy \left(\Lambda\left( a_i,g(y)\right)+\lambda(\Delta) \right)\left( f_{Y|X=a_i}(y) + \varepsilon (y) \right) \left( \int_{a_{i-1}}^{a_{i}}
  dF(u')\right) - \right. \\ 
\left. \sum_{i=1}^{N(\Delta)} P^{\Delta}(a_i) \left(\int \Lambda\left(u_i,g(y)\right)f_{Y|X=a_i}(y)dy\right)\right|\\
<\left| \sum_{i=1}^{N(\Delta)} \left( \int_{a_{i-1}}^{a_i} dF(u')\right) \left[\int \Lambda\left( a_i,g(y)\right) f_{Y|X=a_i}(y)dy  
+ \int \varepsilon (y) \Lambda\left( a_i,g(y)\right)dy + \lambda(\Delta) \int f_{Y|X=a_i}(y)dy \right. \right. \\+ \lambda(\Delta) \int \varepsilon(y)dy
- \left. \left. \sum_{i=1}^{N(\Delta)} P^{\Delta}(a_i) \left(
 \int \Lambda\left(a_i,g(y)\right)f_{Y|X=a_i}(y)dy\right)\right] \right|
\end{multline}

\begin{multline}
 <\left| \sum_{i=1}^{N(\Delta)} \left( \int_{a_{i-1}}^{a_i} dF(u')\right) \left[\int \Lambda\left( a_i,g(y)\right) f_{Y|X=a_i}(y)dy  \right. \right. \\ \left. \left. +
 \right. \right. \int \varepsilon (y) \Lambda\left( a_i,g(y)\right)dy + \lambda(\Delta) \int  f_{Y|X=a_i}(y)dy + \lambda(\Delta) \int \varepsilon(y)dy\\
   - \left. \left. \sum_{i=1}^{N(\Delta)} P^{\Delta}(a_i) \left(
 \int \Lambda\left(a_i,g(y)\right)f_{Y|X=a_i}(y)dy\right)\right]
 \right|
\end{multline}
\begin{eqnarray*}
 &=& \left| \sum_{i=1}^{N(\Delta)} \left( \int_{a_{i-1}}^{a_{i}} dF(u')\right) \left[ \int \varepsilon (y) \Lambda\left( a_i,g(y)\right)dy + \lambda(\Delta) + \lambda(\Delta)
 \xi_{\Delta} \right] \right| \\
&=& \left| \sum_{i=1}^{N(\Delta)} \left( F(a_i) -
F(a_{i-1})\right) \left[ \int \varepsilon (y) \Lambda\left(
a_i,g(y)\right)dy + \lambda(\Delta) + \lambda(\Delta)
 \xi_{\Delta} \right] \right| \\
 &\le& \int \sum_{i=1}^{N(\Delta)} \varepsilon(y) \Lambda\left( a_i,g(y)\right) P^{\Delta}(u_i)dy+
 \left(\lambda(\Delta) + \lambda(\Delta) \xi_{\Delta} \right) \\
 &\le& \xi_{\Delta} \Lambda_{\max} + \left(\lambda(\Delta) +
 \lambda(\Delta) \xi_{\Delta} \right) \\
 &=& \xi_{\Delta} \Lambda_{\max} + \lambda(\Delta) \left(1 +
 \xi_{\Delta} \right)
\end{eqnarray*}

Hence,
\begin {equation}
\lim_{\Delta \rightarrow 0} \left| E_{P^{\Delta} \otimes C}
\Lambda(U_0,g(Y))-E_{F \otimes C} \Lambda(U_0,g(Y)) \right| = 0
\end{equation}
\end{proof}

\section{Proof of Theorem \ref{thm:large_deviation_whole_seq}}
In preparation of Theorem \ref{thm:large_deviation_whole_seq} we start by presenting the proof of Lemma \ref{lem:D_k_seq_subseq} and Theorem \ref{thm:loss_denoisability_dev_2kplus1}
\begin{proof}[Proof of Lemma \ref{lem:D_k_seq_subseq}]
\begin{eqnarray}
D_k \left(x^n \right) &=& \min_{g} E \left[ \frac{1}{n-2k}
\sum_{i=k+1}^{n-k} \Lambda \left( X_0, g\left(Y_{-k}^{k}
\right)\right) \right] \nonumber\\
 &=&\min_{g} \int \frac{1}{n-2k} \sum_{i=k+1}^{n-k}  \Lambda
 \left( x_i, g\left(y_{i-k}^{i+k} \right)\right) \prod_{l=i-k}^{i+k}f_{Y|X
 =x_l}(y_l)dy_l \\
 &=&\min_{g} \frac{1}{n-2k} \sum_{i=k+1}^{n-k}  \int \Lambda
 \left( x_i, g\left(y_{i-k}^{i+k} \right)\right) \prod_{l=i-k}^{i+k}f_{Y|X
 =x_l}(y_l)dy_l \\
 &=&\min_{g} \frac{1}{2k+1} \sum_{i=1}^{2k+1}  \int \frac{1}{\frac{n-2k}{2k+1}} \sum_{j=0}^{\lceil \frac{n-2k-i-1}{2k+1}\rceil-1 }  \Lambda
 \left( x_{j(2k+1)+k+1}, \right. \\
 & & \qquad \qquad \qquad \left. g\left(y_{j(2k+1)+i}^{j(2k+1)+i+2k} \right)\right) \prod_{l=j(2k+1)+i}^{j(2k+1)+i+2k}f_{Y|X
 =x_l}(y_l)dy_l \nonumber \\
   &\ge&\min_{g} \frac{1}{2k+1} \sum_{i=1}^{2k+1}  \int \frac{1}{\lceil \frac{n-2k}{2k+1} \rceil} \sum_{j=0}^{\lceil \frac{n-2k-i-1}{2k+1}\rceil -1}  \Lambda
 \left( x_{j(2k+1)+k+1}, \right. \\ 
 & & \left. \qquad \qquad \qquad g\left(y_{j(2k+1)+i}^{j(2k+1)+i+2k} \right)\right) \prod_{l=j(2k+1)+i}^{j(2k+1)+i+2k}f_{Y|X
 =x_l}(y_l)dy_l \nonumber
 \end{eqnarray}
 
\begin{eqnarray} 
 &\ge&\frac{1}{2k+1} \sum_{i=1}^{2k+1}  \min_{g_i} \int \frac{1}{\lceil \frac{n-2k}{2k+1} \rceil} \sum_{j=0}^{\lceil \frac{n-2k-i-1}{2k+1}\rceil \left( 2k+1 \right)+k+1+i}  \Lambda
 \left( x_i, \right. \\ 
& & \left. \qquad \qquad \qquad g_i\left(y_{j(2k+1)+i}^{j(2k+1)+i+2k} \right)\right) \prod_{l=j(2k+1)+i}^{j(2k+1)+i+2k}f_{Y|X
 =x_l}(y_l)dy_l \nonumber \\
 &=& \frac{1}{2k+1} \sum_{i=1}^{2k+1} D_k \left( x^{n_i}\right)
\end{eqnarray}
\end{proof}

Proposition \ref{lem:A_cont_fn}, Lemmas \ref{lem:loss_continuity} and
\ref{lem:los_cont_w_quant} are extendible to their $k^{\text{th}}$-order equivalents with the proofs carrying over directly from the symbol-by-symbol case. We hence merely state the Lemmas for the $k^{\text{th}}$-order case and proofs are left out in this discussion.\\

\begin{proposition}
$A(x) = \int \Lambda \left(x,g\left(y_{-k}^k \right)
\right)\prod_{i=-k}^k f_{Y|x_i}(y_i)dy_{-k}^k$ is a bounded
Lipschitz function for any measurable $g: [a,b]^{2k+1} \rightarrow
\mathbb{R}$.\\
\label{lem:A_cont_fn_k}
\end{proposition}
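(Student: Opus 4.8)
The plan is to mirror the scalar argument used for Proposition~\ref{lem:A_cont_fn}, now reading the argument $x$ as the full super-symbol $x_{-k}^{k}\in[a,b]^{2k+1}$, where $\Lambda$ depends only on the centre coordinate $x_0$ while the product density $\prod_{i=-k}^{k}f_{Y|x_i}$ depends on the entire window. Boundedness is immediate: for every $x_{-k}^{k}$ we have $0\le A(x_{-k}^{k})\le \Lambda_{\max}\int \prod_{i=-k}^{k}f_{Y|x_i}(y_i)\,dy_{-k}^{k}=\Lambda_{\max}$, since each conditional density integrates to one. For the Lipschitz estimate I would fix two windows $u_{-k}^{k}$ and $v_{-k}^{k}$ with $\max_{-k\le i\le k}|u_i-v_i|\le\Delta$ and bound $A(u_{-k}^{k})-A(v_{-k}^{k})$ from above and below, exactly as in the two displayed chains of inequalities in the proof of Proposition~\ref{lem:A_cont_fn}.

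Two ingredients feed the bound. First, because $\Lambda$ acts only on the centre symbol, I would replace $\Lambda(u_0,g(\cdot))$ by $\Lambda(v_0,g(\cdot))\pm\lambda(\Delta)$ using the global modulus of continuity~(\ref{loss_modcont1}); this reproduces the $\lambda(\Delta)$ and $\lambda(\Delta)\xi$ terms of the scalar case. Second, and this is the genuinely new step, I must control the $L_1$ distance between the two product densities,
\begin{equation*}
\int\Bigl|\prod_{i=-k}^{k}f_{Y|u_i}(y_i)-\prod_{i=-k}^{k}f_{Y|v_i}(y_i)\Bigr|\,dy_{-k}^{k}.
\end{equation*}
Here I would invoke the telescoping identity $\prod_i a_i-\prod_i b_i=\sum_{j}\bigl(\prod_{i<j}a_i\bigr)(a_j-b_j)\bigl(\prod_{i>j}b_i\bigr)$ together with Fubini: integrating each summand, the untouched factors integrate to one while the single difference factor contributes $\int|f_{Y|u_j}-f_{Y|v_j}|\le\xi_{\Delta}$ by~(\ref{delta_Delta}) and $|u_j-v_j|\le\Delta$. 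This yields the $k$-th order channel modulus $\xi_{\Delta}^{2k+1}\le(2k+1)\xi_{\Delta}$, and in particular $\xi_{\Delta}^{2k+1}\to0$ as $\Delta\to0$ by applying condition C6 factor by factor.

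Combining the two ingredients gives $|A(u_{-k}^{k})-A(v_{-k}^{k})|\le\lambda(\Delta)+\Lambda_{\max}\xi_{\Delta}^{2k+1}+\lambda(\Delta)\xi_{\Delta}^{2k+1}$; dividing by $\Delta$ and taking the supremum over $0<\Delta<(b-a)$ produces
\begin{equation*}
\|A\|_{L}\le \|\Lambda\|_{L}+\Lambda_{\max}\|\Xi\|_{L}^{k}+(b-a)\|\Lambda\|_{L}\|\Xi\|_{L}^{k},
\end{equation*}
with $\|\Xi\|_{L}^{k}$ as in~(\ref{eqn:delta_L^k}), whence $\|A\|_{BL}=\|A\|_{L}+\|A\|_{\infty}<\infty$. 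The main obstacle is the telescoping control of the product-density difference and the verification that $\|\Xi\|_{L}^{k}$ is finite; once $\xi_{\Delta}^{2k+1}$ is pinned to a constant multiple of $\xi_{\Delta}$, finiteness of $\|\Xi\|_{L}^{k}$ is inherited directly from condition C6 on the single-symbol channel, and the remainder of the argument is a verbatim transcription of the scalar proof.
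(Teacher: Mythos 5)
Your proof is correct and takes essentially the same route as the paper, which gives no separate argument for Proposition~\ref{lem:A_cont_fn_k}: it simply asserts that the proof ``carries over directly'' from the symbol-by-symbol case, Proposition~\ref{lem:A_cont_fn}, and your two chains of inequalities are exactly that carry-over. The one step you make explicit --- the telescoping identity plus Fubini showing the $L_1$ distance between the product densities is at most $(2k+1)\xi_{\Delta}$, hence $\parallel \Xi \parallel_L^k \le (2k+1)\parallel \Xi \parallel_L < \infty$ under C6 --- is precisely the detail the paper leaves implicit (and the mathematically necessary reading of $\xi_{\Delta}^{2k+1}$, since a literal $(2k+1)$-th power bound would be false when only one coordinate of the window changes).
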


\begin{lemma}
For any $F, \hat{F} \in \mathcal{F}^{[a,b],k}$, measurable $g:
\mathbb{R}^{2k+1} \rightarrow [a,b]$ and a bounded Lipschitz loss
function with $E_{f_{Y|u}} \Lambda(u,g(Y_{-k}^k)) <\infty $,
$\forall u$,
\begin{equation*}
\begin{split}
\left|E_{F \otimes C} \Lambda(U_0, \right. & g(Y_{-k}^k)) -  \left. E_{\hat{F} \otimes
C} \Lambda(U_0,g(Y_{-k}^k)) \right|  \\ & \le  \left( \parallel \Lambda
\parallel_L + \Lambda_{\max} \parallel \Xi
 \parallel_L^k+ (b-a)\parallel \Lambda \parallel_L \parallel \Xi
 \parallel_L^k+ \Lambda_{\max} \right) \beta \left( P, \hat{P}\right)
\end{split}
\end{equation*}
where $P$ and $\hat{P}$ are the laws associated with $F$ and
$\hat{F}$ and $\beta$ is the usual $\beta$-metric\\
\label{lem:loss_continuity_k}
\end{lemma}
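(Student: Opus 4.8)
The plan is to reproduce the proof of Lemma~\ref{lem:loss_continuity} essentially verbatim, with the scalar channel density $f_{Y|x}$ replaced by the $(2k+1)$-fold product density $\prod_{i=-k}^{k} f_{Y|u_i}$ acting on the super-symbol $u_{-k}^{k}\in[a,b]^{2k+1}$, and with the scalar $\beta$-metric replaced by the $\beta$-metric on laws over the hypercube $[a,b]^{2k+1}$.

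First I would collapse both expectations into a single integral against the signed measure $F-\hat F$. Setting
\begin{equation*}
A(u_{-k}^{k}) = \int_{\mathbb{R}^{2k+1}} \Lambda\bigl(u_0,\,g(y_{-k}^{k})\bigr)\prod_{i=-k}^{k} f_{Y|u_i}(y_i)\,dy_{-k}^{k},
\end{equation*}
the iterated-expectation identity underlying $E_{F\otimes \mathcal{C}}$ gives
\begin{equation*}
E_{F\otimes \mathcal{C}}\Lambda(U_0,g(Y_{-k}^{k})) - E_{\hat F\otimes \mathcal{C}}\Lambda(U_0,g(Y_{-k}^{k})) = \int_{[a,b]^{2k+1}} A\,d\bigl(F-\hat F\bigr).
\end{equation*}
By Proposition~\ref{lem:A_cont_fn_k}, $A$ is bounded and Lipschitz on $[a,b]^{2k+1}$; carrying the constants through that proof exactly as the scalar case does after Proposition~\ref{lem:A_cont_fn} yields the bounded-Lipschitz-norm estimate
\begin{equation*}
\|A\|_{BL} = \|A\|_{L} + \|A\|_{\infty} \le \|\Lambda\|_{L} + \Lambda_{\max}\|\Xi\|_{L}^{k} + (b-a)\|\Lambda\|_{L}\|\Xi\|_{L}^{k} + \Lambda_{\max}.
\end{equation*}

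Second, I would invoke the dual definition of the $\beta$-metric stated in the excerpt. Since $A/\|A\|_{BL}$ has bounded-Lipschitz norm at most $1$, the defining supremum gives $\bigl|\int (A/\|A\|_{BL})\,d(P-\hat P)\bigr|\le \beta(P,\hat P)$, hence $\bigl|\int A\,d(F-\hat F)\bigr|\le \|A\|_{BL}\,\beta(P,\hat P)$; substituting the displayed bound on $\|A\|_{BL}$ closes the argument. This is the exact analogue of the scaling step (\ref{eqn:lem_loss_cont_0}) in the proof of Lemma~\ref{lem:loss_continuity}.

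The genuine difficulty is not in the Lemma itself but in the $k$-th order Proposition~\ref{lem:A_cont_fn_k} it consumes: there one must control the total-variation gap $\int\bigl|\prod_{i} f_{Y|u_i} - \prod_{i} f_{Y|\tilde u_i}\bigr|\,dy_{-k}^{k}$ between the two product densities when the super-symbols satisfy $\|u_{-k}^{k}-\tilde u_{-k}^{k}\|\le\Delta$. This is handled by a telescoping decomposition of the product into one-coordinate swaps, each bounded by the scalar channel-continuity modulus $\xi_{\Delta}$ (using $\int f_{Y|u}=1$), so that the $k$-th order modulus is at most a sum of scalar moduli; together with conditions C5--C6 this guarantees $\|\Xi\|_{L}^{k}<\infty$ and is precisely what promotes $\|\Xi\|_{L}$ to its $k$-th order analogue $\|\Xi\|_{L}^{k}$ in the bound above. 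Once that estimate is in hand, the Lipschitz-in-$u_0$ behaviour of $\Lambda$ contributes the $\|\Lambda\|_{L}$ terms exactly as in the scalar proof, and no new idea is required.
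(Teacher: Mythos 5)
Your proposal matches the paper's intended argument exactly: the paper states that the proofs of Proposition~\ref{lem:A_cont_fn} and Lemma~\ref{lem:loss_continuity} ``carry over directly'' to the $k$-th order case, and your proof is precisely that transcription --- writing the difference of expectations as $\int A\,d(F-\hat F)$ with the product-density version of $A$, bounding $\|A\|_{BL}$ via Proposition~\ref{lem:A_cont_fn_k}, and applying the dual definition of the $\beta$-metric as in (\ref{eqn:lem_loss_cont_0}). Your final paragraph, making explicit the telescoping one-coordinate-swap bound on $\int\bigl|\prod_i f_{Y|u_i}-\prod_i f_{Y|\tilde u_i}\bigr|\,dy_{-k}^k$ that underlies the $k$-th order modulus and the finiteness of $\|\Xi\|_L^k$, correctly supplies the one detail the paper leaves implicit.
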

$\parallel \Xi \parallel_L^k$ is the $k^{th}$ order Lipschitz
norm of the channel.
\begin{equation}
\parallel \Xi
 \parallel_L^k = \sup_{0 < \Delta < (b-a)}
\frac{\xi_{\Delta}^{2k+1}}{\Delta}
\label{eqn:delta_L^k}
\end{equation}
and $\xi_{\Delta}$ is as defined in (\ref{delta_Delta}).\\
\begin{lemma}
For any $\Delta > 0$, $F \in \mathcal{F}^{[a,b],k}$ with the
associated measure $P$, $P^{\Delta,k} \in \mathcal{F}^{\Delta,k}$,
measurable $g: \mathbb{R}^{2k+1} \rightarrow [a,b]$ and a
continuous bounded loss function with $E_{f_{Y|u}}
\Lambda(u,g(Y_{-k}^k)) <\infty $, $\forall$ $u$ ,

\begin{equation*} \left|
E_{P^{\Delta,k} \otimes C} \Lambda(U_0,g(Y_{-k}^k))-E_{F \otimes
C} \Lambda(U_0,g(Y_{-k}^k)) \right| \le \xi_{\Delta}^{2k+1}
\Lambda_{\max} + \lambda(\Delta)\left(1 + \xi_{\Delta}^{2k+1}
\right)
\end{equation*} \\
\label{lem:los_cont_w_quant_k}
\end{lemma}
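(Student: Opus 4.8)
The plan is to transcribe the symbol-by-symbol argument of Lemma~\ref{lem:los_cont_w_quant} to the super-symbol setting, replacing each scalar input by a $(2k+1)$-tuple and the single conditional density $f_{Y|x}$ by the product density $\prod_{i=-k}^{k} f_{Y|x_i}(y_i)$ induced on a window by the memoryless channel. First I would partition $[a,b]^{2k+1}$ into the grid of hypercubes generated by $\mathcal{A}^{\Delta}$, so that the quantized law assigns to each grid vertex $a$ the mass $P^{\Delta,k}(a)=\int_{\mathrm{cell}(a)}dF(u)$. This lets me write
\[
E_{F\otimes\mathcal{C}}\Lambda(U_0,g(Y_{-k}^k))=\sum_{a}\int_{\mathrm{cell}(a)}dF(u)\int \Lambda(u_0,g(y))\prod_{i=-k}^{k} f_{Y|u_i}(y_i)\,dy,
\]
and $E_{P^{\Delta,k}\otimes\mathcal{C}}\Lambda$ as the same sum with the integrand evaluated at the representative vertex $a$ in place of $u$. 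Applying Fubini (justified by nonnegativity) to interchange the $dF(u)$ and $dy$ integrations puts both expressions in a common form that can be compared cell by cell.

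Next I would perform the two controlled replacements that drive the original proof. Since $\Lambda$ depends on the input only through the central coordinate, and $u_0$ differs from $a_0$ by at most $\Delta$ inside each cell, the loss modulus gives $|\Lambda(u_0,\cdot)-\Lambda(a_0,\cdot)|\le\lambda(\Delta)$ exactly as before, with $\lambda(\Delta)$ unchanged. For the density I would write $\prod_{i} f_{Y|u_i}(y_i)=\prod_{i} f_{Y|a_i}(y_i)+\varepsilon(y)$ with $\int|\varepsilon(y)|\,dy\le\xi_{\Delta}^{2k+1}$, where $\xi_{\Delta}^{2k+1}$ is the window ($(2k+1)$-dimensional) analogue of the channel continuity modulus $\xi_{\Delta}$ of (\ref{delta_Delta}). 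Substituting these bounds and invoking $\sum_a P^{\Delta,k}(a)=1$, $\int f_{Y|a_i}=1$, and $\Lambda\le\Lambda_{\max}$ makes the telescoping sum collapse to $\xi_{\Delta}^{2k+1}\Lambda_{\max}+\lambda(\Delta)(1+\xi_{\Delta}^{2k+1})$, which is precisely the asserted bound.

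The only genuinely new ingredient, and the step I expect to require the most care, is the estimate $\int|\varepsilon(y)|\,dy\le\xi_{\Delta}^{2k+1}$ together with $\xi_{\Delta}^{2k+1}\to 0$ as $\Delta\to 0$. Because the channel is memoryless the window density factorizes, so the $L_1$ distance between two product densities whose factors differ by at most $\Delta$ in each input coordinate is controlled by a hybrid (telescoping) argument: replacing one factor at a time and using $\int f_{Y|x}=1$ for the untouched factors yields $\xi_{\Delta}^{2k+1}\le(2k+1)\xi_{\Delta}$. Hence $\xi_{\Delta}^{2k+1}$ is finite and vanishes as $\Delta\to 0$ under conditions C5--C6, and this same product estimate is what underlies the $k$-th order Lipschitz norm $\|\Xi\|_L^k$ in (\ref{eqn:delta_L^k}). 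Apart from this product-density bound, every manipulation is verbatim the scalar proof, which is why the carry-over is routine once the window continuity modulus is in hand.
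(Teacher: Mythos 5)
Your proposal is correct and is essentially the paper's own treatment: the paper provides no separate proof of Lemma~\ref{lem:los_cont_w_quant_k}, stating only that the proof of the symbol-by-symbol Lemma~\ref{lem:los_cont_w_quant} ``carries over directly'' to the $k$-th order case, which is exactly the cell-by-cell Fubini/modulus argument you transcribe (quantization cells in $[a,b]^{2k+1}$, the bound $\lambda(\Delta)$ on the loss via the central coordinate, and the $L_1$ perturbation $\varepsilon(y)$ of the window density). The one ingredient you make explicit that the paper leaves implicit --- reading $\xi_{\Delta}^{2k+1}$ as the $(2k+1)$-dimensional window modulus of the product density $\prod_{i=-k}^{k} f_{Y|x_i}(y_i)$ and controlling it by the one-factor-at-a-time telescoping bound $\xi_{\Delta}^{2k+1}\le (2k+1)\xi_{\Delta}$ --- is the right interpretation, since it is what makes the stated inequality true and is consistent with the paper's later requirement that $\xi_{\Delta_n}^{2k_n+1}\rightarrow 0$ under the chosen growth rates.
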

These are then used to bound the deviation of the cumulative loss
incurred by the proposed denoiser for each of the $2k+1$
subsequences from the minimum possible $k^{\text{th}}$-order
sliding window loss for that subsequence. We now, state the
$k^{\text{th}}$-order equivalent of Theorem
\ref{thm:loss_denoisability_dev} for each subsequence.\\

\begin{theorem}
For all $m \ge 1$, $k \ge 1$, $\epsilon >0$, $\rho \in (0,1)$,
$\delta
>0$, $\Delta > 0$, and $x^m \in [a,b]^{(2k+1)m}$
\begin{eqnarray}
Pr \left( \left| L_{\tilde{X}^{m,\delta, \Delta,k}}(x^m,Y^m) -
D_k(x^m) \right| >  3\epsilon + 5\delta \Lambda_{\max} +
4\xi_{\Delta}^{2k+1}\Lambda_{\max} +
4\lambda(\Delta)(1+\xi_{\Delta}^{2k+1}) \right) \nonumber \\
\le |\mathcal{G}_{\delta, \Delta}^k| \left[e^{-G(\epsilon+\delta
\Lambda_{\max}, \Lambda_{\max})m} +
e^{-(1-\rho)\frac{m\gamma_k^2}{2}}
\right]+e^{-(1-\rho)\frac{m\gamma_k^2}{2}} \\
\text{for all } mh_m^k > m_k \left(\mathcal{C}, \rho,\delta, K
\right) \nonumber
\end{eqnarray}
\label{thm:loss_denoisability_dev_2kplus1}
\end{theorem}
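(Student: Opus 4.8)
The plan is to mirror, essentially verbatim, the proof of Theorem \ref{thm:loss_denoisability_dev}, replacing every symbol-by-symbol object by its $k$-th order (super-symbol) counterpart and invoking the $k$-th order analogues of the supporting results, namely Proposition \ref{lem:A_cont_fn_k}, Lemma \ref{lem:loss_continuity_k} and Lemma \ref{lem:los_cont_w_quant_k}. Throughout, the clean block $x^m \in [a,b]^{(2k+1)m}$ is regarded as $m$ non-overlapping super-symbols of length $2k+1$, so that the corresponding output super-symbols $Y_{j-k}^{j+k}$ are mutually independent by the memorylessness of the channel (condition C1); this is precisely the independence property that the subsequence construction of Section \ref{sec:2kplus1} was designed to furnish.

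First I would decompose, for a fixed measurable $g : \mathbb{R}^{2k+1} \to [a,b]$, the quantity $|E_{\hat{P}_{x^m}^{\delta,\Delta,k}[Y^m] \otimes \mathcal{C}} \Lambda(U_0, g(Y_{-k}^k)) - E_{F^k_{x^m} \otimes \mathcal{C}} \Lambda(U_0, g(Y_{-k}^k))|$ by the triangle inequality into the deviation from the unquantized channel-inversion estimate $\hat{F}^k_{x^m}$ and the deviation of $\hat{F}^k_{x^m}$ from the true $k$-th order empirical law $F^k_{x^m}$. The first piece is controlled, up to the quantization overhead $\delta \Lambda_{\max}$, by Lemma \ref{lem:los_cont_w_quant_k}, yielding the $\xi_\Delta^{2k+1}\Lambda_{\max} + \lambda(\Delta)(1+\xi_\Delta^{2k+1})$ terms. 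The second piece is controlled by Lemma \ref{lem:loss_continuity_k}, which bounds it by a multiple of $\beta(P^k_{x^m}, \hat{P}^k_{x^m})$; here $\beta$ is in turn dominated by the total-variation distance $d([F^k_{x^m}\otimes\mathcal{C}]_Y, [\hat{F}^k_{x^m}\otimes\mathcal{C}]_Y)$ via the product-channel analogue of Theorem \ref{cont_map}, and the latter is made small with exponentially high probability by the $(2k+1)$-dimensional density consistency of Theorem \ref{thm:density_consist}. Note that the proof of Lemma \ref{lem:lem_4_pf_dens_const} in the appendix is already carried out for general dimension $d$, so specializing to $d = 2k+1$ delivers the $e^{-(1-\rho)m\gamma_k^2/2}$ factor, valid once $mh_m^k > m_k(\mathcal{C},\rho,\delta,K)$.

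Next I would supply the empirical-to-expected bound, the $k$-th order analogue of Lemma \ref{loss_dev_true_estdist}: since $\{\Lambda(x_j, g(Y_{j-k}^{j+k}))\}_{j}$ over the $m$ disjoint super-symbols are independent, zero-mean after centering, and bounded by $\Lambda_{\max}$, Hoeffding's inequality gives a deviation bound with exponential rate $G(\epsilon,\Lambda_{\max})$ in $m$, with $G(\epsilon,B)=2\epsilon^2/B^2$ unchanged. A union bound over the finite class $\mathcal{G}^k_{\delta,\Delta} = \{g_{\mathrm{opt}}[P]\}_{P \in \mathcal{F}^k_{\delta,\Delta}}$ then upgrades the single-$g$ estimates to a uniform one, multiplying the exponential terms by $|\mathcal{G}^k_{\delta,\Delta}|$, exactly as in display (\ref{max_loss_dev2}). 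Because $\tilde{X}^{m,\delta,\Delta,k}$ uses $g_{\mathrm{opt}}[\hat{P}^{\delta,\Delta,k}_{x^m}[Y^m]] \in \mathcal{G}^k_{\delta,\Delta}$, which realizes the minimum of $E_{P\otimes\mathcal{C}}\Lambda$ over the class, combining the three estimates brackets $L_{\tilde{X}^{m,\delta,\Delta,k}}$ around $\min_{g\in\mathcal{G}^k_{\delta,\Delta}} E_{F^k_{x^m}\otimes\mathcal{C}}\Lambda(U,g(Y))$.

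Finally I would close the gap between this class-minimum and the benchmark $D_k(x^m)$ by the chain of inequalities (\ref{2})--(\ref{6}), where each passage from $F^k_{x^m}$ to its closest member of $\mathcal{F}^k_{\delta,\Delta}$ (and back) costs at most $\Lambda_{\max}\delta + \xi_\Delta^{2k+1}\Lambda_{\max} + \lambda(\Delta)(1+\xi_\Delta^{2k+1})$ by Lemma \ref{lem:los_cont_w_quant_k}; summing the three $\nu$-type contributions produces the stated aggregate slack $3\epsilon + 5\delta\Lambda_{\max} + 4\xi_\Delta^{2k+1}\Lambda_{\max} + 4\lambda(\Delta)(1+\xi_\Delta^{2k+1})$ together with the claimed probability bound. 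The one genuinely new point, and the step I expect to require the most care, is the independence justification underpinning the $k$-th order Hoeffding bound and the extension of the density-consistency and channel-inversion convergence to the $(2k+1)$-dimensional product channel; everything else is a mechanical transcription of the symbol-by-symbol argument under the substitutions $\xi_\Delta \mapsto \xi_\Delta^{2k+1}$, $\|\Xi\|_L \mapsto \|\Xi\|_L^k$, $n \mapsto m$, and $\gamma \mapsto \gamma_k$.
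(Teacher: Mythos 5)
Your proposal is correct and follows essentially the same route as the paper: the paper's proof of Theorem \ref{thm:loss_denoisability_dev_2kplus1} is precisely a transcription of the proof of Theorem \ref{thm:loss_denoisability_dev} using Proposition \ref{lem:A_cont_fn_k} and Lemmas \ref{lem:loss_continuity_k}, \ref{lem:los_cont_w_quant_k} and \ref{loss_dev_true_estdist}, with the substitutions $\xi_{\Delta} \mapsto \xi_{\Delta}^{2k+1}$, $\|\Xi\|_L \mapsto \|\Xi\|_L^k$, $\gamma \mapsto \gamma_k$, and with independence of the disjoint super-symbols and the general-dimension ($d=2k+1$) density-consistency result of Lemma \ref{lem:lem_4_pf_dens_const} playing exactly the roles you identify. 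Your write-up in fact supplies more of the mechanical detail than the paper, which states the carry-over in a single sentence.
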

where,
\begin{equation*}
\gamma_k = \frac{\epsilon}{\left( \parallel \Lambda \parallel_L +
\Lambda_{\max}
\parallel \Xi
 \parallel_L^k+ (b-a)\parallel \Lambda \parallel_L \parallel \Xi
 \parallel_L^k+ \Lambda_{\max} \right)}
\end{equation*}
and $G$, $\mathcal{G}^{k}_{\delta,\Delta}$ are as defined in Theorem \ref{thm:large_deviation_whole_seq}.\\
\begin{proof}
The proof of this theorem carries over directly from the proof of
Theorem \ref{thm:loss_denoisability_dev} using Proposition \ref{lem:A_cont_fn_k}, Lemmas \ref{lem:loss_continuity_k},
\ref{lem:los_cont_w_quant_k} and \ref{loss_dev_true_estdist}.\\
\end{proof}

\begin{proof}[Proof of Theorem \ref{thm:large_deviation_whole_seq}]
\begin{multline}
L_{\tilde{X}^{n,\delta,\Delta,k}}(x^n,Y^n) - D_k(x^n) =\\
L_{\tilde{X}^{n,\delta, \Delta,k}}(x^n,Y^n) - \frac{1}{2k+1}
\sum_{i=1}^{2k+1} D_k(x^{n_i}) + \frac{1}{2k+1} \sum_{i=1}^{2k+1}
D_k(x^{n_i}) - D_k(x^n)
\end{multline}
From Lemma \ref{lem:D_k_seq_subseq}, we have
\begin{eqnarray}
L_{\tilde{X}^{n,\delta,\Delta,k}}(x^n,Y^n) - D_k(x^n)  &\le&
L_{\tilde{X}^{n,\delta,\Delta,k}}(x^n,Y^n) - \frac{1}{2k+1}
\sum_{i=1}^{2k+1} D_k(x^{n_i}) \nonumber \\
 &=& \frac{1}{2k+1} \sum_{i=1}^{2k+1}L_{\tilde{X}^{n_i
,\delta,\Delta,k}}(x^{n_i},Y^{n_i}) - \frac{1}{2k+1}
\sum_{i=1}^{2k+1} D_k(x^{n_i}) \nonumber \\
 &\le& \frac{1}{2k+1} \sum_{i=1}^{2k+1} \left[ \left|L_{\tilde{X}^{n_i
,\delta,\Delta,k}}(x^{n_i},Y^{n_i}) - D_k(x^{n_i}) \right|
\right]
\end{eqnarray}
Hence,
\begin{multline}
 Pr \left(  L_{\tilde{X}^{n,\delta,\Delta,k}}(x^n,Y^n) - D_k(x^n)
 > 3\epsilon + 5\delta \Lambda_{\max} +
4\xi_{\Delta}^{2k+1}\Lambda_{\max} +
4\lambda(\Delta)\left(1+\xi_{\Delta}^{2k+1}\right) \right) \nonumber \\
\le  Pr \left( \frac{1}{2k+1}
\sum_{i=1}^{2k+1}\left|L_{\tilde{X}^{n_i ,\delta,\Delta,k}}(x^{n_i},Y^{n_i}) - D_k(x^{n_i}) \right|> 3\epsilon +
5\delta \Lambda_{\max} + 4\xi_{\Delta}^{2k+1}\Lambda_{\max} +
4\lambda(\Delta)(1+\xi_{\Delta}^{2k+1}) \right) \nonumber \\
\le \sum_{i=1}^{2k+1} Pr \left( \left|L_{\tilde{X}^{n_i ,\delta,\Delta,k}}(x^{n_i},Y^{n_i}) - D_k(x^{n_i}) \right| > 3\epsilon +
5\delta \Lambda_{\max} + 4\xi_{\Delta}^{2k+1}\Lambda_{\max} + 4\lambda(\Delta)\left(1+\xi_{\Delta}^{2k+1}\right) \right) \nonumber \\
\le (2k+1)|\mathcal{G}_{\delta, \Delta}^k|
\left[e^{-G(\epsilon+\delta \Lambda_{\max},
\Lambda_{\max})\frac{(n-2k)}{2k+1}} +
e^{-(1-\rho)\frac{(n-2k)\gamma_k^2}{2(2k+1)}}
\right]+e^{-(1-\rho)\frac{(n-2k)\gamma_k^2}{2(2k+1)}}
\end{multline}
This is true by applying Theorem \ref{thm:loss_denoisability_dev_2kplus1} to the $2k+1$ subsequences of independent supersymbols with at most $\frac{n-2k}{2k+1}$ supersymbols in each of them. Also, the cardinality of the set of all possible proposed $2k+1$-length sliding window denoisers is bounded by the cardinality of the set of all possible quantized $k^{\text{th}}$-order probability mass
functions, $\hat{P}_{x^n}^{\delta, \Delta,k}$, i.e., $|\mathcal{G}_{\delta, \Delta}^k| \le \left[ \frac{1}{\delta}+1
\right]^{{\Delta}^{2k+1}}$.\\
\end{proof}

\section{Proof of Theorem \ref{thm:stoch_main_result}}
\label{ap:Proof of Stoch Setting}
The following claim is necessary for the proof of Theorem \ref{thm:stoch_main_result}.\\*
\begin{claim}
\begin{equation*}
\lim_{k \rightarrow \infty } \min_g E \Lambda \left( X_0, g \left(
Y^k_{-k} \right) \right)= \mathbb{D} \left( F_{\mathbf{X}}, \mathcal{C}
\right)
\end{equation*}
\label{cl:denoisability_bayes}
\end{claim}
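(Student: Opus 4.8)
The plan is to introduce the single-letter quantity $\phi_k := \min_g E\Lambda(X_0,g(Y_{-k}^k))$, show it is monotone with a well-defined limit $\phi_\infty$, identify $\phi_\infty$ with the Bayes risk of estimating $X_0$ from the \emph{entire} two-sided noisy process, and then sandwich $\mathbb{D}(F_{\mathbf X},\mathcal C)$ between these quantities. First I would note that $\phi_k$ is non-increasing in $k$, since every function of $Y_{-k}^k$ is also a function of $Y_{-(k+1)}^{k+1}$, so a wider observation window cannot increase the attainable Bayes risk; as $\phi_k\ge 0$, the limit $\phi_\infty:=\lim_{k\to\infty}\phi_k$ exists. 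The minimizing $g$ is the Bayes response to the posterior $P_{X_0\mid Y_{-k}^k}$, whence $\phi_k=E\,\mathcal U\!\left(P_{X_0\mid Y_{-k}^k}\right)$, with $\mathcal U$ the Bayes envelope. On the block side, minimizing each term of $L_{\hat X^n}$ separately over the estimate of $X_i$ gives, exactly as in (\ref{eqn:bayes_denoiser}) applied coordinatewise, $\min_{\hat X^n} E\,L_{\hat X^n}(X^n,Y^n)=\frac1n\sum_{i=1}^n E\,\mathcal U\!\left(P_{X_i\mid Y^n}\right)$.

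\textbf{Upper bound.} Fixing $k$, consider the (generally suboptimal) $n$-block denoiser that applies the order-$k$ window rule $g_{\mathrm{opt}}\!\left[P_{X_0\mid Y_{-k}^k}\right]$ at every interior coordinate $k<i\le n-k$ and any fixed estimate at the $2k$ boundary coordinates. By stationarity each of the $n-2k$ interior terms equals $\phi_k$ and each boundary term is at most $\Lambda_{\max}$, so the expected per-symbol loss is at most $\frac{n-2k}{n}\phi_k+\frac{2k}{n}\Lambda_{\max}$. Hence $\min_{\hat X^n}E\,L_{\hat X^n}$ is bounded by the same quantity; letting $n\to\infty$ with $k$ fixed gives $\mathbb D(F_{\mathbf X},\mathcal C)\le\phi_k$, and then $k\to\infty$ yields $\mathbb D\le\phi_\infty$.

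\textbf{Lower bound.} Because $\sigma(Y_1^n)\subseteq\sigma(\{Y_t\}_{t\in\mathbb Z})$, refining the conditioning can only lower the Bayes risk, so for every $i$ one has $E\,\mathcal U(P_{X_i\mid Y^n})\ge E\,\mathcal U(P_{X_i\mid Y_{-\infty}^{\infty}})$, and by stationarity the latter equals $\phi_\infty^{\mathrm{full}}:=E\,\mathcal U(P_{X_0\mid Y_{-\infty}^{\infty}})$ for all $i$. Averaging the decomposition of the first paragraph gives $\min_{\hat X^n}E\,L_{\hat X^n}\ge\phi_\infty^{\mathrm{full}}$ for every $n$, hence $\mathbb D\ge\phi_\infty^{\mathrm{full}}$. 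Combined with the upper bound this yields $\phi_\infty^{\mathrm{full}}\le\mathbb D\le\phi_\infty$, and since trivially $\phi_\infty^{\mathrm{full}}\le\phi_\infty$ (more information never hurts), the claim reduces to the reverse inequality, i.e. to the identification $\phi_\infty=\phi_\infty^{\mathrm{full}}$.

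\textbf{Main obstacle.} The delicate step is precisely this identification: one must show that conditioning on the growing finite window converges, at the level of the Bayes risk $E\,\mathcal U(\cdot)$, to conditioning on the whole bi-infinite noisy sequence. Since $\sigma(Y_{-k}^k)\uparrow\sigma(\{Y_t\}_{t\in\mathbb Z})$, L\'evy's upward theorem applied to $E[h(X_0)\mid Y_{-k}^k]$ over a countable convergence-determining family of bounded continuous $h$ yields a.s. weak convergence $P_{X_0\mid Y_{-k}^k}\Rightarrow P_{X_0\mid Y_{-\infty}^{\infty}}$. Invoking the continuity of $\mathcal U$ in the weak topology---which rests on the fact that, by L1--L2, the map $x\mapsto\Lambda(x,\hat x)$ is bounded and continuous uniformly in $\hat x$, so that $F\mapsto\int\Lambda(x,\hat x)\,dF(x)$ is weakly continuous, together with the concavity and continuity of $\mathcal U$ from Lemma~\ref{lem:bayes_sequence}---and using the uniform bound $\mathcal U\le\Lambda_{\max}$ to dominate, bounded convergence gives $\phi_k\to\phi_\infty^{\mathrm{full}}$, hence $\phi_\infty=\phi_\infty^{\mathrm{full}}=\mathbb D$, which is the claim. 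The one point genuinely requiring care is the interchange of the minimization over $\hat x$ with the weak limit of the posteriors, and it is exactly here that the regularity conditions on $\Lambda$ and the channel are needed; note that the alternative route of $L_1$-approximating the bi-infinite optimal estimator by window-measurable ones would instead require continuity of $\Lambda$ in its \emph{second} argument, which L2 does not supply, so the weak-convergence argument is the natural one.
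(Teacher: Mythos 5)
Your proof is correct and follows essentially the same route as the paper's: the same coordinatewise decomposition $\min_{\hat X^n} E L_{\hat X^n} = \frac{1}{n}\sum_{i=1}^n E\,\mathcal{U}\left(F_{X_i|Y^n}\right)$, the same sandwich (a boundary-discarding upper bound giving $\mathbb{D}\le \phi_k$ and a refined-conditioning lower bound giving $\mathbb{D}\ge E\,\mathcal{U}\left(F_{X_0|Y_{-\infty}^{\infty}}\right)$), and the same identification of the two limits via martingale convergence of the posteriors plus weak continuity of the Bayes envelope, which is exactly the content of Lemma~\ref{lem:bayes_sequence}. The only deviations are cosmetic: you realize the upper bound operationally with an explicit windowed denoiser (paying $\Lambda_{\max}$ at the $2k$ boundary terms where the paper pays $\mathcal{U}(F_{X_0})$, both vanishing as $n\to\infty$), and you justify weak continuity of $\mathcal{U}$ directly from the uniform bounded-Lipschitz bound on $\Lambda(\cdot,\hat x)$ rather than through the random-measure convergence and continuous-mapping-theorem citations the paper uses.
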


The claim results from the following lemma. \\

\begin{lemma}
\begin{itemize}
\item For $k,l \ge 0$, $E \mathcal{U} \left( F_{X_0| Y^l_{-k}}\right)$ is
decreasing in both $k$ and $l$.
\item For any two unboundedly increasing sequences of positive
integers $\{ k_n \}$, $\{ l_n \}$,
\begin{eqnarray}
\lim_{n \rightarrow \infty} E \mathcal{U} \left( F_{X_0|
Y_{-k_n}^{l_n}} \right) = E  \mathcal{U} \left( F_{X_0|
Y_{-\infty}^{\infty}} \right)
\end{eqnarray}
\end{itemize}
\label{lem:bayes_sequence}
\end{lemma}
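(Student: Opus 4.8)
The plan is to derive both assertions from the concavity and boundedness of the Bayes envelope $\mathcal{U}$ together with the optimal-estimator representation (\ref{eqn:bayes_denoiser}): the monotonicity follows from nesting of the observation $\sigma$-algebras, and the limit is identified by approximating the optimal infinite-window estimator with finite-window ones via a martingale convergence argument.

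First I would note that $\mathcal{U}$ is concave in $F$ --- being a pointwise minimum over $\hat{x}\in[a,b]$ of the affine maps $F\mapsto\int\Lambda(x,\hat{x})\,dF(x)$ --- and bounded by $\Lambda_{\max}$ through L1. For the first bullet I invoke the identity (\ref{eqn:bayes_denoiser}), applied to the joint law of $(X_0,Y_{-k}^l)$, in the form
\begin{equation*}
E\,\mathcal{U}\!\left(F_{X_0|Y_{-k}^l}\right)=\min_{g}E\,\Lambda\!\left(X_0,g(Y_{-k}^l)\right),
\end{equation*}
where $g$ ranges over all measurable functions of the window. Whenever $[-k,l]\subseteq[-k',l']$, every function of $Y_{-k}^l$ is also a function of $Y_{-k'}^{l'}$, so the admissible class on the right only grows as the window widens and the minimum cannot increase; this yields the claimed monotone decrease in each of $k$ and $l$. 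The same monotonicity also follows from conditional Jensen applied to the concave $\mathcal{U}$ and the tower relation $F_{X_0|Y_{-k}^l}=E[\,F_{X_0|Y_{-k'}^{l'}}\mid\sigma(Y_{-k}^l)\,]$.

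For the second bullet, set $\mathcal{F}_n=\sigma(Y_{-k_n}^{l_n})$ and $\mathcal{F}_\infty=\sigma(Y_{-\infty}^{\infty})=\sigma\!\big(\bigcup_n\mathcal{F}_n\big)$, using $k_n,l_n\to\infty$. Extending the nesting argument to $\mathcal{F}_\infty$ gives $E\,\mathcal{U}(F_{X_0|\mathcal{F}_n})\ge E\,\mathcal{U}(F_{X_0|\mathcal{F}_\infty})$ for every $n$, so the decreasing limit is at least $E\,\mathcal{U}(F_{X_0|\mathcal{F}_\infty})$. For the reverse inequality I take the optimal infinite-window rule $g^\star=g_{\text{opt}}[F_{X_0|\mathcal{F}_\infty}]$, which attains $E\,\Lambda(X_0,g^\star)=E\,\mathcal{U}(F_{X_0|\mathcal{F}_\infty})$, and approximate it by the $\mathcal{F}_n$-measurable, $[a,b]$-valued rules $g_n=E[g^\star\mid\mathcal{F}_n]$. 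By L\'evy's martingale convergence theorem $g_n\to g^\star$ a.s., and since $\Lambda$ is bounded and continuous in its reconstruction argument, bounded convergence gives $E\,\Lambda(X_0,g_n)\to E\,\Lambda(X_0,g^\star)$. As $E\,\mathcal{U}(F_{X_0|\mathcal{F}_n})\le E\,\Lambda(X_0,g_n)$, the limit is also at most $E\,\mathcal{U}(F_{X_0|\mathcal{F}_\infty})$, and the two bounds coincide.

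The step I expect to be the main obstacle is this reverse inequality, where I must ensure that the finite-window surrogates $g_n$ actually drive the expected loss down to the infinite-window optimum. This hinges on passing the a.s.\ convergence $g_n\to g^\star$ through $\Lambda$, which requires continuity of $\Lambda(x,\cdot)$ in the reconstruction argument (uniformly enough to combine with boundedness from L1); this holds for the $L_1$ and $L_2$ losses of primary interest but is slightly stronger than the first-argument control isolated in L2, so it must be stated explicitly. An alternative route identifies the limit by showing $F_{X_0|\mathcal{F}_n}\Rightarrow F_{X_0|\mathcal{F}_\infty}$ weakly a.s.\ (again via L\'evy convergence against a countable convergence-determining family) and then invoking continuity of $\mathcal{U}$ under weak convergence; there the analogous difficulty reappears as the need to upgrade pointwise to uniform convergence of $\hat{x}\mapsto\int\Lambda(x,\hat{x})\,dF_n(x)$ before interchanging the minimum with the limit, which an Arzel\`a--Ascoli argument supplies once the requisite equicontinuity in $\hat{x}$ is available.
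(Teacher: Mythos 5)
Your proof is correct, but it travels a different road than the paper's on both items. For the monotonicity, the paper establishes concavity of $\mathcal{U}$ (as you do) and then performs an explicit computation with the conditional densities: the tower property and Fubini's theorem express the $(k,l{+}1)$-window posterior as an average of $(k,l)$-window posteriors, after which concavity is applied inside the outer integral. Your variational argument --- enlarging the window enlarges the class of admissible estimators in $E\,\mathcal{U}\left(F_{X_0|Y_{-k}^l}\right)=\min_g E\,\Lambda\left(X_0,g(Y_{-k}^l)\right)$ --- is shorter and avoids the density manipulations entirely; your conditional-Jensen remark is essentially the paper's computation in abstract form. For the limit, the paper converges the \emph{posteriors}: by martingale convergence $F_{X_0|Y_{-k_n}^{l_n}}\rightarrow F_{X_0|Y_{-\infty}^{\infty}}$ a.s., then convergence of random measures (citing Kallenberg), continuity of $F\mapsto\mathcal{U}(F)$ under weak convergence via the continuous mapping theorem, and bounded convergence for the expectations. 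You instead converge the optimal \emph{decision rule}: a sandwich via the variational representation, with $g_n=E[g^\star\mid\mathcal{F}_n]\rightarrow g^\star$ by L\'evy's upward theorem (and note $g_n$ is automatically $[a,b]$-valued since $[a,b]$ is convex, so your surrogates are admissible). Your route buys a genuine simplification: it needs only pointwise continuity of $\Lambda(x,\cdot)$ plus boundedness to pass the limit through the expectation, and so sidesteps the min--limit interchange over $\hat{x}$ --- exactly the equicontinuity issue you flag, and precisely the step the paper glosses over when it asserts that $\min_{\hat{x}}\int\Lambda(x,\hat{x})\,dF(x)$ is continuous under weak convergence of $F$. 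What the paper's route buys in exchange is the stronger intermediate conclusion that $\mathcal{U}\left(F_{X_0|Y_{-k_n}^{l_n}}\right)$ itself converges, not merely its expectation. Finally, your observation that continuity of $\Lambda$ in the reconstruction argument exceeds what condition L2 literally provides applies equally to the published proof, which invokes it implicitly through the assertion ``$\Lambda\in C_K^{+}$''; so your explicit flagging of this hypothesis is a point in your write-up's favor rather than a defect.
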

Equipped with Lemma \ref{lem:bayes_sequence}, the proof for Claim \ref{cl:denoisability_bayes} is very similar
to that of Claim 2 in \cite{Tsachy} but we, nevertheless, present here for completeness. \\
\subsection{Proof of Lemma \ref{lem:bayes_sequence}}
\begin{proof}

A direct consequence of the definition of the Bayes envelope
$\mathcal{U} \left( \cdot \right)$ is a concave function.
Specifically, for two distribution functions $F$ and $G$ defined
on $[a,b]$, and $\alpha \in [0,1]$,
\begin{eqnarray*}
\mathcal{U} \left( \alpha F + (1-\alpha) G \right) &=&
\min_{\hat{x} \in [a,b]} \int_{x \in [a,b]} \Lambda(x,\hat{x})d
\left( \alpha F
+ (1-\alpha) G \right)(x)\\
 &=& \alpha \min_{\hat{x} \in [a,b]} \int_{x \in [a,b]}
\left[ \Lambda(x,\hat{x})dF(x) + (1-\alpha) \Lambda(x,\hat{x})dG(x) \right]\\
 &\ge& \alpha \min_{\hat{x} \in [a,b]} \int_{x \in [a,b]} \Lambda(x,\hat{x})dF(x) + \\
 & & \quad (1-\alpha) \min_{\hat{x} \in [a,b]} \int_{x \in [a,b]}
 \Lambda(x,\hat{x})dG(x)\\
 &=& \alpha \mathcal{U} \left( F \right) + (1-\alpha) \mathcal{U} \left( G \right)
\end{eqnarray*}
where the first equality follows from the fact that the mapping,
$F \mapsto Ff$, $Ff = \int f dF$, for a bona fide distribution
function, is linear. Next, to show that $E \mathcal{U} \left(
\left[F \otimes \mathcal{C} \right]_{X|Y_{-k}^l} \right)$
decreases with $l$, observe that
\begin{equation}
\begin{split}
E \mathcal{U} \left( \phantom{{{}_{}}_{{}|{}_{}^{{}_{}}}}\left[F \otimes \mathcal{C} \right. \right. & \left. \left. \right]_{X|Y_{-k}^{l+1}} \right)  = \int_{y_{-k}^{l+k+2}}
\mathcal{U} \left( \left[F \otimes \mathcal{C}
\right]_{X|Y_{-k}^{l+1}} \right)dF_{Y_{-k}^{l+1}} \\
 &= \int_{y_{-k}^l } \left[ \int_{y_{l+1}}
\mathcal{U} \left( \left[F \otimes \mathcal{C}
\right]_{X|Y_{-k}^{l}, Y_{l+1}} \right) dF_{Y_{l+1}|Y_{-k}^{l}} \right] dF_{Y_{-k}^{l}} \\
 &\le \int_{y_{-k}^l } \mathcal{U}  \left[ \int_{y_{l+1} }
\left( \left[F \otimes \mathcal{C}
\right]_{X|Y_{-k}^{l}, Y_{l+1}} \right) dF_{Y_{l+1}|Y_{-k}^{l}} \right] dF_{Y_{-k}^{l}} \\
 &= \int_{y_{-k}^l } \mathcal{U}  \left[ \int_{y_{l+1} }
\left( \int_{a}^x
\frac{f_{Y_{-k}^{l+1}|X=\alpha}dF_X(\alpha)}{f_{Y_{-k}^{l+1}}}
\right) dF_{Y_{l+1}|Y_{-k}^{l}} \right] dF_{Y_{-k}^{l}}\\
 &= \int_{y_{-k}^l } \mathcal{U}  \left[ \int_{y_{l+1} }
\left( \int_{a}^x
\frac{f_{Y_{-k}^{l+1}|X=\alpha}dF_X(\alpha)}{f_{Y_{l+1}|Y_{-k}^l}f_{Y_{-k}^l}}
\right) dF_{Y_{l+1}|Y_{-k}^{l}} \right] dF_{Y_{-k}^{l}}\\
 &= \int_{y_{-k}^l } \mathcal{U}  \left[ \int_{y_{l+1} }
\left( \int_{a}^x
\frac{f_{Y_{-k}^{l+1}|X=\alpha}dF_X(\alpha)}{f_{Y_{-k}^l}} \right)
dy_{l+1} \right] dF_{Y_{-k}^{l}} \\
 &= \int_{y_{-k}^l } \mathcal{U}  \left[ \int_{a}^x
\left( \int_{y_{l+1} }
\frac{f_{Y_{-k}^{l+1}|X=\alpha}dF_X(\alpha)}{f_{Y_{-k}^l}} \right)
dy_{l+1} \right] dF_{Y_{-k}^{l}}\\
 &= \int_{y_{-k}^l } \mathcal{U}  \left[ \int_{a}^x
\left( \frac{f_{Y_{-k}^{l}|X=\alpha}dF_X(\alpha)}{f_{Y_{-k}^l}}
\right) \right] dF_{Y_{-k}^{l}}\\
 &= \int_{y_{-k}^l } \mathcal{U}  \left[ F \otimes \mathcal{C} \right]_{X|Y_{-k}^l}
 dF_{Y_{-k}^{l}}\\
 &= E \mathcal{U} \left( \left[F \otimes \mathcal{C}
\right]_{X|Y_{-k}^l} \right)
\end{split}
\end{equation}
where, the first inequality follows from the fact that
$\mathcal{U}$ is a concave functional mapping. The definition of
$\left[ F \otimes \mathcal{C}\right]_{X|Y}$ is bona fide from the
assumption that the family of conditional measures, $\mathcal{C}$,
is absolutely continuous. Finally, application of Fubini's theorem
permits the change of order of integration to achieve the final
inequality. The fact that $E \mathcal{U} \left( \left[ F \otimes
\mathcal{C} \right]_{X|Y_{-k}^{l+1}}\right)$ decreases with $k$ is
established similarly, concluding the proof of the first item. For
the second item, similar to the proof of Lemma 4 in \cite{Tsachy},
by the martingale convergence theorem, we have,
$F_{X|Y_{-k_n}^{l_n}} \rightarrow F_{X|Y_{-\infty}^{\infty}}$
a.s., implying $F_{X|Y_{-k_n}^{l_n}} \stackrel{d}{\rightarrow}
F_{X|Y_{-\infty}^{\infty}}$. Using the convergence of random
measures \cite[Theorem ~16.16]{Kallenberg}, we have
$F_{X|Y_{-k_n}^{l_n}}f \stackrel{d}{\rightarrow}
F_{X|Y_{-\infty}^{\infty}}f$, $\forall f \in C_{K}^{+}$, the class
of continuous positive valued functions with compact support. Here, the notation $F f = \int f dF$ for any measurable $f$ and bona fide probability distribution function, $F$. In
section \ref{sec:Analysis}, we have imposed the condition of
continuity of the loss function, $\Lambda$, and since the input
alphabet space is restricted to a closed compact interval $[a,b]$,
we satisfy the condition, $\Lambda \in C_{K}^{+}$. Hence, we have,
$F_{X|Y_{-k_n}^{l_n}} \Lambda \left (\cdot,\hat{x}
\right)\stackrel{d}{\rightarrow} F_{X|Y_{-\infty}^{\infty}}
\Lambda \left( \cdot,\hat{x} \right)$, $\forall$, $\hat{x}$. Since
$\Lambda \left(\cdot, \hat{x} \right):[a,b]\times[a,b] \rightarrow
\mathbb{R}^{+}$ is a continuous mapping, in $\hat{x}$,
$\min_{\hat{x} \in [a,b]} \int \Lambda \left( x, \hat{x} \right)
dF(x)$ is also a continuous mapping. Using the fact that $\Lambda$
is a bounded mapping and the continuous mapping theorem
\cite{Durrett}, $\mathcal{U} \left( F_{X|Y_{-k_n}^{l_n}}\right)
\stackrel{d}{\rightarrow} \mathcal{U} \left(
F_{X|Y_{-\infty}^{\infty}}\right)$ and $ E \mathcal{U} \left(
F_{X|Y_{-k_n}^{l_n}}\right) \rightarrow E \mathcal{U} \left(
F_{X|Y_{-\infty}^{\infty}}\right)$.\\
\end{proof}

\subsection{Proof of Claim \ref{cl:denoisability_bayes}}

\begin{proof}[Proof of Claim \ref{cl:denoisability_bayes}]
\begin{eqnarray}
\mathbb{D} \left(F_{X^n}, \mathcal{C} \right) &=& \min_{\hat{X}^n
\in \mathcal{D}_n} E L_{\hat{X}^n} \left( X^n, Y^n \right) =
\frac{1}{n} \sum_{i=1}^n \min_{\hat{X}: \mathbb{R}^n \rightarrow
[a,b]} E \Lambda \left( X_i, \hat{X} \left( Y^n \right)\right)
\nonumber \\
 &=& \frac{1}{n} \sum_{i=1}^n \int_{\mathbb{R}^n} \min_{\hat{x} \in
 [a,b]}E \left[  \Lambda \left( X_i, \hat{x} \right)| Y^n = y^n
 \right]dF_{Y^n} \nonumber \\
 &=& \frac{1}{n} \sum_{i=1}^n \int_{\mathbb{R}^n} \mathcal{U} \left( F_{X_i|Y^n = y^n}
 \right)dF_{Y^n} \nonumber \\
 &=& \frac{1}{n} \sum_{i=1}^n E \mathcal{U}
\left( F_{X_i|Y^n = y^n} \right) = \frac{1}{n} \sum_{i=1}^n E
\mathcal{U} \left( F_{X_0|Z^{n-i}_{1-i}} \right)
\label{eqn:proof_claim2}
\end{eqnarray}
where the last equality follows by stationarity. Since by Lemma
\ref{lem:bayes_sequence}, $E \mathcal{U} \left(
F_{X_0|Y_{1-i}^{n-i}}\right) \ge E \mathcal{U} \left(
F_{X_0|Y_{-\infty}^{\infty}} \right)$, it follows from
(\ref{eqn:proof_claim2}) that $\mathbb{D}  \left( F_{X^n},
\mathcal{C}\right) \ge E \mathcal{U} \left(
F_{X_0|Y^{\infty}_{-\infty}}\right)$ for all $n$ and, therefore,
$\mathbb{D}  \left( F_{\mathbf{X}}, \mathcal{C}\right) \ge E \mathcal{U}
\left( F_{X_0|Y^{\infty}_{-\infty}}\right)$. On the other hand,
for any $k$, $0 \le k \le n$, Lemma \ref{lem:bayes_sequence} and
(\ref{eqn:proof_claim2}) yield the upper bound
\begin{eqnarray}
\mathbb{D} \left( F_{\mathbf{X}}, \mathcal{C} \right) &\le& \frac{1}{n}
\left[ 2k \mathcal{U} \left( F_{X_0} \right) + \sum_{i=k+1}^{n-k}
E \mathcal{U} \left( F_{X_0| Y_{1-i}^{n-i}}\right)  \right] \\
 &\le& \frac{1}{n} \left[ 2k \mathcal{U} \left( F_{X_0} \right) + \sum_{i=k+1}^{n-k}
E \mathcal{U} \left( F_{X_0| Y_{-k}^{k}}\right)  \right] \\
 &=& \frac{1}{n} \left[ 2k \mathcal{U} \left( F_{X_0} \right) +
 \left(n-2k \right) E \mathcal{U} \left( F_{X_0| Y_{-k}^{k}}\right)  \right]
\end{eqnarray}
Considering the limit as $n \rightarrow \infty$ of both ends of
the above chain yields $\mathbb{D} \left( F_{\mathbf{X}}, \mathcal{C} \right)
\le E \mathcal{U} \left( F_{X_0|Y_{-k}^k}\right)$. Letting now $k
\rightarrow \infty $ and invoking Lemma \ref{lem:bayes_sequence}
implies $\mathbb{D} \left( F_{\mathbf{X}}, \mathcal{C} \right) \le E
\mathcal{U} \left( F_{X_0|Y_{-\infty}^{\infty}}\right)$.\\
\end{proof}

\subsection{Proof of Theorem  \ref{thm:stoch_main_result}}
\begin{proof}
By definition of $\mathbb{D}(F_{\mathbf{X}},\mathcal{C})$ clearly
\begin{equation*}
\liminf_{n \rightarrow \infty} EL_{\tilde{X}^n_{\text{univ}}}
\left( X^n, Y^n\right) \ge \mathbb{D} \left(F_{\mathbf{X}},\mathcal{C}\right)
\end{equation*}

On the other hand, from (\ref{2k_1_denoisability}), for
any $k$
\begin{eqnarray}
E D_k \left( X^n \right) &=& E \min_{g} E_{F^k_{x^n} \otimes
\mathcal{C}} \Lambda \left(X,g(Y_{-k}^k) \right) \nonumber \\
 &\le& \min_{g}  E \left[E_{F^k_{x^n} \otimes
\mathcal{C}} \Lambda(X,g(Y_{-k}^k)) \right] \nonumber \\
 &=&  \min_{g}  E \Lambda(X,g(Y_{-k}^k))
 \label{eqn:k_denoise_expect}
\end{eqnarray}
where, the right side $X_{-k}^k$ is emitted from the (unique)
double-sided extension of the source  $F_{\mathbf{X}}$. Using the result from
equation (\ref{eqn:k_denoise_expect}), we get
\begin{equation}
\limsup_{n \rightarrow \infty} E D_{k_n} \left( X^n \right) \le
\mathbb{D} \left( F_{\mathbf{X}}, \mathcal{C} \right)
\end{equation}
implying, by Theorem \ref{thm:mainresult2kplus1} and bounded
convergence, that
\begin{equation}
\limsup_{n \rightarrow \infty} E L_{\tilde{X}_{\text{univ}}}
\left( X^n, Y^n\right) \le \mathbb{D} \left( F_{\mathbf{X}}, \mathcal{C}
\right)
\end{equation}
and proving (\ref{eqn:stoch_mainresult}). To prove
(\ref{eqn:stoch_mainresult_a}) assume stationary ergodic
$\mathbf{X}$. We have established the continuity of $E_{F \otimes
\mathcal{C}} \Lambda \left( U_0, g(Y) \right)$ w.r.t $F \in
\mathcal{F}^{[a,b]}$ in Lemma \ref{lem:loss_continuity}  and it is
easily extendible to $\min_{g} E_{F \otimes \mathcal{C}} \Lambda
\left( U_0, g(Y) \right)$. By the ergodic theorem and continuity
of $\min_{g} E_{F \otimes \mathcal{C}} \Lambda \left( U_0, g(Y)
\right)$ in $F \in \mathcal{F}^{[a,b]}$, it follows from the
representation in (\ref{2k_1_denoisability}) that
\begin{equation}
D_k \left( \mathbf{X} \right) = \lim_{n \rightarrow \infty} D_k
\left( X^n \right) = \min_g E \Lambda \left( X_0, g\left( Y_{-k}^k
\right)\right) \quad a.s.
\end{equation}
and by Claim \ref{cl:denoisability_bayes},
\begin{equation}
D(\mathbf{X}) = \mathbb{D} \left( F_{\mathbf{X}}, \mathcal{C} \right) \quad
a.s.
\end{equation}
Thus, the fact that $\limsup_{n \rightarrow \infty} D_{k_n}
(\mathbf{x})$, $\forall$ $\mathbf{x} \in [a,b]^{\infty}$ (recall
proof of Corollary 1), combined with Theorem
\ref{thm:mainresult2kplus1}, implies
\begin{equation}
\limsup_{n \rightarrow \infty} L_{\tilde{X}^n_{\text{univ}}}
\left( X^n, Y^n \right) \le \mathbb{D} \left( F_X, \mathcal{C}
\right) \quad a.s.
\label{eqn:stoch_main_res_0}
\end{equation}
On the other hand, by Fatou's lemma and definition of $\mathbb{D}
\left( F_X, \mathcal{C}\right)$
\begin{equation}
E \left[ \limsup_{n \rightarrow \infty}
L_{\tilde{X}^n_{\text{univ}}} \left( X^n, Y^n \right) \right] \ge
\limsup_{n \rightarrow \infty} E L_{\tilde{X}^n_{\text{univ}}}
\left( X^n, Y^n \right) \ge \mathbb{D} \left( F_X, \mathcal{C}
\right)
\label{eqn:stoch_main_res_1}
\end{equation}
The combination of (\ref{eqn:stoch_main_res_0}) and (\ref{eqn:stoch_main_res_1}) completes the proof of (\ref{eqn:stoch_mainresult_a})
\end{proof}

\section{Comparison to the denoiser in \cite{Dembo} }
\label{Proof of Theorem in Dembo}

Referring to Fig. \ref{fig:dembo_scheme_equiv}, each output
alphabet is uniformly quantized to the same number of levels, $M$,
as the input (for $Y \in \mathbb{R}$, the end-intervals are
greater than quantization step size). We label the set of
quantization intervals at the output as $\mathcal{O}=\{
O_1, \cdots, O_M \}$ and let the quantization step size be
$\alpha$. Corresponding to the channel output, $Y^n$, let $Z^n$ be the corresponding quantized version. Also, let $\mathcal{A}$ denote the $M$-level finite alphabet set at the input.

As a result of the quantization, we propose mapping the
$k^{\text{th}}$-order kernel density estimate at the output,
$f_Y^{n,k}$, to the corresponding probability mass function,
$\hat{Q}_{z^n}^k$, with mass at the quantized output alphabets in
the following manner,
\begin{equation}
\hat{Q}_{z^n}^k \left[ y^n \right]\left( v_{-k}^k \right) =
\int_{y_{-k}^k \in \mathcal{O}^{2k+1}}
f_Y^{n,k}(y_{-k}^k)dy_{-k}^k \label{eqn:pmf_density_map}
\end{equation}
where, $v_{-k}^k$ is the corresponding $2k+1$-tuple of the
quantized levels. The channel conditional densities also get
correspondingly mapped to an $M \times M$ channel matrix that is
formed using,
\begin{equation}
\Pi(i,j)= \int_{y: Q_{\alpha}(y) = j} f_{Y|x=i}(y)dy
\end{equation}
where $Q_{\alpha}(\cdot)$ denotes a uniform quantizer with a quantization step size $\alpha$.\\
We compare $\hat{Q}_{z^n}^k \left[ y^n \right]\left( v_{-k}^k \right)$ to $\hat{P}_{z^n}^k \left( v_{-k}^k \right)$,
the $k$-th order distribution of the quantized output symbols,
using the notation in \cite{Dembo}.
\begin{equation}
\hat{P}_{z^n}^k \left( v_{-k}^k \right) = \frac{\mathbf{r} \left[
z^n, v_{-k}^k \right]}{n-2k} \label{eqn:Dembo_pmf}
\end{equation}
The density estimate, $f_Y^{n,k}$, we consider is the cubic
histogram estimate. The histogram estimate is defined by
\begin{equation}
f_Y^{n,k}(y) = \frac{1}{n} \sum_{i=1}^n \frac{\mathbf{1}_{[Y_i \in
Anj]}}{\lambda \left( A_{nj} \right)}, \qquad y \in A_{nj}, y \in
\mathbb{R}^{2k+1} \label{eqn:hist_dens_est}
\end{equation}
where, $\mathcal{P}_n = \{ A_{nj}, j = 1,2, \cdots \}, n \ge 1$ is
a sequence of partitions and $A_{nj}$'s are Borel sets with finite
nonzero Lebesgue measure. The sequence of partitions is rich
enough such that the class of Borel sets ($\mathcal{B}^{[a,b]}$)
is equal to
\begin{equation}
\bigcap_{n=1}^{\infty} \sigma \left( \bigcup_{m=n}^{\infty}
\mathcal{P}_m \right) \label{eqn:sigma_alg_part}
\end{equation}
where $\sigma$ is the usual notation of the $\sigma$-algebra
generated by a class of sets. In particular, the cubic histogram
estimate is constructed when we consider sets $A_{nj}$ of the
form, $\prod_{i=1}^{2k+1} \left[ a_i k_i h, a_i (k_i+1)h \right)$,
$k_i$'s are integers, $h$ is a smoothing factor as for the kernel
density estimate in (\ref{eqn:hist_dens_est}) and $a_i$'s are
positive constants s.t. $a_i k_i h \in [a,b]$, $\forall h,k_i$.
The following result similar to that in Theorem
\ref{thm:density_consist}, for $J_n$ defined in equation
(\ref{eqn:J_n}), holds for histogram density estimates.\\
\begin{theorem}
Assume that the sequence of partitions $\mathcal{P}_n$ satisfies
(\ref{eqn:sigma_alg_part}). Consider
\begin{enumerate}
\item  $J_n  \rightarrow 0$ in probability  as $n \rightarrow
\infty$, for all sequences $x^n$
\item $J_n  \rightarrow 0$ almost surely  as $n \rightarrow
\infty$, for all sequences $x^n$
\item $J_n  \rightarrow 0$ exponentially as $n \rightarrow
\infty$, for all sequences $x^n$
\item For all $A \in \mathcal{B}$ with $0 < \lambda(A) < \infty$,
and all $\varepsilon >0$ there exists $n_0$ such that for all $n
\ge n_0$, we can find $A_n \in \sigma \left( \mathcal{P}_n
\right)$ with $\lambda \left( A \Delta A_n \right) < \varepsilon$
and
\begin{equation}
\sup_{M >0 \newline \text{all sets C of finite Lebesgue measure} }
\limsup_{n \rightarrow \infty} \lambda \left(  \bigcup_{j: \lambda
\left( A_{nj \bigcap C}\right)\le \frac{M}{n} } A_{nj} \bigcap C
\right) = 0
\end{equation}
\end{enumerate}
It is then true that 4 $\Rightarrow$ 3
$\Rightarrow$ 2 $\Rightarrow$ 1.\\
\label{thm:hist_density_consist}
\end{theorem}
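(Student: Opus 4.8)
The plan is to establish the chain exactly as in the kernel-based Theorem~\ref{thm:density_consist}, whose only substantive link was the top implication (carried out in Lemma~\ref{lem:lem_4_pf_dens_const}); here too, all the work is in $4\Rightarrow 3$, after which the remaining implications are free. Indeed, exponential decay $P(J_n\ge\epsilon)\le e^{-rn}$ is summable in $n$, so Borel--Cantelli gives $J_n\to 0$ almost surely ($3\Rightarrow 2$), and almost sure convergence trivially implies convergence in probability ($2\Rightarrow 1$). So I would spend essentially all the effort on $4\Rightarrow 3$.

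For $4\Rightarrow 3$, write $N_{nj}=|\{i:Y_i\in A_{nj}\}|$, so that the cubic histogram is $f_Y^n(y)=N_{nj}/(n\lambda(A_{nj}))$ on $A_{nj}$, and let $g_n(y)=E f_Y^n(y)=\frac{1}{n\lambda(A_{nj})}\sum_{i=1}^n\mu_{x_i}(A_{nj})$ denote the expected histogram. Mirroring the split in the proof of Lemma~\ref{lem:lem_4_pf_dens_const}, I would decompose
\[
J_n\le \int\big|f_Y^n-g_n\big|\,dy \;+\; \int\Big|g_n-\tfrac1n\textstyle\sum_{i=1}^n f_{Y|x_i}\Big|\,dy,
\]
into a stochastic term and a bias term, and treat each separately.

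The bias term is the histogram analogue of the approximate-identity Theorems~\ref{thm:mix_density_approx_1}--\ref{thm:mix_density_approx_2}. Since $\int_{A_{nj}}g_n=\int_{A_{nj}}\tfrac1n\sum_i f_{Y|x_i}$ and $g_n$ is constant on each cell, $g_n$ is exactly the $\sigma(\mathcal{P}_n)$-conditional expectation of $\tfrac1n\sum_i f_{Y|x_i}$ under normalized Lebesgue measure. The richness requirement~(\ref{eqn:sigma_alg_part}) forces $\sigma(\mathcal{P}_n)\uparrow\mathcal{B}^{[a,b]}$, so Lévy's $L_1$ martingale convergence theorem sends the bias to zero for each fixed density; the approximability clause of condition 4 (the $\lambda(A\,\Delta\,A_n)<\varepsilon$ part), together with the uniform tightness C2 supplying uniform integrability, upgrades this to hold uniformly over the changing mixtures $\tfrac1n\sum_i f_{Y|x_i}$, so the bias vanishes regardless of $\mathbf{x}$. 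For the stochastic term, a direct computation collapses the integral to a total-variation sum over cells,
\[
\int\big|f_Y^n-g_n\big|\,dy=\sum_j\Big|\tfrac{N_{nj}}{n}-\tfrac{E N_{nj}}{n}\Big|,
\]
which by Scheffé's identity equals $2\sup_S|N_n(S)/n-EN_n(S)/n|$ over unions $S$ of cells. This is precisely the quantity controlled by the multinomial (Bonferroni/Hoeffding) inequality invoked in Lemma~\ref{lem:lem_4_pf_dens_const}, and I would follow the same truncation route: uniform tightness C2 discards the tail cells outside a large compact $C$ at the cost of an arbitrarily small error uniformly in $\mathbf{x}$, and condition 4 governs what remains inside $C$.

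The main obstacle is exactly this last step: unlike the genuinely finite multinomial situation, the histogram cells are infinitely many and can be arbitrarily small, so a naive union bound over $m$ cells carries an uncontrolled $2^{m}$ factor. The small-cell clause of condition 4 is the hypothesis that tames this, since it certifies that the cells contributing measure $\le M/n$ within $C$ have asymptotically negligible total Lebesgue measure; hence, after discarding them, the number of retained cells is at most about $\lambda(C)\,n/M$. Choosing $M$ large makes this a sufficiently small multiple of $n$ that the Bonferroni factor $2^{m+1}$ stays dominated by $e^{-n\epsilon^2/2}$. This is the histogram counterpart of the bandwidth condition $nh^d\to\infty$ in the kernel proof, and it is what preserves the exponential rate $e^{-rn}$ uniformly over all input sequences $\mathbf{x}$, completing $4\Rightarrow 3$.
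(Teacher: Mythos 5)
Before comparing, note what the paper actually does for Theorem \ref{thm:hist_density_consist}: it gives no proof at all, but refers the reader to \cite{Luc} ``with the added condition of tightness.'' So the only meaningful benchmarks are the argument in that reference (Abou-Jaoude's histogram theorem) and the paper's own proof of the kernel analogue, Lemma \ref{lem:lem_4_pf_dens_const}. Measured against those, your architecture is the right one: the chain $3 \Rightarrow 2 \Rightarrow 1$ is exactly Borel--Cantelli plus the trivial implication, and your stochastic term is handled the standard way --- the identity $\int|f_Y^n-g_n|\,dy=\sum_j|N_{nj}/n-EN_{nj}/n|$, Scheff\'e, the Bonferroni/Hoeffding multinomial bound, truncation to a compact via C2, and the observation that the small-cell clause of condition 4 caps the number of retained cells by roughly $\lambda(C)n/M$, so that taking $M$ large keeps the union-bound factor dominated by $e^{-n\epsilon^2/2}$. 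That last point is indeed the histogram counterpart of $nh^d\to\infty$, which matches the paper's own remark immediately after the theorem. One detail you skip: the discarded small cells still carry empirical and expected mass, and bounding that mass by their small Lebesgue measure requires uniform absolute continuity of the channel family (available from the uniform boundedness in C5); without it, negligible Lebesgue measure does not by itself give negligible probability.

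The genuine flaw is in your bias term. Condition (\ref{eqn:sigma_alg_part}) is a statement about the decreasing family $\sigma\left(\bigcup_{m\ge n}\mathcal{P}_m\right)$; it does not say the partitions are nested, and for cubic histograms with shrinking, non-commensurable cell widths the algebras $\sigma(\mathcal{P}_n)$ are not a filtration. Hence ``$\sigma(\mathcal{P}_n)\uparrow\mathcal{B}$'' is unjustified and L\'evy's $L_1$ martingale convergence theorem cannot be invoked. The correct fixed-density argument --- the one in \cite{Luc} --- runs through the approximation clause of condition 4 directly: the map $T_n f = E\left[f\,|\,\sigma(\mathcal{P}_n)\right]$ (under Lebesgue measure) is an $L_1$-contraction, so approximate $f$ by a simple function over finite-measure sets and use $\lambda(A\,\Delta\,A_n)<\varepsilon$ to control each indicator; your write-up relegates this clause to a supporting role, which inverts the logic. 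Relatedly, the upgrade to uniformity over the changing mixtures $\frac1n\sum_i f_{Y|x_i}$ is not a matter of ``tightness supplying uniform integrability'': since the mixture bias is at most $\sup_{x\in[a,b]}\|T_n f_{Y|x}-f_{Y|x}\|_1$, what is needed is that this supremum vanish, and that follows from compactness of $\{f_{Y|x}\}_{x\in[a,b]}$ in $L_1$ (supplied by the $L_1$-Lipschitz continuity of $x\mapsto f_{Y|x}$ in C6, or by C7b) combined with the fact that the equi-Lipschitz functionals $f\mapsto\|T_nf-f\|_1$ tend to zero pointwise, hence uniformly on compacta. With those two repairs your proof goes through and coincides with the argument the paper delegates to \cite{Luc}.
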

For the proof of this theorem, refer to \cite{Luc} with the
added condition of tightness imposed on the family of measures
associated with the channel, $\mathcal{C}$.

The condition 4) in Theorem \ref{thm:hist_density_consist}
translates to $\lim_{n \rightarrow \infty} h = 0$, $\lim_{n
\rightarrow \infty} nh^d = \infty$. It can be shown as in
\cite{Luc} that they are necessary sufficient conditions for that
specified in 4) in Theorem \ref{thm:hist_density_consist}. By
choosing the smoothing factor, $h$ to be a decreasing sequence of
numbers that are all integers fractions of the quantization step
size $\alpha$, such that $nh^d \rightarrow \infty$ is also
simultaneously satisfied, we get the mapping in equation
(\ref{eqn:pmf_density_map}) to reduce to that in equation
(\ref{eqn:Dembo_pmf}) for the subsequences described in Section
\ref{sec:2kplus1}. This is because we split the sequence $x^n$
into $2k+1$ subsequences whose $2k+1$-length super symbols are
independent so that we can apply Theorem
\ref{thm:hist_density_consist}. Now,
\begin{eqnarray}
\hat{Q}_{z^{n_i}}^k \left( v_{-k}^k \right) &=& \int_{y_{-k}^k \in
\mathcal{O}^{2k+1}} f_Y^{n_i,k}(y_{-k}^k)dy_{-k}^k \\ &=&
\int_{y_{-k}^k \in \mathcal{O}^{2k+1}} \frac{1}{\lceil
\frac{n-2k-i-1}{2k+1} \rceil} \sum_{j=0}^{\lceil
\frac{n-2k-i-1}{2k+1} \rceil}
\frac{\mathbf{1}_{\left[Y_{j(2k+1)+i}^{j(2k+1)+i+2k} \in
A_{n_il}\right]}}{\lambda(A_{n_il})} \\
 &=& \frac{1}{\lceil
\frac{n-2k-i-1}{2k+1} \rceil} \mathbf{r} \left[ z^{n_i}, v_{-k}^k
\right]
\end{eqnarray}
If we mapped the finite input-continuous output channel,
$\mathcal{C}$, to $\Pi$, the mapping in equation
(\ref{eqn:chanel_inv_2kplus1}) would then reduce to,
\begin{equation}
\hat{Q}^k_{x^{n_i}} = \arg \min_{P \in
\mathcal{F}^{\mathcal{A},k}} \sum_{v_{-k}^k} \left|
\hat{Q}_{z^{n_i}}^k \left( v_{-k}^k \right) - \sum_{u_{-k}^k \in
\mathcal{A}^{2k+1}} \prod_{j=-k}^k \Pi \left( u_j,v_j \right)P \left( u_{-k}^k
\right) \right| \label{eqn:2kplus1_map}
\end{equation}
where, $\mathcal{F}^{\mathcal{A},k}$ denote the space of all
possible $k^{\text{th}}$-order distributions on $\mathcal{A}$. If
we lift the constraints of the minimizer being a bona fide element
of $\mathcal{F}^{\mathcal{A},k}$, we get the following candidate
for the minimizer in (\ref{eqn:2kplus1_map})
\begin{equation}
\hat{Q}^k_{x^{n_i}}\left[u_{-k}^k \right] = \frac{1}{\lceil
\frac{n-2k-i-1}{2k+1}\rceil } \sum_{v_{-k}^k} \mathbf{r}
\left[z^{n_i}, v_{-k}^k \right] \prod_{j=-k}^k \Pi^{-1} \left(
v_j, u_j \right)
\end{equation}
which is exactly the same as $\hat{P}_{x^{n_i}} \left[ z^{n_i}
\right]\left( u_{-k}^k \right)$ using equation (18) in
\cite{Dembo}, also given below.
\begin{equation}
\hat{P}^k_{x^{n_i}}\left[u_{-k}^k \right] = \frac{1}{\lceil
\frac{n-2k-i-1}{2k+1}\rceil } \sum_{v_{-k}^k} \mathbf{r}
\left[z^{n_i}, v_{-k}^k \right] \prod_{j=-k}^k \Pi^{-1} \left(
v_j, u_j \right)
\end{equation}
Now, using the construction of the discrete denoiser in equation
(\ref{eqn:discrete_denoiser_2kplus1}), for $\hat{Q}_{x^{n_i}}$, we
get
\begin{multline}
g_{\text{opt}}[\hat{Q}_{x^{n_i}}]\left(y_{-k}^k \right) = \arg
\min_{\hat{x} \in \mathcal{A}}
\Lambda(\cdot,\hat{x})^T[\hat{Q}_{x^{n_i}} \otimes \mathcal{C}]_{U|y_{-k}^k} \\
 = \arg \min_{\hat{x} \in \mathcal{A}} \sum_{\hat{a} \in
 \mathcal{A}} \Lambda \left( a, \hat{x} \right) \cdot \left\{ \sum_{u_{-k}^k \in \mathcal{A}^{2k+1}:u_0 =
 a}\left[ \prod_{j=-k}^k f_{Y|x=u_j}(y_j) \hat{Q}_{x^{n_i}} \left( U_{-k}^k = u_{-k}^k
 \right)\right]\right\}
\end{multline}
which is exactly the same as $g_{opt}[P]\left(y_{-k}^k \right)$ in
equation (16) in \cite{Dembo}. Hence, the proposed denoiser with
histogram density estimate of the output symbols and quantization
gives us the same denoising rule as that of \cite{Dembo} applied to the
$2k+1$ subsequences of the output sequence $Y^n$.

\bibliographystyle{IEEEtranS}
\bibliography{IEEEfull,reportref1}

\end{document}